\documentclass[a4paper,12pt,twoside,reqno]{amsbook}
\usepackage{enumerate}
\usepackage{graphicx}
\usepackage{amsfonts}
\usepackage{color}
\usepackage{bbm}
\usepackage{amsmath}
\usepackage{amsthm}
\usepackage{amssymb}
\usepackage[latin1]{inputenc}
\usepackage{hyperref}
\usepackage{csquotes}
\usepackage{tikz}
\usetikzlibrary{calc}
\usepackage{currvita}
\theoremstyle{change}

\newcommand{\IR}{{\mathbb{R}}}
\newcommand{\IN}{{\mathbb{N}}}
\newcommand{\IZ}{{\mathbb{Z}}}

\newcommand{\IK}{{\mathbb{K}}}

\newcommand{\IS}{{\mathbb{S}}}

\newcommand{\FA}{{\mathfrak{A}}}

\newcommand{\FE}{{\mathfrak{E}}}

\newcommand{\FP}{{\mathfrak{P}}}
\newcommand{\FU}{{\mathfrak{U}}}
\newcommand{\FV}{{\mathfrak{V}}}

\newcommand{\FX}{{\mathfrak{X}}}
\newcommand{\FY}{{\mathfrak{Y}}}
\newcommand{\FZ}{{\mathfrak{Z}}}

\newcommand{\Fh}{{\mathfrak{h}}}
\newcommand{\Fp}{{\mathfrak{p}}}

\newcommand{\CB}{{\mathcal{B}}}
\newcommand{\CC}{{\mathcal{C}}}

\newcommand{\CD}{{\mathcal{D}}}
\newcommand{\CE}{{\mathcal{E}}}

\newcommand{\CG}{{\mathcal{G}}}
\newcommand{\CH}{{\mathcal{H}}}
\newcommand{\Ch}{{{h}}}

\newcommand{\CM}{{\mathcal{M}}}
\newcommand{\CQ}{{\mathcal{Q}}}
\newcommand{\CS}{{\mathcal{S}}}

\newcommand{\loc}{{\mathrm{loc}}}
\renewcommand{\rho}{\varrho}
\renewcommand{\theta}{\vartheta}
\renewcommand{\tilde}{\widetilde}

\DeclareMathOperator{\Deg}{deg}
\DeclareMathOperator{\dom}{dom}
\DeclareMathOperator{\supp}{supp}
\DeclareMathOperator{\dist}{dist}

\DeclareMathOperator{\diam}{diam}
\DeclareMathOperator{\kap}{cap}
\newcommand{\Id}{\operatorname{Id}}
\newcommand{\tr}[1]{\operatorname{tr}_{ #1 }} 

\theoremstyle{plain}
\newtheorem{lemma}{Lemma}[chapter]
\newtheorem{coro}[lemma]{Corollary}
\newtheorem{theorem}[lemma]{Theorem}
\newtheorem{prop}[lemma]{Proposition}
\theoremstyle{definition}
\newtheorem{definition}[lemma]{Definition}
\newtheorem{example}[lemma]{Example}
\theoremstyle{remark}
\newtheorem*{remark}{Remark}
\newtheorem*{reminder}{Reminder}
\newtheorem*{assumption}{Assumption}
 \newlength\headseptemp

\parindent0cm


\newcommand{\Hmm}[1]{\leavevmode{\marginpar{\tiny%
$\hbox to 0mm{\hspace*{-0.5mm}$\leftarrow$\hss}%
\vcenter{\vrule depth 0.1mm height 0.1mm width \the\marginparwidth}%
\hbox to 0mm{\hss$\rightarrow$\hspace*{-0.5mm}}$\\\relax\raggedright #1}}}
\makeindex

\begin{document}
\title[]{Analysis of Dirichlet forms on graphs}

\author[]{Sebastian Haeseler}
\date{\today}
\address{Mathematisches Institut\\ Friedrich Schiller Universit\"at Jena\\
  D-07743~Jena, Germany}


\maketitle
January 20, 2014
\frontmatter

\tableofcontents
\chapter*{Preface}
The title of this thesis is a synthesis of "Analysis on Dirichlet spaces" and "Dirichlet forms on graphs". The first is coming from a series of papers by Sturm \cite{St-94,St-95,St-96} with the title "Analysis of local Dirichlet spaces". There certain global properties are under consideration with the help of the so-called intrinsic metric. The other part of the title comes from the series of papers by Lenz et al, \cite{KL-10,KL-11,HKLW-12}. There Dirichlet forms on discrete spaces are considered. It turns out that in the case of regular forms, there is always a weighted graph involved. So the idea of this thesis is to combine both aspects. What we get is in the sense of the Beurling-Deny formulae for regular Dirichlet forms a very general explicit Dirichlet form. Given this explicit form, we can study certain analytical properties. We can rephrase this in the following leitmotif:
\begin{itemize}
\item[(A)] We want to obtain qualitative results for general Dirichlet forms by studying the simplified Dirichlet forms on graphs.
\end{itemize}
The theory Dirichlet forms started with the pioneering work of Beurling and Deny in 1958, \cite{BD-58,BD-59}. Its roots can be seen in the development of classical potential theory in the beginning of the 20th century. The discovery of the close connection between classical potential theory and probability theory also had a huge impact on Dirichlet form theory. A milestone appeared in terms of the book of Fukushima \cite{F-80}. Dirichlet forms could be considered as a generalization of Laplacians on manifolds, certain pseudodifferential operators and discrete Laplacians on graphs. There has been a huge development on the subject since 1980 and dozens of articles appear each year. This leads to our second leitmotif:
\begin{itemize}
\item[(B)] We use Dirichlet form theory to obtain results for Laplacians on metric graphs.
\end{itemize}
Metric graphs have received a lot of attention in recent years, both from the point of view of mathematicians and application (see e.g. the conference proceedings \cite{EKKST-08} and the survey \cite{Ku-04}). A metric graph is by definition a combinatorial graph where the edges are considered as intervals, glued together according to the combinatorial structure. This combination of a continuum and combinatorial world, gives the metric graph a standing between manifolds and discrete graphs, and makes it thus an interesting subject to be studied. Closely connected with the theory of metric graphs is the theory of quantum graphs. Basically, those are metric graphs equipped with some singular behavior in the vertices. The latter makes it impossible to apply the machinery of regular Dirichlet forms directly to questions on recurrence, regularity and stochastic completeness to it. Fortunately our approach allows us to fill this gap. Also for many results in the theory of quantum graphs, a lower bound on the edge lengths is needed. Due to our approach, we can remove this assumption. As metric graphs inherit a nature between the discrete and continuous, one may ask which impact has its combinatorial part on metric graphs. Thus our third leitmotif is given as:
\begin{itemize}
\item[(C)]
We want to reduce questions concerning metric graphs to combinatorial problems.
\end{itemize}
Attempts in this direction can be found in \cite{Cattaneo} where spectral properties where reduced to a combinatorial problem. This was developed further by using so-called boundary triples, \cite{Pa-06,BGP,Po-12}. In order to be successful, one needs to put strong assumptions on the graph in terms of equal edge lengths and  an appropriate boundary space. Here, our approach gives a boundary space for a general graph, without any essential condition on the edge lengths. As an application, we deduce connections of global properties between metric and discrete graphs.\medskip\\

The thesis is organized as follows:\smallskip\\

We begin with the definition of topological graphs. They form the state space on which the Dirichlet forms will be defined. The definition will be analogous to the definition of topological manifolds. Usually metric graphs are defined in a constructive way, by clueing intervals together according to a graph structure. Our approach is more natural as we view topological graphs as topological space which locally look like a finite union of rays in the plane. This gives an intrinsic description of topological graphs and analogous concepts from the theory of manifolds can be also introduced in our setting. We are thus left with showing that this concept agrees with the usual one. This is done by associating a graph to the topological space. By the already mentioned gluing procedure, we show that the associated $CW$-complex agrees with the topological graph in Theorem 1.9. As we want to introduce Dirichlet forms on this space, we continue with the introduction of several weights. The first and a priori most prominent weight will play the role of the length of the edges. After its introduction we form a path metric space out of the topological graph. In particular certain notions from metric spaces need to be considered. As a first highlight, this leads to a Hopf-Rinow type theorem. Other weights which are introduced there are not connected with the metric properties. However they give birth to the discreteness as they will be related to jumps and killing of the stochastic process which is associated with the Dirichlet form. Before we can introduce them, we are left with the definitions of continuity, measurability and differentiability. Here the manifold like approach pays out, as we can use charts in those definition. The only problem occurs in the differentiability in the vertices. We solve this problem by ignoring, namely our substitute for differentiability is absolute continuity, and there we will only impose continuity in the vertices.\medskip\\
After this foundation chapter, we introduce in chapter 2 Dirichlet forms on metric graphs. This is done in the spirit of the leitmotifs $(A)$ and $(B)$. We begin with the so-called diffusion part. This form could be imagined as an elliptic form defined on a weighted Sobolev space. Here comes an important tool into play, viz, the length transformation. Using this we introduce subsequently normal forms of the graph, in particular metrics, which helps us in the analysis of the forms. The first analysis concerns embedding properties of the domain of the graph. The second analysis concerns the regularity of the Dirichlet space. Although both problems seem to be related, its treatment uses two metrics which are different in general. One metric, the canonical metric, could be seen as a generalization of a concept due to Feller \cite{Feller}, see also \cite{CF-12}, and the other one is the intrinsic metric coming from the theory of strongly local Dirichlet spaces. After the analysis  of the diffusion part we consider the graph Dirichlet forms, which is basically a perturbation of the diffusion part by a discrete graph. After its introduction and its basic properties, we consider the case when the discrete part is a bounded perturbation. In this case no further tools as the ones for the diffusion part are needed for the questions treated so far. We finish this second chapter with an reprise concerning the regularity of general graph Dirichlet forms.  This is done in terms of harmonic functions, a concept which becomes the melody of the leitmotif $(C)$ in chapter 3.\medskip\\
Before we continue with our program, in the interlude we have look on the connection of quantum graphs with the graph Dirichlet forms. More precisely we study Dirichlet quantum graphs, i.e. quantum graphs whose matching conditions lead to Dirichlet forms. At the first look, one might think that quantum graphs are more general than graph Dirichlet forms. This however is not the case, as we show in theorem 2.50 that the underlying graph structure for quantum graphs is somehow wrong. In particular, we construct a new metric graph and a Dirichlet form on it which are isometric isomorph in the sense of Dirichlet space to the given Dirichlet quantum graph. Using this result, we conclude that Dirichlet quantum graphs fit into our setting.\medskip\\
The third chapter is devoted to the study of the associated operators. We continue the study of the Dirichlet space there, with the introduction of the local space, or rather the space of functions being locally in the domain. This gives rise to the definition of a distributional operator, which we will call the weak Laplacian. After a characterization of this operator in Theorem 3.6, we already see the strong connection with a discrete problem, which will be strengthened to a pre-version of the Krein formula in Theorem 3.18. In the subsequent section we continue the deep analysis of this observation, which culminates in Theorems 3.28 and 3.29, where we connect the discrete and the continuous setting by means of operator theory. The main tools are harmonic functions there. The discretization procedure leads to what we call the Kirchhoff Dirichlet form. In the spirit of the  third leitmotif, we obtain a lot of results connecting discrete and continuous graphs. Among them are prominent properties as regularity in Theorem 3.30, stochastic completeness in Theorem 3.32 and a Sobolev inequality in Theorem 3.37. In the final section of this chapter, in particular Theorem 3.47, we relate the generators of the graph Dirichlet forms to the generators of the Kirchhoff Dirichlet forms. This could be seen as a version of the Krein resolvent formulae.\medskip\\
In the last chapter we turn back to the graph Dirichlet form. The aim there is to extend the form to all functions where the energy functional is finite. This leads in general Dirichlet form theory to the concepts of reflected and extended Dirichlet spaces. In our case we can actually equip the first one with a norm, and show afterwards in Theorem 4.6 that the latter one is a closed subspace with respect to this norm. Both spaces are strongly connected with the notion of recurrence and transience. Due to this concepts, we include a short discussion of potential theoretic machinery. Afterwards we concern the special case of a diffusion and also relate recurrence with the canonical boundary in Proposition 4.18. In the last section we calculate the trace Dirichlet form of a graph Dirichlet form. This gives a characterization of recurrence in terms of a discrete Dirichlet form in Theorem 4.24. \medskip\\
\begin{remark}
Inequalities and estimations from above and below play a crucial role. There a constant $C>0$ will appear frequently. Note that we follow tradition that this constant may change from line to line and will depend only on the values mentioned explicitly in the corresponding result.
\end{remark} 
\cleardoublepage
\thispagestyle{empty}
\vspace*{60mm}
\begin{center}
\textbf{{Acknowledgements}}\smallskip\\
I want to thank my advisor Prof. Dr. Daniel Lenz for his support over the last years and his help concerning this thesis. I am also very thankful to his group in Jena. In particular, I want to thank Xueping Huang, Marcel Schmidt and Carsten Schubert for reading the manuscript carefully and for their continuing interest in the subject. I am very grateful to my parents. Without their permanent support and encouragement this thesis would not have been possible. Finally, I want to express my love and gratitude to Susann.
\end{center}
\mainmatter
\chapter{Foundations}
This chapter is devoted to the study of topological graphs. After defining them intrinsically, we show that they equal their associated $CW$-complex. In section 2, we introduce weights and discuss their specific role which leads to the concept of metric graphs. Using this approach we introduce certain classical function spaces and notions related to them.
\section{Topological and discrete graphs}
The prototype of a topological graph is a set of intervals in $\IR^2$ having one point in common. This leads to the following definition which plays the same role for topological graphs as open subsets of Euclidean spaces as for manifolds.
\begin{definition}\index{star graph}
\begin{enumerate}
\item The star-graph of degree $d$, $d\in \IN$, is the set
\[\bigstar_d := \{ r(\cos k \tfrac{2\pi}{d},\sin k \tfrac{2\pi}{d})\in \IR^2 \mid 0\leq r<1, k=1,\dots, d\}\]
endowed with the induced topology from $\IR^2$.\\
\item Let $d\in \IN$, then a $d$-star-shaped chart with center $x\in \FX$ on a topological space $\FX$ is a pair $(U,\varphi)$ consisting of an open subset $U\subset \FX$ with $x\in U$ and a homeomorphism $\varphi: U \to \bigstar_d$ such that $\varphi(x) = 0$.
\end{enumerate}
\end{definition}
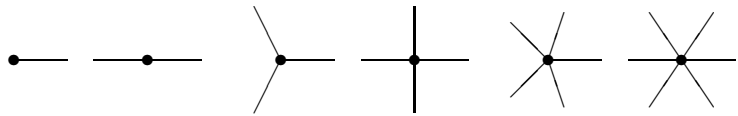
\begin{figure}[h]
\setlength{\unitlength}{2pt}
\begin{center}
\begin{picture}(135,20)(0,-10)
  \put(0,0){\line(1, 0){10}}
  \put(0,0){\circle*{2}}

  \put(25,0){\line(1, 0){10}}
  \put(25,0){\circle*{2}}
  \put(25,0){\line(-1, 0){10}}

    \put(50,0){\line(1, 0){10}}
  \put(50,0){\line(-1, 2){5}}
  \put(50,0){\circle*{2}}
  \put(50,0){\line(-1, -2){5}}

    \put(75,0){\line(0, 1){10}}
  \put(75,0){\line(1, 0){10}}
  \put(75,0){\circle*{2}}
  \put(75,0){\line(0, -1){10}}
  \put(75,0){\line(-1, 0){10}}

    \put(100,0){\line(1, 0){10}}
  \put(100,0){\circle*{2}}
  \put(100,0){\line(1, 3){3}}
  \put(100,0){\line(-1, 1){7}}
  \put(100,0){\line(-1, -1){7}}
  \put(100,0){\line(1, -3){3}}

    \put(125,0){\line(1, 0){10}}
  \put(125,0){\circle*{2}}
  \put(125,0){\line(-1, 0){10}}
  \put(125,0){\line(2, 3){6}}
  \put(125,0){\line(-2, 3){6}}
  \put(125,0){\line(-2, -3){6}}
  \put(125,0){\line(2, -3){6}}
\end{picture}
\end{center}
\caption{The star graphs $\bigstar_1$,$\bigstar_2$,$\bigstar_3$,$\bigstar_4$, $\bigstar_5$ and $\bigstar_6$}
\end{figure}
In analogy to manifolds we define topological graphs as topological spaces which look locally like a star-graph.  This was done for compact graphs in \cite{BR-07}. We generalize this concept for non-compact graphs as follows.
\begin{definition}\index{topological graph}
Let $(\FX,\CG)$ a topological space. We call $\FX$ a topological graph if
\begin{itemize}
\item[(i)] $(\FX,\CG)$ is Hausdorff,
\item[(ii)] $(\FX,\CG)$ has a countable base,
\item[(iii)] for each $x\in \FX$ there exists a $d$-star-shaped chart $(U,\varphi)$ with center $x$ for some $d\in\IN$.
\end{itemize}
\end{definition}
\begin{figure}[h]
\setlength{\unitlength}{2pt}
\begin{center}
\begin{picture}(90,63)(-10,-17)
\put(0,0){\circle*{2}}
\put(0,40){\circle*{2}}
\put(30,0){\circle*{2}}
\put(70,0){\circle*{2}}
\put(50,20){\circle*{2}}
\put(50,-20){\circle*{2}}
\put(20,50){\circle*{2}}
\put(50,30){\circle*{2}}
\put(70,20){\circle*{2}}
\put(0,0){\line(1,0){30}}
\put(0,0){\line(0,1){40}}
\put(0,40){\line(3,-4){30}}
\put(30,0){\line(1,1){20}}
\put(30,0){\line(1,-1){20}}
\put(50,20){\line(1,-1){20}}
\put(50,-20){\line(1,1){20}}
\put(50,20){\line(-1,1){30}}
\put(50,20){\line(0,1){10}}
\put(50,20){\line(1,0){20}}
\end{picture}
\end{center}
\caption{A compact graph}
\end{figure}
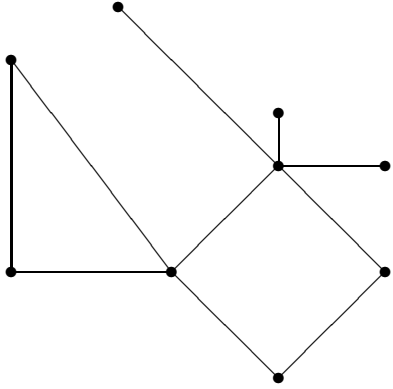
One may wonder why we include $(i)$ and $(ii)$ in the definition of a topological graph. The answer is the same as in the manifold case, to avoid pathological examples. If, for instance, we drop $(i)$, we would include graphs with multiple origins, which can be seen as follows. Consider the disjoint union of $n$ copies of a star graph $\bigstar_d$
\[\bigstar := \bigcup_{i=1}^n \{i\}\times \bigstar_d\]
and consider the equivalence relation
\[(i_1,r_1,k_1)\sim  (i_2,r_2,k_2) \quad:\Leftrightarrow \quad r_1=r_2\neq 0, k_1=k_2.\]
Then $(ii)$ and $(iii)$ hold true for the quotient space $\bigstar\slash_\sim$, but not $(i)$. To see that second countability makes sense in the definition of a topological graph, a construction similar to the so-called long line, see for instance \cite{SS-95}, would lead to topological spaces satisfying $(i)$ and $(iii)$, but not $(ii)$.\medskip\\
Next, let us introduce more concepts analogous to the manifold setting. For any $d$-star-shaped chart $(U,\varphi)$ on $\FX$ the local coordinate system $(r,k)$ is defined in $U$ by taking the $\varphi$-pullback of the polar coordinates in $\IR^2$ restricted to $\bigstar_d$. Basically a chart is an open set $U\subset \FX$ with local coordinates $(r,k)\in\IR_+\times\{1,\dots,d\}$. With a little abuse of notation we will identify $U$ with its image such that the coordinates $(r,k)$ can be identified with the polar coordinates in $\bigstar_d$.\medskip\\
Given two charts $(U,\varphi)$ and $(V,\psi)$ on a topological graph $\FX$, two coordinate systems, say $(r_U,k_U)$ and $(r_V,k_V)$, are defined in the intersection $U\cap V$. It is clear that, whenever $U$ and $V$ have a different center, each $x\in U\cap V$ has either a neighborhood homeomorphic to $\bigstar_2$ or the intersection is empty. Hence in the non-disjoint case $U\cap V$ is homeomorphic to a disjoint union of open intervals where the change-of-coordinates is given by $\varphi\circ \psi^{-1}: \psi(U\cap V) \to \varphi(U\cap V)$.\medskip\\
A family $\FA$ of charts on a topological graph is called an atlas if the charts from $\FA$ cover $\FX$ and the change-of-coordinates maps are $\CC^\infty$. Below we will introduce a special atlas, where the intersection of two charts is connected.\medskip\\
By definition it is easy to see that a topological graph is locally path-connected and hence connectedness and path-connectedness agree. Throughout this chapter we will assume that the underlying topological graph is connected. Furthermore, it follows from the  definition that we can also choose a base of the topology consisting of relatively compact sets, hence the topological space is locally compact.\medskip\\
We continue with some examples.
\begin{example}
The sets $\IR$, $\IR_+$, $\IS=\{x\in \IR^2 \mid |x|=1\}$ and each interval inherit the structure of a topological graph and all points which are not boundary points have a neighborhood homeomorphic to $\bigstar_2$, whereas boundary points have neighborhoods homeomorphic to $\bigstar_1$.\\
Consider the set $\FX_{\IZ^2}:=\{(x,y)\in \IR^2 \mid \exists z\in \IZ: x=z \mbox{ or } y=z\}$. Then each point $(x,y) \in \FX_{\IZ^2}$ with integer coordinates has a neighborhood homeomorphic to $\bigstar_4$, whereas all other points have one homeomorphic to $\bigstar_2$. We can easily extend this example to $\IZ^d$.\\ As an example which leads to a more general class of topological graphs one may take the union of the two sets $\{(x,y)\in \IR^2 \mid y= x^2 -1\}$ and $\{(x,y)\in \IR^2 \mid y= -x^2+1\}$. Note that one has to careful with examples derived by the union of two graphs of real-valued functions. For instance, take the union of $\{(x,y)\in (0,\infty)\times \IR \mid y=\sin \frac{1}{x} \}$ with the closed semi-axis $[0,\infty)$. Then the point $(0,0)$ has no neighborhood homeomorphic to a star-graph. However, excluding this point gives a topological graph.
\end{example}
\begin{figure}[h]
\setlength{\unitlength}{2pt}
\begin{center}
\begin{picture}(60,60)(-12,-8)
\multiput(0,0)(0,10){5}{ \multiput(0, 0)(10, 0){5}{
  \put(0,0){\line(0, 1){10}}
  \put(0,0){\line(1, 0){10}}
  \put(0,0){\circle*{2}}
  \put(0,-1){\line(0, -1){10}}
  \put(-1,0){\line(-1, 0){10}}}}
\end{picture}
\end{center}
\caption{The metric graph $\FX_{\IZ^2}$}
\end{figure}
In the preceding examples all topological graphs were somehow generated by the union of curves. A different way of generating topological graphs will be discussed in detail, below. There, the idea is to start with a discrete graph and to consider the edges as intervals.\medskip\\
We next introduce some more topological concepts and notations.
\begin{definition}
We call a connected subset $\FY \subset \FX$ equipped with the trace topology a (topological) subgraph.
\end{definition}
We now turn our attention to the combinatorial aspects of topological graphs. Before doing so we recall basic definitions and facts of discrete graphs.
\begin{reminder}\index{discrete graph}
Given a countable set $V$, a subset $E \subset V\times V \setminus \{(v,v)\mid v\in V\}$ is called an edge set if $(v,w)\in E $ implies $(w,v)\in E$. In other words, $E$ defines a symmetric relation on $V$. The pair $(V,E)$ is referred to as a discrete graph. If $(w,v)\in E$ we say that $u,w$ are adjacent and denote this by $w\sim v$. If an edge $e\in E$ is given by $(v,w)$ we may also write $e\sim v$ and say $e$ is incident to $v$. The number of different edges incident to a fixed vertex is called its vertex degree. If this number is finite for all vertices, then the graph is called locally finite. To be precise, a graph defined in such a way is usually called a simple graph, since multiple edges and loops are excluded.
\end{reminder}
Our next task is to define a graph structure on a certain subset of a given topological graph. There is a natural choice of this, though as we will see, this set will be too small in general.
\begin{definition}\index{star graph!star-shaped chart}\index{star graph!degree}\index{vertex}
Let $\FX$ be a topological graph and $x\in\FX$.
\begin{itemize}
\item[(a)] If $x$ has a $d$-star-shaped chart, we call the number $d$ the degree of $x$ and denote it by $\Deg(x)$.
\item[(b)] The elements of the set $\FV_0:=\{ x\in \FX \mid \Deg (x)\neq 2\}$ are called proper vertices. Any closed, discrete set $\FV$ which contains $\FV_0$ is called a vertex set.
\item[(c)] For a fixed vertex set $\FV$ and $x\in \FX$ consider the set $\FU_\FV (x)$ consisting of all $U \subset \FX$ open, such that there is a $d$-star-shaped chart $(U,\varphi)$ with center $x$ and  $U\setminus\{x\}\cap \FV = \varnothing$. Then the subgraph
\[\bigstar_{\FV}(x) := \bigcup_{U\in \FU_\FV(x)} U\]
is called the maximal star-shaped neighborhood of $x\in \FX$, where we omit the subscript $\FV$ in general.
\end{itemize}
\end{definition}
From the definition of a topological graph, it is easy to see that the set of proper vertices is at most countable and has no accumulation points. But, as already noted, the set of proper vertices will in general be too small. This will become clear in the next proposition, as we are interested in vertex sets with certain properties.
\begin{prop}
There exists a vertex set $\FV$ with the following properties.
\begin{itemize}
\item[(i)] For all $x,y\in \FV$, $x\neq y$ we have either $\bigstar(x)\cap\bigstar(y) = \varnothing$ or $\bigstar(x)\cap\bigstar(y)$ is connected.
\item[(ii)] For all connected components of $\FX\setminus \FV$ the boundary consists of two elements of $\FV$.
\end{itemize}
\end{prop}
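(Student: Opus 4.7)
The plan is to start with the set $\FV_0$ of proper vertices and enlarge it to a vertex set $\FV$ by inserting finitely or countably many degree-two points on each connected component of $\FX \setminus \FV_0$ that is ``degenerate''. The key observation is that every point in $\FX \setminus \FV_0$ has a $2$-star-shaped chart, so this complement is a topological $1$-manifold without boundary; each of its connected components is therefore homeomorphic either to an open interval or to a circle.

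I would then inspect each component $C$ of $\FX \setminus \FV_0$ and decide where to insert points according to a short case analysis. If $\partial C = \{x,y\}$ consists of two distinct proper vertices and no other component of $\FX \setminus \FV_0$ has the same boundary pair, leave $C$ alone. If $C$ is a loop, i.e.\ $\bar C$ is compact and $\partial C$ consists of a single proper vertex, insert two points in $C$. If $C$ is homeomorphic to $\IS$, i.e.\ $\bar C = C$ is compact without proper vertices, insert three points. If $\bar C$ is non-compact in $\FX$, insert a discrete sequence of points in $C$ that accumulates only at the ends of $C$, so that every sub-component becomes an open interval with two distinct endpoints in the resulting set. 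Finally, if several components of $\FX \setminus \FV_0$ share the same unordered pair of boundary proper vertices (a multi-edge configuration), insert one midpoint into all but one of them. Let $\FV$ be the union of $\FV_0$ with all inserted points. Because only locally finitely many points are added per component, and distinct components of $\FX \setminus \FV_0$ are separated in $\FX$ by elements of $\FV_0$, the set $\FV$ is closed and discrete, hence a vertex set in the sense of the definition.

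Property (ii) is then immediate from the construction: every connected component of $\FX \setminus \FV$ is an open arc whose two ends in $\FX$ are distinct elements of $\FV$. For property (i), fix distinct $x,y \in \FV$ and observe that the maximal star-shaped neighborhood $\bigstar(x)$ is a connected open set of the form $\{x\} \cup \bigcup_{i=1}^{\Deg(x)} h_i$, where each $h_i$ is an open ``half-edge'' emanating from $x$ along one of the $\Deg(x)$ edges incident to $x$, extending up to but excluding the next vertex of $\FV$. By the subdivision step any two adjacent vertices of $\FV$ are joined by exactly one edge, so $\bigstar(x) \cap \bigstar(y)$ is either empty (when $x$ and $y$ are non-adjacent) or equal to the interior of the unique edge joining them, an open arc and hence connected.

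The main technical obstacle is the careful bookkeeping in the subdivision step: one must check directly that each of the degenerate configurations (loop, circle, multi-edge, non-compact component) really requires the stated number of inserted points to avoid producing a pair of sub-components sharing both endpoints, which in turn would yield a disconnected intersection of two maximal star neighborhoods. In particular, a single midpoint on a loop, or only two points on a circle, is insufficient, as a direct inspection of $\bigstar(\cdot)$ confirms.
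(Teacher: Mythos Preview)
Your proposal is correct and follows essentially the same constructive approach as the paper: classify the connected components of $\FX\setminus\FV_0$ and insert the appropriate number of degree-two points in each degenerate case (loops, rays, multi-edges, and the vertex-free cases $\IS$ and $\IR$). The only cosmetic differences are that you subdivide all but one edge in a multi-edge configuration while the paper subdivides all of them, and you spell out explicitly why two points on a loop and three on a circle are required, whereas the paper leaves the $\FV_0=\varnothing$ case as ``obvious''.
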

\begin{proof}
The proof will be constructive. If we have $\FV_0 = \varnothing$, then the topological graph is homeomorphic to $\IS$ or $\IR$, depending on whether it is compact or not. This follows by covering arguments, as each neighborhood of an arbitrary $x\in \FX$ is homeomorphic to an interval and by using that the graph is assumed to be connected, Hausdorff and second countable. In this case the choice of $\FV$ is obvious.\\
Assume $\FV_0 \neq \varnothing$. Denote the set of all connected components of $\FX\setminus \FV_0$ by $\FE_0$, then we can represent this set as disjoint union
\[\FE_0 = \FE \cup \FE_m \cup \FE_l \cup \FE_r,\]
where
\begin{eqnarray*}
\FE   &=& \{ \FY\in \FE_0 \mid \partial \FY = \{x,y\} \mbox{ and } \bigstar_{\FV_0}(x)\cap\bigstar_{\FV_0}(y)= \FY\}\\
\FE_m &=& \{ \FY\in \FE_0 \mid \partial \FY = \{x,y\} \mbox{ and } \bigstar_{\FV_0}(x)\cap\bigstar_{\FV_0}(y)\neq \FY\}\\
\FE_l &=& \{ \FY\in \FE_0 \mid \partial \FY = \{x\} \mbox{ and } \bar{\FY} \mbox{ compact}\}\\
\FE_r &=& \{ \FY\in \FE_0 \mid \partial \FY = \{x\} \mbox{ and } \bar{\FY} \mbox{ not compact}\}.
\end{eqnarray*}
Note that for $\FY\in \FE_r$ its closure is homeomorphic to $\IR_0^+$. Fix an atlas $\FA$ on $\FX$ such that $\FE_0\subset\FA$, and denote for $\FY\in \FE_0$ the associated homeomorphism to $(0,1)$ by $\pi_\FY$. We define a selection mapping $\Phi$ on $\FE_0$ as follows:
\[\Phi(\FY) :=
\begin{cases}
\partial\FY & ,\FY\in \FE\\
\partial\FY \cup \varphi_\FY^{-1}(\{\frac{1}{2}\}) &, \FY \in \FE_m\\
\partial\FY \cup \varphi_\FY^{-1}(\{\frac{1}{3}\},\{\frac{2}{3}\}) & ,\FY \in \FE_l\\
\partial\FY \cup \tilde{\varphi}_\FY^{-1}(\IN) & ,\FY \in \FE_r
\end{cases},\]
where $\tilde{\varphi}_\FY$ denotes the homeomorphism to $\IR_0^+$.\\
We now set $\FV = \Phi(\FE_0)$. By construction this set satisfies $(i)$ and $(ii)$.
\end{proof}
The set $\FV$ in the proposition is chosen such that all components of $\FX\setminus \FV$ have the desired properties
\begin{itemize}
\item[(i)] that for each $x,y\in \FV$ there exists at most  one connected component $\FY$ of $\FX\setminus\FV$ with $\{x,y\}\in \overline{\FY}$, and
\item[(ii)] each connected component has exactly two boundary points.
\end{itemize}
The set $\FV_0$ will be too small in general, since the set of multiple edges $\FE_m$, the set of loops $\FE_l$ and the set of rays $\FE_r$ will not be empty in general. This is essentially the reason why we have to enlarge the set of vertices.
\begin{definition}\index{vertex!proper}
We call a specific choice of a vertex set proper, if it satisfies $(i)$ and $(ii)$ of the previous proposition.
\end{definition}
Having fixed a proper vertex set of a given topological graph, we say $x,y\in \FV$ are in relation, if $\bigstar(x)\cap \bigstar(y)\neq \varnothing$. This defines a graph structure on $\FV$, and we denote this graph by $\Gamma_\FX = (V,E)$. The so constructed graph will be simple, due to the choice of $\FV$. Furthermore, we consider the special atlas $\FA_\FX = \FV_\FX \cup \FE_\FX$, where $\FV_\FX=\{\bigstar (x) \mid x\in \FV\}$ and $\FE_\FX$ is the set of connected components of $\FX\setminus \FV$. Since vertices can be identified with their maximal star-shaped neighborhoods and edges with connected components, we see that the atlas $\FA_\FX$ encodes all information on topology and graph structure. For this reason, we will denote this atlas also by $\Gamma_\FX$ to put emphasis on this fact and call the elements of $\FV_\FX$ vertex-stars and the elements of $\FE_\FX$ edges. The mapping $\FX\mapsto \Gamma_\FX$ will be called a graph representation of $\FX$.\medskip\\
The next question is, given a discrete graph how can we construct a topological graph from it. The idea is obvious: we consider each edge as an interval and glue them together at the endpoints according to the graph structure. In particular, given a discrete graph $\Gamma =(V,E)$, we associate the continuous edge  $\FX_e = \{e\}\times (0,1)$ to each edge $e\in E$ and consider
\[\FX_\Gamma = \bigcup_{e\in E} \FX_e \cup V.\]
In order to define an appropriate topology, fix an orientation on $\Gamma$, i.e. a map $(\partial^+,\partial^-):E \to V\times V$, such that $e=(\partial^+(e),\partial^-(e))$ where $\partial^+(e)$ is called the initial vertex and $\partial^-(e)$ the terminal vertex. Define the function $\Pi: \bigcup\limits_{e\in E} \{e\} \times [0,1] \to \FX_\Gamma$ as
\[\Pi(x) := \begin{cases} x &, \mbox{ if } x\in \bigcup\limits_{e\in E} (\{e\}\times (0,1))\\
 \partial^+(e) &, \mbox{ if } x = (e,0) \mbox{ for some } e\in E \\
 \partial^-(e) &, \mbox{ if } x = (e,1) \mbox{ for some } e\in E \end{cases}\]
and denote the induced equivalence relation by $\stackrel{\Pi}{\sim}$, i.e. for $x,y\in \bigcup\limits_{e\in E} \{e\} \times [0,1] $
\[ x\stackrel{\Pi}{\sim} y :\Leftrightarrow \Pi(x) = \Pi(y).\]
Via this relation we can identify $\FX_\Gamma$ with the quotient space $\Bigl(\bigcup\limits_{e\in E} \{e\}\times [0,1] \Bigr) /_{\stackrel{\Pi}{\sim}}$. Note that the canonical projection onto the equivalence classes agrees with $\Pi$. We equip the latter space with the quotient topology, i.e. the finest topology such that the canonical projection $\Pi:x\mapsto [x]$ onto the equivalence classes is continuous. Note that the topology on $\FX_\Gamma$ is independent of the choice of a specific orientation, since two different orientations define the same equivalence relation.
\begin{lemma}
Let $\Gamma = (V,E)$ be a locally finite discrete graph. Then the space $\FX_\Gamma$ equipped with the quotient topology is a topological graph.
\end{lemma}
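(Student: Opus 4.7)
The plan is to verify in turn the three defining properties (i)--(iii) of a topological graph for $\FX_\Gamma$ under the quotient topology. A key preliminary observation is that the quotient map $\Pi$ is open: if $U\subset \bigcup_{e\in E} \{e\}\times[0,1]$ is open, then its saturation is obtained by adjoining, for each endpoint $(e,0)$ or $(e,1)$ lying in $U$, the finitely many other half-edge ends identified with the corresponding vertex; by local finiteness this saturation is again open.

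For the Hausdorff property I take two distinct classes $[x]\neq [y]$ in $\FX_\Gamma$. Each equivalence class is finite: the class of an interior point of an edge is a singleton, while the class of a vertex $v$ has cardinality equal to the number of half-edge ends at $v$, finite by local finiteness. Around each point of $\Pi^{-1}([x])\cup \Pi^{-1}([y])$ I place a small (half-)open interval of common length $\eps>0$; for $\eps$ small enough the resulting finitely many neighborhoods are pairwise disjoint in $\bigcup_{e\in E} \{e\}\times[0,1]$, and are saturated by construction, so they project to disjoint open neighborhoods of $[x]$ and $[y]$. For second countability, $V$ is countable and, by local finiteness, so is $E$. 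On each $\{e\}\times(0,1)$ I fix the countable base of open intervals with rational endpoints; for each vertex $v$ and each $n\in \IN$ I form the saturated ``star neighborhood'' $S_v^{(n)}$ consisting of $v$ together with the half-segment of length $1/n$ on every edge incident to $v$. Only finitely many such half-segments appear at each $v$, so the total family is countable, and using openness of $\Pi$ one checks directly that it is a base.

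It remains to produce a $d$-star-shaped chart at every point. For a class $[x]$ of an interior edge point this is immediate: a small open interval around $x$ is mapped homeomorphically by $\Pi$ to a neighborhood of $[x]$, which after rescaling is homeomorphic to $\bigstar_2$. For a vertex $v$, let $d$ be the number of half-edge ends at $v$ (so each loop at $v$ contributes $2$), which is finite. For $n$ large enough the half-segments making up $S_v^{(n)}$ are pairwise disjoint in $\bigcup_{e\in E} \{e\}\times[0,1]$, so I label them $k=1,\dots,d$ and define $\varphi:S_v^{(n)}\to\bigstar_d$ by sending the point at parameter distance $t\in[0,1/n)$ on the $k$-th half-segment to $nt\cdot(\cos k\tfrac{2\pi}{d},\sin k\tfrac{2\pi}{d})$, with $v\mapsto 0$. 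Bijectivity and continuity of $\varphi$ are direct, and the inverse is continuous by the universal property of the quotient topology, since its composition with the canonical projection is piecewise affine on $\bigcup_e\{e\}\times[0,1]$.

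The main obstacle is the bookkeeping at vertices carrying loops or multiple edges: the class of such a vertex is not a singleton but rather collects all the half-edge ends meeting at it, and one has to check that local finiteness genuinely allows a simultaneous choice of a scale $1/n$ making the finitely many half-segments disjoint both in the domain and in the target $\bigstar_d$. Once this is handled, the rest is routine, and all three conditions of the definition of a topological graph follow.
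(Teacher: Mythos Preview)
Your proof is correct and follows essentially the same approach as the paper: both construct the star-shaped chart at a vertex by enumerating the incident edges and mapping the half-segments affinely onto the rays of $\bigstar_d$, and treat interior edge points via $\bigstar_2$. You are in fact more thorough than the paper, which simply declares (i) and (ii) to be ``clear''; note also that your worry about loops and multiple edges is moot here, since the paper's discrete graphs are simple by definition.
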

\begin{proof}
In order to show the defining properties of a topological graph, we construct a base of the quotient space. A set $U\subset \FX_\Gamma$ is open, if $\Pi^{-1}(U)$ is open in the disjoint union. It suffices to consider all open sets such that there is at most one vertex in it, since this gives a base of the topological space. Let $x\in \FX_e \subset\FX_\Gamma$ for some $e\in E$, let $U$ be a neighborhood of $x$ such that $U\subset \FX_e$. Then $\Pi^{-1} (U) \subset \{e\}\times (0,1)$ is connected and therefore homeomorphic to $\bigstar_2$.\\
For vertices $x\in V \subset \FX_\Gamma$, consider an open neighborhood $U$ of $x$ such that $U \subset \bigstar(x) = \{x\}\cup \bigcup\limits_{e\sim x} \FX_e$. Then $\Pi^{-1} (U)$ is the disjoint union $\bigcup\limits_{e:\partial^+(e)=x} \{e\}\times [0,r_e) \cup \bigcup\limits_{e:\partial^-(e)=x} \{e\}\times (r_e,1]$ (for some numbers $r_e$ depending on $U$), which is open in the disjoint union $\bigcup\limits_{e\in E} \{e\}\times [0,1]$. Enumerate the edges as $e_1,\dots,e_{\Deg(x)}$ and define the mapping $\varphi: \bigstar(x) \to \bigstar_{\Deg(x)}$ by
\[\varphi(y)=
\begin{cases} 0& ,y= x\\
t(\cos(k\frac{2\pi}{\Deg(x)}),\sin(k\frac{2\pi}{\Deg(x)})) & ,y = (e_k,t) \mbox{ and } \partial^+(e_k)=x\\
(1-t)(\cos(k\frac{2\pi}{\Deg(x)}),\sin(k\frac{2\pi}{\Deg(x)})) & ,y = (e_k,t) \mbox{ and } \partial^-(e_k)=x
\end{cases}.\]
This mapping is one-to-one, and, as shown above, it is a homeomorphism.\\
Is is clear that $\FX_\Gamma$ has a countable base and that it is Hausdorff, thus we are done with the proof.
\end{proof}
Due to the previous lemma we have a mapping $\Gamma \mapsto \FX_\Gamma$ and we call the topological graph $\FX_\Gamma$ the associated topological graph to $\Gamma$. We are now in a position to state the main theorem of the first section, which is a compatibility result of the two constructions from above and gives a representation of a topological graph as a $CW$-complex. The interested reader in $CW$-complexes is referred to \cite{M}.
\begin{theorem}
Let $\FX$ be a topological graph, then it is homeomorphic to $\FX_{\Gamma_\FX}$. Moreover $\FX$ is a $CW$-complex, that is $\FX$ is a Hausdorff space equipped with a subspace $\FV$ satisfying
\begin{itemize}
\item[(i)] $\FV$ is a discrete, closed subspace of $\FX$,
\item[(ii)] $\FX\setminus\FV$ is the disjoint union of open subsets $\FX_e$, where each $\FX_e$ is homeomorphic to an open interval,
\item[(iii)] for each $\FX_e$, its boundary $\partial \FX_e$ is a subset of $\FV$ consisting of two points, and in particular the pair $(\bar{\FX}_e, \FX_e)$ is homeomorphic to $([0,1],(0,1))$,
\item[(iv)] $\FX$ is equipped with the so-called weak topology, that is a subset $\FY\subset \FX$ is closed (open) if and only if $\FY\cap \bar{\FX}_e$ is closed (open) for all edges $\FX_e$.
\end{itemize}
\end{theorem}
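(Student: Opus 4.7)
Fix a proper vertex set $\FV$ for $\FX$ as supplied by the previous proposition, and let $\Gamma_\FX = (V,E)$ be the associated graph, whose edges $e\in E$ are the connected components $\FX_e$ of $\FX\setminus \FV$. The proof has two halves: first construct an explicit homeomorphism $\Phi\colon \FX_{\Gamma_\FX}\to \FX$, and then read off (i)--(iv) from the resulting picture.

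\textbf{Step 1: parameterize each edge closure.} For every $e\in E$, I first show that $\FX_e$ is homeomorphic to $(0,1)$ and $\overline{\FX_e}$ to $[0,1]$. By the choice of a proper $\FV$, every point of $\FX_e$ has a $\bigstar_2$-chart, so $\FX_e$ is a connected, second-countable, Hausdorff $1$-manifold without boundary; the classification of such manifolds (or an elementary patching of charts along a maximal path) gives a homeomorphism with $(0,1)$. By property (ii) of a proper vertex set, $\partial \FX_e=\{\partial^+e,\partial^-e\}\subset \FV$, and locally near each of these two vertices one ray of the star-shaped chart is contained in $\FX_e$; extending the parameterization across these endpoints yields a homeomorphism $h_e\colon \overline{\FX_e}\to [0,1]$ with $h_e(\partial^+e)=0$, $h_e(\partial^-e)=1$.

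\textbf{Step 2: assemble the global homeomorphism.} Define $\tilde\Phi\colon \bigsqcup_{e\in E}\{e\}\times[0,1]\to \FX$ by $\tilde\Phi(e,t) := h_e^{-1}(t)$. This map identifies precisely the pairs identified by $\stackrel{\Pi}{\sim}$ (both collapse $(e,0)$ to $\partial^+e$ and $(e,1)$ to $\partial^-e$), hence factors through $\FX_{\Gamma_\FX}$ to a well-defined bijection $\Phi\colon \FX_{\Gamma_\FX}\to \FX$. Continuity of $\Phi$ is immediate from continuity of each $\tilde\Phi\!\restriction_{\{e\}\times[0,1]}$ together with the universal property of the quotient topology. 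For continuity of $\Phi^{-1}$ I argue locally: at an interior edge point, both topologies agree with the open-interval topology via $h_e$; at a vertex $v$, the maximal star-shaped neighborhood $\bigstar(v)$ consists of $v$ together with the edge half-neighborhoods attached at $v$, and by property (i) of a proper vertex set these half-edges are disjoint, so the preimage under $\Phi$ in $\FX_{\Gamma_\FX}$ is exactly a union of the form appearing in the proof of the preceding lemma, which was shown there to be star-shaped of degree $\Deg(v)$. Matching the two star-charts by radial identification shows $\Phi^{-1}$ is continuous near $v$. Hence $\Phi$ is a homeomorphism.

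\textbf{Step 3: the CW-structure.} Properties (i)--(iii) now follow essentially by construction: $\FV$ is discrete and closed by the definition of vertex set, so (i) holds; the connected components of $\FX\setminus\FV$ are the $\FX_e$, which by Step 1 are homeomorphic to open intervals, giving (ii); and Step 1 also gives (iii) together with the fact that each edge closure has exactly the two endpoints from $\FV$ as its boundary. For the weak topology (iv), one direction is clear since each $\overline{\FX_e}$ carries the subspace topology. For the converse, suppose $\FY\cap\overline{\FX_e}$ is closed in $\overline{\FX_e}$ for every $e$; I need to show $\FY$ is closed in $\FX$. It suffices to verify this locally. At an interior edge point of $\FX_e$, the claim is automatic. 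At a vertex $v\in \FV$, the neighborhood $\bigstar(v)$ meets only the finitely many (since $\Deg(v)<\infty$) edge closures incident to $v$, and the trace of $\FY$ on $\bigstar(v)$ is the union of the corresponding closed pieces $\FY\cap \overline{\FX_e}\cap\bigstar(v)$, hence closed in $\bigstar(v)$. Since $\FX$ admits an open cover by such $\bigstar(v)$ together with interiors of edges, $\FY$ is closed, proving (iv).

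\textbf{Anticipated obstacle.} The technically delicate point is Step 1, the claim that $\overline{\FX_e}$ is homeomorphic to $[0,1]$; producing a global parameterization requires either an appeal to the classification of $1$-manifolds or a careful patch-by-patch argument along a maximal chain of $\bigstar_2$-charts using second countability, and extending it across the two vertex endpoints demands that the two boundary germs supplied by the star-shaped charts at $\partial^\pm e$ are glued compatibly with the parameterization on $\FX_e$. Once this is in place, the remainder of the proof is essentially bookkeeping.
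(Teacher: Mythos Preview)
Your proof is correct and follows essentially the same approach as the paper, which reduces everything to the identification of $\bigstar(x)$ with $\{x\}\cup\bigcup_{e\sim x}\FX_e$ established in the proof of the preceding lemma. The paper's proof is extremely terse (two sentences, declaring the CW properties ``obvious''), whereas you have carefully spelled out the edge parameterizations, the quotient-factoring of the global map, and the local-finiteness argument for the weak topology; this added detail is all in keeping with the paper's intended argument rather than a different route.
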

\begin{proof}
The proof follows from the remark that we can identify the set $\bigstar (x)$ with $\{x\}\cup \bigcup\limits_{e\sim x} \FX_e$ as in the previous proof. This gives a base of the topology. The properties of the $CW$-complex are obvious.
\end{proof}
The theorem gives us the opportunity to translate topological properties into graph theoretical properties. For instance, a topological graph is compact if and only if the discrete graph is finite.\medskip\\
From now on, whenever we have a topological graph, we denote it by $\FX_\Gamma=(\FV_\FX,\FE_\FX)$, to put emphasis that we have chosen a proper vertex set which is associated to the discrete graph $\Gamma_\FX$ and which serves at the same time as the special atlas from above encoding the topological graph.

\section{Weights, metrics and combinatorics}
In this section we introduce several functions defined on the product of the vertex set of a topological graph with itself and investigate their meaning. Recall that each $\tilde{\FV} \supset \FV$ gives a proper vertex set, such that weights may have a rather general meaning.\medskip\\
Let $\FX_\Gamma$ be a topological graph with vertex set $\FV$.
\begin{definition}\index{weight}
A symmetric function $w:\FV\times \FV \to [0,\infty)$ is called a weight over~$\FV$.
\end{definition}
By definition we can associate a graph to $\FV$ if we set $x\sim y$ whenever $w(x,y)>0$. As vague interpretation, we have in mind that $w(x,y)$ measures the distance between $x,y$ in a certain sense. Recall from graph theory, that a combinatorial path is a finite sequence of vertices $(x_1,\dots,x_n)$ such that $x_k\sim x_{k+1}$. The number $n$ is then called the combinatorial length of the path. We define the combinatorial metric with respect to a weight of two vertices $x,y\in \FV$ to be the infimum of the combinatorial lengths of all paths connecting $x,y$. This defines a metric on the set of vertices whenever the graph associated to the weight is connected, that is each pair of vertices can be connected by a path.\medskip\\
Turning back to topological graphs, recall that a famous result of Urysohn says, that each regular Hausdorff space which has a countable base is metrizable. Hence, by definition each topological graph is metrizable. Let $d$ be a metric that induces the topology of the graph. Then given an edge $\FX_e$ in $\FX$, its length is defined as
\[L(\FX_e) :=  \sup_{\FZ} \sum_{x_k\in \FZ} d(x_k,x_{k+1}),\]
where the supremum is taken over all partitions $\FZ$ of the set $\FX_e$. Thus we have a mapping $l:\FV\times \FV \to [0,\infty)$, $l(x,y) = L(\FX_e)$ when $e=(x,y)$ and $l(x,y)=0$ otherwise. This gives rise to the following definition.
\begin{definition}\index{weight!edge}
A weight $w:\FV\times \FV \to [0,\infty)$ is called an edge weight if $w(x,y)~\neq~0$ if and only if $x\sim y$. Therefore we will write $w(e)$ for $w(x,y)$ if $e=(x,y)$.
\end{definition}
From the discussion above, we see that $l(x,y)$ defines an edge weight. Conversely, given an edge weight we can define a metric on $\FX_\Gamma$. This is done by introducing a special atlas which involves an edge weight. We first introduce vertex and edge coordinates. Let $l$ be an edge weight and let $\Gamma =(V,E)$ be the associated graph of $\FX_\Gamma$. We will use $V$ and $E$ as index sets below. We start with the description of vertex coordinates. For $x\in \FV$ take the set $\bigstar(x) \in \FV_{\FX_\Gamma}$. Then as seen above we have $\bigstar(x)=\{x\}\cup\bigcup\limits_{e\sim x} \FX_e$. Let $(e_k)_{i=1}^{\Deg(x)}$ be an enumeration of the edges $e\sim x$. Denote by $\varphi_x$ the homeomorphism from $\bigstar(x)=\{x\}\cup\bigcup\limits_{i=1}^{\Deg(x)} \FX_{e_i}$, defined by
\[\varphi_x(y) := \begin{cases} (0,0) &, \mbox{ for } y=x\\ l(e_k) t (\cos k \tfrac{2\pi}{\Deg(x)},\sin k \tfrac{2\pi}{\Deg(x)})&,\mbox{ for } y=(e_k,t) \end{cases},\]
to the set
\[\{r_k(\cos k \tfrac{2\pi}{\Deg(x)},\sin k \tfrac{2\pi}{\Deg(x)})\in \IR^2 \mid 0\leq r_k<l(e_k), k=1,\dots, \Deg(x)\},\]
which is equipped with the trace topology coming from $\IR^2$,
In the latter set we can use polar coordinates, i.e. each point is given by the pair $(r,k)$ where $r$ denotes the distance from $(0,0)$ and $k$ relates to the angle $(\cos k \tfrac{2\pi}{\Deg(x)},\sin k \tfrac{2\pi}{\Deg(x)})$. With a little abuse of notation we denote this set also by $\bigstar(x)$. It is clear that the homeomorphism depends on the enumeration of the incident edges. Thus the change-of-coordinate mapping for two homeomorphisms corresponding to the same vertex, but with different enumerations of the incident edges, gives a permutation of the edges as set in $\IR^2$. Let now $y\sim x$, denote by $e$ the unique edge connecting $x$ and $y$. Then for two homeomorphisms $\varphi_x$ and $\varphi_y$ associated to $x$ and $y$ the change-of-coordinate mapping is simply given by
\[\varphi_x \circ \varphi_y^{-1}: ((0,l(e)),k_1) \to ((0,l(e)),k_2),\quad (t,k_1)\mapsto (l(e) - t, k_2),\]
where $k_1,k_2$ are coming from the enumeration of the edges. Thus given $z\in \FX_e$ we have $\varphi_x(z)= (t,k_1)$ and $\varphi_y(z)=(l(e)-t,k_2)$ for some $t\in (0,l(e))$.\medskip\\
Secondly, we use vertex-coordinates for the introduction of edge coordinates. As vertex coordinates depend on an enumeration of the edges, edge coordinates will depend on an orientation. Fix an orientation, let $\FX_e\in \FE_{\FX_\Gamma}$ and let $\varphi_{\partial_+(e)}$ be a homeomorphism corresponding to $\bigstar(\partial_+(e))$ as above. We then define the mapping $\varphi_e: \FX_e \to (0,l(e))$ for $x\in \FX_e$ by
\[x\mapsto \varphi_{\partial_+(e)}|_{\FX_e}(x) = (t,k) \mapsto t,\]
and therefore $\varphi_e$ is independent from the choice of enumeration of edges for $\varphi_{\partial_+(e)}$. In order to get that the mappings $(\phi_e)_{e\in E}$ cover $\FX_\Gamma$, we extend each $\phi_e$ to a mapping from $\FX_e\cup\{\partial^+(e),\partial^-(e)\}$ to $[0,l(e)]$ by setting $\phi_e(\partial^+(e))=0$ and $\phi_e(\partial^-(e))=l(e)$. We assume from now on that the special atlas, introduced in the previous section, is equipped with the homeomorphisms defined above.\medskip\\
We get now back to the introduction of a metric using an edge weight $l$. For this reason we need the following definition.
\begin{definition}\index{path}\index{path!graph representation}
A subgraph $\Fp_x^y$ is called a path connecting $x,y\in \FX$ if there exists a homeomorphism $\varphi: \Fp_x^y \to [0,1]$, such that $\varphi(x)= 0$ and $\varphi(y)=1$. A graph representation of a path is given by the ordered set
$(p_0, p_1, \dots, p_{n-1}, p_n)$, where $p_0=x$, $p_n=y$ and $\{p_1,\dots, p_{n-1}\} = \FV \cap p_x^y$ with the order $p_i< p_j$ whenever $\varphi(p_i)< \varphi(p_j)$.
\end{definition}
It is clear that the closure of each edge defines a path. Moreover let $\FX_\Gamma$ be connected, then given $x,y\in \FX_\Gamma$ there exist $p_1,\dots,p_{n-1}$ with $p_{j} \sim p_{j+1}$, $p_1\in \overline{\bigstar(x)}$ and $p_{n-1}\in \overline{\bigstar(y)}$ such that
\[\Fp_x^y := \bigcup_{j=1}^n \Fp_{p_{j-1}}^{p_j}\]
defines a path from $x$ to $y$ with graph representation $(p_0,\dots,p_n)$. Now given an edge weight $l:\FV\times \FV \to [0,\infty)$ the length of a path connecting $x$ and $y$ with graph representation $(p_0,\dots,p_n)$ is defined as
\[L(\Fp_x^y)= \sum_{k=1}^n |\varphi_{e_k} (p_k) - \varphi_{e_k}(p_{k-1})|\]
where $e_k$ denotes the unique edge with $p_{k-1},p_k \in \FX_{e_k}$ and the mappings $\varphi_e$ are the homeomorphisms associated to $\FX_e$ constructed above. We define the metric as
\[d_l(x,y) := \inf \{L(\Fp) : \mbox{ $\Fp$  connects $x$ and $y$}\}.\]\index{metric!path metric}
By connectedness and local finiteness, we see that this defines a metric on $\FX_\Gamma$. A particular choice of an edge weight $l$ will be called metrization and a graph equipped with a metrization will be called a metric graph. Note that a given metric and the metric defined by weights coming from that metric are in general different from each other. If it is clear from the context which metrization is chosen we will suppress the subscript in general and just write $d$.\medskip\\
As in each metric space we can define the closed ball with center $x\in \FX_\Gamma$ radius $r>0$ to be the set
\[B_r(x)=\{y\in \FX_\Gamma \mid d(x,y)\leq r\}\]
and the associated sphere
\[S_r(x)=\{y\in \FX_\Gamma \mid d(x,y)= r\}.\]
Note that closed balls need not to be compact in general. This will also be discussed below. As both are defined intrinsically via the metric, graphs will be nice if the graph structure and the metric structure have a certain symmetry relation. This gives rise to the following definition.
\begin{definition}\index{vertex!radial vertex set}
Let $\FX_\Gamma$ be a metric graph and fix a root $o\in \FX_\Gamma$. The vertex set is called radial with respect to $o$, if there exists a strictly increasing sequence $(r_n)_n\subset (0,\infty)$ with $r_o=0$ and such that for all $n\in \IN$
\[S_{r_n}(o) \subset \FV \]
and
\[\bigcup_{n\in \IN_0} S_{r_n}(o) =\FV.\]
\end{definition}
Is is clear that there are graphs which are not radial. However, sometimes it is possible to obtain a radial graph by inserting improper vertices.\medskip\\
If $\FX_\Gamma$ is a radial graph with root $o\in \FX_\Gamma$, we can equip the graph with a canonical orientation. Let $e\in E$ be given, $e=(x,y)$, we then say $x$ is the initial vertex if $d(o,x) < d(o,y)$, otherwise we say $y$ is the initial vertex. For $n\in \IN_0$ we say a vertex $x\in \FV$ belongs to the $n$-th generation if $x\in S_{r_n}(o)$, and for $n\in \IN$ we say an edge belongs to the $n$-th generation if $\partial^+(e) \in S_{r_n}(o)$. It is clear that for all $r>0$ the set $S_r(o)$ is countable, and thus the number $\# S_r(o)$ is well defined. We immediately obtain that for $r=r_n$ the number $\#S_r(o)$ equals the number of vertices of the $n$-th generation, and for $r\in (r_n,r_{n+1})$ the number $\#S_r(o)$ equals the number of edges of the $n$-th generation. Thus the function $\#S_r(o)$ contains important combinatorial information. We can also use this function to obtain the degree as
\[\deg(x)= \lim_{r\to 0} \#S_r(x).\]
Moreover, given a root $o\in \FX_\Gamma$ we can split the vertex degree into the number of edges pointing to and pointing away from a point $x\in \FX_\Gamma$, i.e. the inward degree with respect to $o$ is defined as \index{star graph!degree!inward} \index{star graph!degree!outward}
\[\deg_-(x) := \lim_{r\to 0} \# (B_{d(o,x)}(o) \cap S_r(x))\]
and the outward degree with respect to $o$ is defined as
\[\deg_+(x) := \lim_{r\to 0} \# (B_{d(o,x)}(o)^\complement \cap S_r(x)).\]
As a first consequence we get for all $x\in \FX_\Gamma$
\[\deg(x)= \deg_-(x) + \deg_+(x),\]
and as a second one we obtain for $r\in (r_n,r_{n+1})$ that
\[\# S_r(o) =\sum_{x\in S_{r_n}(o)} \deg_+(x) = \sum_{x\in S_{r_{n+1}}(o)} \deg_-(x).\]
Both quantities give rise to further classifications of radial graphs.
\begin{definition}\index{graph!spherically symmetric}
Let $\FX_\Gamma$ be a radial metric graph with root $o$. If $\deg_-(x)$ and $\deg_+(x)$ are constant on the spheres $S_r(o)$, then we call $\FX_\Gamma$ a spherically symmetric graph. In this case, we write $\deg(r)$, $\deg_+(r)$ and $\deg_-(r)$. If furthermore $\deg_-(x)=1$ on $S_{r_n}(o)$, then we call $\FX_\Gamma$ a radial tree.
\end{definition}
Spherically symmetric metric graphs are analogues of model manifolds, see \cite{Gr-99} and spherically symmetric graphs, see \cite{Wo-09}. For spherically symmetric graphs we immediately obtain the relation
\[\# S_r(o) = \# S_{r_n}(o) \deg_+(r_n) = \# S_{r_{n+1}}(o) \deg_-(r_{n+1})\]
for $r\in (r_n,r_{n+1})$. In the next section we will also treat edge weights coming from a measure and vice versa. We finish our discussion with another kind of weights, viz., the jump and the killing weights. Their meaning will become clear in the upcoming chapters.
\begin{definition}\index{weight!jump}
A weight $j:\FV\times \FV \to [0,\infty)$ is called a jump weight, if for all $x\in\FV$ the sum $\sum\limits_{y\in\FV} j(x,y)$ is finite.\\
A weight $k:\FV\times \FV \to [0,\infty)$ is called a killing weight, if $k(x,y)= 0$ for all $x\neq y$, i.e. it is supported on the diagonal of $\FV\times \FV$. We will also write $k(x)$ instead of $k(x,x)$.
\end{definition}
From the definition, the jump weight is independent of the graph structure and as we have noted above, it defines a graph structure on $\FV$ itself. In what follows, all graph-theoretical notions with respect to that graph will have the prefix jump.\medskip\\
If we are again in the radial setting, we see that the length of two edges of the same generation have to be equal, i.e. the length weights are radial as well. To adapt our notation, for an edge $e=(x,y)$ of the $n$-th generation we write
\[l_n := l(x,y)\]
and we have
\[l_n = r_n-r_{n-1}.\]
For general weights we impose the following definition.
\begin{definition}\index{weight!killing}
A weight $\omega: \FV\times \FV \to [0,\infty)$ on a radial metric graph with root $o\in \FX_\Gamma$ is called radial with respect to $o$ if
\[d(x_1,o)=d(x_2,o) \mbox{ and } d(y_2,o)=d(y_2,o) \Longrightarrow \omega(x_1,y_1)=\omega(x_2,y_2).\]
In this case we set for $d(x,o)=r_n$ and $d(y,o)=r_m$
\[\omega(x,y)= \omega_{n,m}.\]
\end{definition}
We now turn back to the metric which was introduced above by edge weights. By definition this metric is a so-called length metric, roughly speaking the metric is defined by the length of distance minimizing paths. For a more general study of length spaces we refer to \cite{BBI}. In what follows we will draw our attention on metrizations which are complete. A famous result in Riemannian geometry relating metric completeness with what is called geodesic completeness is known under the name of Hopf-Rinow theorem. This was generalized to length spaces, see \cite{BBI}, and says that metric completeness is equivalent to every closed metric ball being compact. We will use this to get a characterization of completeness in terms of the length function via rays. A subgraph $\Fp$ is called a ray if there exists a homeomorphism $\varphi:\Fp\to [0,\infty)$ such that for all compact subsets $K\subset \FX_\Gamma$ we have $K^\complement \cap \Fp \neq \varnothing$. As for finite paths, there exists a graph representation in terms of an infinite sequence of vertices $(p_0,p_1, \dots)$ with $p_k\sim p_{k+1}$.
\begin{prop}\index{theorem!Hopf-Rinow}
The metric space $(\FX_\Gamma, d_l)$ is complete if and only if for every ray $\Fp = (p_0,p_1,\dots)$ we have
\[\sum_{k=0}^\infty l(p_k,p_{k+1}) = \infty.\]
\end{prop}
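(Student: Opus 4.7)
The natural approach is to reduce the claim to the Hopf--Rinow theorem for length spaces, cited as \cite{BBI}: since $(\FX_\Gamma,d_l)$ is a locally compact length space by construction, completeness is equivalent to every closed ball being compact. I will translate the compactness of closed balls into the infinite-length condition on rays.

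For the direction $(\Rightarrow)$ I plan to argue by contraposition. Suppose a ray $\Fp=(p_0,p_1,\dots)$ has $\sum_{k}l(p_k,p_{k+1})<\infty$. The estimate $d_l(p_k,p_m)\leq\sum_{j=k}^{m-1}l(p_j,p_{j+1})$ shows that $(p_k)$ is Cauchy, so completeness gives $p_k\to x$ for some $x\in\FX_\Gamma$. Local compactness provides a relatively compact neighborhood $U$ of $x$, and since $\FV$ is closed and discrete, $\FV\cap\overline{U}$ is finite. But the $p_k$ are pairwise distinct (because $\Fp$ is homeomorphic to $[0,\infty)$), which forces infinitely many distinct vertices to lie in $U$, a contradiction.

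For the direction $(\Leftarrow)$ I assume that every ray has infinite length and show that $\overline{B_R(x_0)}$ is compact for every $R>0$ and $x_0\in\FX_\Gamma$. If only finitely many closed edges $\overline{\FX_e}$ meet $\overline{B_R(x_0)}$, compactness follows since each $\overline{\FX_e}$ is homeomorphic to $[0,l(e)]$ and $\overline{B_R(x_0)}$ is closed in their finite union. Otherwise infinitely many distinct edges $e_i$ meet $\overline{B_R(x_0)}$; the endpoint of $e_i$ reached first along a minimizing path from $x_0$ lies in $\overline{B_R(x_0)}$, and local finiteness of $\Gamma_\FX$ prevents any vertex from appearing as this near endpoint for infinitely many $e_i$. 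Hence $\overline{B_R(x_0)}$ contains infinitely many vertices. Fixing $v_0\in\FV$ with $d_l(x_0,v_0)<\infty$, all such vertices lie within $d_l$-distance $R':=R+d_l(x_0,v_0)$ from $v_0$. I will then build, by a Dijkstra-type construction with $\varepsilon$-slack, a rooted subtree $T\subset\Gamma_\FX$ in which the tree distance from $v_0$ to each vertex agrees with $d_l(v_0,\cdot)$ up to arbitrarily small error. Restricting $T$ to vertices at $T$-distance at most $R'+1$ from $v_0$ yields an infinite, connected, locally finite subtree, to which K\"onig's lemma applies to extract an infinite simple path $v_0,r_1,r_2,\dots$. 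Its partial edge-length sums $\sum_{j=0}^{k-1}l(r_j,r_{j+1})$ equal the $T$-distances $d_T(v_0,r_k)\leq R'+1$, so this produces a ray of finite total length, contradicting the assumption.

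The main difficulty will be the construction of the shortest-path subtree in the backward direction: because edge lengths are only required to be positive, not bounded below, the usual argument that shortest paths are attained by minimum-edge-count paths breaks down, and one has to work with $\varepsilon$-minimizing paths, or equivalently a greedy construction with controlled slack, to obtain a tree suitable for K\"onig's lemma. Once this tree is in hand the extraction of a finite-length ray is routine, and verifying that the resulting subgraph is a ray in the sense of the definition, namely homeomorphic to $[0,\infty)$ and meeting the complement of every compact set, is immediate from simplicity of the path and the fact that relatively compact sets contain only finitely many vertices.
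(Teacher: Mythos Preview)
Your proposal is correct and follows essentially the same strategy as the paper. The forward direction is identical, and for the backward direction both arguments reduce to Hopf--Rinow for length spaces and then manufacture a ray of finite length when some closed ball fails to be compact.

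The organizational difference is minor but worth noting. The paper first sets $r=\sup\{R>0:B_R(x)\text{ compact}\}$, takes $r_k\nearrow r$, and for each $k$ collects genuine shortest paths from $x$ to the (finitely many) boundary points of the compact ball $B_{r_k}$. It then runs an ad hoc path-continuation argument on these nested families of paths; this is K\"onig's lemma in disguise. Because the paper works inside balls that are already known to be compact, shortest paths exist and no $\varepsilon$-slack is needed. Your approach bypasses the supremum radius and builds an approximate shortest-path tree directly, then invokes K\"onig's lemma explicitly. Both yield a ray of bounded total length; the paper's version is slightly cleaner in that it avoids the Dijkstra-with-slack construction you flag as the main technicality, while yours is more transparently an application of a standard combinatorial lemma.

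One small wording issue: you invoke a ``minimizing path'' to locate a near endpoint of each $e_i$ in the ball, but at that stage you have not yet established existence of geodesics. The conclusion is still correct: any path from $x_0$ to a point of $\FX_{e_i}\cap B_R(x_0)$ of length $\leq R+\varepsilon$ passes through an endpoint of $e_i$, and since there are only two endpoints one of them lies in the closed ball $B_R(x_0)$. Just phrase it with $\varepsilon$-approximate paths rather than minimizers.
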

\begin{proof}
Let $\FX_\Gamma$ be complete and assume that there exists a ray with finite length, i.e. $\sum\limits_{k=0}^\infty l(p_k,p_{k+1}) < \infty $. Since $d_l(p_k,p_{k+1}) \leq l(p_k,p_{k+1})$ we get that for $m_1,m_2\in \IN$ sufficiently large
\[d_l(p_{m_1},p_{m_2}) \leq \sum_{k=m_1}^{m_2-1} l(p_k,p_{k+1}) < \epsilon\]
and thus $(p_k)_k$ is a Cauchy sequence. By completeness there exists $p\in \FX_\Gamma$ such that $p_k \to p$. Note that for all $x\in \FX_\Gamma$ there exists $\epsilon>0$ such that $B_\epsilon(x)\setminus \{x\} \cap \FV =\varnothing$, giving that small neighborhoods of points of $\FX_\Gamma$ cannot contain infinitely many vertices. Thus $p\not\in \FX_\Gamma$ which contradicts the completeness assumption.\\
For the other direction assume that all rays have infinite lengths. We will show that all closed metric balls $B_r(x) = \{ y\in \FX_\Gamma \mid d_l(x,y) \leq r\}$ are actually compact for all $x\in \FX_\Gamma$ and $r>0$. The Hopf-Rinow theorem for length-spaces yields that the space $\FX_\Gamma$ is complete. So let $B_r(x)$ be such a closed metric ball, then either the number of edges having non-empty intersection with $B_r(x)$ is finite or infinite. In the first case we immediately obtain that the ball is compact. By constructing a ray with finite length, we show that the second case cannot happen . To do so we assume that $r := \sup \{R>0\mid B_R(x) \mbox{ is compact}\}$ is finite, as otherwise all closed balls are compact. Note that small balls lead to star graphs which are compact, and thus $r>0$. Let $(r_k)\subset \IR_+$  with $r_k \nearrow r$ and denote by $B_k$ the ball with center $x$ and radius $r_k$. Since all the balls $B_k$ are compact its boundary consists of finitely many points. Associate to each such boundary point a shortest path connecting it with $x$ and denote by $\FP_k$ this set of paths. Note that not each path in $\FP_{k-1}$ has a continuation to a path in $\FP_k$. We thus denote by $\FP_{k-1}^1$ the set of all paths in $\FP_{k-1}$ which have a continuation to a path in $\FP_k$. Moreover this set has to be non-empty. Note once more that not each path in $\FP_{k-1}^1$ has a continuation to a path in $\FP_k^1$, so that we define the set $\FP_k^2$ in an obvious manner. Continuing in this way we get for each $m\in \IN$ a family of paths $\FP_k^m$ in the ball $B_k$. Since the balls $B_k$ are compact there exists for all $k\in \IN$ a number $m_k\in \IN$ such that $\FP_k^m = \FP_k^{m_k}$ for all $m> m_k$. This gives that for given $k_1\in \IN$ each path in $\FP_{k_1}^{m_{k_1}}$ can be extended to a path in $\FP_{k_2}^{m_{k_2}}$ for arbitrary $k_2 > k_1$. Thus we define a ray as
\[\Fp = \bigcup \Fp_i\]
with $(\Fp_i)_i$ defined as $\Fp_i \in \FP_i^{m_i}$ and $\Fp_i=\Fp_{i+1}$ on $B_i$. By what we have proven such a ray exists. Thus $L(\Fp)= \lim\limits_{i\to \infty} L(\Fp_i) = \lim\limits_{i\to \infty} r_i = r$. By construction the ray $\Fp$ consists of infinitely many edges, and since the $\Fp_i$ are shortest paths we obtain
\[\sum_{e\in \Fp} l(e) = L(\Fp) = r < \infty\]
contradicting the assumption.
\end{proof}
If we deal with incomplete spaces we need the concept of completion.
\begin{reminder}\index{graph!completion}
For every metric space $M$ there exists a complete metric space $\widehat{M}$ which contains $M$ as dense subspace. This completion has the following universal property. If $N$ is a complete metric space and $f:M\to N$ is a uniformly continuous function,  then there exists a unique uniformly continuous function $\widehat{f}: \widehat{M}\to N$ which extends $f$. The space $\widehat{M}$ is determined up to isometry by this property.
\end{reminder}\index{boundary|see{graph completion}}
The completion of $\FX_\Gamma$ can be very difficult to handle. The interested reader is referred to \cite{Ge-10} and \cite{Ca-98}. So assume that the metric graph is not complete with respect to a metric and denote by $\widehat{\FX_\Gamma}$ its completion. Then the set
\[\partial \FX_\Gamma := \widehat{\FX_\Gamma} \setminus \FX_\Gamma\]
is called the (Cauchy-) boundary of the graph.
The boundary points can also be characterized in terms of rays, which was done in \cite{Ge-10}. We call two rays of finite lengths equivalent if there is a third ray of finite length that meets both rays infinitely often. The boundary of the graph can now be defined as the set of equivalent rays of finite length. This is an extension of our completeness criterion.\medskip\\
We finish this section with another piece of notation. This will be of interest in the next chapter when dealing with extensions of functions.
\begin{definition}\index{boundary!limit}
Let $\Fp\subset \FX_\Gamma$ be a ray and $\varphi:\Fp\to [0,\infty)$ the associated homeomorphism. We say that $u$ has a limit along the path if the function $u|_\Fp := u \circ \varphi^{-1}$ has this limit at infinity, and we write
\[\lim_\Fp u(x) := \lim_{t\to \infty} u|_\Fp (t).\]
\end{definition}
\section{Function spaces}
This section is devoted to the definition and study of certain function spaces on a metric graph $\FX_\Gamma$. As general principle, we try to generalize function spaces on intervals to the whole graph such that the restriction to an edge agrees with the usual function space, whereas functions on the whole graph have similar properties as the functions on the edges.\medskip\\
A function $u:\FX_\Gamma\to \IR$ could be written in vertex or edge coordinates due to the maximal atlas $\FA_{\FX_\Gamma}$ equipped with the homeomorphisms from the last section.
\begin{definition}\index{coordinates!vertex}
Let $u:\FX_\Gamma\to \IR$ be given and fix for each vertex an enumeration of its incident edges. The vertex-coordinates of $u$ are then given by the vector $(u_x)_{x\in \FV}$, where $u_x:\bigstar(x)\to \IR$,
\[u_x (r,k) := u(\varphi_x^{-1} (r,k)).\]\index{coordinates!edge}
Let $u:\FX_\Gamma\to \IR$ be given and fix an orientation of the edges. The edge-coordinates of $u$ are then given by the vector $(u_e)_{e\in E}$, where $u_e: \FX_e\to \IR$
\[u_e(t) := u ( \varphi_e^{-1} (t)).\]
\end{definition}
From the discussion on vertex coordinates in the last section we see that we have
\[u_x(t,k_1) = u_y(l(e)-t,k_2)\]
whenever $x\sim y$ and $k_1,k_2$ correspond to enumeration of $e=(x,y)$ in both vertex neighborhoods. Similarly, for two different orientations of an edge the change-of-coordinate mapping is simply given by $t \mapsto l(e)-t$.
\begin{remark}
In what follows, everything if not otherwise mentioned will be independent from the particular choice of an orientation or enumeration. Thus we will from now on assume that we have chosen one, without explicitly mentioning it.
\end{remark}\index{continuity}
We will use vertex- and edge-coordinates to get a better understanding of continuity, measurability and differentiability of functions defined on $\FX_\Gamma$. We start with a characterization of continuity in terms of vertex and edge coordinates. A function $u:\FX_\Gamma\to \IR$ is continuous if and only if $u_x:\bigstar(x)\to \IR$ is continuous for all $x\in \FV$. In particular $u_x$ is continuous if and only if for all $e\in \FE$ the edge components $u_e$ are continuous and the limits $\lim\limits_{t\to\varphi_e(x)} u_e(t)$ are equal for all edges incident to $x$, where $t\to\varphi_e(x)$ means $t\to 0$ if $\partial^+(e)=x$ and $t\to l(e)$ if $\partial^-(e)=x$.\\
We denote the space of continuous function by $\CC(\FX_\Gamma)$. Given $u\in \CC(\FX_\Gamma)$ we denote its support, i.e. the closure of the set $\{x\in\FX_\Gamma \mid u(x)\neq 0\}$, by $\supp u$ and by $\CC_c(\FX_\Gamma)$ the space of continuous function having compact support.
\begin{remark}
The topology on $\FX_\Gamma$ can also be defined as the initial topology of the set of functions which are continuous on each edge and directional continuous at each vertex, that is the limits towards each vertex agree for all incident edges. As $\CC(\FX_\Gamma)$ can be considered as a closed subspace of $\bigoplus\limits_{e\in E} \CC([0,l(e)])$, we can interpret this as a one-to-one correspondence between certain closed subspaces of $\bigoplus\limits_{e\in E} \CC([0,l(e)])$ and topological graphs.
\end{remark}
For convenience, let us introduce the following piece of notation. Given any space of functions $\CD$ with norm $\|\cdot\|_\CD$, we denote by $\CD_c$ the space of functions in $\CD$ having compact support and by $\CD_o$ the closure of $\CD_c$ with respect to the norm $\|\cdot\|_\CD$. Note that compactness means compactness coming from the chosen topology of the topological graph. Furthermore, we denote by $\CD_\loc$ the space of all functions which are locally in $\CD$, i.e. the space of all functions $u$ such that for all compact subsets $K\subset \FX_\Gamma$ there exists a function $\varphi\in\CD$ such that $u= \varphi$ on $K$. The space of bounded functions of $\CD$ will be denoted by $\CD_\infty$.
\begin{remark}
When dealing with metric graphs, as space of test functions one often chooses elements with compact support in the disjoint union of the open edges, see for instance \cite{KS-06,KPS-07,KPS-08}. We will not follow this tradition, since this space is in general too small as it ignores the structure of the graph.
\end{remark}
Given a jump weight $j$ on $\FX_\Gamma$ we define a subspace of $\CC(\FX_\Gamma)$ by
\[\CC_j(\FX_\Gamma)= \{ u\in\CC(\FX_\Gamma) \mid \forall x\in \FV : \sum_{y\in \FV} j(x,y)|u(y)| < \infty\}.\]
It is clear that $\CC_\infty(\FX_\Gamma)\subset \CC_j(\FX_\Gamma)$ for all jump weights $j$.
\begin{remark}
The space $\CC_j(\FX_\Gamma)$ equals in a certain sense the space $\tilde{F}$ defined in \cite{KL-10}. There it plays the role of the domain of definition of the formal Laplacian. Under certain conditions on $j$ (and the measure $m$ coming from $\ell^2(V,m)$) all Laplacians on $\ell^p$ are actually restrictions of $\tilde{F}$. As will be shown later, the space $\CC_j(\FX_\Gamma)$ will also be connected with the domain of certain operators, see section 3.
\end{remark}\index{integration}
Next we introduce a measure on $\FX_\Gamma$. Let $\CB(\FX_\Gamma)$ the Borel $\sigma$-algebra, i.e. the smallest $\sigma$-algebra containing all open sets of $\FX_\Gamma$. A set $\FY\subset \FX_\Gamma$ is called measurable, if for any star-shaped chart $\bigstar$ the set $\bigstar \cap \FY$ is a measurable set in $\IR^2$ with respect to the one-dimensional Lebesgue measure. Thus the family $\Lambda(\FX_\Gamma)$ of all measurable sets of $\FX_\Gamma$ forms a $\sigma$-algebra, in particular we have $\CB(\FX_\Gamma)\subset \Lambda(\FX_\Gamma)$ since by definition open sets are measurable. We define the measure on $\FX_\Gamma$ to be the pull-back of the Lebesgue measure defined on each chart. Using a partition of unity with respect to the open covering $\{\bigstar(x)\mid x\in\FV\}$ we get for a measurable function $u:\FX_\Gamma \to \IR$ that
\[\int\limits_{\FX_\Gamma} u(x) d\lambda(x) = \frac{1}{2} \sum_{x\in\FV} \int\limits_{\bigstar(x)} u_x(r,k) drdk = \frac{1}{2} \sum_{x\in \FV}\sum_{k=1}^{\Deg(x)} \int\limits_0^{l(e_k)} u_x(r,k) dr.\]
From the last expression we see that each edge is counted twice, but with different orientation. But since $\int\limits_0^{l(e)} u(r)\:dr = \int\limits_{l(e)}^0 -u(l-t) \:dt = \int\limits_0^{l(e)} \tilde{u} (r)\:dr$, we see that the integral is independent of the particular choice of orientation. Thus the integral above is given in edge-coordinates by
\[\int\limits_{\FX_\Gamma} u(x)\:d\lambda(x) = \sum_{e\in E} \int\limits_0^{l(e)} u_e(t) dt.\]
We will use both expressions for an integral of a function on $\FX_\Gamma$. In particular, edge-coordinates are useful when integrating along a path, i.e. for $x,y\in \FX_\Gamma$ let $\Fp_x^y=(p_0,\dots,p_n)$ be a path connecting them, then we immediately obtain
\[\int\limits_{\Fp_x^y} u(x)d\lambda(x) = \sum_{i=1}^n \int\limits_0^{l(e_i)} u_{e_i} (t)\:dt\]
where $e_{i+1} =(p_{i}, p_{i+1})$.\medskip\\
Given this measure we define $L^p(\FX_\Gamma)$ to be the set of all measurable functions $u$ such that $|u|^p$ is integrable, and as usual for $p=\infty$ we let $L^\infty(\FX_\Gamma)$ to be the space of all essentially bounded functions. By construction the $L^p$ spaces are isometric to the direct sum of $L^p$ spaces defined on intervals and of $L^p$ spaces defined on $\bigstar(x)$ with the induced Lebesgue measure from $\IR^2$, i.e.
\[L^p(\FX_\Gamma) \simeq \bigoplus_{e\in E} L^p(0,l(e)) \simeq \bigoplus_{x\in \FV} L^p(\bigstar(x)).\]\index{graph measure}
For later purposes we consider measures which are absolutely continuous with respect to the Lebesgue measure, with density $\nu:\FX_\Gamma\to \IR$ such that $\nu|_{\FX_e}$ is constant. We denote this constant by $\nu(e)$ and note that this function $\nu:\FV\times\FV \to (0,\infty)$ defines an edge weight. We introduce the weighted $L^p$ spaces as $L^p(\FX_\Gamma,\lambda_\nu)$ which are isometric to $\bigoplus\limits_{e\in E} L^p((0,l(e)),\nu(e) d\lambda)$.\medskip\\
By definition we obtain for an edge $e\in E$
\[\int\limits_{\FX_e} d\lambda_\nu = l(e)\nu(e).\]
and for a vertex $x\in \FV$
\[\int\limits_{\bigstar(x)} d\lambda_\nu = \sum_{e\sim x} l(e)\nu(e).\]
An important function in what follows is the volume growth function
\[V_{x_o}(r) = \int\limits_{B_r(x_o)} d\lambda_\nu\]
where $x_o\in \FX_\Gamma$ is a fixed point.\medskip\\
We have seen that there are natural choices for continuity and measurability of functions, due to topology and measure theory. Unfortunately there is no natural one of differentiability or Sobolev spaces, due to the singular role played by the vertices. We will introduce them via weak derivatives and absolute continuity. A function $u:[a,b]\to \IR$ is absolutely continuous if and only if there exists a constant $C\in \IR$ and a locally integrable function $v:[a,b]\to \IR$ such that $u(x)= C + \int\limits_a^x v(t)\: dt$. The function $v$ is then also called the weak derivative of $u$ and denoted by $u'$. This is known as the fundamental theorem of calculus for absolutely continuous functions, see \cite{Leoni}. We use this as the starting point for the definition of absolute continuity for metric graphs.
\begin{definition}\index{absolute continuity}
A function $u:\FX_\Gamma\to \IR$ is called absolutely continuous if there exists a locally integrable function $v:\FX_\Gamma \to \IR$ such that for all $x,y\in \FX_\Gamma$ and all paths connecting $x,y$ we have
\[u(y) = u(x) + \int\limits_{\Fp_x^y} v\: d\lambda.\]
In this case we call $u' := v$ the weak derivative of $u$.
\end{definition}
One can show that the weak derivative for functions on $\IR$ as defined above agrees with the distributional derivative. Using the lemma of Du Bois-Reymond, one obtains that the weak derivative is unique almost everywhere. In particular the weal derivative in our setting is unique almost everywhere. \\
By definition we have that a function $u:[a,b]\to \IR$ is absolutely continuous if and only if for all $c\in (a,b)$ the restrictions  $u|_{(a,c)}$ and $u|_{(c,b)}$ are absolutely continuous and $u\in \CC(a,b)$. This extends to our case and we obtain the following proposition.
\begin{prop}
A function $u:\FX_\Gamma \to \IR$ is absolutely continuous if and only if for all $e\in E$ the function $u_e$ is absolutely continuous and $u\in \CC(\FX_\Gamma)$.
\end{prop}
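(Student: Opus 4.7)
For the forward direction, I assume $u$ is absolutely continuous with locally integrable weak derivative $v$. Restricting the defining identity to pairs $x,y$ lying in a single edge $\FX_e$ (where there is a unique short path inside the edge, on which $\int v\, d\lambda$ reduces to the one-dimensional integral $\int v|_{\FX_e}$), the standard fundamental theorem of calculus for absolutely continuous functions on intervals (applied in edge-coordinates) shows that $u_e$ is absolutely continuous on $(0,l(e))$ and $u_e' = v|_{\FX_e}$ almost everywhere. Continuity of $u$ on all of $\FX_\Gamma$ follows from the estimate $|u(y)-u(x)| \leq \int_{\Fp_x^y}|v|\, d\lambda$: letting $y \to x$ along any edge incident to $x$ shortens the integration path to a set of measure zero, so $u$ has one-sided limits along each incident edge which all coincide with $u(x)$. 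In particular $u\in\CC(\FX_\Gamma)$.

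For the reverse direction, I assume $u\in\CC(\FX_\Gamma)$ with each $u_e$ absolutely continuous, and construct the required $v$ edgewise by $v|_{\FX_e} := u_e'$, defined almost everywhere. Local finiteness of $\Gamma_\FX$ together with local integrability of each $u_e'$ yields that $v$ is locally integrable on $\FX_\Gamma$. It remains to verify the defining identity $u(y) = u(x) + \int_{\Fp_x^y} v \, d\lambda$ for every $x,y \in \FX_\Gamma$ and every path $\Fp_x^y$. I pass to the graph representation $(p_0=x,p_1,\dots,p_{n-1},p_n=y)$ of the path and write
\[
u(y)-u(x) = \sum_{k=1}^n \bigl( u(p_k)-u(p_{k-1}) \bigr).
\]
On the edge $e_k$ connecting $p_{k-1}$ and $p_k$, the absolute continuity of $u_{e_k}$ combined with the explicit formula for path integrals in edge-coordinates from the previous section gives $u(p_k)-u(p_{k-1}) = \int_0^{l(e_k)}\!\!u_{e_k}'\, dt = \int_{\Fp_{p_{k-1}}^{p_k}} v\, d\lambda$, and summation yields the claim.

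\textbf{Where the work lies.} The forward direction is essentially just restriction of the definition; the real content is the reverse direction, and the subtle point is the interface between vertex values and edgewise one-sided limits. Each $u_e$ is a priori a function on the open edge $\FX_e$; absolute continuity on $[0,l(e)]$ produces one-sided limits at the endpoints, and the hypothesis $u\in\CC(\FX_\Gamma)$ is precisely what forces these one-sided limits to agree with $u$ at the adjacent vertex, so that the telescoping in the display above is valid. Without continuity at vertices the individual differences $u(p_k)-u(p_{k-1})$ would still equal $\int v$ along the edge, but the sum would not telescope, and indeed one could construct a piecewise absolutely continuous $u$ with jumps at vertices whose path integrals fail to recover $u$. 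A minor technical nuisance is that the graph representation of a path may pass through vertices of $\FV\setminus\FV_0$ of degree two; these are handled without change since continuity at such a vertex suffices for the telescoping to go through.
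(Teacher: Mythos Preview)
Your proof is correct and follows essentially the same approach as the paper: both directions are handled by restricting to edges for the forward implication (with continuity at vertices obtained by letting the integration path shrink), and by telescoping along the graph representation of a path for the reverse implication. Your discussion of why continuity at the vertices is exactly what makes the telescoping valid is more explicit than the paper's version, but the underlying argument is the same.
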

\begin{proof}
The only if part follows by considering paths which are subsets of the edges. This gives that each component is absolutely continuous. To show continuity, let $x\in \FV$ and $y\in \FX_e$ with $e\sim x$ and choose an orientation such that $x$ is the initial vertex of $e$. By absolute continuity we have
\[u(x)= u_e(0) + \int\limits_0^{\phi_e^{-1}(y)} u'_e(t)dt.\]
Thus we obtain that for $y\to x$ we have $u(y)\to u(x)$ as the edge $e\sim x$ was arbitrary. To show the if part let $\Fp=(p_0,p_1, \dots,p_n)$ be a path connecting $x,y\in \FX_\Gamma$. Denote by $u_{e_k}'$ the weak derivatives of the components of $u$ on the edges $e_k=(p_k,p_{k+1})$, with the obvious modification for $p_0=x$ and $p_n=y$. Then we have by continuity of $u$
\begin{eqnarray*}
u(x)- u(y) &=& \sum_{k=0}^{n-1} u(x_k) - u(x_{k+1}) \\
 &=& \sum_{k=0}^{n-1} \int\limits_{\FX_{e_k}} u_{e_k}'(t) dt\\
 &=& \int\limits_{\Fp} u'(t)dt.
\end{eqnarray*}
\end{proof}
\section*{Notes and remarks}
The idea of the definition of a topological graph is not new. It goes back to Kuratowski (see \cite{M}). Topological graphs as defined there, are in general defined in a constructive way, within the theory of $CW$-complexes. The presentation here goes along the idea of the definition of a compact metric graph in \cite{BF-06,BR-07}. We have decided for this intrinsic definition due to the analogy of manifolds. We closely followed \cite{Gr-09} in this progress. As we have seen, a lot of concepts can be transferred to our setting. In particular the concept of vertex coordinates seems to be new. It would also be possible to introduce concepts like tangent space and tangent bundle, when defining it as a disjoint union of one dimensional vector spaces in the vertices. From the definition we see that each point in the tangent bundle has a star-shaped neighborhood, and thus it is a topological graph as well. Our aim in the second part was to define several function spaces also in an intrinsic way, using only topological, measure theoretical and almost everywhere differentiability in our definitions. The introduction of weights will become fruitful in the upcoming chapters. It unites the continuous framework with the discrete one taken from \cite{KL-10,KL-11}. Apart from that, the function spaces related to weights seem also to be new in the one-dimensional setting. Our approach also opens the way to treat topological graphs having infinite vertex degree by means of the edge measure weights. 
\chapter{Dirichlet forms on graphs}
In this chapter we introduce the notion of Dirichlet forms on graphs. We first study the strongly local case via the tool of length transformation. This gives us certain embedding theorems and corresponding results in section 2. There we also introduce a new space, which is a Hilbert space sum of weighted Sobolev spaces on intervals. This leads to a new Sobolev type inequality, Theorem 2.14, which also seems to generalize the one-dimensional Sobolev inequality. A main point of investigation is the regularity problem. This is discussed in detail in section 3. There we also develop a boundary theory for our case which allows us to relate regularity with the appearance of a boundary, see Theorem 2.22, Theorem 2.28 and Proposition 2.29. Subsequently we complete the definition of a general graph Dirichlet form and investigate these new forms. An important case is investigated, viz, the case when the non-diffusion part is relatively bounded. These results, Proposition 2.35, Lemma 2.38 and Proposition 2.40, are mainly phrased in terms of a discrete Laplacian on a discrete space equipped with a measure coming from a capacity type quantity of the set of vertices. Some of these results, Proposition 2.41 and Corollary 2.42, improve the well known trace mapping properties taken from \cite{Ku-04}. At the end of the chapter we study the problem of regularity of general graph diffusion Dirichlet forms. Here, we have a first look on harmonic functions.\medskip\\
We start with the basics from the theory of Dirichlet forms. For further reading we recommend \cite{FOT-11}.
\begin{reminder}\index{Dirichlet form}
Let $\FX$ be a locally compact, separable metric space endowed with a positive Radon measure $\nu$ with $\mathrm{supp} \ \nu =\FX$. Let $\CD$ be a dense subspace of $L^2(\FX,\nu)$ and $\CE:\CD\times\CD \to \IK$ a non-negative sesquilinear form such that $\CD$ is closed w.r.t. the energy norm $\|\cdot\|_\CE$ given by
\[\|u\|_\CE^2 := \CE(u,u) + \|u\|^2_{L^2},\]
meaning that $\CE$ is a closed form in $L^2(\FX,\nu)$. We set $\CE(u):= \CE(u,u)$ in what follows.\\
A closed form is said to be a Dirichlet form if it is Markovian, that is if $\CD$ is stable under normal contractions, i.e. for all $T:\IK\to \IK$ with the property that $T(0)=0$ and $|T(x) - T(y)| \leq |x-y|$ for all $x,y\in \IK$ and for all $u \in \CD$ we have
\[T\circ u \in \CD \mbox{ and } \CE(T\circ u) \leq \CE(u).\]
In the real case it is known that the latter holds for all such $T$ if and only if it holds for the contraction $0\vee u \wedge 1$. In what follows we will only consider the real case.
\end{reminder}\index{Dirichlet form!strongly local}
A Dirichlet form is called strongly local if $\CE(u,v)=0$ whenever $u$ is constant on a neighborhood of the support of $v$. We first focus on strongly local Dirichlet forms and sketch the most important features. The interested reader may consult \cite{FOT-11} section 3.2. It can be shown that for each bounded $u\in \CD(\CE)$ there exists  a unique non-negative Radon measure $\mu_{\langle u \rangle}$ satisfying
\[\int\limits_{\FX_\Gamma} f d\mu_{\langle u \rangle} = 2\CE(uf,u) - \CE(u^2,f)\]
for all $f\in \CD(\CE)\cap \CC_c(\FX_\Gamma)$. This $\mu_{\langle u \rangle}$ is called the energy measure. Moreover we define the signed Radon measure $\mu_{\langle u,v \rangle}$ for bounded $u,v \in \CD(\CE)$ by polarization
\[\mu_{\langle u,v \rangle} =\frac{1}{2} (\mu_{\langle u+v \rangle} - \mu_{\langle u \rangle}- \mu_{\langle v \rangle}).\]
The form $\CE$ can then be written as\index{Dirichlet form!energy measure}
\[\CE(u,v) = \frac{1}{2}\mu_{\langle u,v \rangle} (\FX_\Gamma)=\frac{1}{2}\int\limits_{\FX_\Gamma} d\mu_{\langle u,v \rangle}.\]
A consequence of the strong locality of $\CE$ is the strong locality of its energy measure, i.e. if $U$ is an open set in $\FX$ on which the function $\eta \in \CD(\CE)$ is constant, then
\[\mathbf{1}_U d\mu_{\langle \eta,u\rangle}=0,\]
for any $u\in \CD(\CE)$. Another remarkable fact is that the energy measure satisfies the Leibniz rule, i.e. for bounded $u,v,w \in \CD(\CS)$ we have
\[d\mu_{\langle uv,w \rangle} = u d\mu_{\langle v,w \rangle} + v d\mu_{\langle u,w \rangle}.\medskip\]
As can be shown, see for instance \cite{FOT-11}, one can associate a stochastic process to a regular Dirichlet form. A stochastic process associated to a strongly local Dirichlet form is a diffusion. For this reason we will call a strongly local Dirichlet form also a diffusion Dirichlet form.
\section{The diffusion part}
In this section we introduce the diffusion part of what we will call a graph Dirichlet form later. Let $\FX_\Gamma$ be a metric graph as defined in the previous section.
\begin{definition}\index{graph diffusion Dirichlet form}
A graph diffusion Dirichlet form on $\FX_\Gamma$ with edge weights $(a(e))_{e\in E}$ and $(b(e))_{e\in E}$ is the form $\CE:\CD(\CE)\times \CD(\CE) \to \IR$ with domain
\[\CD(\CE) := \{u\in \CC(\FX_\Gamma) \mid u\in L^2(\FX_\Gamma, \lambda_a),\ u' \in L^2(\FX_\Gamma, \lambda_b)\}\]
and
\[\CE(u,v) = \int\limits_{\FX_\Gamma} u'v' d\lambda_b = \sum_{e\in E} \int\limits_0^{l(e)} u_e'(t) v_e'(t) b(e) dt.\]
\end{definition}
Our first task is, to show that this in fact defines a Dirichlet form. The main part is actually its closedness. This will follow immediately from the fact that
\[\CD(\CE) = \CC(\FX_\Gamma)\cap \bigoplus_{e\in E} W^{1,2}((0,l(e)),a(e)d\lambda,b(e)d\lambda),\]
where $W^{1,2}((0,l(e)),a(e)d\lambda,b(e)d\lambda)$ denotes the weighted Sobolev space consisting of all functions $u\in L^2((0,l(e)),a(e)d\lambda)$ such that the weak derivative $u'$ belongs to $L^2((0,l(e)),b(e)d\lambda)$. The equality is trivial, and we will show in the next section that the right hand side is a closed subspace of the direct sum of the one dimensional Sobolev spaces.\medskip\\
Using the one-dimensional chain rule for Lipschitz functions, see \cite{Leoni}, we immediately obtain that the form $\CE$ is Markovian and we see from the definition of $\CE$ that this form is strongly local with energy measure
\[d\mu_{\langle u\rangle}  =  |u'|^2 \: d\lambda_b.\]
From the definition above we see that three families of parameters enter the form $\CE$, viz, the edge lengths $l(e)$, the weights $a(e)$ associated with the measure of the $L^2$ space and the weights $b(e)$ which play the role of the ellipticity of the form. We will show that it is enough to study situations where basically only two of those three families of parameters are involved.\\
An important tool will be a transformation of the graph which induces a transformation of the corresponding Dirichlet spaces. Let $\FX_\Gamma$, $\widetilde{\FX}_\Gamma$ be metric graphs with the same sets of vertices and edges and with the same orientation of the edges. Denote the respective edge lengths by $l(e)$ and $\widetilde{l}(e)$. We define the length transformation \index{length transformation} $\Phi:\FX_\Gamma \to \widetilde{\FX}_\Gamma$ as
\[\Phi(x) = \begin{cases} x &, \mbox{ for } x\in \FV\\ \widetilde{\varphi}_e^{-1} (\frac{\widetilde{l}(e)}{l(e)} \varphi_e(x)) &. \mbox{ for } x\in \FE_e\end{cases}.\]
In particular this induces a transformation of functions $u:\FX_\Gamma \to \IR$ to functions $\widetilde{u}: \widetilde{\FX}_\Gamma \to \IR$ by
\[\widetilde{u}(\Phi(x)) = u(x).\]
Applying the substitution and chain rule on each edge separately we get
\[ \int\limits_{\FX_\Gamma} |u(x)|^2 d\lambda_a(x) = \int\limits_{\widetilde{\FX}_\Gamma} |\widetilde{u} (x)|^2 d\lambda_{\widetilde{a}}(x)\]
and
\[ \int\limits_{\FX_\Gamma} |u'(x)|^2 d\lambda_b(x) = \int\limits_{\widetilde{\FX}_\Gamma} |\widetilde{u}' (x)|^2 d\lambda_{\widetilde{b}}(x)\]
with
\[ \widetilde{a}(e) = \frac{a(e) l(e)}{\widetilde{l}(e)}\]
and
\[\widetilde{b}(e) =  \frac{b(e) \widetilde{l}(e)}{l(e)}.\]
Thus, the image Dirichlet form $\widetilde{\CE}$ induced by $\Phi$  on $L^2(\widetilde{\FX}_\Gamma, \lambda_{\widetilde{a}})$ is given as
\[ \widetilde{\CE}(\widetilde{u},\widetilde{v}) := \int\limits_{\widetilde{\FX}_\Gamma} \widetilde{u}'\widetilde{v}' d\lambda_{\widetilde{b}}.\]
In particular the Dirichlet spaces $\CE$ and $\widetilde{\CE}$ are equivalent in the sense that the $L^2$-norm, the energy-norm and the $L^\infty$-norm are preserved. We will be interested in transformations $\Phi$ such that the new metric gives certain information of the graph. This leads to certain normal forms, where we introduce the first one in the following definition.
\begin{definition}\index{canonical representation}\index{metric!canonical}
Let $\FX_\Gamma$ be a metric graph and $(\CE, \CD(\CE))$ a graph diffusion Dirichlet form with weight $(b(e))_{e\in E}$ on $L^2(\FX_\Gamma,\lambda_a)$. The image Dirichlet form associated with the length transformation $\Phi$ such that
$\widetilde{b}(e)=1$ is called the canonical representation of $\CE$. The associated measure $\lambda_{\widetilde{a}}$ is called the canonical measure and the associated edge weight $(\widetilde{a}(e))_{e\in E}$ is denoted by $(\nu(e))_{e\in E}$. We will also write $\nu$ for the measure $\lambda_\nu$. The associated length $\widetilde{l}$ is called the canonical length and denoted by $(l_c(e))_{e\in E}$.
\end{definition}
From the discussion above and the definition we obtain the following corollary.
\begin{coro}
Let $\FX_\Gamma$ be a metric graph and $(\CE, \CD(\CE))$ a graph diffusion Dirichlet form on $L^2(\FX_\Gamma,\lambda_a)$. Then the canonical measure is given by $\nu(e)=a(e) b(e)$ and the canonical length by $l_c(e)=\frac{l(e)}{b(e)}$.
\end{coro}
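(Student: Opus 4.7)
The plan is to simply solve the defining condition of the canonical representation for the new edge length and then substitute into the transformation formula for the measure weight. Everything is already set up by the preceding discussion of the length transformation $\Phi$, so no new tools are needed.

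First I would recall the two transformation identities derived just above the definition, namely
\[\widetilde{a}(e) = \frac{a(e)\,l(e)}{\widetilde{l}(e)}, \qquad \widetilde{b}(e) = \frac{b(e)\,\widetilde{l}(e)}{l(e)},\]
valid for any length transformation $\Phi:\FX_\Gamma \to \widetilde{\FX}_\Gamma$ between metric graphs sharing the same vertex set, edge set and orientation. By the definition of the canonical representation, $\Phi$ is required to produce $\widetilde{b}(e) = 1$ on every edge, and the corresponding new length is $\widetilde{l}(e) = l_c(e)$.

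Next I would solve the second identity for $\widetilde{l}(e)$. Setting $\widetilde{b}(e) = 1$ yields $\widetilde{l}(e) = l(e)/b(e)$, which is exactly the asserted formula $l_c(e) = l(e)/b(e)$; note that $b(e) > 0$ on every edge, since otherwise the corresponding summand in $\CE$ would be degenerate, so the division is harmless. Plugging this into the first identity gives
\[\nu(e) = \widetilde{a}(e) = \frac{a(e)\,l(e)}{l(e)/b(e)} = a(e)\,b(e),\]
which is the second claim.

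I do not expect any real obstacle here: once one identifies $l_c(e)$ from the condition $\widetilde{b}(e) = 1$, the value of $\nu(e)$ is forced. The only thing worth remarking is that this computation is carried out edge by edge and respects the orientation, so it is consistent with the construction of $\Phi$ from the previous discussion, and the resulting $\widetilde{\CE}$ on $L^2(\widetilde{\FX}_\Gamma, \lambda_\nu)$ is isometrically equivalent to the original $\CE$ as noted there.
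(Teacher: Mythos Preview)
Your proposal is correct and is precisely the argument the paper has in mind: the corollary is stated as an immediate consequence of the transformation rules $\widetilde{a}(e)=a(e)l(e)/\widetilde{l}(e)$, $\widetilde{b}(e)=b(e)\widetilde{l}(e)/l(e)$ together with the defining condition $\widetilde{b}(e)=1$, and you carry out exactly this substitution.
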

It is clear that a length transformation does not change the topology of a graph. However the metric properties can be changed dramatically. We illustrate this with the following example.
\begin{example}
We consider $\IR$ as the graph coming from $\IZ$, i.e. the vertex set is given as the set of integers and two vertices are connected if their Euclidean distance is $1$. Let $\CE$ be a graph diffusion Dirichlet form on $\IR$ with edge weights $(a(n,n+1))_{n\in \IN_0}$ and $(b(n,n+1))_{n\in \IN_0}$, then the canonical length of each edge $(n,n+1)$ is given as $\frac{1}{b(n,n+1)}$. The length of the path from $0$ to $\infty$ in the canonical metric is given as
\[\sum_{n=0}^\infty \frac{1}{b(n,n+1)}.\]
Thus, in the canonical length the graph might become incomplete.
\end{example}
\begin{remark}
Another normal form is given when setting all lengths to $1$. More precisely the image Dirichlet form has measure weights $a(e)l(e)$ and ellipticity weights $\frac{b(e)}{l(e)}$. This normal form will be not considered in the rest of this thesis, but we remark that the induced metric corresponds to the discrete metric when restricted to the set of vertices. In particular it gives a continuous version of the discrete metric.
\end{remark}
\section{Embedding theorems}
In this section we consider the behavior of functions at infinity and study important embedding properties. A key ingredient will be the following pre-version of the well known Poincaré inequality. In the whole section we will work with the canonical representation. Recall that a subgraph is a connected topological subspace equipped with the trace topology.
\begin{lemma}[Path Poincaré inequality]
Let $\FY \subset \FX_\Gamma$ be a compact, connected subgraph and $u:\FX_\Gamma\to \IR$ be absolutely continuous. Then for all $u^* \in u(\FY)$ there exists $c\in \FY$ such that for all $x\in \FY$ and each path $\mathfrak{p}_c^x$ connecting $c$ and $x$, we have
\[|u(x)-u^*| \leq \int\limits_{\mathfrak{p}_c^x} |u'(t)| \: dt.\]
Note that we can choose $u^*=\bar{u} = \nu(\FX_\Gamma)^{-1} \int u \:d\nu$.
\end{lemma}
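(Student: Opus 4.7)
The plan is to reduce the statement to the one–dimensional fundamental theorem of calculus along paths, which is exactly the content of the definition of absolute continuity on $\FX_\Gamma$ together with the preceding proposition. The only non-trivial ingredient is producing a witness point $c\in\FY$ with $u(c)=u^*$, which will come from an intermediate value argument tailored to topological graphs.

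First I would argue that $u(\FY)$ is a compact interval in $\IR$. Since $u$ is absolutely continuous, by the previous proposition $u\in\CC(\FX_\Gamma)$, and by assumption $\FY$ is compact and connected. Continuous images of compact, connected sets in a Hausdorff space are compact and connected, and connected subsets of $\IR$ are intervals, so $u(\FY)=[m,M]$ with $m=\min_{\FY}u$ and $M=\max_{\FY}u$. In particular every $u^*\in u(\FY)$ is attained: there exists $c\in\FY$ with $u(c)=u^*$.

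Next, fix such a $c$ and an arbitrary $x\in\FY$ together with a path $\mathfrak{p}_c^x$ connecting $c$ and $x$ inside $\FX_\Gamma$. By the definition of absolute continuity, $u$ admits a locally integrable weak derivative $u'$ with
\[
u(x)-u(c)=\int_{\mathfrak{p}_c^x}u'(t)\,d\lambda(t).
\]
Taking absolute values and using $u(c)=u^*$ together with the triangle inequality for integrals gives
\[
|u(x)-u^*|=\left|\int_{\mathfrak{p}_c^x}u'(t)\,d\lambda(t)\right|\leq \int_{\mathfrak{p}_c^x}|u'(t)|\,d\lambda(t),
\]
which is the asserted inequality. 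Note that the right-hand side is independent of the choice of $c$ among points realising $u^*$, and the estimate holds for every admissible path, as required.

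For the final remark, I would verify that the mean value $\bar u=\nu(\FY)^{-1}\int_\FY u\,d\nu$ indeed lies in $u(\FY)$. Since $\FY$ is compact, $\nu(\FY)<\infty$ (by the formulae for $\int_{\bigstar(x)}d\lambda_\nu$ and $\int_{\FX_e}d\lambda_\nu$ from the previous section applied to the finitely many vertex-stars and edges meeting $\FY$), so $\bar u$ is well defined, and the monotonicity of the integral together with $m\le u\le M$ on $\FY$ yields $m\le \bar u\le M$. Hence $\bar u\in[m,M]=u(\FY)$, so the special choice $u^*=\bar u$ is admissible and the inequality applies verbatim. The only potential obstacle is a subtlety in the definition of the path integral when $\mathfrak{p}_c^x$ passes through several vertices, but this has already been handled by the edge-coordinate formula $\int_{\mathfrak{p}_x^y}u\,d\lambda=\sum_i\int_0^{l(e_i)}u_{e_i}(t)\,dt$ given earlier, so no additional work is needed.
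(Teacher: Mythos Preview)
Your proof is correct and follows essentially the same approach as the paper: find a point $c\in\FY$ with $u(c)=u^*$ via an intermediate value argument, then apply the defining identity of absolute continuity along a path. The only cosmetic difference is that the paper locates $c$ explicitly by picking a path from a minimiser $x_-$ to a maximiser $x_+$ and invoking the intermediate value theorem on that path, whereas you use the cleaner topological observation that $u(\FY)$, being a continuous image of a compact connected set, is a closed interval; your additional verification that $\bar u\in u(\FY)$ is a nice touch the paper leaves implicit.
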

\begin{proof}
Since $u$ is continuous it attains its maximum and minimum on $\FY$, which we denote as  $u(x_+)= \max u$ and $u(x_-) = \min u$. Thus we have $u(\FY)= [u(x_-), u(x_+)]$. In particular $u(x_-)-u^* \leq 0$ and $u(x_+) - u^* \geq 0$. Without restriction let the inequalities be strict, otherwise we choose the point $c$ below as $x_-$ or $x_+$ resp. Since the graph is connected, there is a path $\mathfrak{p}_{x_-}^{x_+}$ connecting $x_-$ and $x_+$. Note that $u-u^*$ restricted to this path is continuous and changes the sign, hence there exists $c\in \mathfrak{p}_{x_-}^{x_+}$ such that $u(c)= u^*$. Take now $x\in \FY$ arbitrary and a path $\mathfrak{p}_x^c$ connecting $c$ and $x$. Then since $u$ is absolutely continuous we have
\[|u(x)- u^*| \leq \int\limits_{\mathfrak{p}_c^x} |u'(t)| \: dt.\qedhere\]
\end{proof}
The path Poincaré inequality is of big use in what follows. From the point of view of Sobolev spaces in metric measure spaces this lemma says that the weak derivative is an upper gradient, see for instance \cite{HK-00}. A remarkable consequence is the following proposition.
\begin{prop}
Let $u\in \CD(\CE)$, then $u$ is $\frac{1}{2}$-Hölder continuous with respect to the canonical metric. In particular we have
\[|u(x) - u(y)|^2 \leq d_c(x,y) \CE(u).\]
\end{prop}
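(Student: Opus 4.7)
The plan is to combine three ingredients: the absolute continuity of $u\in\CD(\CE)$, the fundamental theorem of calculus along an arbitrary path (integrating $u'$), and the Cauchy--Schwarz inequality. The bound on $d_c(x,y)$ will then come by minimizing over all paths connecting $x$ and $y$, since $d_c$ is a path metric in the canonical representation.

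First I would note that any $u\in\CD(\CE)$ lies in $\CC(\FX_\Gamma)$ and admits a weak derivative $u'\in L^2(\FX_\Gamma,\lambda_b)$. By the proposition characterizing absolute continuity on $\FX_\Gamma$ (continuous functions whose edge components are absolutely continuous), $u$ is absolutely continuous, so for any $x,y\in\FX_\Gamma$ and any path $\Fp_x^y$ connecting them,
\[u(y)-u(x)=\int_{\Fp_x^y} u'(t)\,d\lambda(t).\]
Since we are working in the canonical representation, $b(e)\equiv 1$ and the energy measure reduces to $d\mu_{\langle u\rangle}=|u'|^2\,d\lambda$, i.e.\ $\CE(u)=\int_{\FX_\Gamma}|u'|^2\,d\lambda$.

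Second, I would apply the one-dimensional Cauchy--Schwarz inequality on each edge contribution of $\Fp_x^y$ (equivalently, pulling back to $[0,L(\Fp_x^y)]$ via the edge-coordinate parametrization of the path) to obtain
\[|u(y)-u(x)|^2\ \leq\ \Bigl(\int_{\Fp_x^y}|u'|\,d\lambda\Bigr)^{2}\ \leq\ L(\Fp_x^y)\int_{\Fp_x^y}|u'|^2\,d\lambda\ \leq\ L(\Fp_x^y)\,\CE(u),\]
where the last estimate uses that the path integral of $|u'|^2$ is bounded by the total energy integral over $\FX_\Gamma$.

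Finally, since the right-hand side depends on the path only through its canonical length $L(\Fp_x^y)$, I would take the infimum over all paths $\Fp_x^y$ connecting $x$ and $y$, which by definition of the canonical path metric equals $d_c(x,y)$. This yields the claimed estimate, and the $\tfrac12$-Hölder continuity is an immediate restatement. There is no real obstacle here; the only small point requiring care is ensuring that the fundamental theorem of calculus applies uniformly along a path that traverses several edges and intermediate vertices, but this is exactly what the definition of absolute continuity on $\FX_\Gamma$ provides (and was established in the preceding proposition via the continuity condition at vertices).
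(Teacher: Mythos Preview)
Your proof is correct and follows essentially the same approach as the paper: the paper invokes the path Poincar\'e inequality (the preceding lemma) to obtain $|u(x)-u(y)|^2 \leq L(\Fp_x^y)\,\CE(u)$ and then takes the infimum over paths, which is exactly your absolute-continuity-plus-Cauchy--Schwarz argument unpacked.
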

\begin{proof}
Let $x,y\in \FX_\Gamma$, then the path Poincaré inequality yields
\[|u(x)- u(y)|^2 \leq L(\Fp_x^y) \CE(u).\]
Taking the infimum over all paths connecting $x,y$ yields the claim.
\end{proof}
Note that we do not use that $u\in L^2$. We discuss this in the end of this section.\\
The path Poincaré inequality yields the desired Poincaré and Sobolev inequalities. The canonical diameter $\diam(\FY)$ of a subgraph $\FY$ of $\FX_\Gamma$ is defined as the diameter of the metric space $(\FY,d)$
\[\diam(\FY) := \sup\{d(x,y)\mid x,y\in \FY\},\]
with $d$ being the metric coming from the canonical length. We can now come to one of the most important inequalities in analysis, viz, the Poincaré inequality.
\begin{theorem}[Poincaré inequality]\index{inequality! Poincaré}
Let $\FX_\Gamma$ be a metric graph and let $\FY\subset \FX_\Gamma$ be a compact subgraph. Then for any $p\in [1,\infty)$, $q\in [1,\infty]$ we have for all absolutely continuous functions $u:\FX_\Gamma \to \IR$
\[\Bigl(\frac{1}{\nu(\FY)}\int\limits_\FY |u - \bar{u}|^q d\nu \Bigr)^\frac{1}{q} \leq \diam(\FY)^\frac{p-1}{p} \Bigl( \int\limits_\FY |u'|^p d\lambda\Bigr)^\frac{1}{p}.\]
\end{theorem}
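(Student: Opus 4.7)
The plan is to reduce the full Poincar\'e inequality to a pointwise (i.e.\ $L^\infty$) bound for $u - \bar u$, which is essentially immediate from the path Poincar\'e inequality combined with H\"older. The $L^q(\nu)$ estimate will then follow mechanically, because the right-hand side does not depend on $x$.

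First, I would apply the path Poincar\'e inequality with $u^* = \bar u$. To justify this choice, note that since $\FY$ is compact and connected and $u$ is continuous, $u(\FY)$ is a closed interval, and the average $\bar u = \nu(\FY)^{-1}\int_\FY u\, d\nu$ lies in this interval; hence $\bar u \in u(\FY)$ and the lemma applies. It produces $c\in\FY$ with
\[
|u(x)-\bar u| \;\leq\; \int_{\mathfrak{p}_c^x}|u'(t)|\,dt
\]
for every $x\in\FY$ and every path from $c$ to $x$.

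Second, for each $x\in\FY$ I would choose the path $\mathfrak{p}_c^x$ to lie in $\FY$ with length at most $\diam(\FY)$ (approximated up to $\varepsilon\to 0$ by the definition of the diameter, using connectedness of $\FY$). Applying H\"older's inequality on this path with conjugate exponents $p$ and $p/(p-1)$ gives
\[
\int_{\mathfrak{p}_c^x}|u'|\,dt \;\leq\; L(\mathfrak{p}_c^x)^{(p-1)/p}\Bigl(\int_{\mathfrak{p}_c^x}|u'|^p\,dt\Bigr)^{1/p} \;\leq\; \diam(\FY)^{(p-1)/p}\Bigl(\int_\FY|u'|^p\,d\lambda\Bigr)^{1/p}.
\]
Combining the two estimates yields the uniform pointwise bound
\[
\|u-\bar u\|_{L^\infty(\FY)} \;\leq\; \diam(\FY)^{(p-1)/p}\Bigl(\int_\FY|u'|^p\,d\lambda\Bigr)^{1/p}.
\]
The case $p=1$ is trivial (the exponent on the diameter is $0$ and H\"older reduces to an identity).

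Finally, the passage to $L^q(\nu)$ is automatic: for $q=\infty$ there is nothing to do, and for $q\in[1,\infty)$ the normalized average of a constant majorant is the same constant, so
\[
\Bigl(\frac{1}{\nu(\FY)}\int_\FY|u-\bar u|^q\,d\nu\Bigr)^{1/q} \;\leq\; \|u-\bar u\|_{L^\infty(\FY)},
\]
which finishes the proof. The only genuinely non-cosmetic point is confirming that a path realizing (approximately) the diameter can be chosen to lie entirely in $\FY$, so that on the right-hand side we integrate $|u'|^p$ only over $\FY$; this is ensured by the fact that $\FY$ is a connected subgraph, hence path-connected with intrinsic distance equal to $d$ restricted to $\FY$ for subgraphs arising in our setting.
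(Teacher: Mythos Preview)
Your proof is correct and follows essentially the same approach as the paper: apply the path Poincar\'e inequality with $u^* = \bar u$, use H\"older on the path to introduce the exponent $p$, bound the path length by $\diam(\FY)$, and then integrate the resulting pointwise bound in $L^q(\nu)$. You are in fact more careful than the paper in justifying that $\bar u \in u(\FY)$ and that the path can be taken inside $\FY$.
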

\begin{proof}
Let $y\in \FY$, then using the previous lemma there is a path $\Fp$ such that
\begin{eqnarray*}
|u(y) - \bar{u} | &\leq& \int\limits_\Fp |u'| d\lambda\\
&\leq& L(\Fp)^\frac{p-1}{p} \Bigl(\int\limits_\Fp |u'|^p d\lambda\Bigr)^\frac{1}{p}.
\end{eqnarray*}
As $L(\Fp) \leq \diam(\FY)$ we obtain the desired inequality by integrating over $\FY$ with respect to $\nu$.
\end{proof}
\begin{remark}
In the case of $q=\infty$ the left hand side of the Poincaré inequality is given as $\|u-\bar{u}\|_\infty$.
\end{remark}
The importance of the Poincaré inequality is already expressed in the following results.
\begin{coro}\index{inequality!Sobolev}
Let $\FX_\Gamma$ be a metric graph and let $\FY\subset \FX_\Gamma$ be a compact subgraph. Then for any $p\in [1,\infty)$, $q\in [1,\infty]$ we have for all absolutely continuous functions $u:\FX_\Gamma \to \IR$ and all $x\in \FY$
\[|u(x)| \leq \nu(\FY)^{-\frac{1}{q}}\Bigl(\int\limits_\FY |u|^q d\nu\Bigr)^\frac{1}{q} + \diam(\FY)^{\frac{p-1}{p}}\Bigl(\int\limits_\FY |u'|^p d\lambda\Bigr)^\frac{1}{p}.\]
\end{coro}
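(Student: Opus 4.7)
The plan is to decompose $u(x)$ as $u(x) = (u(x) - \bar u) + \bar u$ and estimate each piece separately, using the Path Poincaré inequality (Lemma 2.12) for the first piece and Hölder's inequality for the second. By the triangle inequality,
\[|u(x)| \leq |u(x) - \bar u| + |\bar u|,\]
so it suffices to bound these two summands by the corresponding terms on the right-hand side of the claim.

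For the mean $\bar u = \nu(\FY)^{-1}\int_\FY u\, d\nu$, I would argue
\[|\bar u| \leq \nu(\FY)^{-1}\int_\FY |u|\, d\nu \leq \nu(\FY)^{-1}\,\nu(\FY)^{(q-1)/q}\Bigl(\int_\FY |u|^q\, d\nu\Bigr)^{1/q} = \nu(\FY)^{-1/q}\Bigl(\int_\FY |u|^q\, d\nu\Bigr)^{1/q},\]
which is just Hölder's inequality with exponents $q$ and $q/(q-1)$ (with the obvious interpretation in the endpoint cases $q=1$ and $q=\infty$, where one uses $\|u\|_\infty$ directly).

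For the deviation, the Path Poincaré inequality (Lemma 2.12), applied with $u^* = \bar u$, produces a point $c\in \FY$ with $u(c)=\bar u$ and, for every path $\Fp_c^x\subset \FY$ connecting $c$ and $x$, the bound
\[|u(x) - \bar u| \leq \int_{\Fp_c^x} |u'(t)|\, dt \leq L(\Fp_c^x)^{(p-1)/p}\Bigl(\int_{\Fp_c^x}|u'|^p\, d\lambda\Bigr)^{1/p} \leq \diam(\FY)^{(p-1)/p}\Bigl(\int_\FY |u'|^p\, d\lambda\Bigr)^{1/p},\]
where the middle inequality is Hölder on the path $\Fp_c^x$ (again with the usual adjustment for $p=1$), and the last uses $L(\Fp_c^x)\leq \diam(\FY)$ and monotonicity of the integral since $\Fp_c^x\subset \FY$. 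Summing the two bounds yields the Corollary.

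No real obstacle is expected, since this is essentially Theorem 2.13 (Poincaré) combined with Hölder and the triangle inequality. The only care needed is to verify that the choice of $c$ and the path remain inside $\FY$, which is guaranteed by the statement of Lemma 2.12 together with the fact that $\FY$ is a connected subgraph, and to handle the endpoint cases $p=1$ or $q\in\{1,\infty\}$ gracefully; in these the exponents $(p-1)/p$ and $-1/q$ must be read in the conventional way, and the bound in the previous remark for $q=\infty$ applies.
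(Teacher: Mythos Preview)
Your proof is correct and follows essentially the same approach as the paper: the paper invokes the Poincar\'e inequality (Theorem~2.13) with $q=\infty$ together with the (reverse) triangle inequality and H\"older's bound on $|\bar u|$, which amounts to exactly the two estimates you derive directly from Lemma~2.12 and H\"older. The only cosmetic difference is that you appeal to the path Poincar\'e lemma directly rather than citing Theorem~2.13, but the underlying argument is identical.
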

\begin{proof}
The inequality follows from the Poincaré inequality by choosing $q=\infty$, the reverse triangle inequality and by noting that Hölder's inequality implies $\bar{u}\leq (\nu(\FY)^{-1} \int |u|^q d\nu)^\frac{1}{q}$.
\end{proof}
The last inequality is known as Sobolev inequality. Note that in the above statements the choice of $p$ and of $q$ are independent of each other. This implies the following embedding theorem.
\begin{theorem}\index{theorem!Sobolev embedding!local version}
Let $\FX_\Gamma$ be a metric graph and $\CE$ a graph diffusion Dirichlet form as above. Then the embedding
\[\CD(\CE) \hookrightarrow \CC(\FX_\Gamma)\]
is continuous, where the right hand side is equipped with the topology of uniform convergence on compact subsets.\\
Furthermore, for all compact subgraphs $\FY\subset \FX_\Gamma$ the mapping
\[\cdot|_\FY : \CD(\CE) \to \CC(\FY),\quad u\mapsto u|_\FY\]
is compact.
\end{theorem}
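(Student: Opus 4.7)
The plan is to reduce the whole statement to two ingredients already in hand: the Sobolev corollary for uniform bounds and the $\tfrac12$-Hölder estimate for equicontinuity. Both parts will then follow by invoking the Arzelà--Ascoli theorem on compact subgraphs.

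First I would address continuity of the embedding $\CD(\CE) \hookrightarrow \CC(\FX_\Gamma)$. By definition, a base of the compact-open topology on $\CC(\FX_\Gamma)$ is given by the seminorms $u \mapsto \|u|_\FY\|_\infty$ for $\FY \subset \FX_\Gamma$ compact, so it suffices to bound each of these by $\|\cdot\|_\CE$. Applying the Sobolev-type corollary with $p=q=2$ on $\FY$ yields, for every $x \in \FY$,
\[
|u(x)| \;\leq\; \nu(\FY)^{-1/2}\Bigl(\int_\FY |u|^2\,d\nu\Bigr)^{1/2} + \diam(\FY)^{1/2}\Bigl(\int_\FY |u'|^2\,d\lambda\Bigr)^{1/2} \;\leq\; C_\FY \|u\|_\CE,
\]
with $C_\FY$ depending only on $\FY$ through $\nu(\FY)$ and $\diam(\FY)$. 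Taking suprema in $x \in \FY$ gives $\|u|_\FY\|_\infty \leq C_\FY \|u\|_\CE$, which is exactly continuity of the embedding.

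For the compactness of the restriction map, let $(u_n) \subset \CD(\CE)$ be a bounded sequence, say $\|u_n\|_\CE \leq M$, and fix a compact subgraph $\FY$. The estimate above gives uniform boundedness of the restrictions $(u_n|_\FY)$ in $\CC(\FY)$. Equicontinuity on $\FY$ comes from the Hölder continuity proposition: for all $x,y \in \FY$,
\[
|u_n(x) - u_n(y)|^2 \;\leq\; d_c(x,y)\,\CE(u_n) \;\leq\; M^2\, d_c(x,y),
\]
uniformly in $n$. Since $\FY$ is a compact metric space (with the restriction of $d_c$), the Arzelà--Ascoli theorem yields a subsequence converging uniformly on $\FY$, proving that the restriction map is compact.

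I do not expect any real obstacle here: both the pointwise bound and the modulus of continuity are already packaged in the results leading up to the theorem, and compact subgraphs inherit the metric from $(\FX_\Gamma,d_c)$, so Arzelà--Ascoli applies without further work. The only point worth checking carefully in writing up is that the constants in the Sobolev bound depend only on $\FY$, so that the bound $\|u|_\FY\|_\infty \leq C_\FY \|u\|_\CE$ is genuinely a continuity estimate for the seminorm indexed by $\FY$, giving continuity into the compact-open topology rather than merely into each $\CC(\FY)$ separately.
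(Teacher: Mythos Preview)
Your proof is correct and follows essentially the same route as the paper: the Sobolev corollary on a compact subgraph gives the uniform bound (the paper cites it with $p=2$, $q=\infty$, you with $p=q=2$; both work and yield the same continuity estimate), and equicontinuity plus Arzel\`a--Ascoli handles compactness exactly as in the paper. There is no substantive difference.
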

\begin{proof}
The continuous embedding follows from Sobolev's inequality with $p=2$ and $q=\infty$. Thus we are left to show the compact embedding, which will be done by using the Arzelà-Ascoli theorem. Let therefore $(u_n)$ be a bounded sequence in $\CD(\CE)$. By the Sobolev inequality this sequence is also uniformly bounded in $\CC(\FY)$, and thus the sequence is pointwise bounded in $\CC(\FY)$. Let now $x,y\in \FY$, we obtain again by the path Poincaré inequality
\[|u_n(x)-u_n(y)| \leq d_c(x,y)^\frac{1}{2} \CE(u_n)\]
giving that the sequence is equicontinuous. Now we can apply the Arzelà-Ascoli theorem and we are done.
\end{proof}
\begin{remark}
In the theorems above we only used that the $L^p$ mean of $u$ with respect to $\nu$ on $\FY$ is some value in $u(\FY)$. We will use this fact in a later section to derive analogous statements.
\end{remark}
We can now state and prove that the form $(\CE,\CD(\CE))$ is closed.
\begin{theorem}
Let $\FX_\Gamma$ be a metric graph. Then the space
\[\CC(\FX_\Gamma) \cap \bigoplus_{e\in E} \{u\in L^2((0,l_c(e)),d\nu)\mid u'\in L^2((0,l_c(e)),d\lambda)\}\]
is a Hilbert space. Consequently, $\CD(\CE)$ is a Hilbert space with respect to the energy norm $\|\cdot\|_\CE^2 = \|\cdot\|_2^2 + \CE(\cdot)$.
\end{theorem}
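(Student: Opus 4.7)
The plan is to factor the argument into two independent pieces. First, I would show that the ambient space
\[H := \bigoplus_{e\in E} \bigl\{u \in L^2((0,l_c(e)),d\nu) \mid u' \in L^2((0,l_c(e)),d\lambda)\bigr\},\]
equipped with the natural inner product
\[\langle u, v\rangle_H = \sum_{e\in E} \Bigl(\int_0^{l_c(e)} u_e v_e\, d\nu + \int_0^{l_c(e)} u'_e v'_e\, d\lambda\Bigr),\]
is a Hilbert space. Then I would show that the affine constraint cutting out continuous functions is closed in $H$, so that $\CC(\FX_\Gamma)\cap H$ is itself a Hilbert space, and conclude by observing that this intersection norm is precisely the energy norm on $\CD(\CE)$.

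For the first step, on each edge the weights $\nu(e) > 0$ and $1$ are positive constants, so the summand is canonically isomorphic as a Hilbert space to the classical Sobolev space $W^{1,2}(0,l_c(e))$ on a bounded interval, and this space is well known to be complete. An $\ell^2$-direct sum of Hilbert spaces is itself a Hilbert space, which takes care of $H$.

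For the second and main step, let $(u_n)$ be a Cauchy sequence in $\CC(\FX_\Gamma)\cap H$. By completeness of $H$ it converges to some $u = (u_e)_{e\in E}\in H$, and I must produce a continuous representative. On each fixed edge the 1D Sobolev embedding $W^{1,2}(0,l_c(e))\hookrightarrow C([0,l_c(e)])$ on bounded intervals gives that $u_{n,e}$ has a continuous extension to the closed interval and $u_{n,e}\to u_e$ uniformly on $[0,l_c(e)]$. In particular, for every vertex $v$ and every edge $e\sim v$, the boundary limit $\lim_{t\to \varphi_e(v)} u_{n,e}(t)$ converges as $n\to\infty$. Continuity of each $u_n$, as characterized right after the introduction of vertex coordinates, says that these boundary limits are all equal to the common value $u_n(v)$; since $\Deg(v) < \infty$ by local finiteness, there are only finitely many edges to match, and passing $n\to\infty$ yields a single common limit which we define to be $u(v)$. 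The function $u$ so defined then satisfies the same directional matching criterion, hence lies in $\CC(\FX_\Gamma)$, proving closedness.

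The only non-routine point is the compatibility at the vertices in step two: one must verify that edgewise uniform convergence is strong enough to preserve the gluing-of-limits condition defining $\CC(\FX_\Gamma)$, even when the graph has infinitely many edges globally. This is handled by the locally finite structure, which reduces the matching at each vertex $v$ to a finite collection of edge limits. The conclusion that $\CD(\CE)$ is a Hilbert space is then immediate, since by its definition $\CD(\CE) = \CC(\FX_\Gamma)\cap H$, and the intersection norm inherited from $H$ is exactly $\|\cdot\|_\CE^2 = \|\cdot\|_2^2 + \CE(\cdot)$.
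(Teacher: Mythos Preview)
Your argument is correct, but it takes a slightly different route than the paper. The paper invokes the global embedding $\CD(\CE)\hookrightarrow \CC(\FX_\Gamma)$ (with the topology of locally uniform convergence), which was established just before via the path Poincar\'e inequality and the Sobolev inequality on compact subgraphs; from this it concludes that an $\|\cdot\|_\CE$-Cauchy sequence is locally uniformly Cauchy, hence its limit in the direct sum is continuous. You instead bypass that global result and argue edgewise with the classical one-dimensional embedding $W^{1,2}(0,l_c(e))\hookrightarrow C([0,l_c(e)])$, then glue at each vertex using local finiteness. Your approach is more self-contained and does not rely on connectedness or the path machinery, at the modest cost of rechecking the vertex matching by hand. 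The paper's approach is shorter in context because the needed embedding has already been proved, and it delivers local uniform convergence on arbitrary compacta in one stroke.
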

\begin{proof}
Since the direct sum is a Hilbert space by definition, we only need to show that the intersection with the continuous functions is a closed subspace. But this follows since Cauchy-sequences in this space are locally uniform Cauchy sequences from the previous theorem, and thus by the completeness of the direct sum, there is a limit function. But this limit function has to be continuous by uniform convergence on compact subsets.
\end{proof}
Above we have shown the continuous embedding of $\CD(\CE)$ into the space of continuous functions. An important observation is, that this is actually an embedding of $\CD_\loc(\CE)$ into $\CC(\FX_\Gamma)$. The following example shows that in general $\CD(\CE)$ is not continuously embedded into $\CC_\infty(\FX_\Gamma)$.
\begin{example}
Let $\FX_\Gamma=\IR^+_0$ with $\FV = \IN$ and let $\nu(n,n+1)$ be such that $\sum\limits_{n=0}^\infty \nu(n,n+1) < \infty$. Then the constant function $\mathbf{1}$ is in $\CD(\CE)$ and the functions defined by
\[\mathbf{1}_n(x) := \begin{cases} 1&, x\in [0,n]\\ -x+n+1 &, x\in [n,n+1] \\ 0&, x\in [n+1,\infty) \end{cases}\]
are in $\CD(\CE)$ for all $n\in \IN$ as well. Moreover an easy calculation shows $\mathbf{1}_n \to \mathbf{1} $ in $\CD(\CE)$ whereas $\|\mathbf{1} - \mathbf{1}_n\|_\infty =1$. Thus $\CD(\CE)$ is not continuously embedded in $\CC_\infty(\FX_\Gamma)$.
\end{example}
Now we introduce a subspace of $\CC(\FX_\Gamma)$ which has the property that $\CD(\CE)$ is embedded into it.
\begin{definition}
Let $\FX_\Gamma$ be a metric graph and $(a(e))_{e\in E}$ an edge weight. Then $\ell^2(\FX_\Gamma,a)$ is the space of all continuous functions satisfying
\[\|u\|_\nu := \Bigl(\sum_{e\in E} \lambda_a(\FX_e) \|u_e\|^2_\infty\Bigr)^\frac{1}{2} < \infty.\]
\end{definition}
Using the canonical form we have
\[\lambda_a(\FX_e)= a(e) l(e) = \nu(e)l_c(e).\]
One has to be careful with the symbol $\ell^2$ here. Note that it is not a typical small $L^2$ space. We have chosen this symbol, as the space is a Hilbert space sum.
\begin{prop}
The mapping $\|\cdot\|_\nu$ defines a norm on $\ell^2(\FX_\Gamma, \nu)$ and the space  $(\ell^2(\FX_\Gamma, \nu),\|\cdot\|_\nu)$ is a Banach space. Furthermore the embedding
\[\ell^2(\FX_\Gamma, \nu) \hookrightarrow L^2(\FX_\Gamma,\nu)\]
is continuous.
\end{prop}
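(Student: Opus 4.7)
The plan is to handle the three assertions in the natural order: norm axioms, completeness, then the embedding.

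First, I would verify the norm axioms. Positive definiteness follows because if $\|u\|_\nu = 0$ then $\|u_e\|_\infty = 0$ for every $e$ (the weights $\lambda_a(\FX_e) = a(e)l(e)$ are strictly positive), so each edge component vanishes and hence $u \equiv 0$ by continuity. Homogeneity is immediate. For the triangle inequality I would first use the edgewise sup-norm triangle inequality
\[
\sqrt{\lambda_a(\FX_e)}\,\|u_e + v_e\|_\infty \;\le\; \sqrt{\lambda_a(\FX_e)}\,\|u_e\|_\infty + \sqrt{\lambda_a(\FX_e)}\,\|v_e\|_\infty,
\]
and then apply Minkowski's inequality in $\ell^2(E)$ to the sequences $\bigl(\sqrt{\lambda_a(\FX_e)}\,\|u_e\|_\infty\bigr)_{e\in E}$ and $\bigl(\sqrt{\lambda_a(\FX_e)}\,\|v_e\|_\infty\bigr)_{e\in E}$.

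For completeness, given a Cauchy sequence $(u^n) \subset \ell^2(\FX_\Gamma,\nu)$, I note that for each individual edge $e$ the sequence $(u^n_e)$ is Cauchy in $C([0,l(e)])$, so it converges uniformly to some continuous function $u_e$. The candidate limit $u$ is defined edgewise by these $u_e$ and vertexwise by the common limit of the $u^n$ at that vertex (the $u^n$ being continuous on $\FX_\Gamma$, the directional limits from each incident edge agree for every $n$, and uniform convergence preserves this agreement in the limit); by the edge characterization of continuity recalled earlier this defines a continuous function on $\FX_\Gamma$. To show $u^n \to u$ in norm, I would use the standard truncation trick: for any $\varepsilon > 0$ pick $N$ such that $\|u^n - u^m\|_\nu < \varepsilon$ for $m,n \ge N$; for any finite $F \subset E$ the partial sum $\sum_{e\in F} \lambda_a(\FX_e)\|u^n_e - u^m_e\|_\infty^2$ is below $\varepsilon^2$, and letting $m\to\infty$ preserves this bound because only finitely many edges are involved. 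Taking the supremum over $F$ gives $\|u^n - u\|_\nu \le \varepsilon$, and membership $u \in \ell^2(\FX_\Gamma,\nu)$ then follows from the already-established triangle inequality applied to $u = u^N - (u^N - u)$.

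For the continuous embedding I would simply estimate edgewise in the canonical representation,
\[
\|u\|_{L^2(\FX_\Gamma,\nu)}^2 = \sum_{e\in E} \nu(e) \int_0^{l_c(e)} |u_e(t)|^2\,dt \;\le\; \sum_{e\in E} \nu(e)\, l_c(e)\, \|u_e\|_\infty^2 = \sum_{e\in E} \lambda_a(\FX_e)\, \|u_e\|_\infty^2 = \|u\|_\nu^2,
\]
using the identity $\lambda_a(\FX_e) = \nu(e) l_c(e)$ noted just before the proposition. This gives an embedding of norm at most one.

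I expect the only slightly delicate point to be the continuity of the limit function $u$ at the vertices in the completeness argument; everything else is a direct transcription of the one-dimensional facts to the weighted Hilbert sum, and the embedding step is a one-line edgewise estimate.
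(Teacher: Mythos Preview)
Your proposal is correct and follows essentially the same route as the paper. The paper dismisses the norm axioms as straightforward, proves completeness by observing that a $\|\cdot\|_\nu$-Cauchy sequence is uniformly Cauchy on each compact subset (hence has a continuous limit) and then applies Fatou's lemma to the sum over edges---which is exactly your finite-truncation trick---and obtains the embedding from the same one-line edgewise estimate $\int_{\FX_e}|u_e|^2\,\nu(e)\,dt \le \|u_e\|_\infty^2\, l_c(e)\,\nu(e)$.
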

\begin{proof}
It is straightforward to show that $\|\cdot\|_\nu$ defines a norm. To show completeness let $(u_n)_n$ be a $\|\cdot\|_\nu$ Cauchy sequence. Then by definition this sequence is a uniform Cauchy sequence on each compact subset of $\FX_\Gamma$. Thus there exists a continuous function $u$ on $\FX_\Gamma$ such that $(u_n)$ converges uniformly on each compact subset to $u$. Furthermore using the lemma of Fatou we obtain
\begin{eqnarray*}
\|u-u_n\|^2_\nu &=& \sum_{e\in E} \nu(e)l_c(e) \|u_e-(u_n)_e\|^2_\infty \\
&=& \sum_{e\in E} \nu(e)l_c(e) \|\lim_{m\to \infty} (u_m)_e-(u_n)_e\|^2_\infty \\
&\leq& \liminf_{m\to \infty} \sum_{e\in E} \nu(e)l_c(e) \|(u_m)_e-(u_n)_e\|^2_\infty \\
&\leq & \epsilon
\end{eqnarray*}
for $\epsilon>0$ arbitrary and $n$ large enough. Thus $u_n\to u$ in $\|\cdot\|_\nu$ norm and hence $u\in \ell^2(\FX_\Gamma, \nu)$. The second claim follows easily by observing
\[ \int\limits_{\FX_e} |u_e (t)|^2 \:\nu(e)dt \leq \|u_e\|^2_\infty l_c(e) \nu(e).\qedhere\]
\end{proof}
The significance of this space lies in the fact proven below that $\CD(\CE)$ is mapped continuously into $\ell^2(\FX_\Gamma,\nu)$. However, for our purposes we need the following assumption.
\begin{assumption}\index{assumption!intrinsic length}
Assume from now on that $\nu(e) l_c(e)^2 \in (0,1]$.
\end{assumption}
This assumption has a nice interpretation in terms of the intrinsic metric. This will be introduced in the next section. So far let us only mention that it is not a strong assumption as it could be achieved by inserting additional vertices of degree $2$.
\begin{theorem}\index{theorem!Sobolev embedding!$\ell^2(\nu)$ version}
Let $\FX_\Gamma$ be a metric graph and $\CE$ a graph diffusion Dirichlet form with canonical measure $\nu$. Then the embedding
\[\CD(\CE) \hookrightarrow \ell^2(\FX_\Gamma,\nu)\]
is continuous.
\end{theorem}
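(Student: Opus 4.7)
My plan is to apply the Sobolev inequality from Corollary~(the one following the Poincar\'e inequality) edge by edge, with the choice $p=q=2$ and $\FY = \overline{\FX_e}$, and then sum the resulting estimates, using the standing assumption $\nu(e)l_c(e)^2\leq 1$ to absorb the derivative term.

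Concretely, fix $e\in E$ and let $\FY=\overline{\FX_e}$, which is a compact subgraph with canonical diameter $l_c(e)$ and canonical measure $\nu(\overline{\FX_e})=\nu(e)l_c(e)$. For $u\in\CD(\CE)$ the Sobolev inequality with $p=q=2$ yields, for every $x\in\overline{\FX_e}$,
\[
|u(x)|\leq (\nu(e)l_c(e))^{-1/2}\Bigl(\int_{\FX_e}|u|^2\,d\nu\Bigr)^{1/2}+l_c(e)^{1/2}\Bigl(\int_{\FX_e}|u'|^2\,d\lambda\Bigr)^{1/2}.
\]
Taking the supremum over $x\in\overline{\FX_e}$ and using $(a+b)^2\leq 2a^2+2b^2$ gives
\[
\|u_e\|_\infty^2\leq \frac{2}{\nu(e)l_c(e)}\int_{\FX_e}|u|^2\,d\nu+2\,l_c(e)\int_{\FX_e}|u'|^2\,d\lambda.
\]

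Multiplying by the weight $\nu(e)l_c(e)$ and invoking the assumption $\nu(e)l_c(e)^2\leq 1$ gives the edgewise bound
\[
\nu(e)l_c(e)\|u_e\|_\infty^2\leq 2\int_{\FX_e}|u|^2\,d\nu+2\,\nu(e)l_c(e)^2\int_{\FX_e}|u'|^2\,d\lambda\leq 2\int_{\FX_e}|u|^2\,d\nu+2\int_{\FX_e}|u'|^2\,d\lambda.
\]
Summing over $e\in E$, and recalling that in the canonical representation $\CE(u)=\int_{\FX_\Gamma}|u'|^2\,d\lambda$ while the ambient $L^2$ measure is $\nu$, yields
\[
\|u\|_\nu^2=\sum_{e\in E}\nu(e)l_c(e)\|u_e\|_\infty^2\leq 2\|u\|_{L^2(\FX_\Gamma,\nu)}^2+2\CE(u)\leq 2\|u\|_\CE^2,
\]
which is the desired continuous embedding.

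There is no real obstacle; the only subtlety is making sure the assumption $\nu(e)l_c(e)^2\leq 1$ is used exactly where needed, namely to convert the edgewise $l_c(e)$ factor multiplying the energy density into an absorbable constant after multiplication by $\nu(e)l_c(e)$. Without that assumption the derivative term would carry an unbounded weight $\nu(e)l_c(e)^2$ and could not be dominated by $\CE(u)$ after summation; this is precisely why the normalisation was imposed in the previous paragraph.
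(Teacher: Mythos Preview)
Your proof is correct and follows essentially the same approach as the paper: apply the edgewise Sobolev inequality with $p=q=2$, square, multiply by $\nu(e)l_c(e)$, use the assumption $\nu(e)l_c(e)^2\leq 1$ to control the derivative term, and sum over $e\in E$. Your write-up is in fact more careful than the paper's, which contains a minor slip (it says ``$\nu(e)l_c(e)$ is uniformly bounded'' where it should say $\nu(e)l_c(e)^2$).
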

\begin{proof}
The Sobolev embedding applied to each edge gives
\[\|u_e\|^2_\infty \leq C(\tfrac{1}{(\nu(e)l_c(e))}  \|u_e\|^2_2 + l_c(e) \|u'\|_2^2).\]
After multiplication with $(\nu(e)l_c(e))$ and using that $\nu(e)l_c(e)$  is uniformly bounded from above we derive the claim by summing over $e\in E$.
\end{proof}
We discuss now, how this theorem is related to the original Sobolev embedding. To this aim, consider the graph $\FX_\Gamma=\IR$ with measure weights $a(n,n+1)$ and $b(n,n+1)$ being identically one. By the fundamental theorem of calculus one obtains for $u\in W^{1,2}(\IR)$ and any $x\in \IR$ that
\[u(x) - u(0) = \int\limits_0^x u'(t)dt.\]
Thus the limit for $u$ as $x\to \infty$ exists whenever the right hand side has a finite limit. Applying this to $u^2$ we get
\begin{eqnarray*}
|u^2(x) - u^2(0)| &\leq& 2 \int\limits_0^\infty |u(t) u'(t)| dt\\
&\leq & \|u\|_2^2  + \|u'\|_2^2.
\end{eqnarray*}
Thus we have that the limit $u(x)$ for $x\to \infty$ exists and equals zero, as $u\in L^2$. Summarizing we get that $W^{1,2}(\IR)$ is continuously embedded into $\CC_o(\IR)$. On the other hand we have shown that $W^{1,2}(\IR)$ is continuously embedded into $\ell^2(\IR,1)$. This gives a better embedding since
\[\|u\|_\infty^2 \leq \sum_{n\in \IZ} \|u|_{[n,n+1]}\|_\infty^2.\]
In this sense our result improves the usual Sobolev inequality. On the other hand, we let $(l(n))_n$ be a sequence with $\sum l(n) = \infty$ and choose as vertex set for $\IR$ the numbers $r(n)= \sum_{k=1}^n l(k)$. If we assume that the edge weights $a$ and $b$ are identically $1$ again, then the associated Dirichlet space is the same as above. Our Sobolev embedding gives that the Sobolev space $W^{1,2}(\IR)$ is embedded into $\ell^2(\FX_\IZ,\nu)$ with $\nu(\FX_e)=l(e)$. However, this space also contains unbounded functions and thus the embedding theorem is worse than the original. Nevertheless, we also see that constant functions do not belong to $\ell^2(\FX_\IZ,\nu)$ and we conclude that unbounded functions in $\ell^2(\FX_\IZ,\nu)$ have to oscillate strongly. We can thus improve the embedding by embedding the Sobolev space $W^{1,2}(\IR)$ into the subspace of $\ell^2(\FX_\IZ,\nu)$ of Hölder continuous functions, as the the Hölder condition gives a restriction on oscillation behaviour.\\
Analogous considerations also hold in the case of an interval, say $\FX_\Gamma = [0,1)$. The usual Sobolev embedding says that $W^{1,2}([0,1)) $ is continuously embedded into $\CC([0,1))$ and also it is equal to $W^{1,2}([0,1])$. Our result applied to a metrization $(l(n))_n$ of $[0,1)$ such that $\sum l(n)=1$ gives that $W^{1,2}([0,1))$ embedded into $\ell^2([0,1),\nu)$ with $\nu(\FX_e)=l(e)$. Again the latter space also contains unbounded functions.\medskip\\
We are interested when the space $\CD(\CE)$ is embedded into $\CC_\infty(\FX_\Gamma)$ or even $\CC_o(\FX_\Gamma)$. We already know that this is the case if the length and measure weights are bounded from below by a positive constant. A first result towards the general case is the following.
\begin{prop}
Let $\FX_\Gamma$ be a metric graph and $\CE$ a graph diffusion Dirichlet form with canonical measure $\nu$. Then $\ell^2(\FX_\Gamma, \nu)$ is continuously embedded into $\CC_\infty(\FX_\Gamma)$ if and only if there is some constant $c>0$ such that for all $e\in E$ we have $\nu(\FX_e)\geq c$.
\end{prop}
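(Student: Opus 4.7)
The plan is to prove the two directions separately, both by direct manipulation of the norms.

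For the sufficiency, assume $\nu(\FX_e)\geq c>0$ for all $e\in E$, and let $u\in \ell^2(\FX_\Gamma,\nu)$. By continuity of $u$ and the fact that $\FX_\Gamma=\bigcup_e \overline{\FX_e}$, we have $\|u\|_\infty=\sup_{e\in E}\|u_e\|_\infty$. For each edge, the hypothesis gives $\|u_e\|_\infty^2 \leq c^{-1}\nu(\FX_e)\|u_e\|_\infty^2$, so taking the supremum and bounding it by the series,
\[\|u\|_\infty^2 \leq \sup_{e\in E} c^{-1}\nu(\FX_e)\|u_e\|_\infty^2 \leq c^{-1}\sum_{e\in E}\nu(\FX_e)\|u_e\|_\infty^2 = c^{-1}\|u\|_\nu^2.\]
In particular $u\in\CC_\infty(\FX_\Gamma)$ and the embedding is continuous with operator norm at most $c^{-1/2}$.

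For the necessity, I would argue by contraposition: assume $\inf_{e\in E}\nu(\FX_e)=0$ and exhibit a sequence witnessing the failure of any continuous embedding. Pick a sequence $(e_n)\subset E$ of pairwise distinct edges with $\nu(\FX_{e_n})\to 0$. For each $n$ define the \emph{tent function} $u_n$ supported in $\overline{\FX_{e_n}}$ by setting, in edge coordinates,
\[(u_n)_{e_n}(t) = 1 - \left|\tfrac{2t}{l_c(e_n)} - 1\right|,\qquad t\in[0,l_c(e_n)],\]
and $(u_n)_e\equiv 0$ for $e\neq e_n$. Since $(u_n)_{e_n}$ vanishes at both endpoints of $e_n$, the function $u_n$ extends continuously by zero to $\FX_\Gamma$, and $\|u_n\|_\infty=1$ (attained at the midpoint of $\FX_{e_n}$). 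On the other hand, only the $e_n$-term contributes to the $\ell^2(\FX_\Gamma,\nu)$-norm, so
\[\|u_n\|_\nu^2 = \nu(\FX_{e_n})\|(u_n)_{e_n}\|_\infty^2 = \nu(\FX_{e_n}) \longrightarrow 0.\]
If $\ell^2(\FX_\Gamma,\nu)\hookrightarrow\CC_\infty(\FX_\Gamma)$ were continuous, we would obtain a constant $C$ with $1=\|u_n\|_\infty\leq C\|u_n\|_\nu\to 0$, a contradiction.

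There is no real obstacle here; the only minor care required is the verification that the tent functions $u_n$ genuinely lie in $\ell^2(\FX_\Gamma,\nu)$ (they do, since they have compact support in a single closed edge) and that continuity across the two vertices of $e_n$ holds (which follows from the vanishing at the endpoints). The argument is insensitive to the standing assumption $\nu(e)l_c(e)^2\in(0,1]$, as that hypothesis was used to compare $\ell^2(\FX_\Gamma,\nu)$ with $\CD(\CE)$ rather than with $\CC_\infty(\FX_\Gamma)$.
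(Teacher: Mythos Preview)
Your proof is correct and follows essentially the same idea as the paper's. The paper's argument is more terse: it observes that $\|u\|_\infty=\sup_{e\in E}\|u_e\|_\infty$ and then simply invokes the known fact that the weighted sequence space $\ell^2(E,\nu(\FX_e))$ embeds into $\ell^\infty(E)$ if and only if $\inf_e \nu(\FX_e)>0$. Your tent-function construction is exactly what one needs to unpack the ``only if'' direction of that sequence-space fact in the present setting (ensuring the bad sequence is realized by a genuine continuous function on $\FX_\Gamma$), so you have made explicit a step the paper leaves to the reader.
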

\begin{proof}
It is known that the weighted sequence space $\ell^2(E,\nu(\FX_e))$ is embedded into $\ell^\infty(E)$ if and only if $\sqrt{\nu(\FX_e)}^{-1} \in \ell^\infty$. As
\[\sup_{x\in \FX_\Gamma} |u(x)| = \sup_{e\in E} \|u_e\|_\infty\]
the claim follows.
\end{proof}
As the embedding properties are closely connected with the behavior of functions at infinity, we continue to study the behavior of functions along infinite paths.
\begin{theorem}\index{theorem!Sobolev embedding!extension version}
Let $\FX_\Gamma$ be a metric graph and $\CE$ a graph diffusion Dirichlet form with canonical measure $\nu$. Let $u\in \CD(\CE)$ and  $\Fp$ a ray. If the ray has finite canonical length, then $\lim_{\Fp} u(x)$ exists. If the ray has infinite length and additionally $\inf_{e\in \Fp} \nu(e)>0$ then the limit $\lim_{\Fp} u(x)$ exists and is zero. In particular, if for any ray one of these assumptions holds, then we have the continuous embedding
\[\CD(\CE)\hookrightarrow \CC_o(\widehat{\FX_\Gamma}),\]
where $\widehat{\FX_\Gamma}$ denotes the completion of the space $\FX_\Gamma$ with respect to the canonical metric.
\end{theorem}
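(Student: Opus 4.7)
I would handle the two limit statements separately using one-dimensional arguments on $u|_{\Fp}$ and then glue them together to obtain the embedding into $\CC_o(\widehat{\FX_\Gamma})$. Throughout I work in the canonical representation, so that $\CE(u) = \int |u'|^2\,d\lambda$ and $\|u\|_{L^2}^2 = \int |u|^2\,d\nu$ with $d\nu$ having density $\nu(e)$ on each edge.

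For a ray $\Fp$ of finite canonical length, I would apply the H\"older estimate of Proposition 2.3 to points $p,q$ far out along the homeomorphism $\varphi:\Fp\to [0,\infty)$. Since $L(\Fp)<\infty$, the tail length $L(\Fp_p^q)$ tends to zero, so
\[|u(p)-u(q)|^2 \leq d_c(p,q)\,\CE(u) \leq L(\Fp_p^q)\,\CE(u) \longrightarrow 0,\]
and $u$ is Cauchy along $\Fp$, giving the existence of $\lim_\Fp u$. For a ray $\Fp$ of infinite canonical length with $c:=\inf_{e\in\Fp}\nu(e)>0$, I would parametrize $\Fp$ by canonical arc length $s\in[0,\infty)$. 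On each edge of $\Fp$ the derivative $(u|_\Fp)'$ agrees with $u'$ up to sign, so
\[\int_0^\infty |(u|_\Fp)'(s)|^2\,ds \leq \CE(u), \qquad \int_0^\infty |u|_\Fp(s)|^2\,ds \leq c^{-1}\|u\|_{L^2(\FX_\Gamma,\nu)}^2,\]
the second bound using $\nu(e)\geq c$ on $\Fp$. Hence $u|_\Fp\in W^{1,2}(0,\infty)$, and the classical one-dimensional Sobolev embedding $W^{1,2}(0,\infty)\hookrightarrow \CC_o([0,\infty))$ forces $\lim_\Fp u=0$.

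For the embedding $\CD(\CE)\hookrightarrow \CC_o(\widehat{\FX_\Gamma})$ I would invoke the ray-based description of the Cauchy boundary: each $\xi\in\partial\FX_\Gamma$ is an equivalence class of finite-length rays. The first case provides a candidate extension value $\hat u(\xi) := \lim_\Fp u$ for any $\Fp$ in the class; the uniform H\"older estimate shows this value depends only on the equivalence class and that the resulting $\hat u$ is uniformly continuous on $\widehat{\FX_\Gamma}$. The sup-norm bound $\|\hat u\|_\infty\leq C\|u\|_{\CE}$ then follows from the local Sobolev inequality (Corollary 2.7) applied on a ball containing a sufficiently large neighbourhood. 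The main obstacle I expect is verifying $\hat u\in\CC_o$, i.e.\ that $\{|\hat u|\geq\varepsilon\}$ has compact closure in $\widehat{\FX_\Gamma}$. I would argue by contradiction: a sequence $(x_n)$ escaping every compact subset of $\widehat{\FX_\Gamma}$ with $|u(x_n)|\geq\varepsilon$ can be connected to a fixed base point $o$ by shortest paths, whose compactness-type limit (extracted as in the Hopf--Rinow argument of Proposition 1.11 applied in the completion) must produce a ray, necessarily of infinite length since finite-length rays have limit points in $\widehat{\FX_\Gamma}$. The hypothesis then forces $u\to 0$ along this ray by the second case, contradicting $|u(x_n)|\geq\varepsilon$.
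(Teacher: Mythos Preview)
Your treatment of the two ray-limit statements is essentially the paper's argument. For finite canonical length the paper also uses Cauchy--Schwarz (your invocation of Proposition~2.3 is exactly that inequality), and for infinite length with $\inf_{e\in\Fp}\nu(e)>0$ the paper applies the identity $|u^2(x)-u^2(y)|\le\int_\Fp |u|^2\,\nu + \int_\Fp |u'|^2\,\nu^{-1}$, which is precisely the computation underlying the one-dimensional embedding $W^{1,2}(0,\infty)\hookrightarrow\CC_o([0,\infty))$ you quote. So for these two parts there is no real difference.

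For the embedding into $\CC_o(\widehat{\FX_\Gamma})$ the paper offers only the word ``obvious'', so your more detailed sketch goes beyond it. However, the contradiction argument you propose has a genuine gap. Extracting a limit ray $\Fp$ from the shortest paths $o\to x_n$ (via the K\"onig-type construction of Proposition~1.11) does \emph{not} force the points $x_n$ to lie on or near $\Fp$: the path to $x_n$ may coincide with $\Fp$ on an initial segment of length $k$ and then branch off and travel an arbitrary additional distance before reaching $x_n$. Thus $u\to 0$ along $\Fp$ does not contradict $|u(x_n)|\ge\varepsilon$; you would need an additional estimate comparing $u(x_n)$ to the value of $u$ at the branch point, and the H\"older bound $|u(x_n)-u(p)|^2\le d_c(x_n,p)\,\CE(u)$ is useless here since $d_c(x_n,p)$ is unbounded. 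Your claim that $\|\hat u\|_\infty\le C\|u\|_\CE$ follows from Corollary~2.7 has the same defect: that corollary gives a constant depending on the compact set, and the compact set on which $|u|$ is large depends on $u$. A cleaner route to continuity, once membership in $\CC_o(\widehat{\FX_\Gamma})$ is established, is the closed graph theorem; but the membership step itself still needs a different argument than the one you sketch.
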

\begin{proof}
We show that in both cases
\[\int\limits_{\Fp} |u'(t)| dt\]
is finite as this implies the existence of the limit of $u(x)$ along this path.\\
In the case that the path has finite length we use Cauchy-Schwarz and estimate this by
\[L_c(\Fp)^\frac{1}{2} \CE(u).\]
In the case that $\Fp$ has infinite length we use
\[|u^2(x) - u^2(y)| \leq \int\limits_\Fp |u(t)^2| \nu(t)dt + \int\limits_\Fp |u'(t)^2| \nu^{-1}(t) dt\]
and by assumption $\inf\limits_{e\in E} \nu(e) >0$ the right hand side could be estimate by $\|u\|_\CE^2$. As the measure of the path is infinite we immediately obtain that this limit is zero.\\
The embedding is obvious.
\end{proof}
We finish this section with a note on the so called resistance metric. We have actually seen from the path Poincaré inequality that for all absolutely continuous functions $u$ with $\|u'\|_2 <\infty$ there is a constant $C>0$ depending only on $x,y$ such that the estimate
\[|u(x)-u(y)|^2 \leq C \|u'\|_2^2\]
holds. Note that we do not assume $u\in L^2$. We know that $C \leq d_c(x,y)$. We set
\[R(x,y):= \sup\{ \frac{|u(x)- u(y)|^2}{\|u'\|_2^2} \mid u' \in L^2\},\]
which is known as the resistance metric, see for instance \cite{S-94}. From the estimate above we obtain
\[R(x,y) \leq d_c(x,y)\]\index{metric!resistance}
but in general the converse inequality does not hold. Among others, the previous result can be generalized by using the resistance metric instead of the canonical. This can be seen by observing that functions in $\CD(\CE)$ are $\frac{1}{2}$-Hölder continuous with respect to the resistance metric. Thus they can be continuously extended to the resistance boundary. In particular functions in $\CD_o(\CE)$ do vanish on the resistance boundary. One problem arising with the resistance metric is, that in general it is not induced by edge weights. This happens only in the case of a tree, as there it agrees with the canonical metric.
\begin{prop}
If $\FX_\Gamma$ is a tree then we have
\[R(x,y)=d_c(x,y).\]
\end{prop}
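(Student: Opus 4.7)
The plan is to combine the upper bound $R(x,y)\leq d_c(x,y)$, already noted in the text as a consequence of the path Poincar\'e inequality, with a matching lower bound produced by an explicit test function. The tree hypothesis enters in a single essential way: for a tree, the path $\Fp_x^y$ connecting $x$ and $y$ is unique, it has canonical length exactly $d_c(x,y)$, and removing it from $\FX_\Gamma$ splits the space into disjoint subtrees, each of which is attached to $\Fp_x^y$ at one and only one point.

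First I would parametrise $\Fp_x^y$ by canonical arclength $s\in [0,d_c(x,y)]$ with $s(x)=0$ and $s(y)=d_c(x,y)$, and set $u(z):=s(z)$ on $\Fp_x^y$. Then I would extend $u$ to all of $\FX_\Gamma$ by declaring it constant on each connected component of $\FX_\Gamma\setminus \Fp_x^y$, with constant value equal to the value of $u$ at the unique point at which that component attaches to $\Fp_x^y$ (namely a vertex on the path, or the point $x$ or $y$ itself if $x$ or $y$ lies in the interior of an edge). By construction $|u'|\equiv 1$ on $\Fp_x^y$ and $u'\equiv 0$ off the path, so $u$ is edgewise absolutely continuous, and it is continuous at every vertex since the values on the incident subtrees match the value on the path. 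Hence $u$ is absolutely continuous on $\FX_\Gamma$ in the sense of the previous section.

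The computation is then immediate: since $u'$ is supported on $\Fp_x^y$ with $|u'|=1$,
\[\|u'\|_2^2 \;=\; \int_{\Fp_x^y} 1\,d\lambda \;=\; L_c(\Fp_x^y) \;=\; d_c(x,y)\;<\;\infty,\]
so in particular $u'\in L^2$, while $|u(x)-u(y)|^2 = d_c(x,y)^2$. Inserting this into the definition of $R$ yields
\[R(x,y) \;\geq\; \frac{|u(x)-u(y)|^2}{\|u'\|_2^2} \;=\; d_c(x,y),\]
and combined with the converse inequality this gives the claimed equality.

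The main, and indeed only, delicate point is that the extension off $\Fp_x^y$ be continuous, because this is precisely what fails in the presence of cycles. In a graph with a cycle, a component of $\FX_\Gamma\setminus\Fp_x^y$ can attach to $\Fp_x^y$ at two distinct points carrying different values of $u$, and no constant extension is consistent; one must then minimise the Dirichlet energy over all admissible extensions, which is a genuine network-resistance problem and produces a strictly smaller effective resistance than $d_c(x,y)$. The tree assumption removes this difficulty and makes the argument essentially one-dimensional.
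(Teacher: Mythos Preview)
Your proof is correct and essentially identical to the paper's: both construct the same test function (canonical arclength along the unique path $\Fp_x^y$, extended by constants on each component of $\FX_\Gamma\setminus\Fp_x^y$), compute $\|u'\|_2^2=d_c(x,y)$ and $|u(x)-u(y)|^2=d_c(x,y)^2$, and combine with the general inequality $R\leq d_c$. Your discussion of why the constant extension is continuous precisely on trees is more explicit than the paper's, but the argument is the same.
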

\begin{proof}
Let $x,y\in \FX_\Gamma$ then we construct a function $u$ as follows. We let $u(x)=0$, $u(y)= d_c(x,y)$ and linearly interpolated on the unique path $\Fp$ connecting $x,y$. On each connected component of $\FX_\Gamma\setminus\Fp$ we continuously extend the function $u$ by setting it constant. Then we have that $u'\equiv 0$ on the complement of the path and $u'=1$ on the path. As therefore $\CE(u)= d_c(x,y)$ we obtain
\[d_c(x,y)^2 = |u(x) -u(y)|^2 \leq R(x,y) \CE(u) = R(x,y) d_c(x,y).\]
The claim follows, as the other inequality always holds.
\end{proof}
In chapter 4 we come back to functions with finite energy functional $\CE$ but not necessarily finite $L^2$ norm.
\section{Regularity}
In this section we consider the question whether there are enough continuous functions with compact support in the domain of our Dirichlet form or not. Enough in this context means that the space
\[\CD_c(\CE) = \CD(\CE) \cap \CC_c(\FX_\Gamma)\]\index{Dirichlet form!regularity}
is uniformly dense in $\CC_c(\FX_\Gamma)$ and $\CE$-dense in $\CD(\CE)$. In the context of Dirichlet forms this property is known as regularity.\medskip\\
The uniform density in $\CC_c(\FX_\Gamma)$ follows easily by approximating a function on each edge. This is possible, since a function having compact support is non-zero only on a finite number of edges.\medskip\\
To study the $\CE$-density we introduce the space
\[\CD_o(\CE) = \overline{\CD(\CE)\cap \CC_c(\FX_\Gamma)}^{\|\cdot\|_\CE}.\]
By general principles the restriction $\CE_o$ of $\CE$ to $\CD_o(\CE)$ is again a Dirichlet form and by definition this form is regular. Thus the question of regularity is equivalent to
\[\CD(\CE)= \CD_o(\CE)\]
being true or not. In order to study this, we transform our Dirichlet space to another normal form.
\begin{definition}\index{intrinsic representation}\index{metric!intrinsic}
Let $\FX_\Gamma$ be a metric graph and $(\CE, \CD(\CE))$ a graph diffusion Dirichlet form on $L^2(\FX_\Gamma,\nu_a)$ with weights $(b(e))_{e\in E}$. The image Dirichlet form coming from the length transformation $\Phi$ such that $\widetilde{b}(e)=\widetilde{a}(e)$ is called the intrinsic representation. The associated measure $\lambda_{\widetilde{a}}$ is called the intrinsic measure and the associated weight is denoted by $\omega(e):=\widetilde{a}(e)$. We will also write $\omega$ for the measure $\lambda_\omega$. The associated length $\widetilde{l}$ is called the intrinsic length and denoted by $(l_i(e))_{e\in E}$.
\end{definition}
As we will see below, regularity is connected to completeness of the metric graph with respect to the intrinsic metric. An important observation is that the transformation $\Phi$ maps $\CC_c(\FX_\Gamma)$ to $\CC_c(\widetilde{\FX}_\Gamma)$ bijectively, and thus regularity is invariant under $\Phi$.\\
From the transformation rules connected with length transformations we see that we can always transform a graph into the intrinsic form. The next two propositions are proven by those rules.
\begin{prop}
For a graph diffusion Dirichlet form on a metric graph $\FX_\Gamma$, the intrinsic measure is given by $\sqrt{a(e)b(e)}$ and the intrinsic length is given by $l(e) \sqrt{\frac{a(e)}{b(e)}}$.
\end{prop}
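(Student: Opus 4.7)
The plan is to apply the transformation rules for the length transformation $\Phi$ that were derived earlier in the section. Recall that given original edge lengths $l(e)$ and new lengths $\widetilde{l}(e)$, the weights transform as
\[
\widetilde{a}(e) = \frac{a(e)\,l(e)}{\widetilde{l}(e)}, \qquad \widetilde{b}(e) = \frac{b(e)\,\widetilde{l}(e)}{l(e)}.
\]
The definition of the intrinsic representation demands $\widetilde{a}(e) = \widetilde{b}(e)$, so the first step is simply to impose this equation and solve for $\widetilde{l}(e)$.

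Setting $\frac{a(e)\,l(e)}{\widetilde{l}(e)} = \frac{b(e)\,\widetilde{l}(e)}{l(e)}$ and rearranging gives $\widetilde{l}(e)^{2} = \frac{a(e)}{b(e)}\,l(e)^{2}$, whence
\[
l_i(e) = \widetilde{l}(e) = l(e)\sqrt{\tfrac{a(e)}{b(e)}}.
\]
This is the claimed formula for the intrinsic length. Substituting back into either of the two transformation rules then yields
\[
\omega(e) = \widetilde{a}(e) = \frac{a(e)\,l(e)}{l(e)\sqrt{a(e)/b(e)}} = \sqrt{a(e)\,b(e)},
\]
which is the claimed formula for the intrinsic measure weight. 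One checks consistency by computing $\widetilde{b}(e)$ via the second rule and verifying that the same expression $\sqrt{a(e)b(e)}$ appears, which is immediate.

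There is no genuine obstacle here: once the transformation rules are taken for granted (they were established earlier by the one-variable substitution and chain rule on each edge separately), the proposition is a one-line algebraic manipulation. The only subtlety worth flagging is that $\widetilde{l}(e)$ is well-defined and strictly positive, which requires $a(e), b(e) \in (0,\infty)$; this is implicit in the standing hypothesis that $a$ and $b$ are edge weights in the sense used to define the graph diffusion Dirichlet form. Hence the proof should amount to the display above, preceded by a sentence recalling the transformation formulas and the defining condition $\widetilde{a}=\widetilde{b}$ of the intrinsic representation.
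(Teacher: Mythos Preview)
Your proposal is correct and matches the paper's approach exactly: the paper does not spell out a proof but simply remarks that the proposition follows from the transformation rules for length transformations, which is precisely the computation you carry out.
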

As canonical and intrinsic scales are both of interest we include the following formula.
\begin{prop}
For the intrinsic and canonical scales the following relations hold true:
\begin{eqnarray*}
\omega(e)^2 &=& \nu(e)\\
l_i(e)^2&=& l_c(e)^2 \nu(e).
\end{eqnarray*}
\end{prop}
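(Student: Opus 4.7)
The plan is to verify both identities by direct substitution using the explicit formulas for the canonical and intrinsic weights in terms of the original data $(l(e), a(e), b(e))$, which have already been established. Concretely, from the preceding proposition together with the corollary in the previous section we have at our disposal the four expressions
\[
\nu(e) = a(e)\,b(e), \qquad l_c(e) = \frac{l(e)}{b(e)}, \qquad \omega(e) = \sqrt{a(e)\,b(e)}, \qquad l_i(e) = l(e)\sqrt{\tfrac{a(e)}{b(e)}}.
\]

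For the first identity, I would simply square the formula for $\omega(e)$ and read off $\omega(e)^2 = a(e)\,b(e) = \nu(e)$. For the second identity, I would square $l_i(e)$ to obtain $l_i(e)^2 = l(e)^2 \tfrac{a(e)}{b(e)}$, and independently compute $l_c(e)^2\,\nu(e) = \tfrac{l(e)^2}{b(e)^2}\cdot a(e)\,b(e) = l(e)^2 \tfrac{a(e)}{b(e)}$. Comparison yields the claim.

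There is no real obstacle here since the statement is a tautological rearrangement of the two preceding normal-form formulas; the only thing one might wish to emphasize is that the length transformation $\Phi$ between the canonical and intrinsic representations is well-defined (and thus the ratios $a/b$, $l/b$ make sense), which is automatic from the standing assumption that $a, b > 0$ on each edge. In the write-up I would keep the proof to two short displayed lines, one for each identity.
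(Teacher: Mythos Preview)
Your proposal is correct and is precisely the approach the paper intends: the proposition is stated without a separate proof, the text having just remarked that ``the next two propositions are proven by those rules,'' i.e.\ by direct substitution of the explicit formulas for $\nu(e)$, $l_c(e)$, $\omega(e)$, and $l_i(e)$. Your two-line verification is exactly what is expected.
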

We see from the proposition that the assumption $l_c(e)^2 \nu(e)$ being uniformly bounded from above, means in the intrinsic representation that the intrinsic metric weights $l_i(e)$ are uniformly bounded.\medskip\\
The reason why we call it intrinsic comes from the theory of strongly local Dirichlet forms, see \cite{BM-95} and \cite{St-94}. For those forms one may associate a distance function to the underlying space by means of
\[\rho(x,y) := \sup\{|u(x) - u(y)|\mid u\in \CD_{\mathrm{loc}} \cap \CC(\FX_\Gamma) \mbox{ and } d\mu_{ \langle u,u\rangle} \leq dm\}.\]
A typical assumption on $\rho$ as a metric in general Dirichlet space is, that it induces the original topology, which is essential in the construction of sufficiently many cut-off functions.\medskip\\
The following lemma justifies the name intrinsic representation.
\begin{prop}
Let $\FX_\Gamma$ be a connected metric graph. Then the intrinsic metric is given by
\[\rho(x,y) = \sup \{ |u(x) - u(y)| \mid u\in \CD_{\loc}(\CE) \mbox{ and } |u'| \leq 1 \}.\]
Moreover, for all $x,y\in \FX_\Gamma$ we have
\[\rho(x,y) = d_i(x,y).\]
\end{prop}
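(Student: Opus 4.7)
The plan is to work entirely in the intrinsic representation, where the two formulae become transparent.

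First, I would identify the condition $d\mu_{\langle u,u\rangle}\leq dm$ concretely. In the intrinsic representation the reference measure has density $\omega$ on each edge, and since the form is strongly local with $\CE(u)=\tfrac12\int d\mu_{\langle u\rangle}$ and the graph is glued from intervals, the energy measure on the edge $\FX_e$ is simply $|u_e'|^2\,\omega(e)\,dt$. Hence $d\mu_{\langle u,u\rangle}\leq dm$ is equivalent to $|u'|\leq 1$ Lebesgue-a.e. on each edge. Since $\CD_{\loc}(\CE)\subset\CC(\FX_\Gamma)$ already by the Sobolev embedding, the continuity assumption in the definition of $\rho$ is automatic. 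This gives the first displayed identity at once.

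Next I would prove $\rho(x,y)\leq d_i(x,y)$ by a path argument. For any $u\in\CD_{\loc}(\CE)$ with $|u'|\leq 1$ and any path $\Fp_x^y$, absolute continuity together with the length formula from Section~2 gives
\[|u(x)-u(y)|\leq \int_{\Fp_x^y}|u'(t)|\,dt \leq L(\Fp_x^y),\]
where the length is computed in the intrinsic metrization. Taking the infimum over $\Fp_x^y$ and then the supremum over admissible $u$ yields $\rho(x,y)\leq d_i(x,y)$.

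For the reverse inequality I would use the candidate $u(z):=d_i(x,z)$. On each edge $\FX_e$ this is a $1$-Lipschitz function whose weak derivative exists and satisfies $|u'|=1$ almost everywhere (distance functions on an interval are piecewise affine with unit slope). Since every compact subset of $\FX_\Gamma$ meets only finitely many edges, $u$ restricted to any such set lies in $\bigoplus_{e}W^{1,2}$, so $u\in\CD_{\loc}(\CE)$; continuity is part of the definition of $d_i$. Thus $u$ is admissible in the supremum defining $\rho$, and since $|u(x)-u(y)|=d_i(x,y)$ we obtain $\rho(x,y)\geq d_i(x,y)$.

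The only delicate point is verifying that $d_i(x,\cdot)$ indeed belongs to $\CD_{\loc}(\CE)$ and that its weak derivative has modulus one almost everywhere; this is the main technical obstacle, but it reduces to checking the two Lipschitz-on-each-edge and unit-slope-a.e.\ facts, which follow from the definition of the path metric together with the structure of minimising paths in metric graphs. Combining the two bounds gives $\rho(x,y)=d_i(x,y)$, as claimed.
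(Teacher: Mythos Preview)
Your proposal is correct and follows essentially the same route as the paper: both directions are handled by (i) integrating an admissible $u$ along a path to get $\rho\leq d_i$, and (ii) taking the distance function $d_i(x,\cdot)$ as the witnessing admissible function to get $\rho\geq d_i$. The paper is slightly more explicit at the point you flag as delicate, writing $d_y$ on each edge as the minimum of two affine functions to read off $|d_y'|=1$ a.e.\ directly, but this is exactly the verification you outline.
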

\begin{proof}
It is clear that the intrinsic metric is given as the supremum defined above. We will only prove the second claim. Fix $y\in \FX_\Gamma$, we set $d_y (x) := d_i(x,y)$ and $\rho_y (x) := \rho(x,y)$. Note that we have $d_y \in \CC(\FX_\Gamma)$. For $x\in \FX_\Gamma$ choose $\FX_e\subset \FE$ such that $x\in \FX_e$, then we obtain
\[d_y(x) = \min\{ d_i(y,\partial^+(e)) + |\varphi_e(x) - \varphi_e(\partial^+(e))|, d_i(y,\partial^-(e)) + |\varphi_e(x) - \varphi_e(\partial^-(e))|\}.\]
Thus we have $|d_y'| = 1$ a.e. and therefore $d_y \in \CD_{\loc}(\CE) $. Now by definition of $\rho$ we have $d_y(x) \leq \rho_y (x)$ for all $x,y\in \FX_\Gamma$.\\
For the converse inequality, let $x,y \in \overline{\FX}_e$ and $u\in \CD_{\loc}(\CE) \cap \CC(\FX_\Gamma)$ an arbitrary function with $|u'| \leq 1$. Then we have
\begin{eqnarray*}
|u(x) - u(y)| & =& |u_e (\varphi_e(x)) - u_e(\varphi_e (y))|\\
&\leq &\int\limits_{\varphi_e(x)}^{\varphi_e(y)} |u_e'(t)|\: dt\\
&\leq& |\varphi_e(x)- \varphi_e(y)|,
\end{eqnarray*}
i.e. $\rho(x,y) \leq |\varphi_e(x) - \varphi_e(y)|$ on $\FX_e$. For arbitrary $x,y\in \FX_\Gamma$, choose a minimizing path $\Fp=(p_0,\dots,p_n)$ from $x$ to $y$ with $L(\Fp) = d_i(x,y)$. By triangle inequality we obtain
\begin{eqnarray*}
\rho(x,y) \leq \sum_{k=1}^{N-1} \rho(p_k,p_{k+1}) \leq \sum_{k=1}^{N-1} |\varphi_{e_k}(p_k) - \varphi_{e_k} (p_{k+1})| =  d_i(x,y),
\end{eqnarray*}
where $e_k$ denotes the unique edge with $p_k, p_{k+1} \subset \FX_{e_k}$.
\end{proof}
We can now obtain a result on regularity.
\begin{theorem}\index{graph diffusion Dirichlet form!regularity}
If $(\FX_\Gamma, d_i)$ is connected and complete, we have $\CD(\CE) = \CD_o(\CE)$. In particular, the form $\CE$ is regular.
\end{theorem}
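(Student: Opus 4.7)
The plan is to work in the intrinsic representation (allowed since length transformations preserve regularity and map $\CC_c$ bijectively to $\CC_c$) and to construct a sequence of compactly supported cutoffs $\eta_n$ that multiplicatively approximate any $u\in\CD(\CE)$ in energy norm. Fix a base point $x_0\in\FX_\Gamma$ and set $d(x):=d_i(x_0,x)$. By the preceding proposition, $d\in\CD_{\loc}(\CE)\cap\CC(\FX_\Gamma)$ with $|d'|\le 1$ almost everywhere. I would then define
\[
\eta_n(x):=\bigl(1-(d(x)-n)^+\bigr)^+=\min\!\bigl(1,\max(0,n+1-d(x))\bigr),
\]
which is continuous, equals $1$ on $B_n(x_0)$, vanishes off $B_{n+1}(x_0)$, and (by the one-dimensional chain rule applied on each edge) satisfies $|\eta_n'|\le 1$ a.e. Completeness of $(\FX_\Gamma,d_i)$ combined with the Hopf--Rinow type theorem from Section~1.2 gives that $B_{n+1}(x_0)$ is compact, so $\supp\eta_n$ is compact.

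Given $u\in\CD(\CE)$, set $u_n:=\eta_n u$. First I would verify $u_n\in\CD_c(\CE)$: continuity of $u$ (from the Sobolev embedding $\CD(\CE)\hookrightarrow\CC(\FX_\Gamma)$ in Theorem~2.6) together with continuity of $\eta_n$ and compactness of $\supp\eta_n$ yield $u_n\in\CC_c(\FX_\Gamma)$; the product rule on each edge gives $u_n'=\eta_n u'+\eta_n' u\in L^2(\FX_\Gamma,\omega)$, since $|\eta_n|\le 1$, $|\eta_n'|\le 1$, and both $u$ and $u'$ are in $L^2(\FX_\Gamma,\omega)$.

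The core of the proof is the convergence $u_n\to u$ in the energy norm. For the $L^2$-part, $|u-u_n|^2=(1-\eta_n)^2 u^2\to 0$ pointwise and is dominated by $u^2\in L^1(\omega)$, so dominated convergence gives $\|u-u_n\|_2\to 0$. For the energy part, I would estimate
\[
\CE(u-u_n)\le 2\int_{\FX_\Gamma}(1-\eta_n)^2|u'|^2\,d\omega+2\int_{\FX_\Gamma}|\eta_n'|^2 u^2\,d\omega.
\]
The first integral tends to $0$ by dominated convergence. The second is controlled by $\int_{B_{n+1}(x_0)\setminus B_n(x_0)}u^2\,d\omega$ because $\eta_n'$ is supported in that annulus and $|\eta_n'|\le 1$; this tail integral vanishes as $n\to\infty$ because $u\in L^2(\FX_\Gamma,\omega)$. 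Together with the trivially easy uniform density of $\CD_c(\CE)$ in $\CC_c(\FX_\Gamma)$ noted earlier in the section, this yields $\CD(\CE)=\CD_o(\CE)$ and regularity.

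The main obstacle is producing cutoff functions whose derivative is uniformly bounded and whose supports exhaust $\FX_\Gamma$ by relatively compact sets. Both features hinge on the intrinsic metric: the bound $|\eta_n'|\le 1$ is exactly the Rademacher-type property of $d_i$ established in the previous proposition (without which the annulus term above could blow up), while the compactness of closed intrinsic balls is exactly what completeness buys us via Hopf--Rinow. Once these two ingredients are in place, the approximation is a routine truncation argument.
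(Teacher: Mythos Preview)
Your proof is correct and follows essentially the same strategy as the paper: build compactly supported cutoffs from the intrinsic distance function (you use a piecewise-linear truncation of $d_i(x_0,\cdot)$, the paper composes a smooth $\zeta$ with the distance to an intrinsic ball), invoke Hopf--Rinow for compactness of the support, and then run the standard Leibniz-rule estimate to show $\eta_n u\to u$ in energy norm. The only differences are cosmetic---your Lipschitz cutoff versus the paper's $C^1$ cutoff, and your dominated-convergence phrasing versus the paper's tail-integral phrasing---and neither affects the argument.
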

\begin{proof}
By  regularity of $\CE$ on $\CD_o(\CE)$ and completeness of the graph with respect to the intrinsic metric we define cut-off functions as follows: fix $o\in \FX_\Gamma$ and set $\rho_n(x) := \inf \{\rho(x,y)\mid y\in B_n(o)\}$. Let $\zeta:[0,\infty)\to [0,1]$ be continuously differentiable and $\zeta(0)=1$, $\zeta \equiv 0$ on $[1,\infty)$ and $|\zeta'|\leq 2$. Then we have for $\eta_n := \zeta\circ \rho_n$, see \cite{BLS-09},
\[d\mu_{\langle \eta_n\rangle} \leq 2 d\omega.\]
Moreover, $\eta_n \in \CD(\CE)$ and $\eta_n u \in \CD(\CE)$ for all $u\in \CD(\CE)$ and $n\in \IN$. Furthermore $\eta_n u$ has compact support and is continuous by the local Sobolev embedding and we have
\begin{eqnarray*}
\CE(u-\eta_n u) &=& \int\limits_{B_n(o)^c} |(u - \eta_n u)'|^2 d\omega\\
&\leq & \int\limits_{B_n(o)^c} |u'|^2 d\omega + \int\limits_{B_n(o)^c}|\eta_n' u|^2 d\omega + \int\limits_{B_n(o)^c} |\eta_n u'|^2d\omega\\
&\leq& \int\limits_{B_n(o)^c} |u'|^2 d\omega + 4\int\limits_{B_n(o)^c} |u|^2 d\omega
\end{eqnarray*}
Since $u,u' \in L^2(\FX_\Gamma,\omega)$ the right hand side tends to zero. Similarly we obtain $\eta_n u \to u$ in $L^2(\FX_\Gamma,\omega)$.
\end{proof}
The theorem above tells us that non-regularity is connected with non-completeness with respect to the intrinsic metric. However the other direction is not true as the following example shows.
\begin{example}
Let $\FX_\Gamma=(0,1)$ and consider a metrization and a measure such that both boundary points have infinite measure, i.e. for all $\epsilon>0$ the sets $(0,\epsilon)$ and $(1-\epsilon, 1)$ have infinite measure. Using the latter one as intrinsic measure, the results below will show that this form is regular.
\end{example}
So we need to analyze intrinsic boundary points. Unfortunately it turns out that the intrinsic boundary is somehow not appropriate. One reason for this is that in general one can not expect to extend functions from the domain continuously to it. This follows as in general the intrinsic boundary and the canonical boundary are different, which can be seen as follows. Let $\FX_\Gamma$ be a metric graph and let $x$ be an intrinsic boundary point such that there are two rays having $x$ as intrinsic boundary point. Now, if we equip the graph with a canonical metric such that both rays have different boundary points, we can easily construct a continuous function which could be uniquely extended to a continuous function on the canonical completion, but which has no continuous extension to the intrinsic completion. Our plan is to analyze the intrinsic boundary via the canonical one, since for instance this extension property holds. We start with a Sobolev embedding for $\CD_o(\CE)$.
\begin{theorem}\index{theorem!Sobolev embedding!form version}
Let $\FX_\Gamma$ be a metric graph and $\CE$ a graph diffusion Dirichlet form with canonical measure $\nu(e)$. Assume that all rays have either finite canonical length or the canonical measure weights are uniformly positive on each ray. Then $\CD_o(\CE)$ is continuously embedded into $\CC_o(\FX_\Gamma)$.
\end{theorem}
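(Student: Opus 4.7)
The plan is to bootstrap directly from the preceding theorem (the \emph{extension version} of Sobolev embedding), which under exactly the same ray hypothesis gives the continuous inclusion $\CD(\CE) \hookrightarrow \CC_o(\widehat{\FX_\Gamma})$ together with a sup-norm bound $\|u\|_\infty \leq C\|u\|_\CE$. Since $\CD_o(\CE) \subset \CD(\CE)$, that theorem already supplies the norm estimate; the only thing left is to upgrade vanishing \emph{at infinity of the canonical completion} to vanishing \emph{at infinity of $\FX_\Gamma$ itself}, i.e.\ to add vanishing on the Cauchy boundary $\partial\FX_\Gamma = \widehat{\FX_\Gamma} \setminus \FX_\Gamma$.

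First I would handle the elements of the dense subspace $\CD_c(\CE) = \CD(\CE) \cap \CC_c(\FX_\Gamma)$. If $u \in \CD_c(\CE)$, then $\supp u$ is compact in $\FX_\Gamma$, hence contained in a set which is bounded away from $\partial\FX_\Gamma$ in the canonical metric. Consequently the continuous extension of $u$ to $\widehat{\FX_\Gamma}$ provided by the extension theorem is identically zero on $\partial\FX_\Gamma$, so $u$, viewed as an element of $\CC_o(\widehat{\FX_\Gamma})$, lies in the subspace
\[
\CC_o(\FX_\Gamma) \;=\; \{\,v \in \CC_o(\widehat{\FX_\Gamma}) \mid v|_{\partial\FX_\Gamma} \equiv 0\,\}.
\]
This subspace is closed in $\CC_o(\widehat{\FX_\Gamma})$ with the sup-norm topology, being the kernel of the bounded restriction operator $v \mapsto v|_{\partial\FX_\Gamma}$.

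Now pick an arbitrary $u \in \CD_o(\CE)$ and approximate it by $(u_n) \subset \CD_c(\CE)$ with $\|u_n - u\|_\CE \to 0$. The previous theorem yields
\[
\|u_n - u\|_\infty \;\leq\; C \|u_n - u\|_\CE \;\longrightarrow\; 0,
\]
so $u_n \to u$ uniformly on $\widehat{\FX_\Gamma}$. Each $u_n$ lies in the closed subspace $\CC_o(\FX_\Gamma)$ by the first step, so the uniform limit $u$ is again in $\CC_o(\FX_\Gamma)$. Passing to the limit in $\|u_n\|_\infty \leq C\|u_n\|_\CE$ yields the desired continuity of the embedding $\CD_o(\CE) \hookrightarrow \CC_o(\FX_\Gamma)$.

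The only potential obstacle is the identification of $\CC_o(\FX_\Gamma)$ as a closed subspace of $\CC_o(\widehat{\FX_\Gamma})$ via the canonical extension — but this is a routine metric-space fact, since a compactly supported continuous function on a dense subspace extends by zero to the boundary, and uniform convergence preserves this vanishing. Everything else is a direct application of the extension Sobolev embedding already established.
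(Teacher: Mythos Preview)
Your proof is correct and follows essentially the same route as the paper's. Both arguments invoke the extension-version embedding $\CD(\CE)\hookrightarrow \CC_o(\widehat{\FX_\Gamma})$, observe that compactly supported functions extend by zero to the canonical boundary, and then pass to the closure via the uniform estimate; the paper phrases the last step as ``uniform convergence along the ray'' using the H\"older bound $|u(x)-u(y)|^2\le d_c(x,y)\CE(u)$ directly, whereas you package it as a closed-subspace argument using the global sup-norm bound asserted in the extension theorem---but these are the same mechanism.
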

\begin{proof}
We already know from the last section that the space is embedded into $\CC_o(\widehat{\FX_\Gamma})$, so it is sufficient to show that all functions $u\in \CD_o(\CE)$ vanish at the canonical boundary. But this follows easily as for all functions with compact support in $\FX_\Gamma$ we have for $y\in \partial \FX_\Gamma$ and $x\in \FX_\Gamma$
\[|u^2(x)|=|u^2(x)-u^2(y)| \leq d_c(x,y) \CE(u)\]
and thus uniform convergence along the ray forces each $u\in \CD_o(\CE)$ to vanish at $y\in \widehat{\FX_\Gamma}\setminus \FX_\Gamma$.
\end{proof}
Due to the theorem the defect of regularity can be measured in terms of the canonical boundary. However in general this one is too big. As functions in $\CD(\CE)$ can be uniquely extended to the canonical boundary we do not distinguish whether they are defined on $\FX_\Gamma$ or on its completion. The same holds for functions $u\in \CD(\CE)$ and negligibility of boundary points, what we will now introduce.
\begin{definition}\index{boundary!negligible}
Let $\FX_\Gamma$ be a metric graph and $\CE$ a graph diffusion Dirichlet form. A ray $\Fp$ is called ($\CE$-)negligible if for all $u\in \CD(\CE)$ we have
\[\lim_\Fp u(x)=0.\]
We denote the set of all negligible rays by $\partial_o\FX_\Gamma$
\end{definition}
The following proposition gives a link between negligibility and unique continuation.
\begin{prop}
Let $\FX_\Gamma$ be a metric graph and $\CE$ a graph diffusion Dirichlet form. A ray with finite length is ($\CE$-)negligible if and only if any $u\in \CD(\CE)$ can be uniquely extended to the associated boundary point of $\Fp$ by zero.
\end{prop}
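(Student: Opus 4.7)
The plan is to deduce both implications from the Sobolev embedding theorem in its extension version together with the universal property of the completion. The finiteness of the length of $\Fp$ is exactly the hypothesis needed to guarantee that boundary values make sense.

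First, I would verify that a finite-length ray $\Fp = (p_0, p_1, \ldots)$ determines a well-defined point $\xi \in \widehat{\FX_\Gamma} \setminus \FX_\Gamma$ on the Cauchy boundary. Since $\sum_k l(p_k, p_{k+1}) < \infty$, the sequence $(p_k)$ is Cauchy in $(\FX_\Gamma, d)$; the Hopf--Rinow type proposition forces incompleteness, and in $\widehat{\FX_\Gamma}$ the sequence has a unique limit $\xi$. This $\xi$ cannot lie in $\FX_\Gamma$ itself, since every point of $\FX_\Gamma$ has a neighborhood containing only finitely many vertices, contradicting $p_k \to \xi$.

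Second, I would apply the Sobolev embedding theorem (extension version) to the ray $\Fp$: for every $u \in \CD(\CE)$ the estimate
\[
\int_{\Fp} |u'(t)|\, dt \;\leq\; L(\Fp)^{1/2}\, \CE(u)^{1/2}
\]
gives that $\lim_\Fp u(x)$ exists and is finite. Since $u$ is continuous on $\FX_\Gamma$ and $p_k \to \xi$ in $\widehat{\FX_\Gamma}$, the function $\hat u$ defined on $\FX_\Gamma \cup \{\xi\}$ by $\hat u(\xi) := \lim_\Fp u$ is the unique continuous extension of $u$ to $\xi$ — uniqueness being immediate from continuity at a limit point in a Hausdorff space.

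Third, the two equivalent statements now read off directly. The ray $\Fp$ is $\CE$-negligible iff $\lim_\Fp u = 0$ for every $u \in \CD(\CE)$, which by the previous paragraph is the same as $\hat u(\xi) = 0$ for every $u \in \CD(\CE)$, i.e.\ every $u$ can be uniquely extended to $\xi$ by zero. There is no real obstacle here: the work is entirely contained in the extension Sobolev embedding and in identifying the ray's Cauchy limit with a boundary point — the one subtlety worth stating explicitly is that the value of the extension at $\xi$ is \emph{forced} to equal $\lim_\Fp u$ by continuity, so "extension by zero" is genuinely equivalent to vanishing of that limit, and not an additional choice.
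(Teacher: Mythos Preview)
Your argument is correct. In the paper this proposition is stated without proof, as an immediate consequence of the definition of negligibility together with the extension Sobolev embedding (Theorem~2.16); your write-up simply spells out exactly these implicit steps, including the identification of the Cauchy limit of $(p_k)$ with a point of $\partial\FX_\Gamma$ and the observation that continuity at $\xi$ forces the extension value to be $\lim_\Fp u$.
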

The last proposition clarifies the notion of negligibility of boundary points coming from a metrization. Thus we can speak of negligible boundary points, which we associate with $\partial_o\FX_\Gamma$.\medskip\\
The reason for the definition of negligibility gives the following lemma.
\begin{lemma}
Let $\FX_\Gamma$ be a metric graph and $\CE$ a graph diffusion Dirichlet form. Then for all $u\in \CD(\CE)$ there exists a sequence $(u_n)\in \CD(\CE)$ with the property that for all negligible rays $\Fp$ the function $u_n|_\Fp$ has compact support and which converges to $u$ in $\CE$-norm.
\end{lemma}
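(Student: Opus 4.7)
The plan is to use a target-space truncation rather than a graph-level cut-off. For each $n \in \IN$ define
\[u_n := \sgn(u) \cdot (|u| - 1/n)_+,\]
i.e.\ $u_n = T_n \circ u$ where $T_n : \IR \to \IR$ is the normal contraction $T_n(s) = \sgn(s)(|s| - 1/n)_+$. Since $\CE$ is Markovian, $u_n \in \CD(\CE)$ and $\CE(u_n) \leq \CE(u)$. The compact-support requirement along any negligible ray is then automatic: by the definition of negligibility, $\lim_\Fp u = 0$, so if $\varphi_\Fp : \Fp \to [0,\infty)$ is the parametrization there exists $t_n(\Fp) > 0$ with $|u \circ \varphi_\Fp^{-1}(t)| \leq 1/n$ for all $t \geq t_n(\Fp)$; on that tail $u_n$ vanishes identically, so $\supp(u_n|_\Fp)$ is contained in the compact set $\varphi_\Fp^{-1}([0,t_n(\Fp)])$. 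No coordination between different rays is needed because the truncation is purely pointwise in the target space; the cut-off levels $t_n(\Fp)$ may vary from one ray to the next but this is harmless.

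The remaining task is to show $\|u - u_n\|_\CE \to 0$. The $L^2(\FX_\Gamma,\lambda_a)$ piece is immediate from dominated convergence, since $|u - u_n| = \min(|u|,1/n) \leq |u|$ and tends to $0$ pointwise. For the energy piece, the one-dimensional chain rule for absolutely continuous functions applied edge by edge gives
\[u_n' = u' \cdot \mathbf{1}_{\{|u| > 1/n\}} \quad \text{a.e.,}\]
so $\CE(u - u_n) = \int_{\{|u| \leq 1/n\}} |u'|^2 \, d\lambda_b$. As $n \to \infty$ the sets $\{|u| \leq 1/n\}$ decrease to $\{u = 0\}$, and a second application of dominated convergence completes the proof once we know $u' = 0$ almost everywhere on $\{u = 0\}$.

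The one subtle ingredient is precisely this last identity: on the level set $\{u = 0\}$ the weak derivative must vanish. This is a classical Luzin-type consequence of absolute continuity, applied on each edge (for an absolutely continuous $f : [a,b] \to \IR$ and any $c \in \IR$, $f' = 0$ almost everywhere on $f^{-1}(\{c\})$). Once this is in hand, the entire argument collapses to two applications of dominated convergence; no graph-level cut-off function, no intrinsic completeness, and no cooperation between different rays is required, which explains why the lemma is stated without any extra hypothesis on $\FX_\Gamma$.
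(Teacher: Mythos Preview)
Your proof is correct and is essentially the same as the paper's: the paper defines $\phi_n(u) = u - ((-\epsilon_n)\vee u \wedge \epsilon_n)$, which is exactly your $T_n\circ u$ with $\epsilon_n = 1/n$, and then argues compact support along negligible rays in the same way. The only difference is that where the paper cites Theorem~1.4.2 of Fukushima--Oshima--Takeda for the $\CE$-norm convergence $\phi_n(u)\to u$, you spell out the argument directly via the edgewise chain rule, dominated convergence, and the fact that $u'=0$ a.e.\ on $\{u=0\}$; this is precisely what that cited theorem does in the background.
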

\begin{proof}
Let $u\in \CD(\CE)$, then we know from the Sobolev embedding that $u$ is continuous. Let $\Fp$ be a  negligible ray with an associated homeomorphism $\varphi:\Fp \to [0,\infty)$. Then by continuity of $u$ there exists for all $\epsilon_n>0$ an $R>0$ such that $|u_n(x)| < \epsilon_n$ for all $x\in \varphi^{-1}(R,\infty)$. Define the function $\phi_n(u) = u - ((-\epsilon_n)\vee u \wedge \epsilon_n$. Then for all negligible rays $\Fp$ the function $\phi_n(u) |_\Fp$ has compact support. Furthermore applying Theorem 1.4.2 in \cite{FOT-11} we get that $\phi_n(u) \to u$ in $\CE$-norm.
\end{proof}
The lemma directly improves the regularity result.
\begin{theorem}
Let $\FX_\Gamma$ be a metric graph and $\CE$ a graph diffusion Dirichlet form. Assume that the intrinsic boundary is negligible, then $\CE$ is regular.
\end{theorem}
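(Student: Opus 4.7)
The strategy is to reduce the problem to the intrinsically complete case already handled in Theorem~2.22 via a two-stage cut-off: first truncate $u$ so that it vanishes near the intrinsic boundary (using the preceding lemma), then apply the intrinsic-distance cut-offs from the completeness argument.

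Fix $u\in\CD(\CE)$. The preceding lemma produces a sequence $v_m := u - ((-\eps_m)\vee u \wedge \eps_m)\in\CD(\CE)$ converging to $u$ in the energy norm, with each $v_m$ having compact support along every negligible ray. By hypothesis every intrinsic boundary point is approached only by negligible rays, so $u\to 0$ at the intrinsic boundary, and $u$ extends continuously by zero to the intrinsic completion $\widehat{\FX_\Gamma}$. Continuity then yields, for each intrinsic boundary point $y$, a neighborhood of $y$ in $\widehat{\FX_\Gamma}$ on which $|u|<\eps_m$, forcing $v_m = 0$ there. Consequently $\mathrm{supp}(v_m)$, viewed as a closed subset of $\widehat{\FX_\Gamma}$, is disjoint from the intrinsic boundary and in particular lies in $\FX_\Gamma$.

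Now fix $o\in\FX_\Gamma$ and, mimicking the proof of Theorem~2.22, define $\rho_n(x):=\inf\{d_i(x,y)\mid y\in B_n(o)\}$ and $\eta_n := \zeta\circ \rho_n$ with the same Lipschitz $\zeta$ used there, so that $d\mu_{\langle\eta_n\rangle}\leq 2\, d\omega$ and $\eta_n v_m\in\CD(\CE)$. Then
\[
\mathrm{supp}(\eta_n v_m) \subset \mathrm{supp}(v_m)\cap \overline{B_{n+1}(o)}.
\]
The crucial geometric claim is that this intersection is compact in $\FX_\Gamma$. Hopf--Rinow applied to the complete length space $\widehat{\FX_\Gamma}$ gives that the closed intrinsic ball $\overline{B_{n+1}(o)}$ is compact in $\widehat{\FX_\Gamma}$; intersecting with the closed set $\mathrm{supp}(v_m)$ yields a compact subset of $\widehat{\FX_\Gamma}$ which, by the previous paragraph, is contained in $\FX_\Gamma$, hence compact in the original topology. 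Thus $\eta_n v_m\in\CD_c(\CE)$.

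Finally, the same computation as in Theorem~2.22, using $d\mu_{\langle\eta_n\rangle}\leq 2\,d\omega$ and dominated convergence on $B_n(o)^c$, gives $\eta_n v_m\to v_m$ in the energy norm as $n\to\infty$. A diagonal argument combining this with $v_m\to u$ produces a sequence in $\CD_c(\CE)$ converging to $u$ in energy norm, so $u\in\CD_o(\CE)$; this proves $\CD(\CE)=\CD_o(\CE)$ and hence regularity. The main obstacle is precisely the compactness claim in the third paragraph: intrinsic balls in $\FX_\Gamma$ are generally not compact in the original topology, and the role of the first cut-off is exactly to push $\mathrm{supp}(v_m)$ off the intrinsic boundary so that Hopf--Rinow in the completion can be brought to bear.
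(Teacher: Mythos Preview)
Your proof is essentially the paper's argument spelled out in greater detail: the paper uses the same preceding lemma to produce the truncations, then simply asserts that $\supp v_m$ is intrinsically complete and re-runs the Theorem~2.22 cut-off argument on that subspace, finishing with the same diagonal sequence. One caution: invoking Hopf--Rinow on the full intrinsic completion $\widehat{\FX_\Gamma}$ is delicate, since that space need not be locally compact in general; it is safer (and closer to the paper's phrasing) to observe that $\supp v_m$ itself is closed in $\widehat{\FX_\Gamma}$, hence intrinsically complete and locally compact as a closed subset of $\FX_\Gamma$, and then apply the ray-based compactness criterion of Proposition~1.17 directly to $\supp v_m$ rather than to the ambient completion.
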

\begin{proof}
Let $u\in \CD(\CE)$ and let $(u_n)$ be the sequence from the previous lemma. Then the subgraph $\supp u_n$ is a closed set and by assumption this subgraph is intrinsically complete. Thus we can approximate $u_n$ by a sequence $u_n^k \in \CD_c(\CE)$ with support in $\supp u_n$. Now we can define a diagonal sequence which converges to $u$ in $\CE$-norm.
\end{proof}
Due to the theorem we are left with a description of negligibility.
\begin{prop}\index{boundary!characterization}
Let $\FX_\Gamma$ be a metric graph and $\CE$ a diffusion Dirichlet form. Then the following holds true.
\begin{itemize}
\item[(i)] If each neighborhood of a boundary point has infinite measure, then this point is negligible.
\item[(ii)] If an intrinsic boundary point has a neighborhood of with finite measure, then this point is not negligible.
\item[(iii)] If a canonical boundary point is not an intrinsic boundary point, then each neighborhood of this point has infinite measure and is thus negligible.
\item[(iv)] If for an intrinsic boundary point there is a path such that the intrinsic measure weights are bounded from below along this path, then it is also a canonical boundary point.
\end{itemize}
\end{prop}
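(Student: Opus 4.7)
The four assertions are handled separately, tied together by the relations $\omega^2=\nu$ and $l_i^2 = l_c^2\nu$ from Proposition 2.19.

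For (iv), if a ray $\Fp = (p_0,p_1,\dots)$ realizes $y$ as an intrinsic boundary point (so $\sum_k l_i(e_k) < \infty$) and $\omega(e_k) \geq c > 0$ along $\Fp$, then $l_c(e_k) = l_i(e_k)/\omega(e_k) \leq l_i(e_k)/c$ gives $\sum_k l_c(e_k) < \infty$, so $\Fp$ is canonical Cauchy and $y$ is a canonical boundary point as well. For (iii), pick $\Fp = (p_0,p_1,\dots)$ witnessing $y$ as a canonical boundary point; since $y$ fails to be an intrinsic boundary point, $\sum_k l_i(e_k) = \sum_k l_c(e_k)\sqrt{\nu(e_k)} = \infty$, so Cauchy--Schwarz
\[
\infty \leq \Bigl(\sum_k l_c(e_k)\Bigr)^{1/2} \Bigl(\sum_k l_c(e_k)\nu(e_k)\Bigr)^{1/2}
\]
forces the canonical measure of $\Fp$, and equally every one of its tails, to be infinite. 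As every canonical neighborhood of $y$ contains such a tail, every neighborhood of $y$ has infinite measure, and (iii) reduces to (i). For (i), every $u \in \CD(\CE)$ is $\tfrac12$-H\"older in $d_c$ by Proposition 2.4 and hence extends continuously to the canonical completion $\widehat{\FX_\Gamma}$; if the extended value $u(y)$ were nonzero, continuity would give $|u|^2 \geq \tfrac14 u(y)^2$ on some open neighborhood $V$ of $y$, whence $\int_V u^2\, d\nu \geq \tfrac14 u(y)^2\, \nu(V) = \infty$, contradicting $u \in L^2$. Therefore $\lim_\Fp u = 0$ along every ray to $y$.

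For (ii), I would exhibit a function $u \in \CD(\CE)$ with nonzero limit at $y$. Let $U$ be an intrinsic neighborhood of $y$ with $\omega(U) < \infty$, choose $R > 0$ so small that $\{x \in \FX_\Gamma : d_i(x,y) < R\} \subset U$, and pick $\zeta\colon [0,\infty) \to [0,1]$ Lipschitz with $\zeta(0) = 1$, $\zeta \equiv 0$ on $[R,\infty)$ and $|\zeta'| \leq 2/R$. Setting $u := \zeta \circ d_i(\cdot, y)$, the intrinsic distance to the (possibly ideal) point $y$ is $1$-Lipschitz on $\FX_\Gamma$ with $|d_i(\cdot, y)'| \leq 1$ a.e.\ by the argument behind Proposition 2.19, so $|u'| \leq 2/R$ a.e.\ and $u$ is supported in a set of finite $\omega$-measure. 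Thus $u \in \CD(\CE)$, while $\lim_\Fp u = 1$ along every ray $\Fp$ to $y$, so $y$ is not negligible.

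The main technical obstacle lies in (ii): one must justify that the intrinsic distance function from a boundary point $y \in \widehat{\FX_\Gamma} \setminus \FX_\Gamma$ is well-defined, continuous on $\FX_\Gamma$, and has weak derivative of absolute value at most $1$ almost everywhere. This is handled by approximating $y$ with a sequence $y_n \to y$ along a Cauchy ray, for which each $d_i(\cdot, y_n)$ is $1$-Lipschitz by Proposition 2.19, and passing to the limit using that the $d_i(\cdot, y_n)$ converge locally uniformly by the triangle inequality.
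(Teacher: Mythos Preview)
Your proof is correct and follows essentially the same route as the paper. For (i), (iii), (iv) the arguments coincide; in (iii) you run Cauchy--Schwarz in canonical coordinates ($l_i = \sqrt{l_c}\cdot\sqrt{l_c\nu}$) while the paper works in intrinsic ones ($l_i = \sqrt{l_i\omega}\cdot\sqrt{l_i/\omega}$), but via $\omega^2=\nu$, $l_i=l_c\omega$ these are the same inequality. For (ii) the paper builds its witness as the Urysohn-type quotient
\[
u(z)=\frac{\dist(z,B_\epsilon(y)^\complement)}{\dist(z,B_\epsilon(y)^\complement)+\dist(z,B_{\epsilon/2}(y))}
\]
rather than your $\zeta\circ d_i(\cdot,y)$; this has the minor advantage that the Lipschitz bound on the numerator and denominator involves only distances to subsets of $\FX_\Gamma$, so the approximation step you flag at the end (passing from $d_i(\cdot,y_n)$ to $d_i(\cdot,y)$ for $y$ ideal) is not needed. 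Your limiting argument is nonetheless valid.
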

Note that we know nothing in the case that an intrinsic boundary point is not a canonical boundary point -- except if each neighborhood has infinite measure. It would follow that $\inf_{e\in E} \omega(e)=0$. We know from the previous section that this case causes problems.
\begin{proof} (i) follows immediately by continuity of the functions in the domain $\CD(\CE)$. A function being nonzero there cannot be square integrable. The second claim follows by construction:  if $B_\epsilon (x)$ is an intrinsic ball with finite measure, we set $u$ as
\[u(y)= \frac{\dist(y,B_{\epsilon}(x)^\complement)}{\dist(y,B_{\epsilon}(x)^\complement) + \dist(y,B_{\frac{\epsilon}{2}}(x))}.\]
This function is absolutely continuous with $|u'| \leq \frac{C}{\epsilon}$, it is equal to $1$ in $B_{\frac{\epsilon}{2}}(x)$ and equal to $0$ outside of $B_{\epsilon}(x)$. We therefore get
\begin{eqnarray*}
\|u\|_2^2 + \CE(u) &=& \int\limits_{B_{\epsilon}(x)} |u|^2 + |u'|^2 d\omega \\
&\leq& (1+ \frac{C}{\epsilon})\omega(B_{\epsilon}(x)) < \infty
\end{eqnarray*}
by assumption on $B_{\epsilon}(x)$.\\
To prove (iii), for such a boundary point we find a path $\Fp$ with
\[\sum_{e\in \Fp} l_i(e) = \infty, \quad \sum_{e\in \Fp} \frac{l_i(e)}{\omega(e)} < \infty.\]
By Cauchy-Schwarz we get
\[ \sum_{e\in \Fp} l_i(e) = \sum_{e\in \Fp} \sqrt{l_i(e)\omega(e)} \sqrt{\frac{l_i(e)}{\omega(e)}} \leq \Bigl(\sum_{e\in \Fp} l_i(e)\omega(e) \Bigr)^\frac{1}{2} \Bigl(\sum_{e\in \Fp} \frac{l_i(e)}{\omega(e)}\Bigr)^\frac{1}{2}\]
and it follows that
\[\sum_{e\in \Fp} l_i(e) \omega(e) = \omega(\Fp) =\infty.\]
As each neighborhood of this boundary point contains a part of this path, the claim follows.\\
Finally (iv) follows easily from the Hölder inequality.
\end{proof}
A different approach to regularity was undertaken in \cite{GM-12}. There, regularity was measured in terms of the polarity of Cauchy-boundary of the non-complete manifold. Note that the arguments take over to our situation.\medskip\\
In a later section we will study embedding and regularity properties from a different point of view, viz, by reduction to the problem of a certain discrete graph. However this requires different techniques, which we will have to develop.\medskip\\
Till now it was sufficient to consider a connected metric graph. From  the next section on we will consider a countable disjoint union of metric graphs. It is clear that this still defines a topological graph, but not a metric space anymore since we also have $d(x,y)=\infty$ for some $x,y\in \FX_\Gamma$. For convenience we will still call such a disconnected graph a metric graph. As one easily checks, the results of a local nature in the previous section remain true, while the global ones need the additional assumption that the diameter of the connected components are uniformly bounded from below. Moreover by a diagonal sequence argument one can also show that an unconnected form is regular if and only if all its components are.
\section{The discrete part}
In this section we want to introduce the discrete part. We start with a reminder on general Dirichlet forms.
\begin{reminder}\index{Dirichlet form!Beurling-Deny formulae}
Given a general regular Dirichlet form $\CS$ on $L^2(\FX_\Gamma,\omega)$ the celebrated Beurling-Deny formula says that for $u,v\in \CC_c(\FX_\Gamma)\cap \CD(\CS)$ the form $\CS$ can be expressed as
\[\CS(u) = \CE(u) + \int\limits_{\FX_\Gamma\times \FX_\Gamma \setminus d} (u(x)-u(y))^2J(dx,dy) + \int\limits_{\FX_\Gamma} u(x)^2 K(dx),\]
where
\begin{itemize}
\item $\CE$ is a symmetric form on  $\CD(\CS)\cap\CC_c(\FX_\Gamma)$ which has the strong local property, i.e. $\CE(u,v)=0$ for all $u,v\in \CD(\CS)\cap \CC_c(\FX_\Gamma)$ such that $v$ is constant on a neighborhood of $\supp u$,
\item $J$ is a symmetric positive Radon measure on the product space $\FX_\Gamma\times \FX_\Gamma$ off the diagonal $d$,
\item $K$ is a positive Radon measure on $\FX_\Gamma$.
\end{itemize}
Furthermore $\CE$, $J$ and $K$ are uniquely determined by $\CS$.
\end{reminder}
In the previous sections we have intensively studied the strongly local part. To go one step towards the Beurling-Deny formula we will add a jump and a killing part to the previously defined strongly local form. They will be coming from weighted sums of $\delta$-measures on $\FX_\Gamma\times \FX_\Gamma$ and $\FX_\Gamma$ respectively, corresponding to a discrete graph. Therefore we continue with a review of regular Dirichlet forms on discrete sets. The interested reader may consider \cite{KL-10,KL-11} and related articles. Later we will also study connections of our setting to certain discrete Dirichlet forms.
\begin{reminder}\index{Dirichlet form!discrete}
Let $V$ be a countable set,  $m$ a measure and $\CQ: \CD \times \CD \to \IR$ a Dirichlet form on $\ell^2(V,m)$ with dense domain $\CD$. It was shown in \cite{KL-11} that for each regular Dirichlet form $\CQ$ on $\ell^2(V,m)$ there exists a graph defined by a jump-weight $j$ such that the restriction of $\CQ$ to the functions which are zero except at a finite number of points, is given by
\[\CQ(u) = \frac{1}{2} \sum_{x,y\in V} j(x,y)(u(x)-u(y))^2 + \sum_{x\in V} k(x) u(x)^2,\]
where $k$ is a killing weight on $V$. The operator associated to $\CQ$ is called the discrete (weighted) Laplacian and is given by
\[\triangle_{j,k,m} u(x) := \frac{1}{m(x)}\sum_{y\in V} j(x,y) (u(x)-u(y)) + \frac{k(x)}{m(x)} u(x),\]
with domain being a subspace of $\{u:V\to \IR \mid \sum\limits_{y\in V} j(x,y) |u(y)| <\infty \mbox{ for all } x\in V\}$.
Conversely, given a jump weight and a killing weight on $V$, we see by Fatou's lemma that the form $\CQ$ is closable on the space of all functions which are zero except at a finite number of points, and it is closed on its maximal domain, which is given by
\[\{u\in \ell^2(V,m) \mid \CQ(u)<\infty\}.\]
\end{reminder}
Our aim is to add a discrete Dirichlet form to the form $\CE$. Let $\FX_\Gamma$ be a metric graph with form $\CE$ and let $j$ be a jump weight over $\FV$. As mentioned at the end of the previous section we do not assume the graph to be connected. We consider an extension of the discrete form $\CQ$ from above in a certain sense. For $u\in \CC(\FX_\Gamma)$ the formal form $\tilde{\CQ}$ is defined as the mapping $\tilde{\CQ}: \CC(\FX_\Gamma) \to [0,\infty]$ given (as above) by
\[\tilde{\CQ}(u)= \frac{1}{2}\sum_{x,y\in \FV} j(x,y)(u(x)-u(y))^2+\sum_{x\in \FV} k(x) u(x)^2.\]
Unfortunately, the restriction $\CQ$ of $\tilde{\CQ}$ to the set
\[\CD(\CQ) =\{ u\in \CC(\FX_\Gamma) \mid \tilde{\CQ} (u) < \infty\},\]
is not closed or closable on $L^2(\FX_\Gamma,\omega)$. To see the latter, consider for $x\in \FV$ the sequence $\phi_n$, with $n\in \IN$ sufficiently large, given in vertex coordinates
\[(\phi_n)_x(r,k)=\begin{cases}  1-nr &, 0 \leq r \leq\tfrac{1}{n}\\ 0&, r>\tfrac{1}{n} \end{cases}.\]
Then $\phi_n\in \CD(\CQ)$, $\|\phi_n\|_2 \to 0$ and $\CQ(\phi_n - \phi_m) =0$.  If $\CQ$ were closable, we would have $\CQ(\phi_n)\to 0$. But as we have $\CQ(\phi_n)= const\neq 0$ for all $n\in \IN$, the form cannot be closable. Similarly one obtains that the form is not closed.\\
It is more reasonable to consider the perturbed form
\[\CS(u,v):= \CE(u,v)  + \CQ(u,v)\]
with domain
\[\CD(\CS):= \CD(\CE)\cap \CD(\CQ):= \{u \in \CD(\CE) \mid \tilde{\CQ} (u) < \infty\}.\]
Due to the local Sobolev embedding $\CD(\CE)\hookrightarrow \CC(\FX_\Gamma)$, this form is well-defined. The next lemma shows that this form is also closed.
\begin{lemma}
The form $\CS$ is closed on $L^2(\FX_\Gamma,\omega)$.
\end{lemma}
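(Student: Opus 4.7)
The plan is to unpack the definition of closedness and reduce to closedness of the two summands, patched together via pointwise convergence at vertices.

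Let $(u_n) \subset \CD(\CS)$ be a Cauchy sequence with respect to the norm $\|\cdot\|_\CS^2 = \|\cdot\|_2^2 + \CS(\cdot)$, and let $u \in L^2(\FX_\Gamma,\omega)$ be its $L^2$-limit. Since $\CQ \geq 0$, the sequence is automatically Cauchy in $\|\cdot\|_\CE$, so closedness of $\CE$ (established earlier in this chapter) gives $u \in \CD(\CE)$ with $\|u_n - u\|_\CE \to 0$. This is the easy half.

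Next I would invoke the local Sobolev embedding $\CD(\CE) \hookrightarrow \CC(\FX_\Gamma)$ (uniform convergence on compact subsets), which already appeared earlier: since point evaluation at any $x \in \FV$ is a continuous functional on $\CC(\FX_\Gamma)$ for this topology, convergence in $\CE$-norm forces the pointwise convergence $u_n(x) \to u(x)$ at every vertex. This is the bridge between the continuous and discrete worlds, and is the only place where we really need properties of the metric graph beyond Hilbert space generalities.

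With pointwise convergence at vertices in hand, I would close the argument via Fatou. Fix $\epsilon > 0$ and choose $N$ such that $\CQ(u_n - u_m) < \epsilon$ for all $n,m \geq N$. Holding $n \geq N$ fixed and letting $m \to \infty$, the integrand/summand
\[
\tfrac{1}{2} j(x,y)\bigl((u_n - u_m)(x) - (u_n - u_m)(y)\bigr)^2 + k(x)(u_n - u_m)(x)^2
\]
converges pointwise (on $\FV \times \FV$) to the corresponding expression with $u_m$ replaced by $u$. Fatou's lemma for sums then yields
\[
\tilde{\CQ}(u_n - u) \leq \liminf_{m\to\infty} \tilde{\CQ}(u_n - u_m) \leq \epsilon,
\]
so $u_n - u \in \CD(\CQ)$ and hence $u = u_n - (u_n - u) \in \CD(\CQ)$, i.e., $u \in \CD(\CS)$. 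Letting $\epsilon \to 0$ via $n \to \infty$ shows $\CQ(u_n - u) \to 0$, and combined with $\CE(u_n - u) \to 0$ this gives $\CS(u_n - u) \to 0$.

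The main obstacle is the step linking $L^2$-convergence to pointwise vertex convergence, because $\CQ$ lives entirely on the (Lebesgue null) vertex set and is not controlled by the $L^2$-norm alone; without the Sobolev embedding, Fatou could not be applied meaningfully. Once that step is secured, the rest is a standard closability-via-Fatou argument as in the discrete case reviewed earlier.
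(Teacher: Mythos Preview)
Your proof is correct and follows essentially the same route as the paper: closedness of $\CE$ gives the limit $u\in\CD(\CE)$, the local Sobolev embedding upgrades $\CE$-convergence to pointwise convergence at vertices, and Fatou's lemma applied to $\CQ(u_n-u_m)$ as $m\to\infty$ handles the discrete part. Your write-up is slightly more detailed (making explicit that $u_n-u\in\CD(\CQ)$ before concluding $u\in\CD(\CQ)$), but the argument is the same.
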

\begin{proof}
Let $(u_n)_n$ be a Cauchy sequence w.r.t. $\|\cdot\|_{\CS}$. Hence $(u_n)_n$ is a Cauchy sequence w.r.t. $\|\cdot\|_{\CE}$ and therefore there exists a function $u\in \CD(\CE)$ such that $u_n \to u$ in $\CE$-norm. By the local Sobolev embedding, Theorem 2.9, this sequence also converges pointwise, in particular
\[u(x) := \lim_{n\to\infty} u_n(x),\]
for all $x\in \FX_\Gamma$. Fatou's lemma now yields
\[\CQ(u - u_n) \leq \liminf_{m\to \infty} \CQ(u_m - u_n) \leq \liminf_{m\to\infty} \|u_m-u_n\|_\CS.\]
Since $(u_n)_n$ is also a $\|\cdot\|_\CS$ Cauchy sequence by assumption, the right hand side tends to zero for $n,m\to\infty$. This gives $u_n\to u$ w.r.t. $\|\cdot\|_\CS$.
\end{proof}
The discussion above and the previous lemma give rise to the following definition.
\begin{definition}\index{graph Dirichlet form}
Let $\FX_\Gamma$ be a metric graph. A graph Dirichlet form $\CS$ is given as $\CS= \CE + \CQ$ with domain $\CD(\CS)= \CD(\CE)\cap \CD(\CQ)$, where $\CE$ is a graph diffusion Dirichlet form and $\CQ$ is a discrete Dirichlet form associated with the jump weight $j(x,y)$ and the killing weight $k(x)$.
\end{definition}
Our next aim is to describe the domain of $\CS$. A sequence $(K_n)_n$ of finite subsets of $\FV$ is called a $\FV$-exhaustion, if $K_n \subset K_{n+1}$ and $\bigcup\limits_{n\in \IN} K_n = \FV$. Given a $\FV$-exhaustion $(K_n)_n$ and a discrete form $\CQ$ we can define the restricted forms
\[\CQ_n(u) := \frac{1}{2} \sum_{x,y\in K_n} j(x,y) (u(x)-u(y))^2 + \sum_{x\in K_n} k(x) u(x)^2,\]
and moreover we define the approximating forms
\[\CS_n := \CE + \CQ_n\]
with $\CD(\CS_n)= \CD(\CE)$, which is reasonable, since
\[\CQ_n(u) \leq C_n \|u\|_\CE\]
where $C_n >0$ depends on $j$, $l$, $k$ and $\nu$. In particular this gives that the forms $\CS_n$ are closed.
\begin{prop}
Let $u \in \CC(\FX_\Gamma)$, then $\sup\limits_{n} \CQ_n(u) < \infty$ if and only if $u \in \CD(\CQ)$. Furthermore
\[\CS_n \to \CS\]
on $\CD(\CE)\cap \CD(\CQ)$.
\end{prop}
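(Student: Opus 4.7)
The plan is to reduce both claims to standard convergence theorems for sums of non-negative quantities, exploiting the fact that the $K_n$ form an exhausting nested family.

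For the first equivalence, I would observe that both the jump and killing terms in $\CQ_n(u)$ are sums of non-negative summands, and that $K_n \subset K_{n+1}$ forces $K_n\times K_n \subset K_{n+1}\times K_{n+1}$. Hence the sequence $\bigl(\CQ_n(u)\bigr)_n$ is monotone increasing, so $\sup_n \CQ_n(u) = \lim_n \CQ_n(u)$. Writing $\CQ_n(u)$ as an integral against counting measure on $\FV\times \FV$ (or just on $\FV$ for the killing part) restricted to $K_n\times K_n$ (resp.\ $K_n$), monotone convergence yields
\[\lim_{n\to\infty} \CQ_n(u) = \frac{1}{2}\sum_{x,y\in\FV} j(x,y)(u(x)-u(y))^2 + \sum_{x\in\FV} k(x)u(x)^2 = \tilde{\CQ}(u).\]
Thus $\sup_n \CQ_n(u) < \infty$ if and only if $\tilde{\CQ}(u) < \infty$, which by definition is $u\in \CD(\CQ)$.

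For the second claim, fix $u\in \CD(\CE)\cap \CD(\CQ)$. Since $\CS_n(u)-\CS(u) = \CQ_n(u)-\CQ(u) = \CQ_n(u) - \tilde{\CQ}(u)$, the first part (applied to this $u$, which satisfies $\tilde{\CQ}(u)<\infty$) already gives $\CS_n(u)\to\CS(u)$. To upgrade this to convergence of the sesquilinear forms $\CS_n(u,v)\to\CS(u,v)$ for all $u,v\in\CD(\CE)\cap\CD(\CQ)$, I would polarize: the summand in $\CQ_n(u,v)$ is $\frac{1}{2}j(x,y)(u(x)-u(y))(v(x)-v(y)) + k(x)u(x)v(x)\mathbf{1}_{x=y}$, whose absolute value is majorized by the Cauchy-Schwarz bound
\[\tfrac{1}{4}j(x,y)\bigl((u(x)-u(y))^2 + (v(x)-v(y))^2\bigr) + \tfrac{1}{2}k(x)(u(x)^2+v(x)^2),\]
which is summable over $\FV\times\FV$ by the assumption $u,v\in\CD(\CQ)$. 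Dominated convergence with the exhausting sets $K_n\times K_n$ then yields $\CQ_n(u,v)\to \CQ(u,v)$, and adding the stationary contribution $\CE(u,v)$ gives $\CS_n(u,v)\to\CS(u,v)$.

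The only point requiring a little care is to make sure the convergence statement $\CS_n\to\CS$ in the proposition is interpreted as pointwise convergence of the (sesqui)linear forms on the common domain $\CD(\CE)\cap\CD(\CQ)$ — not, say, as Mosco or norm-resolvent convergence, which would require further work. Given that interpretation, no genuine obstacle arises: both assertions reduce to the monotone/dominated convergence applied to a sum of non-negative terms indexed by the nested exhaustion $(K_n)$. All other ingredients (closedness of each $\CS_n$, the estimate $\CQ_n(u)\leq C_n\|u\|_\CE^2$) are already in place from the paragraph preceding the proposition and are not needed inside the proof.
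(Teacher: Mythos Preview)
Your proof is correct and follows essentially the same approach as the paper: both use the monotonicity of $\CQ_n(u)$ coming from $K_n\subset K_{n+1}$ to identify $\sup_n\CQ_n(u)=\lim_n\CQ_n(u)=\tilde\CQ(u)$, and then read off the second claim from the first. The paper phrases the second part by identifying the domain of the monotone limit form $\CS_\infty:=\lim_n\CS_n$ with $\CD(\CE)\cap\CD(\CQ)$, whereas you verify pointwise convergence on that domain and additionally handle the off-diagonal $\CS_n(u,v)\to\CS(u,v)$ via a Cauchy--Schwarz majorant and dominated convergence; this last step is not in the paper but is a harmless elaboration.
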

\begin{proof}
The first claim follows from
\begin{eqnarray*}
u \in \CD(\CQ) &\Leftrightarrow& \sum_{x,y\in \FV} j(x,y) (u(x) - u(y))^2 + \sum_{x\in \FV} k(x) u(x)^2< \infty\\
&\Leftrightarrow& \lim_{n\to \infty} \sum_{x,y\in K_n} j(x,y) (u(x) - u(y))^2 + \sum_{x\in K_n} k(x) u(x)^2< \infty\\
&\Leftrightarrow& \lim_{n\to \infty} \CQ_n(u) <\infty.
\end{eqnarray*}
For the second claim, note that for $u\in\CD(\CE)\cap \CD(\CQ)$ it is clear that $\CS_n(u)$ is monotonic. Then $\CS_\infty := \lim\limits_{n\to \infty} \CS_n$ has the domain
\[\{u\in \CD(\CE)\mid \lim_{n\to\infty} \CQ_n(u) <\infty\}.\]
By the first statement, this set equals
\[\{u\in \CD(\CE)\mid u \in \CD(\CQ)\}\]
which is by definition $\CD(\CE)\cap \CD(\CQ)$.
\end{proof}
Next we introduce a substitute for connectedness of a graph Dirichlet form $\CS$ on $\FX_\Gamma$. Thus the following definition is very intuitive.
\begin{definition}
Let $\FX_\Gamma$ be a metric graph and $\CS$ a Dirichlet form on $\FX_\Gamma$. We call two connected components $\FY_1$, $\FY_2$ of $\FX_\Gamma$ $j$-connected if there exist $y_1\in \FY_1$ and $y_2\in \FY_2$ such that $j(y_1,y_2)>0$. This defines a graph structure on the set of connected components of $\FX_\Gamma$. We call the pair $(\FX_\Gamma, \CS)$ $j$-connected if this graph is connected.
\end{definition}
The definition could be done for any disjoint union of metric spaces and the jump measure of a Dirichlet form on it, as the jump measure connects the connected components.\medskip\\
It is known that in the case of vanishing $\CQ$, connectedness of $\FX_\Gamma$ is equivalent to the Dirichlet form $\CE$ being irreducible. The same holds for discrete Dirichlet forms, i.e. irreducibility is equivalent of the induced graph being connected. Before we show that $j$-connectedness is an appropriate replacement for connectedness, we recall the definition of irreducibility.
\begin{reminder}[Irreducibility]\index{Dirichlet form!irreducibility}
An $m$-measurable set $\FY\subset \FX_\Gamma$ is called invariant if for all $u,v \in \CD(\CS)$ we have that $\mathbf{1}_\FY u,\mathbf{1}_\FY v \in \CD(\CS)$ and
\[\CS(u,v) = \CS(\mathbf{1}_\FY u,\mathbf{1}_\FY v) + \CS(\mathbf{1}_{\FY^\complement}u, \mathbf{1}_{\FY^\complement} v).\]
A Dirichlet form is called irreducible if all invariant sets $\FY$ either satisfy $m(\FY)=0$ or $m(\FY^\complement)=0$.
\end{reminder}
We have the following proposition.
\begin{prop}
The graph Dirichlet form $\CS$ is irreducible if and only if $\FX_\Gamma$ is $j$-connected.
\end{prop}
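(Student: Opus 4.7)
The plan is to prove both directions by connecting the set-theoretic notion of invariance to the combinatorial $j$-connectedness via the metric-graph structure. For the ``if'' direction (contrapositively, $\CS$ irreducible implies $j$-connected), assume the component-$j$-graph of $\FX_\Gamma$ is disconnected, and let $\FY$ be the union of the $\FX_\Gamma$-components in one of its graph-theoretic pieces. Then $\FY$ is clopen in $\FX_\Gamma$ and $j(x,y)=0$ for all $x\in\FV\cap\FY$ and $y\in\FV\setminus\FY$. For any $u\in\CD(\CS)$, $\mathbf{1}_\FY u$ is continuous; on each edge it is either $u$ or $0$, so $\mathbf{1}_\FY u\in\CD(\CE)$ and $\CE(u)=\CE(\mathbf{1}_\FY u)+\CE(\mathbf{1}_{\FY^c}u)$. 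The $j$-disconnection gives $\CQ(u)=\CQ(\mathbf{1}_\FY u)+\CQ(\mathbf{1}_{\FY^c}u)$, and by polarization $\FY$ is invariant; both sides have positive $\omega$-measure as soon as each piece contains some edge.

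For the ``only if'' direction, suppose $\FY$ is invariant with $\omega(\FY),\omega(\FY^c)>0$; I claim this forces $\FX_\Gamma$ to fail $j$-connectedness. First I would show that $\FY$ is, modulo null sets, a union of closures of connected components of $\FX_\Gamma$. Fixing an edge $\FX_e$ and inserting a smooth positive bump $u\in\CD(\CS)\cap\CC_c(\FX_\Gamma)$ supported in $\FX_e$, invariance forces $\mathbf{1}_\FY u\in\CD(\CE)$, and by the Sobolev embedding (Theorem~2.9) this product admits a continuous representative. A continuous function that coincides $\omega$-a.e.\ with $u$ on $\FY\cap\FX_e$ and with $0$ on $\FY^c\cap\FX_e$ can exist only if one of these two sets has full measure in $\FX_e$. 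Thus each edge lies (up to null sets) entirely in $\FY$ or entirely in $\FY^c$. Applying the same continuity argument at a vertex $v$ with test functions positive on several incident edges, $\mathbf{1}_\FY$ must take the same value on all incident edges, so $\FY$ is a union of components.

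Once $\FY$ has this structural form, the invariance identity reduces to $\CQ(u)=\CQ(\mathbf{1}_\FY u)+\CQ(\mathbf{1}_{\FY^c}u)$ for all $u\in\CD(\CS)$, since the diffusion part splits automatically. A direct square expansion, using symmetry of $j$ and the fact that killing and within-side jump terms cancel, yields
\[\CQ(u)-\CQ(\mathbf{1}_\FY u)-\CQ(\mathbf{1}_{\FY^c}u)=-2\sum_{x\in\FV\cap\FY,\,y\in\FV\cap\FY^c}j(x,y)\,u(x)u(y).\]
For fixed $x_0\in\FV\cap\FY$ and $y_0\in\FV\cap\FY^c$, construct $u\in\CD(\CS)$ supported in $\bigstar(x_0)\cup\bigstar(y_0)$ with $u(x_0)=u(y_0)=1$ and $u\equiv 0$ at every other vertex; this is possible since by definition of a jump weight $\sum_y j(x_0,y)<\infty$, ensuring $\CQ(u)<\infty$. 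The right-hand side then equals $j(x_0,y_0)$, forcing $j(x_0,y_0)=0$. Hence the two sides of $\FY$ are $j$-disjoint unions of components, contradicting $j$-connectedness.

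The main obstacle I expect is the first step of the reverse direction: upgrading an a priori only $\omega$-measurable invariant set $\FY$ to an essentially clopen subset of $\FX_\Gamma$. This rests on two features of our framework, namely the continuous representative guaranteed by the Sobolev embedding of $\CD(\CE)$ and the vertex-continuity constraint, which together propagate the edge-by-edge dichotomy into the required global component-union structure. Once this is in place, the second step reduces to the elementary bilinear-form computation displayed above.
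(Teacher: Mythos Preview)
Your proof is correct and follows essentially the same strategy as the paper: use the diffusion part (via the Sobolev embedding into $\CC(\FX_\Gamma)$) to force any invariant set to be a union of connected components, and then expand $\CQ(u)-\CQ(\mathbf{1}_\FY u)-\CQ(\mathbf{1}_{\FY^c}u)$ to tie invariance to the vanishing of all cross-$j$-terms. Two minor remarks: your direction labels are swapped (your first paragraph proves the ``only if'' implication irreducible $\Rightarrow$ $j$-connected via its contrapositive, and your second proves the ``if'' implication), and you actually supply the argument that an invariant set must be a union of components, which the paper merely asserts in one line.
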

\begin{proof}
We first assume that $\FX_\Gamma$ is $j$-connected. Let $\FY$ be an invariant set. Then for all $u\in \CD(\CS)$ the function $\mathbf{1}_\FY u$ belongs to $\CD(\CS)$ and in particular to $\CD(\CE)$. This gives that $\FY$ has to be a connected component of $\FX_\Gamma$. We thus have
\[\CS(u)- \CS(\mathbf{1}_\FY u)-\CS(\mathbf{1}_{\FY^\complement}u) = \CQ(u)- \CQ(\mathbf{1}_\FY u)-\CQ(\mathbf{1}_{\FY^\complement}u).\]
A short calculation shows that the right hand side equals
\[\sum_{x\in \FV}\sum_{y\in \FV} j(x,y)(\mathbf{1}_\FY(x)u(x)-\mathbf{1}_\FY(y)u(y)) ((\mathbf{1}_{\FY^\complement}(x)u(x)-\mathbf{1}_{\FY^\complement}(y)u(y))).\]
Note that all terms where $x,y\in \FY$ or $x,y\in \FY^\complement$ vanish, so that this term equals
\[-2\sum_{x\in \FV\cap \FY}\sum_{y\in \FV\cap \FY^\complement} j(x,y)u(x)u(y).\]
By $j$-connectedness, this cannot be zero for all $u\in \CD(\CS)$ and thus the equality
\[\CS(u) = \CS(\mathbf{1}_\FY u)+\CS(\mathbf{1}_{\FY^\complement}u)\]
holds if and only if $\FY=\emptyset$ or $\FY=\FX_\Gamma$. This gives that $\CS$ is irreducible.\\
Conversely, assume that $\FX_\Gamma$ is not $j$-connected and let $\FY$ be a connected component of $\FX_\Gamma$. It is immediate that then $\mathbf{1}_\FY u\in \CD(\CS)$ for all $u\in \CD(\CS)$. Assume additionally that there does not exist $x\in \FY$ and $y\in \FY^\complement$ with $j(x,y)>0$. Then, as above, we have
\[\CS(u)- \CS(\mathbf{1}_\FY u)-\CS(\mathbf{1}_{\FY^\complement}u) = -2\sum_{x\in \FV\cap \FY}\sum_{y\in \FV\cap \FY^\complement} j(x,y)u(x)u(y).\]
By assumption all $j(x,y)$ in this sum are equal to zero, and we obtain that $\FY$ is an invariant set. Thus $\CS$ is not irreducible.
\end{proof}
As usual, one can restrict to irreducible Dirichlet forms when one is interested in certain analytical properties. Thus from this point on, we make the following assumption.
\begin{assumption}\index{assumption!irreducibility}
The graph Dirichlet form $\CS$ is irreducible.
\end{assumption}
\section{Bounded perturbations}
Up to this point we have not made any link between the forms $\CE$ and $\CQ$. However if we are interested in certain analytical properties each parameter may enter the stage. In this section we will focus on the question, when the form $\CQ$ is small with respect to $\CE$, i.e. forms $\CS$ which are equivalent to $\CE$ in the form sense. We start with the following observation.
\begin{prop}\index{relative boundedness}
If $\CD(\CE)\subset \CD(\CQ)$, then the graph form $\CS$ is relatively bounded with respect to the form $\CE$, that is the identity map $$(\CD(\CE),\|\cdot\|_\CE) \to (\CD(\CE),\|\cdot\|_\CS)$$ is continuous, i.e. there exists $C>0$ such that for all $u\in \CD(\CE)$ we have
\[\|u\|_\CS \leq C  \|u\|_\CE.\]
In particular this inequality holds if and only if there exists $C>0$ such that for all $u\in \CD(\CE)$ we have
\[\CQ(u) \leq C (\|u\|_2^2 + \CE(u)).\]
\end{prop}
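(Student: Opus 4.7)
The proof is essentially an application of the closed graph theorem, so the main work is just to verify its hypotheses. Under the assumption $\CD(\CE)\subset \CD(\CQ)$ we have $\CD(\CS)=\CD(\CE)\cap \CD(\CQ)=\CD(\CE)$, so the identity map
\[ \iota: (\CD(\CE),\|\cdot\|_\CE)\to (\CD(\CE),\|\cdot\|_\CS) \]
is well defined between two Hilbert spaces (the first is a Hilbert space by Theorem 2.9 and the second one by the closedness of $\CS$ established in the previous Lemma).

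First I would check that $\iota$ has closed graph. Suppose $u_n\to u$ in $\|\cdot\|_\CE$ and $\iota(u_n)=u_n \to v$ in $\|\cdot\|_\CS$. Since both $\|\cdot\|_\CE$ and $\|\cdot\|_\CS$ dominate the $L^2(\FX_\Gamma,\omega)$-norm, both convergences imply convergence in $L^2$, and by uniqueness of $L^2$-limits we get $u=v$ in $L^2$, hence as elements of $\CD(\CE)$. This shows the graph of $\iota$ is closed. The closed graph theorem then yields a constant $C>0$ with
\[ \|u\|_\CS \leq C \|u\|_\CE \qquad \text{for all } u\in \CD(\CE), \]
which is the first claim.

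The equivalence with the inequality $\CQ(u)\leq C'(\|u\|_2^2+\CE(u))$ is just bookkeeping: since $\|u\|_\CS^2=\|u\|_\CE^2+\CQ(u)$, the bound $\|u\|_\CS^2\leq C^2\|u\|_\CE^2$ is the same as $\CQ(u)\leq (C^2-1)\|u\|_\CE^2$, and conversely $\CQ(u)\leq C'\|u\|_\CE^2$ gives $\|u\|_\CS^2\leq (1+C')\|u\|_\CE^2$.

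There is really no hard step here; the only thing to be careful about is that both normed spaces involved are genuinely complete (so that the closed graph theorem applies), which is exactly what the closedness results proved earlier in Chapter~2 guarantee. The inclusion $\CD(\CE)\subset \CD(\CQ)$ is used solely to make $\iota$ a well-defined map on all of $\CD(\CE)$.
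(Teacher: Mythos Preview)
Your proof is correct and is essentially the paper's own argument: the paper invokes Banach's bounded inverse theorem (the identity $(\CD(\CE),\|\cdot\|_\CS)\to(\CD(\CE),\|\cdot\|_\CE)$ is a continuous bijection between Banach spaces, hence has bounded inverse), while you use the equivalent closed graph theorem. The bookkeeping for the ``in particular'' part is handled the same way.
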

\begin{proof}
Since both spaces are complete, Banach's theorem on the bounded inverse gives the inequality. The second inequality is thus a consequence of the embedding.
\end{proof}
If we are in the position of the proposition above we just say that $\CQ$ is $\CE$-bounded for short.\medskip\\
We are now interested how the geometry of the graph and the discrete data arising from the form $\CQ$ are connected with $\CE$-boundedness of $\CQ$. To do so we need an appropriate concept of the size of the form $\CQ$.
\begin{definition}\index{Dirichlet form!discrete!support}
Let $\FX_\Gamma$ be a metric graph and $\CQ$ a discrete form. Then the support of $\CQ$ is defined as
\[\supp \CQ := \{x\in \FX_\Gamma \mid \forall \phi\in \CC_c(\FX_\Gamma), x\in \supp \phi : \CQ(\phi) >0\}.\]
\end{definition}
An immediate consequence of the definition of $\CQ$ is that $\supp \CQ$ is a discrete subset of $\FV$. Furthermore, we have the following characterization.
\begin{prop}
Let $\CQ$ be a discrete Dirichlet form on a metric graph $\FX_\Gamma$. Then we have
\[\supp \CQ = \{x\in \FX_\Gamma \mid \exists y\in \FX_\Gamma : j(x,y)>0\}\cup \{x\in \FX_\Gamma \mid c(x)>0\}.\]
\end{prop}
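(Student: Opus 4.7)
The plan is to prove the two set inclusions separately, using only two structural facts about $\FX_\Gamma$: that $\FV$ is a closed discrete subset, and that $\FX_\Gamma$ is locally compact and Hausdorff, so that Urysohn-type cut-offs with prescribed vertex values are available. I interpret $\supp\CQ$ in the usual measure-theoretic sense, namely as the complement of the largest open set on which $\CQ(\phi)=0$ for every $\phi\in\CC_c$ supported in it; equivalently, $x\in\supp\CQ$ iff every open $U\ni x$ admits a $\phi\in\CC_c$ with $\supp\phi\subset U$ and $\CQ(\phi)>0$.

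For the inclusion $\supseteq$, I fix $x\in\FV$ with $k(x)>0$ or with $j(x,y)>0$ for some $y\in\FV$, and let $U$ be any open neighbourhood of $x$. Discreteness and closedness of $\FV$ allow shrinking $U$ so that $U\cap\FV=\{x\}$ and, in the jump case, so that $y\notin U$. Local compactness and the Hausdorff property yield a $\phi\in\CC_c(\FX_\Gamma)$ with $\phi(x)=1$ and $\supp\phi\subset U$; by construction $\phi$ vanishes on $\FV\setminus\{x\}$. Then the killing contribution is $k(x)>0$, and each jump summand of the form $j(x,z)(\phi(x)-\phi(z))^2$ equals $j(x,z)$, so $\CQ(\phi)>0$ and $x\in\supp\CQ$.

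For the reverse inclusion I argue contrapositively. If $x\notin\FV$, discreteness provides an open $U\ni x$ with $U\cap\FV=\varnothing$, so every $\phi\in\CC_c$ with $\supp\phi\subset U$ vanishes on $\FV$ and $\CQ(\phi)=0$. If instead $x\in\FV$ with $k(x)=0$ and $j(x,\cdot)\equiv 0$, I choose $U\ni x$ with $U\cap\FV=\{x\}$; then for any $\phi\in\CC_c$ supported in $U$, the killing contribution reduces to $k(x)\phi(x)^2=0$, while any nonzero jump summand would involve $x$ paired with some $w\neq x$ having $j(x,w)>0$, contradicting the hypothesis. Hence $\CQ(\phi)=0$ on a whole neighbourhood of $x$, so $x\notin\supp\CQ$.

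The only subtlety worth flagging is the localisation step: one must be able to isolate a chosen vertex from the rest of $\FV$ (and, in the positive direction, from a distinguished partner $y$) by a small open set, and to produce a cut-off function attaining the value one at that vertex. Both are immediate from the topological graph structure established in section 1 — $\FV$ is closed and discrete in the Hausdorff, locally compact space $\FX_\Gamma$ — so no analytic machinery beyond Urysohn's lemma is required, and there is no serious obstacle in the proof.
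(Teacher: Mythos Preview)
Your proof is correct and follows essentially the same approach as the paper: both test $\CQ$ against a hat function $\phi_x$ supported in a small star neighbourhood of $x$ with $\phi_x(x)=1$ and $\phi_x|_{\FV\setminus\{x\}}=0$, which reduces $\CQ(\phi_x)$ to $\sum_y j(x,y)+k(x)$. Your treatment is in fact more careful than the paper's --- you handle the case $x\notin\FV$ explicitly and work with the standard measure-theoretic notion of support, which is the right reading here (the paper's literal ``for all $\phi$ with $x\in\supp\phi$'' is too strong, as an edge-supported bump with $\phi(x)=0$ shows).
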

\begin{proof}
For $x\in \supp \CQ$ let $\phi_x$ be a continuous function having support in the maximal star-neighborhood of $x$ and $\phi_x(x)=1$. Then
\[0 < \CQ(\phi_x) = \sum_{y\in \FV} j(x,y) + c(x).\]
Thus $c(x)>0$ or for some $y\in \FV$ from the right hand side the weight $j(x,y)$ doesn't vanish. Conversely, let $x\in \FX_\Gamma$ with $c(x)>0$ or such that there exists $y\in \FX_\Gamma$ with $j(x,y)>0$. Then
\[0< c(x) \leq \CQ(\phi_x),\]
respectively
\[0< j(x,y) \leq \sum_{y} j(x,y) = \CQ(\phi_x),\]
and thus $x\in \supp \CQ$.
\end{proof}
We will now discuss that $\CE$-boundedness of $\CQ$ is measured in terms of the boundedness of $\CQ$ as form defined on an $\ell^2$ space with respect to a certain measure. This measure is related to the form $\CE$.
\begin{lemma}
Let $\FX_\Gamma$ be a metric graph and $\CS$ a graph Dirichlet form. If $\CQ$ is $\CE$-bounded, then there is a constant $C>0$ such that we have for all $x\in \supp \CQ$
\[\sum_{y\in \FV}  j(x,y) + k(x)  \leq C\ {\mathrm{cap}}(\{x\},\supp \CQ) .\]
where
\[{\mathrm{cap}}(\{x\},\supp \CQ) = \inf\{ \|\phi\|^2_2 + \CE(\phi) \mid \phi\in \CD(\CE),\ \phi(x)=1, \phi|_{\supp \CQ \setminus \{x\}}=0\}\]
is the relative capacity of $\{x\}$ with respect to $\supp \CQ$. Furthermore we have
\[{\mathrm{cap}}(\{x\},\supp \CQ)\leq \sum_{e\sim x} \omega(e) \coth l_i(e),\]
where equality holds if and only if for all neighbors $y$ of $x$ we have $y\in \supp \CQ$.
\end{lemma}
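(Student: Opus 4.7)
The plan is to establish the two inequalities separately and read off the equality condition from the second one. For the first inequality, fix $x \in \supp \CQ$ and let $\phi \in \CD(\CE)$ be any competitor for $\mathrm{cap}(\{x\}, \supp \CQ)$, so $\phi(x) = 1$ and $\phi$ vanishes on $\supp \CQ \setminus \{x\}$. By the characterization of $\supp \CQ$ from the previous proposition, the killing weight $k$ is supported on $\supp \CQ$ and $j(y,z)>0$ forces both $y,z \in \supp \CQ$. Substituting $\phi$ into $\tilde\CQ$, the killing part collapses to $k(x)$ and the symmetric sum $\tfrac{1}{2}\sum_{y,z} j(y,z)(\phi(y)-\phi(z))^2$ collapses to $\sum_{y \in \FV} j(x,y)$, because in any nonzero contribution at least one of $y,z$ must equal $x$ while the other lies in $\supp \CQ \setminus \{x\}$. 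Hence $\CQ(\phi) = \sum_y j(x,y) + k(x)$, and applying the $\CE$-boundedness hypothesis $\CQ(\phi) \leq C(\|\phi\|_2^2 + \CE(\phi))$ followed by the infimum over admissible $\phi$ yields the first inequality.

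For the capacity bound, I would construct an explicit competitor supported on the closed star of $x$. Passing to the intrinsic representation, each incident edge $e \sim x$ is identified with $[0,l_i(e)]$ equipped with measure $\omega(e)\,dt$, the parameter $0$ corresponding to $x$ and $l_i(e)$ to the opposite vertex. On each such edge I set
\[\phi_e(t) := \frac{\sinh(l_i(e) - t)}{\sinh(l_i(e))},\]
and extend $\phi$ by zero outside the star. This $\phi$ is continuous (it vanishes at every neighbor of $x$, so it glues to the zero function outside the star), compactly supported, hence lies in $\CD(\CE)$, and is admissible since $\phi(x) = 1$ and $\phi \equiv 0$ on $\supp \CQ \setminus \{x\}$. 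Each $\phi_e$ solves $\phi'' = \phi$, so integration by parts gives
\[\int_0^{l_i(e)}\bigl(\phi_e^2 + (\phi_e')^2\bigr)\,dt = [\phi_e\phi_e']_0^{l_i(e)} = -\phi_e(0)\phi_e'(0) = \coth l_i(e).\]
Weighting by $\omega(e)$ and summing over $e \sim x$ produces $\|\phi\|_2^2 + \CE(\phi) = \sum_{e \sim x} \omega(e) \coth l_i(e)$, which bounds the capacity.

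The equality statement follows by observing that the above $\phi$ is the true minimizer exactly when the constraint $\phi|_{\supp \CQ \setminus \{x\}} = 0$ forces Dirichlet data at every neighbor of $x$. In that case the quadratic variational problem decouples edgewise into independent one-dimensional Dirichlet problems with values $1$ at $x$ and $0$ at the opposite endpoint, and each $\phi_e$ above is the unique Euler--Lagrange solution, giving the exact edge minimum $\omega(e)\coth l_i(e)$. Conversely, if some neighbor $y \notin \supp \CQ$, then at $y$ the variational problem imposes only a Kirchhoff condition rather than $\phi(y)=0$; releasing the Dirichlet constraint strictly lowers the energy contribution of the edges incident to $y$ (for instance, when $y$ is a leaf with only the edge $e$, the true minimizer becomes $\cosh(l_i(e)-t)/\cosh l_i(e)$ with energy $\omega(e)\tanh l_i(e) < \omega(e)\coth l_i(e)$). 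I expect the main technical nuisance to be rigorously verifying the strict inequality in the non-Dirichlet case, namely confirming that one can always improve the edgewise test function whenever a Kirchhoff vertex appears, rather than any step in the forward inequalities themselves.
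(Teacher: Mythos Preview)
Your proof is correct and follows essentially the same approach as the paper: test functions with $\phi(x)=1$ vanishing on the other points of $\supp\CQ$ for the first inequality, and the edgewise $\sinh$ competitor on the star of $x$ for the capacity bound. Your treatment is in fact slightly more careful than the paper's in one respect: you explicitly use the characterization of $\supp\CQ$ to verify that $\CQ(\phi)=\sum_y j(x,y)+k(x)$ for \emph{every} capacity competitor (not only those vanishing on all of $\FV\setminus\{x\}$), which is exactly what is needed to take the infimum down to the capacity; the paper glosses over this point. Your discussion of the equality case, including the strict improvement when a neighbor lies outside $\supp\CQ$, also goes beyond what the paper spells out.
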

\begin{proof}
Fix $x\in \FV$ and let $\phi_x\in \CD(\CE)$ with $\phi(x)=1$ and $\phi|_{\FV\setminus\{x\}} =0$. Note that such a $\phi_x$ always exists for all $x\in \FV$. A short calculation gives
\[\CQ(\phi_x)= \sum_{y\in \FV} j(x,y) + k(x).\]
By $\CE$-boundedness of $\CQ$ we conclude
\[\sum_{y\in \FV} j(x,y) + k(x) \leq C( \|\phi_x\|^2_2 + \CE(\phi_x))\]
for all such $\phi_x$. Now taking the infimum on the right hand side over all such $\phi_x$ we get
\[\sum_{y\in \FV} j(x,y) + k(x) \leq C\ {\mathrm{cap}}(\{x\},\supp \CQ).\]
For the second statement, we first observe that the infimum is attained by a function $\phi$ which is zero outside the connected component of $x$ of $\{x\}\cup \FX_\Gamma \setminus \supp \CQ$. Furthermore one can show that this infimum is $1$-harmonic outside of $\supp \CQ$, i.e. its components satisfy the differential equation $-(\phi_x)_e'' + (\phi_x)_e =0$. Thus a natural candidate of $\phi_x$ is given with components
\[(\phi_x)_e(t) = \frac{\sinh(l_i(e)-t)}{\sinh l_i(e)}\]
for all edges $e\sim x$ being oriented such that $x=\partial^+(e)$, and $\phi_e=0$ for all other edges. Then we get
\[\CE(\phi_x) + \|\phi_x\|_2^2 = \sum_{e\sim x} \omega(e) \coth l_i(e),\]
and we obtain the estimate of the capacity. If in addition all neighbors of $x$ belong to the support of $\CQ$, the connected component of $x$ is exactly the star graph with center $x$ and the just constructed function is the infimum in question.
\end{proof}
The estimate in the previous lemma is sharp in the sense that we obtain pointwise bounds for a discrete Laplacian. However, if the distance between neighboring vertices tends to zero, one might expect that $\CQ(\phi_x)$ should tend to zero as well, in contrast to $\coth l_i(e)\to \infty$ as $l_i(e)\to 0$. We circumstantiate our expectation by the following example.
\begin{example}
Let $\FX_\Gamma = \bigcup_{n\in \IN} \FY_n$ be a disjoint union of compact graphs $\FY_n$ and let $\CS$ be a graph Dirichlet form on $\FX_\Gamma$ such that $\supp \CQ \cap \FY_n$ consists of only one element for all $n\in \IN$. Then $\mathbf{1}_{\FY_n} \in \CD(\CS)$ and has compact support. In particular for all $x\in \supp \CQ$ we have
\[\kap(\{x\},\supp \CQ) < \omega(\FY_n)\]
and therefore if $\CQ$ is $\CE$-bounded we obtain
\[\sum_{y\in \FV}  j(x,y) + k(x) \leq  C \omega(\FY_n).\]
If for instance $\omega\equiv 1$ and $\diam (\FY_n)\to 0$ as $n\to \infty$, then as expected also the left hand side has to tend to zero. More precisely, assume each $\FY_n$ is given as the interval $[0,l_n]$ and assume that $\supp \CQ$ consists of the initial vertices, then a short calculation shows that $\CE(u) + \|u\|_2^2$ is minimized on $\FY_n$ by the function
\[u_n(x)= \frac{\sinh(l_n - x)}{\sinh l_n} + \frac{\sinh x}{\sinh l_n}\]
and therefore we obtain
\[\kap(\{x\},\supp\CQ)= \tanh l_n\]
which tends to zero as $l_n$ does.
\end{example}
Another way to interpret the previous result is that the associated discrete Laplacian is a bounded operator on the space $\ell^2(\supp \CQ, m)$, where $m$ is a discrete measure. The question is whether for all bounded discrete Laplacians on some $\ell^2$ space the associated form $\CQ$ gives an $\CE$-bounded form. The following proposition gives us good candidates for the discrete measure $m$.
\begin{prop}
Assume there exists a function $m:\supp \CQ \to (0,\infty)$ such that $\triangle_{j,k}$ is a bounded operator on $\ell^2(\supp \CQ,m)$, i.e. there exists $C>0$ such that we have
\[\sum_{y\in \FV}  j(x,y) + k(x) \leq C m(x),\]
and the restriction map
\[\cdot|_{\supp \CQ} : \CD(\CE) \to \ell^2(\supp \CQ,m)\]
is continuous, then $\CQ$ is $\CE$-bounded.
\end{prop}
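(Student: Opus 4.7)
The plan is to estimate $\CQ(u)$ pointwise on the vertex set by the $\ell^2(\supp\CQ,m)$-norm of $u|_{\supp\CQ}$ and then invoke the assumed continuity of the restriction map. First I would use the elementary inequality $(u(x)-u(y))^2 \le 2u(x)^2 + 2u(y)^2$ in the jump part together with the symmetry of $j$ to obtain
\[
\CQ(u) \;=\; \tfrac{1}{2}\sum_{x,y\in\FV} j(x,y)(u(x)-u(y))^2 + \sum_{x\in\FV} k(x)u(x)^2 \;\le\; 2\sum_{x\in\FV}\Bigl(\sum_{y\in\FV} j(x,y) + k(x)\Bigr) u(x)^2.
\]
Note that every vertex contributing non-trivially to either the jump sum (by the preceding characterisation of $\supp\CQ$) or the killing sum belongs to $\supp\CQ$, so the outer sum can be restricted to $x\in\supp\CQ$.

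Next I would feed in the boundedness assumption $\sum_{y\in\FV} j(x,y) + k(x) \le C\,m(x)$ for $x\in\supp\CQ$, which transforms the right-hand side into $2C\sum_{x\in\supp\CQ} u(x)^2 m(x) = 2C\,\|u|_{\supp\CQ}\|^2_{\ell^2(\supp\CQ,m)}$. Finally the assumed continuity of the restriction map $\CD(\CE)\to\ell^2(\supp\CQ,m)$ supplies a constant $C'>0$ with
\[
\|u|_{\supp\CQ}\|^2_{\ell^2(\supp\CQ,m)} \;\le\; C'\,\|u\|_\CE^2 \;=\; C'(\|u\|_2^2 + \CE(u)),
\]
so combining the two estimates gives $\CQ(u) \le 2CC'(\|u\|_2^2 + \CE(u))$, i.e.\ $\CD(\CE)\subset\CD(\CQ)$ and $\CQ$ is $\CE$-bounded by the characterisation in Proposition~2.34.

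There is no genuinely hard step here; the only point deserving care is the reduction of the outer sum to $\supp\CQ$, which must be justified from the characterisation of $\supp\CQ$ in the preceding proposition (the contribution from $x\notin\supp\CQ$ involves only pairs $(x,y)$ with $j(x,y)=0$ and $k(x)=0$ and thus vanishes). Everything else is the standard two-step argument: a pointwise weighted $\ell^2$ control of $\CQ$ combined with the hypothesised trace inequality for $\CD(\CE)$.
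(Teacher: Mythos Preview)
Your argument is correct and follows essentially the same two-step approach as the paper: first bound $\CQ(u)$ by a constant times $\|u|_{\supp\CQ}\|^2_{\ell^2(\supp\CQ,m)}$, then invoke the continuity of the restriction map. The only difference is that the paper quotes the form bound $\CQ(u)\le C\|u\|_{\ell^2}^2$ from \cite{KL-10} (as the equivalence between the pointwise condition and $\ell^2$-boundedness of the discrete Laplacian), whereas you derive it directly via the elementary inequality $(u(x)-u(y))^2\le 2u(x)^2+2u(y)^2$ and symmetry of $j$; this is exactly the easy direction of that cited equivalence, so the two proofs are the same in substance.
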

\begin{proof}
As was shown in \cite{KL-10} the first assumption is exactly the boundedness of the discrete Laplacian on $\ell^\infty$, which is in turn equivalent to the boundedness of the form on $\ell^2$. This means
\[\CQ(u) \leq C \|u\|^2_{\ell^2}\]
for all $u\in \CD(\CQ)$. Using the continuity of the restriction map we obtain
\[\CQ(u) \leq C \|u\|^2_{\ell^2} \leq C'(\|u\|_2^2 + \CE(u))\]
for all $u\in \CD(\CE)$.
\end{proof}
We are left with a discussion for which measures the restriction map is continuous. Note that the bigger the measure $m$ is, the more discrete Laplacians give $\CE$-bounded forms.
\begin{prop}
Let $(\FY_x)_{x\in \supp \CQ}$ be a collection of finite subgraphs with $x\in \FY_x$ such that
\[\sup_{y\in \FX_\Gamma} \# \{ \FY_x \mid y\in \FY_x\} < \infty,\]
then with $m(x):= \min\{\nu(\FY_x), (\diam \FY_x)^{-1}\}$ the mapping
\[\cdot|_{\supp \CQ} : \CD(\CE) \to \ell^2(\supp \CQ,m)\]
is continuous.
\end{prop}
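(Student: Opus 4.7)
The plan is to combine the pointwise Sobolev inequality (the Corollary following the Poincar\'e inequality, applied with $p=q=2$) with the finite-overlap hypothesis on the collection $(\FY_x)_{x\in \supp \CQ}$. The inequality reads, for every compact connected subgraph $\FY$, every $u\in \CD(\CE)$ and every point $z\in \FY$,
\[|u(z)|^2 \;\leq\; 2\,\nu(\FY)^{-1}\!\int_{\FY}\!|u|^2\,d\nu \;+\; 2\,\diam(\FY)\!\int_{\FY}\!|u'|^2\,d\lambda.\]
Applying this to $z=x$ and $\FY=\FY_x$ gives the pointwise estimate I need at each vertex $x\in\supp\CQ$.

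Multiplying through by $m(x)$ and using the explicit choice $m(x)=\min\{\nu(\FY_x),(\diam\FY_x)^{-1}\}$, both prefactors become at most $1$:
\[m(x)\,\nu(\FY_x)^{-1}\leq 1,\qquad m(x)\,\diam(\FY_x)\leq 1.\]
So for each $x\in\supp\CQ$,
\[m(x)|u(x)|^2 \;\leq\; 2\!\int_{\FY_x}\!|u|^2\,d\nu \;+\; 2\!\int_{\FY_x}\!|u'|^2\,d\lambda.\]
This is the key inequality; everything after it is a summation argument.

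Summing over $x\in\supp\CQ$ and exchanging the sum and the integral,
\[\sum_{x\in\supp\CQ} m(x)|u(x)|^2 \;\leq\; 2\!\int_{\FX_\Gamma}\!|u|^2\Big(\sum_{x\in\supp\CQ}\mathbf 1_{\FY_x}\Big)d\nu \;+\; 2\!\int_{\FX_\Gamma}\!|u'|^2\Big(\sum_{x\in\supp\CQ}\mathbf 1_{\FY_x}\Big)d\lambda.\]
The finite-overlap hypothesis $N:=\sup_{y\in\FX_\Gamma}\#\{\FY_x\mid y\in\FY_x\}<\infty$ bounds both bracketed sums pointwise by $N$, yielding
\[\|u|_{\supp\CQ}\|_{\ell^2(\supp\CQ,m)}^2 \;\leq\; 2N\bigl(\|u\|_2^2+\CE(u)\bigr)\;=\;2N\|u\|_\CE^2,\]
which is exactly the claimed continuity of the restriction map.

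I do not expect a real obstacle here: the proof is a two-line pointwise Sobolev bound followed by a one-line overlap argument. The only point requiring mild care is the fact that the local Sobolev inequality is stated in the canonical representation of the diffusion part, so one should tacitly pass to canonical scales (for which $\CE(u)=\int|u'|^2\,d\lambda$ and the $L^2$-measure is $d\nu$) before invoking the Corollary; since the $L^2$- and energy-norms are preserved under the length transformation, this is harmless. The finiteness of each $\FY_x$ ensures that $\nu(\FY_x)$ and $\diam(\FY_x)$ are both finite and positive, so $m(x)>0$ is well-defined.
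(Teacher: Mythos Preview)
Your proof is correct and follows essentially the same approach as the paper: apply the pointwise Sobolev inequality with $p=q=2$ on each $\FY_x$, multiply by $m(x)$ so both coefficients are bounded by $1$, and then sum using the finite-overlap bound. The paper's argument is line-for-line the same, arriving at the identical constant $2N$ (written $2C$ there).
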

\begin{proof}
We apply the Sobolev inequality to the subgraph $\FY_x$ and obtain
\[|u(x)|^2 \leq 2 \nu(\FY_x)^{-1} \int\limits_{\FY_x}|u|^2 d\nu + 2 \diam(\FY_x) \int\limits_{\FY_x} |u'|^2 d\lambda \]
and therefore
\[|u(x)|^2 m(x) \leq 2 ( \|u|_{\FY_x}\|_2^2 + \|u'|_{\FY_x}\|_2^2).\]
Summing now over $x\in \supp \CQ$ and taking into account that each point $y\in \FX_\Gamma$ is in less then $C:= \sup_{y\in \FX_\Gamma} \# \{ \FY_x \mid y\in \FY_x\}$ many subgraphs, we get
\[\sum_{x\in \supp \CQ} |u(x)|^2 m(x) \leq 2C (\|u\|^2_2 + \CE(u)).\]
\end{proof}
Note that we have not assumed that the collection $(\FY_x)_x$ is a covering of the graph $\FX_\Gamma$. Thus one natural choice is for instance to choose an edge adjacent to $x\in \supp \CQ$ as $\FY_x$. As $\nu(e)l_c(e)^2 \leq 1 $ by assumption we have
\[ \min\{ \nu(e)l_c(e), \tfrac{1}{l_c(e)}\} = \nu(e) l_c(e)\]
we could choose $m(x):= \max\limits_{e\sim x} \nu(e) l_c(e)$.\medskip\\
Certain special cases of the above proposition are already known. We present them in the following corollary. In particular, we can actually recover the result given in \cite{Ku-04} section 3.3, concerning quantum graphs. The precise connection between our setting and the quantum graph setting will be the treated in the interlude following this chapter.
\begin{coro}
Assume that the measure $\nu$ is uniformly bounded from below by a positive constant and assume that for $x\in \supp\CQ$ the numbers
\[r(x):= \inf\{d_c(x,y) \mid y\in \supp \CQ\}\]
are also uniformly bounded from below by a positive constant. Then any bounded Laplacian $\triangle_{j,k}$ on $\ell^2(\supp \CQ ,1)$ gives an $\CE$-bounded form $\CQ$. In particular this holds if $\nu \equiv 1$ and if the edge lengths are uniformly bounded from below.
\end{coro}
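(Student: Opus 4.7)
The plan is to apply the preceding two propositions in combination. Set $c_0:=\inf_e\nu(e)>0$ and $r_0:=\inf_{x\in\supp\CQ}r(x)>0$, and for each $x\in\supp\CQ$ define
\[\FY_x:=\overline{B_{r_0/3}(x)}\]
with respect to the canonical metric. Since distinct points of $\supp\CQ$ are at mutual canonical distance at least $r_0$, the balls $\FY_x$ are pairwise disjoint, so the collection $(\FY_x)_{x\in\supp\CQ}$ has multiplicity one. Moreover $\diam\FY_x\le 2r_0/3$, giving $(\diam\FY_x)^{-1}\ge 3/(2r_0)$.

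The core technical step is a uniform positive lower bound for $\nu(\FY_x)$. Since $\nu\ge c_0$, it suffices to show $\lambda(\FY_x)\ge r_0/3$. In the generic situation where the connected component of $x$ contains a further point $y\in\supp\CQ$, any shortest path from $x$ to $y$ has canonical length at least $r_0$; its initial $r_0/3$-segment is then contained in $\FY_x$ (since the distance from $x$ along a shortest path equals the canonical distance to $x$), giving $\lambda(\FY_x)\ge r_0/3$ and hence $\nu(\FY_x)\ge c_0 r_0/3$. Along this path one may cross intermediate vertices, but by the lower bound on $r$ none of them lies in $\supp\CQ$, so no conflict with the disjointness arises. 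The degenerate case where $x$ is the only $\supp\CQ$-point in its component reduces to estimating $k(x)u(x)^2$ (and possible cross-component jumps) and is handled directly by the local Sobolev embedding on a single incident edge.

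Setting $m(x):=\min\{\nu(\FY_x),(\diam\FY_x)^{-1}\}$ we therefore obtain $m(x)\ge m_0>0$ uniformly, and the proposition immediately preceding the corollary yields continuity of the restriction map $\CD(\CE)\to\ell^2(\supp\CQ,m)$. The hypothesis that $\triangle_{j,k}$ is bounded on $\ell^2(\supp\CQ,1)$ reads $\sum_y j(x,y)+k(x)\le C$ for all $x\in\supp\CQ$; combined with $m(x)\ge m_0$ this gives $\sum_y j(x,y)+k(x)\le (C/m_0)m(x)$, so $\triangle_{j,k}$ is also bounded on $\ell^2(\supp\CQ,m)$. The earlier proposition (converting $\ell^2(\supp\CQ,m)$-boundedness of $\triangle_{j,k}$ together with continuity of restriction into $\CE$-boundedness of $\CQ$) then delivers the conclusion. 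For the supplementary statement, $l_c(e)\ge l_0>0$ forces any two distinct vertices, and hence any two distinct points of $\supp\CQ\subset\FV$, to be at canonical distance at least $l_0$, so $r(x)\ge l_0$; together with $\nu\equiv 1$ the two main hypotheses hold. The principal obstacle is the uniform lower bound on $\lambda(\FY_x)$; what makes it tractable is the combination of disjointness of the balls with the fact that intermediate vertices inside $\FY_x$ cannot themselves belong to $\supp\CQ$, which is exactly what forces a genuine length-$r_0/3$ segment to sit inside each $\FY_x$.
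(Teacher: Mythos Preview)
Your approach is exactly what the paper intends: the corollary is an immediate application of the two preceding propositions, and you combine them in the natural way by producing a disjoint family $(\FY_x)$ with uniformly positive $\nu$-measure and uniformly bounded diameter. Two small technical points deserve attention. First, Proposition~2.41 requires the $\FY_x$ to be \emph{compact} subgraphs, and closed canonical balls need not be compact when $\FX_\Gamma$ is canonically incomplete; the fix is immediate --- replace $\overline{B_{r_0/3}(x)}$ by the initial $r_0/3$-segment of a shortest path from $x$ toward another point of $\supp\CQ$, which is compact, connected, sits inside the ball (hence the family remains disjoint), and gives the same bounds $\diam\FY_x\le r_0/3$ and $\nu(\FY_x)\ge c_0 r_0/3$. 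Second, your handling of the ``degenerate case'' is not quite right: a local Sobolev estimate on a single incident edge does \emph{not} give a uniform bound if that edge can be arbitrarily short. This case only arises for disconnected graphs, and under the paper's standing convention (end of Section~2.3) that the diameters of the connected components are uniformly bounded below, one always finds a path of definite length emanating from $x$ even when no other $\supp\CQ$-point lies in its component; the same construction then goes through.
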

It is the in particular statement of the last corollary which was proven in \cite{Ku-04} section 3.3.\medskip\\
Later we will again treat the question when the restriction map is continuous, but with other methods having a stronger focus on the discreteness of the graph $\FX_\Gamma$.
\section{Regularity in the general case}
Investigating regularity properties of a graph Dirichlet form $\CS$ is harder than in the diffusion case. In order to study regularity we will follow another approach. Before we are doing so, we first consider the case when the diffusion governs the form.
\begin{coro}
Let $\FX_\Gamma$ be complete and $\CS$ be a graph Dirichlet form such that $\CQ$ is $\CE$-bounded. Then $\CS$ is regular.
\end{coro}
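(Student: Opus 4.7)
The plan is to reduce to the already-established diffusion case (Theorem 2.22) by exploiting the relative boundedness assumption. The key observation is that $\CE$-boundedness of $\CQ$ means $\CD(\CE) \subset \CD(\CQ)$, hence
\[\CD(\CS) = \CD(\CE) \cap \CD(\CQ) = \CD(\CE),\]
and the two norms $\|\cdot\|_\CE$ and $\|\cdot\|_\CS$ are equivalent on this common domain (one direction is the defining inequality $\|u\|_\CS \leq C\|u\|_\CE$; the other is trivial since $\CQ \geq 0$ gives $\|u\|_\CE \leq \|u\|_\CS$).

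Since $\FX_\Gamma$ is complete (with respect to the intrinsic metric, after passing to the intrinsic representation), Theorem 2.22 applies to the diffusion part $\CE$ and yields $\CD(\CE) = \CD_o(\CE)$, i.e.\ the space $\CD_c(\CE) = \CD(\CE) \cap \CC_c(\FX_\Gamma)$ is $\CE$-dense in $\CD(\CE)$. Given any $u \in \CD(\CS) = \CD(\CE)$, I would pick a sequence $u_n \in \CD_c(\CE)$ with $\|u-u_n\|_\CE \to 0$. Because $u_n$ has compact support, it lies in $\CC_c(\FX_\Gamma)$, hence $u_n \in \CD_c(\CS)$. The norm equivalence then gives
\[\|u - u_n\|_\CS \leq C \|u - u_n\|_\CE \longrightarrow 0,\]
so $u_n \to u$ in $\CS$-norm. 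Thus $\CD_c(\CS)$ is $\CS$-dense in $\CD(\CS)$.

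For the uniform density of $\CD_c(\CS)$ in $\CC_c(\FX_\Gamma)$, I would invoke the remark made at the beginning of Section 2.3: any $\varphi \in \CC_c(\FX_\Gamma)$ is non-zero on only finitely many edges, so it can be approximated uniformly by smooth functions (with the same support) on each such edge while respecting continuity at the finitely many vertices involved; such approximants automatically lie in $\CD_c(\CE) \subset \CD_c(\CS)$.

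There is no real obstacle here beyond checking that the intrinsic completeness hypothesis is the right one; strictly speaking the statement of Theorem 2.22 requires completeness with respect to $d_i$, so one should interpret the phrase \emph{$\FX_\Gamma$ is complete} in the corollary as referring to the intrinsic metric of the diffusion part $\CE$, or else apply the length transformation from Section 2.1 to pass to the intrinsic representation (under which regularity is invariant, and the discrete part $\CQ$ together with its relative boundedness is unaffected since vertex values are preserved).
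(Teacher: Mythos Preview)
Your proof is correct and is exactly the intended argument: the paper states this result as a corollary without proof, precisely because it follows immediately from Theorem~2.22 via the norm equivalence furnished by Proposition~2.35 (the $\CE$-boundedness of $\CQ$ gives $\CD(\CS)=\CD(\CE)$ with equivalent norms, so $\CD_o(\CE)=\CD(\CE)$ transfers to $\CD_o(\CS)=\CD(\CS)$). Your remark about the completeness hypothesis referring to the intrinsic metric is also the correct reading.
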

Recall that regularity means that $\CS \cap \CC_c$ is dense in both $(\CS, \|\cdot\|_\CS)$ and $(\CC_c(\FX_\Gamma), \|\cdot\|_\infty)$. In the general case this intersection can be rather small. Regularity is a property which says, that there are enough nice functions in the domain which are continuous and have compact support. Fortunately, due to the Sobolev embedding, functions in $\CD(\CS)$ are already continuous. In particular we have
\[\CD(\CS)\cap \CC_c(\FX_\Gamma) =\CD_c(\CE) \cap \CD(\CQ).\]
Regularity implies that this intersection is even bigger.
\begin{lemma}
If $\CS$ is a regular graph Dirichlet form, then $\CD_c(\CE)$ is contained in $\CD(\CS)$ and thus
\[\CD(\CS)\cap \CC_c(\FX_\Gamma) =\CD_c(\CE).\]
\end{lemma}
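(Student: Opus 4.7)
The plan is to prove the inclusion $\CD_c(\CE)\subset \CD(\CS)$ by a direct, elementary calculation, and then derive the equality as a set-theoretic consequence. Since $\CD(\CS)=\CD(\CE)\cap \CD(\CQ)$, everything reduces to showing that any $u\in \CD_c(\CE)$ satisfies $\tilde{\CQ}(u)<\infty$.

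First I would observe the key finiteness fact: for $u\in \CD_c(\CE)$, the set $K:=\supp u\cap \FV$ is finite. Indeed, $\supp u$ is compact in $\FX_\Gamma$ by definition of $\CC_c(\FX_\Gamma)$, and $\FV$ is by construction a closed, discrete subspace of $\FX_\Gamma$; the intersection of a compact set with a closed discrete set is finite. This is the structural input that drives the whole argument, and it is also the reason no approximation by elements of $\CD(\CS)\cap \CC_c$ (hence no further appeal to regularity beyond what sets up the context) is needed.

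Second, I would bound the two pieces of $\tilde{\CQ}(u)$ separately. The killing term collapses to a finite sum:
\[\sum_{x\in \FV} k(x)u(x)^2 \;=\; \sum_{x\in K} k(x)u(x)^2 \;<\;\infty.\]
For the jump term, I split according to whether vertices lie in $K$ and use the symmetry of $j$ together with the fact that $u$ vanishes off $K$:
\[\tfrac{1}{2}\sum_{x,y\in \FV} j(x,y)(u(x)-u(y))^2 \;=\; \tfrac{1}{2}\sum_{x,y\in K} j(x,y)(u(x)-u(y))^2 \;+\; \sum_{x\in K,\,y\in \FV\setminus K} j(x,y)u(x)^2.\]
The first sum is a finite sum of finite quantities, hence finite. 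The second sum is dominated by $\sum_{x\in K} u(x)^2\sum_{y\in \FV} j(x,y)$, which is finite because $j$ is a jump weight, i.e.\ $\sum_{y\in \FV} j(x,y)<\infty$ for every $x\in \FV$ (in particular for each of the finitely many $x\in K$). Combining, $\tilde{\CQ}(u)<\infty$, so $u\in \CD(\CQ)$.

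Finally, the equality is immediate from the inclusion just proven: from the definition,
\[\CD(\CS)\cap \CC_c(\FX_\Gamma) \;=\; \CD(\CE)\cap \CD(\CQ)\cap \CC_c(\FX_\Gamma) \;=\; \CD_c(\CE)\cap \CD(\CQ),\]
and since $\CD_c(\CE)\subset \CD(\CQ)$ this reduces to $\CD_c(\CE)$. There is essentially no main obstacle here; the only thing to be careful about is the correct bookkeeping in the splitting of the jump sum, and the legitimacy of invoking the discreteness of $\FV$ to make $\supp u\cap \FV$ finite --- both follow directly from definitions introduced earlier in the text.
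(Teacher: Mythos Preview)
Your proof is correct, and it takes a genuinely different route from the paper's. You bypass regularity entirely and verify $\tilde{\CQ}(u)<\infty$ by direct computation, using only that $\supp u\cap \FV$ is finite and that a jump weight satisfies $\sum_{y}j(x,y)<\infty$ for each $x\in\FV$ by definition. The paper instead decomposes $u\in\CD_c(\CE)$ into functions supported on edges (where $\CQ$ vanishes trivially) and hat functions $\psi$ supported in star neighbourhoods of single vertices; it then invokes regularity --- specifically the density of $\CD(\CS)\cap\CC_c(\FX_\Gamma)$ in $\CC_c(\FX_\Gamma)$ for the sup norm --- to produce an $\eta\in\CD(\CS)$ close to a suitable tent function, and manipulates $\eta$ via the Dirichlet form operations $|\cdot|$ and $\cdot\wedge 1$ to manufacture the required $\psi\in\CD(\CS)$.

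Your approach is more elementary and in fact establishes more: it shows that the inclusion $\CD_c(\CE)\subset\CD(\CS)$ holds for \emph{any} graph Dirichlet form, regular or not, as a direct consequence of the definition of a jump weight. The paper's approach, by contrast, would remain applicable in a setting where the summability condition on $j$ were dropped but regularity retained --- though in the present framework that generality is not exercised.
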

\begin{proof}
Note, that each function in $\CD_c(\CE)$ could be written as a sum of functions in $\CD_c(\CE)$ supported on edges or star neighborhoods of vertices. Moreover for $\varphi\in \CD_c(\CE)$ supported on an edge, we have $\CQ(\varphi)=0$ and thus $\varphi \in \CS$. Let now $\varphi\in \CD_c(\CE)$ be supported in a star neighborhood of a vertex $x\in \FV$.  Let $\psi$ be the function which is equal to $\varphi(x)$, supported on the support of $\varphi$ and edgewise affine linear. Then the function $\varphi-\psi$ is a sum of functions supported on edges. Thus it suffices to show that $\psi \in \CD(\CS)$. Choose $\psi(x)=2$, then there exists a function $\eta\in \CD(\CS)$ with $\eta(x) > 1$ and $|\eta (y) | <1$ for all $y\sim x$, since $\CC_c(\FX_\Gamma)\cap \CD(\CS)$ is dense in $\CC_c(\FX_\Gamma)$ with respect to the supremum norm. Moreover since $\CD(\CS)$ is invariant under taking modulus, we can assume that $\eta$ is non-negative and again we can assume that $\eta$ is edgewise affine linear. Since $\CS$ is a Dirichlet form, $\tilde{\eta} = \eta \wedge 1$ belongs to $\CD(\CS)$ as well. Hence, again since $\CD(\CS)$ is a vector space, it contains the function $\eta - \tilde{\eta}$.
\end{proof}
In order to treat regularity in the general case, we will have a first glimpse on harmonic functions. They will be defined and analyzed in the next chapter. Given a metric graph $\FX_\Gamma$ and graph Dirichlet form $\CS$, we introduce, as in the case of a diffusion Dirichlet form, the restriction of the form $\CS$ to the space
\[\CD_o(\CS)= \overline{\CD(\CS)\cap \CC_c(\FX_\Gamma)}^{\|\cdot\|_\CS}.\]
With similar arguments as in section 3, regularity is equivalent to the identicalness of the spaces $\CD(\CS)$ and $\CD_o(\CS)$. As the latter space is a closed subspace, we obtain the following theorem.
\begin{theorem}
Let $\FX_\Gamma$ be a metric graph and $\CS$ a graph Dirichlet form. Then the orthogonal complement $\CH$ of the space $\CD_o(\CS)$ consists of all functions $u\in \CD(\CS)$ such that for all $\phi\in \CD_c(\CS)$ we have
\[\CS(u,\phi) + (u,\phi) = 0.\]
In particular we have the orthogonal decomposition
\[\CD(\CS) = \CD_o(\CS) \oplus \CH_1.\]
\end{theorem}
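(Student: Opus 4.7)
The plan is to recognize this as the standard orthogonal projection in the Hilbert space $(\CD(\CS), \|\cdot\|_\CS)$, where the inner product compatible with the energy norm is
\[\langle u,v\rangle_{\CS,1} := \CS(u,v) + (u,v)_{L^2(\FX_\Gamma,\omega)}.\]
First I would record that $\CS$ is a closed form (this was established by the closedness lemma preceding the definition of a graph Dirichlet form), so $\CD(\CS)$ equipped with $\langle \cdot,\cdot\rangle_{\CS,1}$ is a genuine Hilbert space.

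Next I would note that by definition
\[\CD_o(\CS) = \overline{\CD(\CS)\cap \CC_c(\FX_\Gamma)}^{\|\cdot\|_\CS},\]
so $\CD_o(\CS)$ is a closed subspace of $\CD(\CS)$ with respect to $\|\cdot\|_\CS$. Applying the projection theorem in Hilbert spaces yields the orthogonal decomposition
\[\CD(\CS) = \CD_o(\CS) \oplus \CH_1,\]
where $\CH_1 := \CD_o(\CS)^\perp$ is taken with respect to $\langle\cdot,\cdot\rangle_{\CS,1}$. This is the content of the "in particular" clause.

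It remains to characterize $\CH_1$ by testing only against $\CD_c(\CS) = \CD(\CS)\cap \CC_c(\FX_\Gamma)$ instead of the whole $\CD_o(\CS)$. One direction is trivial: any $u\in \CH_1$ is $\langle\cdot,\cdot\rangle_{\CS,1}$-orthogonal to every element of $\CD_o(\CS)$, in particular to every $\phi\in \CD_c(\CS)$, so $\CS(u,\phi)+(u,\phi)=0$. For the converse, given $u\in \CD(\CS)$ with this property and an arbitrary $v\in \CD_o(\CS)$, choose a sequence $(\phi_n)\subset \CD_c(\CS)$ with $\phi_n\to v$ in $\|\cdot\|_\CS$; continuity of the inner product gives
\[\langle u,v\rangle_{\CS,1} = \lim_{n\to\infty} \bigl(\CS(u,\phi_n)+(u,\phi_n)\bigr) = 0,\]
so $u\in \CH_1$. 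This is a routine density-plus-continuity argument and is the only step that needs any verification beyond invoking the projection theorem; there is no genuine obstacle, since the Cauchy--Schwarz inequality with respect to $\langle\cdot,\cdot\rangle_{\CS,1}$ gives the required continuity of $\phi\mapsto \CS(u,\phi)+(u,\phi)$ on $(\CD(\CS),\|\cdot\|_\CS)$ for fixed $u$.
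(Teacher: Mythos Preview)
Your proof is correct and follows essentially the same approach as the paper: recognize that $\CS(u,\phi)+(u,\phi)$ is the inner product associated with the energy norm $\|\cdot\|_\CS$, note that $\CD_o(\CS)$ is a closed subspace of the Hilbert space $\CD(\CS)$, and apply the projection theorem. Your version is in fact more detailed, spelling out the density argument that reduces orthogonality against $\CD_o(\CS)$ to orthogonality against the dense core $\CD_c(\CS)$; the paper leaves this step implicit.
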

\begin{proof}
The orthogonality claim follows, as the condition is nothing but orthogonality with respect to the scalar product coming from the energy norm $\|\cdot\|_\CS$. The claim on the orthogonal decomposition follows as the space $\CD_o(\CS)$ is a closed subspace.
\end{proof}
We refer to the elements of $\CH_1$ as $1$-harmonic functions. The theorem can also be interpreted as existence theorem for $1$-harmonic functions. More generally, for $\alpha>0$ the norm defined by $(\alpha\|\cdot\|_2^2 + \CS(\cdot))^\frac{1}{2}$ defines an equivalent norm on $\CD(\CS)$. In particular the closures of the set $\CD(\CS)\cap \CC_c(\FX_\Gamma)$ are identical. The orthogonal complement of $\CD_o(\CS)$ with respect to the modified inner product is given as the space $\CH_\alpha$ of $\alpha$-harmonic functions. In particular one obtains the orthogonal decomposition
\[\CD(\CS) = \CD_o(\CS) \oplus \CH_\alpha.\]
This gives that for all $\alpha>0$ the spaces $\CH_\alpha$ are isomorphic. We continue the study of harmonic functions in more detail in chapter 3.\\
We finish this section with a short discussion of harmonic functions of a graph diffusion Dirichlet form.
\begin{coro}
Let $\FX_\Gamma$ be a metric graph and $\CE$ a graph diffusion Dirichlet form. If there is a non-negligible canonical boundary point, then there exists a $1$-harmonic function in $\CD(\CS)$.
\end{coro}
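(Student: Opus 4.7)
The plan is to use the orthogonal decomposition $\CD(\CE) = \CD_o(\CE) \oplus \CH_1$ from the previous theorem (applied to the diffusion Dirichlet form, which is a special case of a graph Dirichlet form with $\CQ \equiv 0$) and to exhibit a function in $\CD(\CE)$ that fails to lie in $\CD_o(\CE)$. The natural candidate is produced by the non-negligibility assumption itself: if $\hat{y}$ is the non-negligible canonical boundary point, then by definition there is a ray $\Fp$ to $\hat{y}$ and a function $u\in\CD(\CE)$ with $\lim_\Fp u \neq 0$. Showing $u\notin\CD_o(\CE)$ gives a nonzero projection onto $\CH_1$.

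First, I would make sense of evaluation at $\hat{y}$ as a functional on $\CD(\CE)$. By Proposition~2.3 every $u\in\CD(\CE)$ satisfies the Hölder estimate $|u(x)-u(z)|^2 \leq d_c(x,z)\,\CE(u)$, and therefore extends uniquely and continuously to the canonical completion $\widehat{\FX_\Gamma}$; in particular $u(\hat{y}) := \lim_\Fp u$ is well defined. Defining $\delta_{\hat{y}}(u) := u(\hat{y})$, I would next verify continuity of $\delta_{\hat{y}}$ on $(\CD(\CE), \|\cdot\|_\CE)$: pick any $x_0\in\FX_\Gamma$ with $d_c(x_0,\hat{y})<\infty$ (such an $x_0$ exists because a non-negligible boundary point requires a ray of finite canonical length), and combine
\[|u(\hat{y})|^2 \leq 2|u(x_0)|^2 + 2\, d_c(x_0,\hat{y})\,\CE(u)\]
with the local Sobolev embedding of Theorem~2.9 (or the corollary preceding it), which controls $|u(x_0)|$ by a multiple of $\|u\|_\CE$ on any compact subgraph containing $x_0$.

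The key observation is then that $\delta_{\hat{y}}$ vanishes on $\CD_c(\CE)$: any $\varphi\in\CD_c(\CE)$ has compact support, so along the ray $\Fp$ it is eventually zero, hence its continuous extension to $\hat{y}$ vanishes. By the continuity of $\delta_{\hat{y}}$ established above, $\delta_{\hat{y}}\equiv 0$ on the $\|\cdot\|_\CE$-closure $\CD_o(\CE)$. Since $\delta_{\hat{y}}(u)\neq 0$ by the non-negligibility assumption, $u\notin\CD_o(\CE)$, and its projection onto $\CH_1$ under the decomposition $\CD(\CE)=\CD_o(\CE)\oplus\CH_1$ is a nonzero $1$-harmonic function.

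The only real technical point is ensuring that $\delta_{\hat{y}}$ is actually $\|\cdot\|_\CE$-continuous; this is the step where one must invoke both the Hölder/path-Poincaré inequality (to pass to the completion and to control the increment from $x_0$ to $\hat{y}$) and the local Sobolev embedding (to bound $|u(x_0)|$). Everything else is a direct application of the orthogonal decomposition and the definition of negligibility.
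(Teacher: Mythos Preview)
Your proof is correct and follows essentially the same route as the paper. The paper's two-line argument invokes Theorem~2.24 (whose proof shows that every $u\in\CD_o(\CE)$ vanishes at each canonical boundary point) to conclude that $\CE$ is not regular, and then the orthogonal decomposition $\CD(\CE)=\CD_o(\CE)\oplus\CH_1$ yields a nonzero element of $\CH_1$; what you have done is spell out explicitly the continuity of the evaluation functional $\delta_{\hat y}$ that underlies that citation, which is a perfectly good (and arguably cleaner) way to package the same idea.
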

\begin{proof}
The existence of such a boundary point gives by Theorem 2.24 that the form $\CE$ is not regular. The previous theorem gives the existence of such a harmonic function.
\end{proof}

\section*{Notes and remarks}
The idea behind the introduction of the forms is to have a rather general form of a Dirichlet form. This kind of Dirichlet form is new in the study of metric graphs. Using the length transformation gives us good tools to study analytical questions such as embedding properties and regularity issues. Thus all the material presented here is new and might give some new insights into related topics such as weighted Sobolev spaces, general Dirichlet forms and analysis on metric measure spaces. It would be interesting to try to compare general Dirichlet forms on graphs with graph Dirichlet forms. At least for certain global properties this should be possible. From the metric graph perspective, this chapter is also interesting as we avoid any lower bound on the edge lengths. In a certain sense, decreasing edge lengths should correspond to an increasing curvature, though the latter concept has no definition yet. The introduction of graph Dirichlet forms was mainly motivated from the theory of quantum graphs. This will be treated in the upcoming interlude. 
\chapter*{Interlude - Concerning Quantum Graphs}
In this interlude we have a look at so-called quantum graphs and show that they fit into our setting. This enables us to apply tools from the theory of regular Dirichlet forms to quantum graphs. We start with recalling some notation and basic results concerning infinite quantum graphs. As a major difference to diffusion Dirichlet forms, one considers a different subspace of $\bigoplus\limits_{e\in E} W^{1,2}(0,l(e))$ as domain of definition of the energy form on the metric graph. This subspace heavily involves the values of the components on the boundary of the intervals. A disadvantage of this is that functions which are continuous in a neighborhood of the vertices fail to belong to the domain in general. Before we can make this precise we need to recall some notation concerning quantum graphs. Note that in this interlude we just treat the case $\nu \equiv 1$, which also gives that the canonical metric and the intrinsic metric agree with the given one. Moreover a typical assumption in quantum graph theory is that there is a uniform lower edge length. In this interlude we will also make this assumption.
\begin{assumption}\index{assumption!length}
We assume that the length function $l:E\to \IR$ has a positive lower bound
\begin{equation*}
\inf\limits_{e \in E} l(e) >0.
\end{equation*}
\end{assumption}
Let $\FX_\Gamma$ be an infinite metric graph. Recall that any function $u:\FX_\Gamma \to \IR$ can be given in vertex coordinates by its components $u_x(r,k)$. We use this for the next definition.
\begin{definition}\index{trace}
For $u:\FX_\Gamma \to \IR$ we define the trace $\tr{x} u \in \IR^{\deg{x}}$ of $u$ in the point $x\in \FV$ as
\[\tr{x} (u) := (\lim_{r\to 0} u_x(r,k))_{k=1}^{\deg(x)},\]
whenever the limits exist.
The trace of $u$ is then defined as
\[\tr{} u := (\tr{x} u)_{x\in \FV}.\]
\end{definition}
In the theory of quantum graphs one relates the values of the trace of a function with the value of the trace of its derivative. The question is whether these values exist or not. In the case of a continuous function for instance the trace exists and is a multiple of the vector with constant entries. For what we want to deal with, the following lemma gives a good class of functions.
\begin{lemma}
Let $\FX_\Gamma$ be a metric graph. Then for all $x\in \FV$ the mapping
\[ \tr{x}: \prod_{e\in E} W^{1,2}(0,l(e)) \to \IR^{\deg{x}},\]
\[u\mapsto \tr{x} u\]
is continuous and the mapping
\[\tr{}: \bigoplus_{e\in E} W^{1,2}(0,l(e)) \to \bigoplus_{x\in \FV} \IR^{\deg(x)},\]
\[u\mapsto \tr{} u\]
is continuous. In particular, the mapping
\[u \mapsto (\tr{} u , \tr{} u')\]
is well-defined and continuous as mapping from $\bigoplus\limits_{e\in E} W^{2,2}(0,l(e))$ to $\bigoplus\limits_{x\in \FV} \IR^{\deg(x)} \oplus  \bigoplus\limits_{x\in \FV} \IR^{\deg(x)}$.
\end{lemma}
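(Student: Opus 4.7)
The plan is to reduce every assertion to the standard scalar Sobolev embedding $W^{1,2}(0,l) \hookrightarrow C([0,l])$ (and $W^{2,2}(0,l) \hookrightarrow C^1([0,l])$), and then leverage the uniform lower bound on edge lengths from the standing assumption to obtain uniform constants across the whole graph.

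First I would recall the scalar trace inequality: for every $l > 0$ and $u \in W^{1,2}(0,l)$, the unique continuous representative satisfies
\[|u(0)|^2 + |u(l)|^2 \leq C(l)\,\|u\|_{W^{1,2}(0,l)}^2,\]
and by the fundamental theorem of calculus together with an elementary computation, $C(l)$ can be chosen uniformly bounded on $[l_0, \infty)$ for any fixed $l_0 > 0$. Under the standing assumption $\inf_{e \in E} l(e) > 0$, this yields a single constant $C > 0$ valid on every edge.

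For the continuity of $\tr{x}$ on $\prod_{e \in E} W^{1,2}(0,l(e))$, note that $\tr{x} u$ involves only the components $u_e$ with $e \sim x$, of which there are $\deg(x) < \infty$. Applying the scalar estimate on each incident edge, in the appropriate endpoint coming from the orientation, gives
\[|\tr{x} u|_{\IR^{\deg(x)}}^2 \leq C \sum_{e \sim x} \|u_e\|_{W^{1,2}(0,l(e))}^2,\]
which is exactly the continuity statement for the finitely many coordinates determining $\tr{x}$ from the product topology.

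For the continuity of $\tr{}$ on $\bigoplus_{e \in E} W^{1,2}(0,l(e))$, I would rearrange the sum over vertices as a sum over edges, using that the graph is simple so each edge $e$ has two distinct endpoints $\partial^+(e), \partial^-(e)$ contributing one entry each:
\[\sum_{x \in \FV} |\tr{x} u|_{\IR^{\deg(x)}}^2 = \sum_{e \in E} \bigl(|u_e(0)|^2 + |u_e(l(e))|^2\bigr) \leq C \sum_{e \in E} \|u_e\|_{W^{1,2}(0,l(e))}^2,\]
which is continuity with respect to the direct-sum norm. The final claim for $\bigoplus_{e \in E} W^{2,2}(0,l(e))$ follows by the same scheme applied to both $u$ and $u'$, using $W^{2,2}(0,l) \hookrightarrow C^1([0,l])$ with a constant uniform under the edge length bound.

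The one step that deserves care is the bookkeeping in the rearrangement from vertices to edges: a given $|u_e(0)|^2$ appears at the vertex $\partial^+(e)$ in the slot corresponding to $e$ in the enumeration of incident edges, and analogously for $|u_e(l(e))|^2$ at $\partial^-(e)$. Simplicity of the graph (no loops, no multiple edges, as already fixed by the proper-vertex construction in Section 1) ensures this is a genuine bijection between endpoint contributions and $(\text{vertex}, \text{incident edge})$ pairs, so no double counting or merging occurs. This is the only place where the proof is not completely mechanical, but it is a standard matching and presents no real obstacle.
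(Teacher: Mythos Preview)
Your proof is correct and follows exactly the approach of the paper, which simply states that everything follows from the one-dimensional Sobolev inequality, uniform by the assumption on the edge lengths. Your writeup merely spells out the details (the scalar trace estimate, the vertex-to-edge rearrangement, and the $W^{2,2}\hookrightarrow C^1$ step) that the paper leaves implicit.
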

\begin{proof}
Everything follows from the one-dimensional Sobolev inequality, as it is uniform by assumption on the lengths of the edges.
\end{proof}
Note that since $\tr{x} (u)$ maps into $\IR^{\deg(x)}$, we can identify the  canonical basis of the latter one, with the edges incident to $x$.\medskip\\
We now follow the approach in \cite{Ku-04} to define quantum graphs. However, since we are only interested in Dirichlet forms, we restrict the class of quantum graphs under consideration. In particular we will call them Dirichlet quantum graphs. We thus follow \cite{KKVW-09,SV-11}, where all quantum graphs giving rise to Dirichlet forms have been characterized. To do so, for each $x\in \FV$ we choose a subspace $X_x$ of $\IR^{\deg(x)}$ which is additionally a Stonean sublattice, i.e.
\begin{itemize}
\item $y\in X_x$ implies $|y|\in X_x$,
\item $y\in X_x$ implies $y\wedge 1 \in X_x$,
\end{itemize}
where $|y|=(|y_1|,\dots,|y_{\deg(x)}|)$ and $y\wedge 1 = (y_1\wedge1,\dots, y_{\deg(x)}\wedge 1)$, and for each $x\in \FV$ a linear operator $L_x$ on $X_x$ which is
\begin{itemize}
\item self-adjoint, and
\item Markovian, i.e. $(L_x|y|,|y|) \leq (L_x y,y)$ and $(L_x(y\wedge 1),y\wedge 1) \leq (y\wedge 1, y\wedge 1)$ for all $y\in X_x$.
\end{itemize}
As mentioned earlier, a quantum graph is defined as a metric graph together with a form defined on a certain subspace of $\bigoplus\limits_{e\in E} W^{1,2}(0,l(e))$.
\begin{definition}\index{quantum graph}
Let $\FX_\Gamma$ be a metric graph, for each $x\in \FV$ let $X_x$ be a Stonean sublattice of $\IR^{\deg(x)}$ and $L_x$ a self-adjoint, Markovian operator on $X_x$ such that $\sup\limits_{x\in \FV} \|L_x\| < \infty$. Let $\Fh$ be the quadratic form on $L^2(\FX_\Gamma)$ acting as
\[\Fh(u)= \sum_{e\in E} \int\limits_0^{l(e)} |u_e'(t)|^2 \:dt + \sum_{x\in V} (L_x \tr{x} (u) , \tr{x} (u))\]
with domain
\[\CD(\Fh) = \{(u_e)_{e\in E} \in \bigoplus_{e\in E} W^{1,2} (0,l(e)) \mid \forall x\in V : \tr{x} (u) \in X_x\}.\]
The pair $(\FX_\Gamma, \Fh)$ is called a Dirichlet quantum graph.
\end{definition}
In \cite{Ku-04} and \cite{KKVW-09,SV-11} it was shown that such forms are in one-to-one correspondence with self-adjoint, Markovian extensions of the operator defined on $\bigoplus\limits_{e\in E} C_c^\infty(0,l(e))$ acting edgewise as the second derivative. One can show that the associated operators to these forms are given by the family of orthogonal projections $P_x$ onto $X_x^\perp$ and the family of self-adjoint operators $L_x$ via the boundary conditions for a function $u\in \bigoplus\limits_{e\in E} W^{2,2}(0,l(e))$ given as
\begin{align*}
P_x \tr{x}(u)=0 \qquad \text{and} \qquad L_x (\Id-P_x) \tr{x}(u)=(\Id-P_x) \tr{x}(u').
\end{align*}
In \cite{Ku-04} these boundary conditions were used to find self-adjoint Laplacians with associated forms. In \cite{KKVW-09,SV-11} it was characterized when an operator using these boundary conditions is associated to a Dirichlet form on finite, compact metric graphs.\medskip\\
That the form $\Fh$ is well defined follows from the bounds on the norms of $L_x$ and the lemma above. In our language the $L_x$'s are $\CE$-bounded. Apart from the fact that we restrict to this case, one may think that Dirichlet quantum graphs are much more general than the forms defined in the previous sections. However, we will show that they actually arise as a special case. On the other hand, the case of a diffusion Dirichlet form is a special case of Dirichlet quantum graph. Then they are called Kirchhoff conditions, which will arise in the next chapter again. They are a special case of the boundary conditions of the form $(X_x,L_x)$, introduced by Kuchment for quantum graphs. These continuity conditions can be expressed as $\mathrm{lin} \{\mathbf{1}_{\{1,\dots, \deg(x)\}} \} = X_x$, i.e.
\[\tr{x} u \in \mathrm{lin} \{\mathbf{1}_{\{1,\dots, \deg(x)\}} \}.\] It is easy to see that not each Stonean sublattice is of the form above. Assume that for some vertex one has chosen a Stonean sublattice which is different from the Stonean sublattice which gives rise to continuous functions. Then any function which is continuous in this vertex and nonzero cannot belong to the domain $\CD(\Fh)$. Fortunately, as we will show, we can change the graph topology and introduce a new graph, such that the forms stay the same in a precise meaning, but the new form contains also absolutely continuous functions with compact support. As was mentioned in \cite{KKVW-09}, each Stonean sublattice $X_x$ of $\IR^{\deg(x)}$ can be expressed as
\[X_x = \mathrm{lin} \{ \mathbf{1}_{C_j} \mid j=1,\dots, m\}\]
where $C_1,\dots,C_m$ is a partition of a subset of $\{1,\dots,\deg(x)\}$ with $m$ being the dimension of $X_x$. In particular such a partition gives a partition of the edges in $\bigstar(x)$, which we will also denote by the $C_j$'s. If now $e_1,e_2\in C_j$ for some $j\in \{1,\dots,m\}$ then
\[\lim_{t\to x} u_{e_1} (t) = \lim_{t\to x} u_{e_2} (t).\]
Since this holds for each $j$ and all edges corresponding to $C_j$ we obtain some kind of continuity. For edges $e\in E$ which do not belong to any of the $C_j$ we have
\[\lim_{t\to v} u_e(t)=0.\]
This corresponds to Dirichlet boundary conditions and one may call such an edge open, since the value of $u$ is prescribed to be zero at this boundary point.\medskip\\
This observation leads us immediately to our main idea. We will construct an appropriate graph, by dividing bad vertices into more vertices such that functions in the domain are actually continuous. In the spirit above, each such $C_j$ serves as a new vertex. In figure \ref{fig:qg} such a cutting procedure is shown.
\begin{figure}
\begin{minipage}[hb]{6cm}
\begin{tikzpicture}[scale=0.7]
\def\anz{6}
\newlength{\radius}\setlength{\radius}{0.1cm}
\fill (0:0) circle (\radius);
\foreach \x in {1,2,...,\anz}{\draw (\x*60:0) -- (\x*60:2.5);
 \node at (\x*60:3) {${\x}$};
};
\end{tikzpicture}
\end{minipage}
\begin{minipage}[hb]{6cm}
\begin{tikzpicture}[scale=0.7]
\setlength{\radius}{0.1cm}
\fill (120:1) circle (\radius);\node at (120:0.5) {$C_1$};
\foreach \x in {1,2,3} {
\draw (120:1)--($(120:1)+(\x*60:2.5) $);
\node at ($(\x*60:3)+(120:1)$) {$\x$};
};
\fill (240:1) circle (\radius);\node at (240:0.5) {$C_2$};
\foreach \x in {4,5} {
\draw (240:1)--($(240:1)+(\x*60:2.5) $);
\node at ($(\x*60:3)+(240:1)$) {$\x$};
};
\fill (0:1) circle (\radius);\draw (0:1) -- (0:3.5);\node at (0:4) {$6$};\node at (0:0.5) {$C_3$};
\end{tikzpicture}
\end{minipage}
\caption{Illustration of a star graph of degree six with $C_1=\{1,2,3\}$, $C_2=\{4,5\}$ and $C_3=\{6\}$.}
\label{fig:qg}
\end{figure}
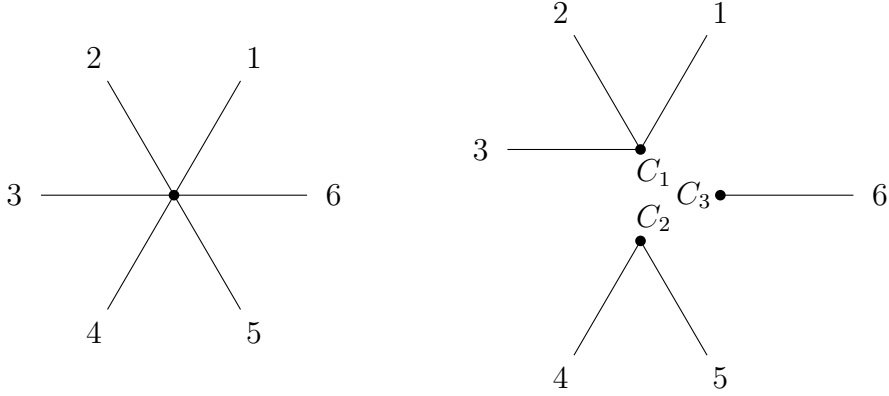
Applying this cutting procedure to each vertex gives the following theorem.
\begin{theorem}\index{quantum graph!representation}
Let $(\FX_\Gamma,\Fh)$ be a Dirichlet quantum graph. Then there exists a metric graph $\FX_{\widetilde{\Gamma}}$ and a graph Dirichlet form $\CS$ on $L^2(\FX_{\widetilde{\Gamma}})$ such that the Dirichlet spaces corresponding to $\Fh$ and $\CS$ are equivalent, i.e. there exists an algebraic isomorphism $\Phi:\CD(\Fh)\cap L^\infty(\FX_\Gamma) \to \CD(\CS) \cap L^\infty (\FX_{\widetilde{\Gamma}})$ such that
\begin{itemize}
\item $\|u\|_\infty = \|\Phi(u)\|_\infty$,
\item $\|u\|_2 = \|\Phi(u)\|_2$,
\item $\Fh(u) = \CS(u)$.
\end{itemize}
Moreover, the metric graph can be chosen such that $\CS$ is a regular Dirichlet form and the discrete part of $\CS$ is $\CE$-bounded.
\end{theorem}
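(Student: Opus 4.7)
The plan is to build $\FX_{\widetilde\Gamma}$ by vertex-splitting and to read off $\CS$ from the diffusion together with the self-adjoint Markovian operators $L_x$. At each $x\in\FV$, express $X_x=\mathrm{lin}\{\mathbf 1_{C_1^x},\dots,\mathbf 1_{C_{m_x}^x}\}$ where $C_1^x,\dots,C_{m_x}^x$ partition a subset $C^x\subseteq\bigstar(x)$, replace $x$ by new vertices $x^{C_1^x},\dots,x^{C_{m_x}^x}$, and reattach each edge $e\in C_j^x$ to $x^{C_j^x}$; each edge $e\notin C^x$ is left half-open at its $x$-end, so the old endpoint becomes a canonical boundary point of the resulting (possibly incomplete) metric graph $\FX_{\widetilde\Gamma}$. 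Edge lengths are inherited, so $\nu\equiv 1$. Take $\CS:=\CE_o+\CQ$ on $\CD_o(\CE)\cap\CD(\CQ)$, where $\CE$ is the graph diffusion Dirichlet form with $a\equiv b\equiv 1$, and $\CQ$ is obtained by Beurling--Deny decomposition of each $L_x$: writing $(L_x^{ij})$ for its matrix in the basis $(\mathbf 1_{C_j^x})$, set $j(x^{C_i^x},x^{C_j^x}):=-L_x^{ij}$ for $i\neq j$ and $k(x^{C_j^x}):=\sum_i L_x^{ij}$. Markovianity of $L_x$ gives $j,k\ge 0$; the uniform bound $\sup_x\|L_x\|<\infty$ supplies the estimates below.

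Define $\Phi\colon\CD(\Fh)\cap L^\infty\to\CD(\CS)\cap L^\infty$ by the edgewise identity $\tilde u_e:=u_e$. The condition $\tr x(u)\in X_x$ says precisely that $\lim_{t\to x}u_e(t)$ is constant on each $C_j^x$ and vanishes on its complement -- equivalently, $\tilde u$ is continuous at each $x^{C_j^x}$ and extends continuously by zero to each Dirichlet canonical boundary point. Conversely, any $\tilde u\in\CD(\CS)\cap L^\infty\subset\CD_o(\CE)$ embeds into $\CC_o(\widehat{\FX_{\widetilde\Gamma}})$ by Theorem 2.27 (applicable since $\nu\equiv 1$ is uniformly positive on every infinite ray), hence vanishes at each canonical boundary point and is continuous at each split vertex, so the edgewise restriction $u:=(\tilde u_e)_e$ satisfies $\tr x(u)\in X_x$, i.e.\ $u\in\CD(\Fh)\cap L^\infty$. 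Bijectivity, linearity and preservation of the algebra structure are immediate, as are $\|\tilde u\|_2=\|u\|_2$ and $\|\tilde u\|_\infty=\|u\|_\infty$; the Beurling--Deny computation yields $\CQ(\tilde u)=\sum_x(L_x\tr x(u),\tr x(u))$, whence $\CS(\tilde u)=\Fh(u)$.

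Finally, $\CQ$ is $\CE$-bounded: the trace lemma of the interlude, whose proof uses $\inf_e l(e)>0$, delivers a uniform constant $C$ with $\sum_x|\tr x(u)|^2\leq C(\|u\|_2^2+\CE(u))$, and the bound $|(L_x\xi,\xi)|\leq\|L_x\|\,|\xi|^2$ together with $\sup_x\|L_x\|<\infty$ then gives $\CQ(u)\leq C'(\|u\|_2^2+\CE(u))$; this is the $\CE$-boundedness required by Proposition 2.35. Regularity of $\CS$ follows: $\CD_c(\CE)\subset\CD(\CS)$ is $\|\cdot\|_\CE$-dense in $\CD_o(\CE)$ by the definition of $\CD_o$, and the $\CE$-boundedness of $\CQ$ lifts this to $\|\cdot\|_\CS$-density in $\CD(\CS)$; uniform density in $\CC_c(\FX_{\widetilde\Gamma})$ is the edgewise approximation argument from Section 2.3. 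The main obstacle is the treatment of the \emph{Dirichlet edges} $e\notin C^x$: their boundary condition $\lim_{t\to x}u_e(t)=0$ is not directly encoded by a finite-degree vertex of a graph Dirichlet form -- a degree-$1$ vertex would impose no constraint, while deleting the edge would destroy the energy contribution $\int_0^{l(e)}|u_e'|^2\,dt$. The resolution is to treat the $x$-end as a canonical boundary point of the incomplete metric graph and to pass to $\CS_o=\CE_o+\CQ$, at which point the desired pointwise vanishing emerges from the Sobolev embedding $\CD_o(\CE)\hookrightarrow\CC_o(\widehat{\FX_{\widetilde\Gamma}})$, with $\inf_el(e)>0$ (hence $\nu\equiv 1$ uniformly positive) being exactly the hypothesis that makes this embedding available.
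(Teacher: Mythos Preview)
Your proof is correct and follows essentially the same approach as the paper: vertex-splitting according to the Stonean decomposition $X_x=\mathrm{lin}\{\mathbf 1_{C_j^x}\}$, Beurling--Deny decomposition of each $L_x$ to define $j$ and $k$, and the edgewise identity as $\Phi$. The only organizational difference is that the paper first keeps the free edges attached to a vertex $C_0^x$ with an explicit vanishing condition and then removes $C_0^x$ to obtain the incomplete graph and regularity, whereas you pass directly to the incomplete graph and $\CD_o(\CE)$; your treatment of $\CE$-boundedness (via the trace lemma and $\sup_x\|L_x\|<\infty$) and regularity is somewhat more explicit than the paper's ``straightforward to show''.
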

\begin{proof}
We first construct the new graph $\widetilde{\Gamma}$. Since $\Fh$ is coming from a Dirichlet quantum graph, each vertex $x$ is equipped with a Stonean sublattice. Denote by $\{C_j^x\mid j=1,\dots,m_x\}$ the partition of a subset of $\{1,\dots,\deg(x)\}$ such that
\[X_x = \mathrm{lin} \{\mathbf{1}_{C_j^x} \mid j=1,\dots,m_x\}\]
and denote by
\[C_0^x = \{1,\dots,|\deg(x)|\}\setminus \bigcup_{j=1}^{m_x} C_j^x\]
the set of free vertices. With $\widetilde{E}$ being the same set of edges $E$, we define the new set of vertices as
\[\widetilde{V} = \Bigl(\bigcup_{x\in V} \bigcup_{j=1}^{m_x} \{C_j^x\}\Bigr) \cup \bigcup_{x\in V} \{C_0^x\}.\]
We define the incidence relation of the edge set $E$ and the new vertex set $\widetilde{V}$ as follows: let $e\in E$, then $e$ can be canonically identified with a base vector of the space $\IR^{\deg(x)}$ with $x = \partial^+(e)$. By the representation of $X_x$ using the $C_j^x$'s, we say $C_j^x$ is the initial vertex if $e$ as base vector belongs to the set $C_j^x$. Analogously, we define the new terminal vertex. We define now $\FX_{\widetilde{\Gamma}}$ to be the metric graph associated with the graph $\widetilde{\Gamma} =(\widetilde{V},\widetilde{E})$ equipped with the same lengths and the orientation from above. We now define the form $\CS$. Let $\CD(\CS)$ be the set of all function in $\CC(\FX_{\widetilde{\Gamma}})\cap \bigoplus_{e\in E} W^{1,2}(0,l(e))$ vanishing at all $C\in \widetilde{\FV}$ with $C\in C_0^x$. The diffusion part of the form is defined as $\CE(u)=\sum_{e\in E} \int\limits_0^{l(e)} |u'_e(t)|^2 dt$. To define the discrete part $\CQ$ we use the $L_x$'s. Since $L_x$ is defined on $X_x$, and the latter is orthogonally spanned by $\mathbf{1}_{C_j^x}$, we identify the base of $X_x$ with the $C_j^x$ and set
\[L_x(C_j^x, C_k^x)= (L_x \mathbf{1}_{C_j^x},\mathbf{1}_{C_k^x}).\]
We define the jump weights to be
\[j(C_j^x, C_k^x) = -L_x(C_j^x, C_k^x)\]
and the killing weights
\[k(C_j^x) = L_x(C_j^x, C_j^x) + \sum_{k=1}^{m_x} L_x(C_j^x, C_k^x).\]
Thus we have
\[\sum_{x\in \FV} (L_x \tr{x} (u),\tr{x} (u)) = \CQ(u)\]
by definition. Since both $\CD(\CS)$ and $\CD(\Fh)$ can be considered as closed subspaces of $\bigoplus_{e\in E} W^{1,2}(0,l(e))$, the choice of $\Phi$ as identity map from $\CD(\Fh)\cap L^\infty(\FX_\Gamma)$ to $\CD(\CS) \cap L^\infty (\FX_{\widetilde{\Gamma}})$, i.e. the mapping
\[\Phi((u_e)_{e\in E}) = (u_e)_{e\in E}\]
is well-defined. The algebraic properties of $\Phi$ are obvious as well as $\|u\|_\infty = \|\Phi(u)\|_\infty$ and $\|u\|_2 = \|\Phi(u)\|_2$. Note that for $u\in \bigoplus_{e\in E} W^{1,2}(0,l(e))$ the condition $\tr{x} (u) \in X_x$ for all $x\in V$ and the conditions $u\in \CC(\FX_{\widetilde{\Gamma}})$ and $u(C)=0$ for $C\in C_0^x$ for some $x\in V$ agree, which gives that $\Phi$ is one-to-one and onto as mapping from $\CD(\CS)$ to $\CD(\Fh)$. The equality of $\CS(u)$ and $\Fh(u)$ follows from the construction of the discrete part.\\
To derive the regularity statement we have to change the edges having free vertices as terminal or initial vertex. To do so, let $\widetilde{\FX}_{\widetilde{\Gamma}}$ be the metric space $\FX_{\widetilde{\Gamma}} \setminus (\bigcup_{x\in V} C_0^x)$. This is again a metric graph, but this time not complete. However, it is straightforward to show
\[\CD(\CS)= \overline{\CC_c(\widetilde{\FX}_{\widetilde{\Gamma}})\cap \bigoplus_{e\in E} W^{1,2}(0,l(e))}^{\|\cdot\|_\CS},\]
giving the regularity statement.
\end{proof}
We see from the proof of the previous theorem, that Dirichlet quantum graphs are coming from those Dirichlet forms on metric graphs, where the jump graph consists of an (in general) countable union of finite connected components. Each vertex in a Dirichlet quantum graph thus corresponds to a finite graph defined on a set of vertices coming from cutting a quantum graph vertex. Thus all the theory we develop for the graph Dirichlet forms is applicable to quantum graphs with Dirichlet form boundary conditions.

\chapter{Laplacians on metric graphs}
After having defined graph Dirichlet forms in the previous chapter, we come now to the associated operators. We use non-local Dirichlet form theory to define a weak Laplacian. This operator and associated equations will be reduced to a discrete problem in Theorem 3.6. This will be used to consider the interplay between the discrete and the continuous setting. In particular this approach leads to certain operators, the part Green operator and the harmonic extension operator, which are introduced and analyzed in section 2, Propositions 3.9 and 3.10. In particular, they appear when estimating adjoints of each other. This deep interplay is the content of Theorem 3.14. Connected with the harmonic extension are harmonic functions which are under analysis in section 3. There also Theorem 3.6 is generalized to a pointwise Krein type resolvent formula in Theorem 3.18. Moreover, classical results like maximum principle are considered, Theorem 3.22 and Lemma 3.23. There, a major role will play the functions which belong to the image of the harmonic extension. A third operator which is introduced in section 2 is the Kirchhoff Laplacian. The associated Dirichlet form will be analyzed in section 4. Here also the leitmotif $(C)$ appears, as the main focus there is the connecting between the continuous and the discrete. These ideas will be combined in section 6 to derive a Krein type resolvent formula, Theorem 3.47 and Theorem 3.48, connecting the continuous Laplacian on the metric graph and a weighted Laplacian on a discrete graph. Before doing so we introduce part Dirichlet forms in section 5 and concern approximating procedures there. This is essential in the estimation of the generator associated with the graph Dirichlet form, Theorem 3.44 and Theorem 3.45. Finally we finish section 6 with two results on essential self-adjointness, Theorem 3.49 and Theorem 3.50.
\section{The energy measure and functions locally in the domain}
Let $(\CS,\CD(\CS))$ be a graph Dirichlet form on a metric graph $\FX_\Gamma$ on $L^2(\FX_\Gamma,\lambda_a)$ as in the previous chapter, i.e.
\[\CS(u) = \CE(u) + \CQ(u)\]
where $\CE(u)$ is a diffusion Dirichlet form with edge weight $(b(e))_{e\in E}$, and $\CQ(u)$ is a discrete Dirichlet form with jump weight $(j(x,y))_{x,y\in \FV}$ and killing weight $(k(x))_{x\in \FV}$.\medskip\\
The energy measure as defined in the previous section is not limited to strongly local Dirichlet forms. A general observation in Dirichlet form theory, see \cite{FOT-11}, is that for $u,f\in \CD(\CS)_\infty$ the expression
\[\CS(uf,u)-\frac{1}{2}\CS(u^2,f)\]
is a linear, positive functional in $f$, and additionally it is bounded as one can estimate the right hand side by
\[\|f\|_\infty \cdot \CS(u).\]\index{Dirichlet form!energy measure}
Thus in the regular case the Riesz representation theorem gives that there exists a positive Radon measure $\mu_{\langle u\rangle}$ such that
\[\int\limits_{\FX_\Gamma} f d\mu_{\langle u\rangle}
= \CS(uf,u)-\frac{1}{2}\CS(u^2,f).\]
Using polarization for $u,v\in \CD(\CS)_\infty$ we introduce the signed Radon measure $\mu_{\langle u\rangle}$ by means of
\[\mu_{\langle u,v\rangle} = \frac{1}{2} (\mu_{\langle u+v\rangle}-\mu_{\langle u\rangle} -\mu_{\langle v\rangle}).\]
Hence we have the equality of for $f\in \CD(\CS)_c$
\[\int\limits_{\FX_\Gamma} f d\mu_{\langle u,v\rangle} = \frac{1}{2}\CS(uf,v)+\frac{1}{2}\CS(vf,u)-\frac{1}{2}\CS(uv,f).\]
A short calculation and splitting the energy measure for $\CS$ into its diffusion and its discrete part give\index{graph Dirichlet form!energy measure}
\[\mu_{\langle u,v\rangle} = \mu^\partial_{\langle u,v\rangle} + \mu^\delta_{\langle u,v\rangle}\]
with the differential part
\[d\mu^\partial_{\langle u\rangle} = |u'|^2 d\lambda_b\]
and the discrete part
\[\mu^\delta_{\langle u\rangle} = \sum_{x\in \FV}\sum_{y\in \FV} j(x,y)(u(x)-u(y))^2 \delta_x.\]
Now our aim is to extend the energy measure to a larger space of functions having the property that those functions locally agree with functions in the domain. For general Dirichlet forms the almost appropriate space is called the local space and is given as in chapter 1 as\index{Dirichlet form!local space}
\[\CD(\CS)_\loc = \{ u\in L^2_\loc(\FX_\Gamma) \mid \forall K\subset \FX_\Gamma \mbox{ open, rel. compact } \exists \phi\in \CD(\CS): \phi|_K = u|_K\}.\]
It is almost appropriate in the sense that functions in $\CD(\CS)_\loc$ locally agree with functions in $\CD(\CS)$. One more aim is to pair those functions with functions in $\CD(\CS)$ having compact support. However it might happen that due to the jump part this quantity is not well defined. For this reason we introduce the space of functions locally in the domain, which was introduced in \cite{FLW-12} in a more general framework.
\begin{definition}\index{graph Dirichlet form!local space}
The space of functions locally in the domain of $\CS$ is defined as
\[\CD(\CS)_\loc^* := \{u\in \CD(\CS)_\loc \mid \forall x\in \FV : \sum_{y\in \FV} j(x,y)(u(x)-u(y))^2 <\infty\},\]
\end{definition}
The following proposition is an obvious consequence of the definition in our case.
\begin{prop}
We have $u\in \CD(\CS)_\loc^*$ if and only if
\begin{itemize}
\item[(i)] $u$ is absolutely continuous,
\item[(ii)] $u'\in L^2_\loc(\FX_\Gamma)$
\item[(iii)] $u\in \CC_j(\FX_\Gamma)$.
\end{itemize}
Furthermore, $\mu_{\langle u \rangle}$ is a Radon measure.
\end{prop}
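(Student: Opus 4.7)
The plan is to verify each implication separately by unpacking the definitions of $\CD(\CS)_\loc$, $\CD(\CS)_\loc^*$, $\CC_j$, the jump-weight condition $\sum_{y}j(x,y)<\infty$, and the characterization of absolute continuity from Proposition~1.22, then finish by checking that the explicit formula $\mu_{\langle u\rangle}=|u'|^2 d\lambda_b+\sum_{x\in\FV}\bigl(\sum_{y}j(x,y)(u(x)-u(y))^2\bigr)\delta_x$ is locally finite.

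For the forward direction, assume $u\in\CD(\CS)_\loc^*$. Fix a relatively compact open $K$ and pick $\phi\in\CD(\CS)$ with $\phi|_K=u|_K$. Since $\CD(\CS)\subset\CD(\CE)$, $\phi$ is continuous, absolutely continuous edgewise, and $\phi'\in L^2(\FX_\Gamma,\lambda_b)$; Proposition~1.22 then upgrades this to global absolute continuity on the (sub)graph $K$, and letting $K$ exhaust $\FX_\Gamma$ gives (i) and (ii). For (iii), apply Cauchy--Schwarz to the pair $(j(x,y)^{1/2},\,j(x,y)^{1/2}|u(x)-u(y)|)$: using $\sum_y j(x,y)<\infty$ (jump-weight condition) together with the defining summability of $\CD(\CS)_\loc^*$,
\[
\sum_{y}j(x,y)|u(x)-u(y)|\le\Bigl(\sum_{y}j(x,y)\Bigr)^{1/2}\Bigl(\sum_{y}j(x,y)(u(x)-u(y))^2\Bigr)^{1/2}<\infty,
\]
and the triangle inequality $\sum_y j(x,y)|u(y)|\le|u(x)|\sum_y j(x,y)+\sum_y j(x,y)|u(x)-u(y)|$ yields $u\in\CC_j(\FX_\Gamma)$.

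For the converse, assume (i)--(iii). The key step is to show $u\in\CD(\CS)_\loc$; the required square-summability at each vertex is then essentially the same Cauchy--Schwarz/bounded-on-compacts argument used in the forward direction, run backwards, together with local boundedness of the continuous $u$ near $x$. To produce $\phi\in\CD(\CS)$ agreeing with $u$ on a given compact $K$, choose an edgewise affine, continuous cutoff $\chi\in\CC_c(\FX_\Gamma)$ with $\chi\equiv 1$ on a neighbourhood of $K$, and set $\phi:=\chi u$. Then $\phi$ is continuous with compact support, absolutely continuous with $\phi'=\chi'u+\chi u'\in L^2(\FX_\Gamma,\lambda_b)$, and since $\supp\phi$ meets only finitely many vertices, $\CQ(\phi)$ reduces to a finite sum of terms controlled by $\|\phi\|_\infty^2\sum_y j(x,y)<\infty$, so $\phi\in\CD(\CS)$.

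Finally, the Radon property of $\mu_{\langle u\rangle}$ is immediate from the decomposition displayed at the start of the section: the diffusion contribution $|u'|^2 d\lambda_b$ is a locally finite measure because $u'\in L^2_\loc$, while the jump contribution is a countable sum of atoms at the vertices of $\FV$ --- a closed discrete set --- with each atomic mass $\sum_y j(x,y)(u(x)-u(y))^2$ finite by the very definition of $\CD(\CS)_\loc^*$, so only finitely many atoms lie in any compact set. The main (mild) obstacle is pinning down the back-and-forth between the $\CC_j$-condition and the squared sum; once continuity and the jump-weight finiteness $\sum_y j(x,y)<\infty$ are exploited, everything reduces to Cauchy--Schwarz and standard cutoff constructions.
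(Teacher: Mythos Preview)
The paper itself offers no proof beyond calling the proposition ``an obvious consequence of the definition,'' so there is no argument to compare against. Your forward direction is correct, your cutoff construction showing that (i)--(ii) force $u\in\CD(\CS)_\loc$ is correct, and the Radon-measure claim is fine.

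The genuine gap is the step $(\mathrm{iii})\Rightarrow$ ``$*$-condition'' in the converse. You say this is ``the same Cauchy--Schwarz argument run backwards, together with local boundedness of $u$ near $x$,'' but Cauchy--Schwarz does not reverse here: from $\sum_y j(x,y)<\infty$ and $\sum_y j(x,y)|u(y)|<\infty$ one cannot deduce $\sum_y j(x,y)u(y)^2<\infty$, and it is precisely this last sum that governs $\sum_y j(x,y)(u(x)-u(y))^2$. Local boundedness near $x$ is irrelevant, since the $y$ with $j(x,y)>0$ need not lie metrically close to $x$. Concretely, take $\FX_\Gamma=\IR_+$ with $\FV=\IN_0$, jump weights $j(0,n)=2^{-n}$ for $n\ge 1$ and all other $j$ vanishing, and let $u$ be the continuous, edgewise affine function with $u(n)=2^{n/2}$. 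Then $u$ is absolutely continuous, $u'\in L^2_\loc$, and $\sum_n j(0,n)|u(n)|=\sum_n 2^{-n/2}<\infty$ so $u\in\CC_j(\FX_\Gamma)$; yet $\sum_n j(0,n)(u(0)-u(n))^2\ge \sum_n 2^{-n}\cdot 2^{n}-C=\infty$, so $u\notin\CD(\CS)_\loc^*$. Thus the implication you are trying to prove fails as stated, and no amount of rewriting the argument will close it; the equivalence seems to require an extra hypothesis such as boundedness of $u$ or local finiteness of the jump graph.
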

So far we have assumed that the form $\CS$ is regular. However we can actually drop this assumption by observing that the local spaces of the forms $\CS$ and $\CS_o$ agree, where $\CS_o$ is the restriction of $\CS$ to the $\|\cdot\|_\CS$ closure of the set $\CC_c(\FX_\Gamma)\cap\CD(\CS)$. With the space of functions locally in the domain in mind we can extend the Dirichlet form $\CS$ in the following way.
\begin{theorem}
Each function $u\in \CD(\CS)_\loc^*$ with compact support belongs to $\CD(\CS)$ and
\[\CS(u,v) := \int\limits_{\FX_\Gamma} d\mu^\partial_{\langle u,v\rangle} + \sum_{x\in \FV} \mu^\delta_{\langle u,v\rangle}\]
is well defined, whenever $u,v\in \CD(\CS)_\loc^*$ and either $u$ or $v$ has compact support.
\end{theorem}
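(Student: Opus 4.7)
The plan is to treat the two assertions separately: first, the membership $u\in\CD(\CS)$ for compactly supported $u\in\CD(\CS)_\loc^*$, and then the absolute convergence of the two integrals defining $\CS(u,v)$ in the stated setting.

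For the first assertion, I would proceed by a cutoff-and-product argument. Let $K=\supp u$, which is compact, and pick an open relatively compact neighborhood $U\supset K$; the definition of $\CD(\CS)_\loc$ then yields some $\phi\in\CD(\CS)$ with $\phi|_U=u|_U$. Next I would construct an auxiliary cutoff $\eta\in\CD(\CS)\cap\CC_c(\FX_\Gamma)$, say a piecewise affine hat function with $0\le\eta\le 1$, $\eta\equiv 1$ on $K$, and $\supp\eta\subset U$; such an $\eta$ lies in $\CD(\CE)$ trivially and its jump part is finite because $\FV\cap\supp\eta$ is finite (the vertex set being closed and discrete) and $\sum_y j(x,y)<\infty$ for each $x$. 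Then $u=\phi\eta$ pointwise: on $U$ both equal $u\eta$, and outside $\supp\eta$ both vanish. It remains to check that $\phi\eta\in\CD(\CS)$. The diffusion part is immediate from the product rule $(\phi\eta)'=\phi'\eta+\phi\eta'$ with $\eta\in L^\infty$ and $\eta'$ compactly supported, giving $(\phi\eta)'\in L^2(\FX_\Gamma,\lambda_b)$. The discrete part is handled via the elementary bound
\[\bigl(\phi\eta(x)-\phi\eta(y)\bigr)^2\le 2\|\eta\|_\infty^2\bigl(\phi(x)-\phi(y)\bigr)^2+2\,\phi(x)^2\bigl(\eta(x)-\eta(y)\bigr)^2,\]
the first sum being finite since $\phi\in\CD(\CQ)$, the second a finite sum over $x\in\FV\cap\supp\eta$ of terms of size $\phi(x)^2\cdot 4\sum_y j(x,y)<\infty$.

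For the second assertion I would split the two halves of the energy measure separately, assuming without loss of generality that $u$ has compact support $K$. The differential part $\int|u'||v'|\,d\lambda_b$ is an integral over $K$ of a product of two functions in $L^2(K,\lambda_b)$ (using that, by the first part, $u'\in L^2$ and, by hypothesis, $v'\in L^2_\loc$), so Cauchy--Schwarz finishes it. For the discrete part I would decompose the double sum according to whether the two vertices lie in the finite set $\FV\cap K$: terms with neither $x$ nor $y$ in $K$ contribute zero because $u(x)=u(y)=0$; for the remaining terms fix $x\in\FV\cap K$ and apply Cauchy--Schwarz in $\ell^2(\FV,j(x,\cdot))$ to obtain
\[\sum_y j(x,y)|u(x)-u(y)|\,|v(x)-v(y)|\le \Bigl(\sum_y j(x,y)(u(x)-u(y))^2\Bigr)^{1/2}\Bigl(\sum_y j(x,y)(v(x)-v(y))^2\Bigr)^{1/2},\]
which is finite since both $u$ and $v$ lie in $\CD(\CS)_\loc^*$. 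The symmetric case and the killing term $\sum_x k(x)u(x)v(x)$ (a finite sum over $\FV\cap K$) are immediate.

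The step I expect to require the most care is the bound for the second summand $\sum_x\phi(x)^2\sum_y j(x,y)(\eta(x)-\eta(y))^2$ in the product-rule estimate: the saving point is that the cutoff $\eta$ can be arranged with $\FV\cap\supp\eta$ finite, collapsing the outer sum to a finite one with each term controlled by $\|\phi|_{\supp\eta}\|_\infty^2$, the latter being finite because $\phi\in\CD(\CE)$ is continuous by the local Sobolev embedding of Theorem~2.9. Everything else — the pointwise identity $u=\phi\eta$, the product rule for absolutely continuous functions on each edge, and the Cauchy--Schwarz splittings — is routine once this observation is in place.
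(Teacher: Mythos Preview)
The paper does not prove this statement itself; it simply refers to Theorem~3.4 in \cite{FLW-12} for the general case. Your argument therefore supplies what the paper omits, and both assertions go through essentially as you describe. The treatment of the second assertion (Cauchy--Schwarz on the diffusion part over $K=\supp u$, and on each jump ``row'' $\ell^2(\FV,j(x,\cdot))$ for the finitely many $x\in\FV\cap K$) is clean and correct.

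For the first assertion, however, the cutoff-and-product route is more elaborate than necessary and has a small looseness. Once you have written $u=\phi\eta$, you are back to checking that $u$ itself lies in $\CD(\CS)$, and this can be done directly from Proposition~3.2 without ever introducing $\phi$ or $\eta$: $u$ is absolutely continuous with $u'\in L^2_\loc$, and since $u'$ vanishes off the compact set $K$ one gets $u'\in L^2(\FX_\Gamma,\lambda_b)$ and $u\in L^2(\FX_\Gamma,\lambda_a)$ immediately. For the discrete part, $u$ vanishes on $\FV\setminus F$ with $F:=\FV\cap K$ finite, so only pairs with $x\in F$ or $y\in F$ contribute and, by symmetry of $j$,
\[
\tfrac12\sum_{x,y}j(x,y)(u(x)-u(y))^2\ \le\ \sum_{x\in F}\sum_{y}j(x,y)(u(x)-u(y))^2,
\]
which is a finite sum of finite terms precisely by the defining condition of $\CD(\CS)_\loc^*$. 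This is the same mechanism you already use for the bilinear estimate, so the two parts can be handled uniformly.

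The looseness in your product-rule bound is this: the sum $\sum_{x,y}j(x,y)\,\phi(x)^2(\eta(x)-\eta(y))^2$ is not literally a finite sum over $x\in\FV\cap\supp\eta$, because pairs with $x\notin\supp\eta$ but $y\in\supp\eta$ also contribute, and for those $\phi(x)$ need not be controlled by $\|\phi|_{\supp\eta}\|_\infty$. One can still close the estimate (e.g.\ via $\phi(x)^2\le 2\phi(y)^2+2(\phi(x)-\phi(y))^2$ together with $\phi\in\CD(\CQ)$ and the jump-weight summability), but the direct argument above sidesteps the issue entirely.
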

For the proof, see the general case Theorem 3.4 in \cite{FLW-12}. We now turn our attention to the definition of some kind of a distributional operator.
\begin{definition}\index{weak Laplacian}
The weak Laplacian with respect to $\CS$ is defined as the operator
\[\widetilde{\Delta} : \dom \widetilde{\Delta} \to L^1_\loc (\FX_\Gamma)\]
on the domain given by
\[\dom \widetilde{\Delta} :=\{u\in \CD(\CS)_\loc^* \mid \exists f\in L^1_\loc(\FX_\Gamma) \forall \phi\in \CD(\CS)_c: \CS(u,\phi)=\langle f,\phi \rangle\},\]
on which it acts by
\[\widetilde{\Delta} u = f.\]
\end{definition}
By definition the weak Laplacian satisfies
\[\CS(u,v)= \langle \widetilde{\Delta} u , v\rangle\]
for all $u\in \CD(\CS)_\loc^*$ and $v\in \CD(\CS)_c$. Another application of the lemma of Du Bois-Reymond yields that this operator is well-defined in the sense that $f$ is the unique element satisfying $\widetilde{\Delta} u  = f$.
We describe the operator in the next theorem. Before we can state it we introduce the so-called Kirchhoff conditions.
\begin{definition}\index{Kirchhoff conditions}\index{discrete Laplacian}
An absolutely continuous function $u$ satisfies the $\triangle$-Kirchhoff conditions in $x\in  \FV$ if the weighted divergence
\[\partial_nu (x)= \sum_{e\sim x} b(e) u_e'(0)\]
and the weighted discrete Laplacian
\[\triangle u(x)= \sum_{y\in \FV} j(x,y)(u(x)-u(y)) + k(x)u(x).\]
of $u$ exist in the point $x\in \FV$ and are equal, i.e.
\[\partial_nu(x)= \triangle u(x)\]
holds.
\end{definition}
The following theorem tells us that the weak Laplacian is a distributional operator on the edges plus a discrete operator on the vertex set.
\begin{theorem}
Let $f\in L^1_\loc(\FX_\Gamma)$. Then $u\in \dom \widetilde{\Delta}$ is a weak solution, i.e.
\[\widetilde{\Delta} u = f\]
if and only if we have for all $e\in E$
\[-u_e'' b(e)  = f_e a(e)\]
in the distributional sense, and for all $x\in \FV$ the $\triangle$-Kirchhoff conditions
\[\partial_n u(x) = \triangle u(x)\]
are satisfied.
\end{theorem}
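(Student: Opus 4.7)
By definition, $u\in\dom\widetilde{\Delta}$ with $\widetilde{\Delta}u=f$ means that for every $\phi\in\CD(\CS)_c$ we have $\CS(u,\phi)=\langle f,\phi\rangle_{L^2(\lambda_a)}$, i.e.
\[\sum_{e\in E}\int_0^{l(e)} u_e'\phi_e'\,b(e)\,dt+\CQ(u,\phi)=\sum_{e\in E}\int_0^{l(e)}f_e\phi_e\,a(e)\,dt.\]
The plan is to vary $\phi$ so that first only the edge interiors see the identity, and afterwards only a single vertex, then glue the two pieces together. Both directions of the equivalence will follow from this decomposition; the only real obstacle is justifying the integration by parts at the vertices, which is not a priori available since $u_e'$ is only defined almost everywhere.

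First I would localize inside edges. For a fixed $e\in E$ and any $\phi_e\in C_c^\infty(0,l(e))$, extending $\phi_e$ by zero outside $e$ yields a $\phi\in\CD(\CS)_c$ all of whose vertex values vanish, so $\CQ(u,\phi)=0$. The tested identity then reads
\[\int_0^{l(e)} u_e'\phi_e'\,b(e)\,dt=\int_0^{l(e)} f_e\phi_e\,a(e)\,dt\qquad\forall\phi_e\in C_c^\infty(0,l(e)),\]
which is exactly the distributional equation $-u_e''\,b(e)=f_e\,a(e)$ on $e$. This step resolves the regularity issue alluded to above: once this ODE holds in the distributional sense, $u_e'$ admits an absolutely continuous representative on $[0,l(e)]$, so the traces $u_e'(0)$ and $u_e'(l(e))$ are well-defined and classical integration by parts becomes legal.

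Next I would localize at a single vertex $x\in\FV$. Using that the graph is locally finite and that the topology near $x$ is that of $\bigstar(x)$, construct an edgewise affine, continuous, compactly supported test function $\eta_x\in\CD(\CS)_c$ with $\eta_x(x)=1$, $\eta_x(y)=0$ for $y\in\FV\setminus\{x\}$ and $\supp\eta_x\subset\bigstar(x)$; such a function lies in $\CD(\CE)\cap\CD(\CQ)$. Choosing on each $e\sim x$ the orientation with $x$ the initial vertex, integration by parts combined with Step~1 gives
\[\int_0^{l(e)} u_e'(\eta_x)_e'\,b(e)\,dt=\int_0^{l(e)}f_e(\eta_x)_e\,a(e)\,dt-b(e)\,u_e'(0).\]
Summing over $e\sim x$ yields $\CE(u,\eta_x)=\langle f,\eta_x\rangle-\partial_n u(x)$. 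Since $\eta_x$ vanishes at all other vertices, the symmetric jump sum collapses (the pairs $(x,y)$ and $(y,x)$ pool the factor $\tfrac12$) and one checks $\CQ(u,\eta_x)=\triangle u(x)$. Hence $\CS(u,\eta_x)=\langle f,\eta_x\rangle$ is equivalent to the $\triangle$-Kirchhoff condition $\partial_n u(x)=\triangle u(x)$. Note that both sides are well defined: the diffusion term because $\deg(x)<\infty$, and the discrete term because $u\in\CD(\CS)_\loc^*\subset\CC_j(\FX_\Gamma)$.

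For the converse, assume both conditions hold and take an arbitrary $\phi\in\CD(\CS)_c$. Since $\supp\phi$ meets only finitely many edges and vertices by local finiteness, integration by parts on each edge (using Step~1) and collection of the boundary contributions at each vertex gives
\[\CE(u,\phi)=\langle f,\phi\rangle-\sum_{x\in\FV}\phi(x)\,\partial_n u(x),\]
while a direct computation as above yields $\CQ(u,\phi)=\sum_{x\in\FV}\phi(x)\triangle u(x)$. The Kirchhoff identity applied at each $x\in\supp\phi\cap\FV$ cancels the two boundary sums, leaving $\CS(u,\phi)=\langle f,\phi\rangle$. Thus $u\in\dom\widetilde{\Delta}$ with $\widetilde{\Delta}u=f$, concluding both directions of the equivalence.
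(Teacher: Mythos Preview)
Your proof is correct and takes a somewhat different route from the paper's argument. The paper never invokes integration by parts with endpoint values of $u_e'$; instead, to extract the $\triangle$-Kirchhoff condition it introduces a directed set $\Lambda_x$ of test functions supported near $x$ with $\phi(x)=1$, shows that the net $\langle f,\phi\rangle$ tends to $0$, and then computes along a specific subnet $\phi_n$ (piecewise linear with support shrinking to $x$) that $-\CE(u,\phi_n)\to\partial_n u(x)$ via averaging of $u_e'$. The converse direction is handled by decomposing $\phi$ into a part vanishing on $\FV$ plus a finite sum of such $\phi_n$'s.

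Your approach avoids this limiting machinery by first bootstrapping regularity: from the distributional edge equation and $f_e\in L^1(0,l(e))$ (note that the closed edge is compact in $\FX_\Gamma$, so $f\in L^1_{\loc}$ indeed gives $f_e\in L^1$ on the full interval) you get $u_e''\in L^1(0,l(e))$, hence $u_e'\in AC[0,l(e)]$, so the endpoint values $u_e'(0)$ exist classically and ordinary integration by parts is legitimate. This makes both directions a clean computation with a single test function $\eta_x$ per vertex rather than a net. The paper's approach has the virtue of not needing to argue this regularity step explicitly, but yours is more elementary and makes the structure of the boundary terms transparent; in particular your identity $\CQ(u,\phi)=\sum_{x}\phi(x)\triangle u(x)$ (valid since $u\in\CC_j(\FX_\Gamma)$ and $\phi$ has finite vertex support) is a useful observation that the paper's proof leaves implicit.
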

\begin{proof}
Let first $u$ be a weak solution, thus the equation
\[\CS(u,\phi)=\langle f,\phi \rangle\]
holds for all $\phi$ with compact support, and thus it holds for all with support on a fixed edge $e$. This gives that $\CQ(u,\phi)=0$ and the equation reduces to
\[\int\limits_0^{l(e)} u'_e \phi_e' d\lambda_b = \int\limits_0^{l(e)} f_e\phi_e d\lambda_a\]
and we get that $u$ is a distributional solution of $-u_e'' b(e)=f_e a(e)$. In order to obtain the $\triangle$-Kirchhoff conditions fix $x\in \FV$ and let $\phi$ be supported in a small neighborhood around $x$ containing no further vertex, and choose such a $\phi$ with $\phi(x)=1$ and $0\leq \phi\leq 1$. The set $\Lambda_x$ of all such functions forms a directed set by the partial order
\[\phi_1 \prec \phi_2 :\Leftrightarrow \supp \phi_2 \subset \supp \phi_1.\]
In particular, for two elements there exists an upper bound and therefore this set is directed. Thus $(\CS(u,\phi))_{\phi\in \Lambda_x}$ and $(\langle f,\phi \rangle)_{\phi\in \Lambda_x}$ are nets and furthermore
\[\lim_{\Lambda_x} \langle f,\phi \rangle=0\]
holds. Now, if $u$ is a weak solution of $\tilde{\Delta} u = f$ the limit of the net $(-\CE(u,\phi))_{\phi\in \Lambda_x}$ exists, since it equals
\[\sum_{y\sim x} j(x,y)(u(x)-u(y)) + k(x) u(x).\]
Thus by considering the subnet $(\phi_n)_{n\in \IN} \subset \Lambda_x$ such that $\phi_n(x)=1$, $\supp \phi_n \subset B_{\frac{1}{n}} (x)$ and linearly interpolated in-between, we get for this limit
\[ \lim_{n\to \infty} -\CE(u,\phi_n) =\lim_{n\to \infty} \sum_{e\sim x} n \int\limits_0^{\frac{1}{n}} (u'_e)_n (t) \:\nu(e) dt = \partial_n u(x)\]
i.e. the weak solution $u$ satisfies the $\triangle$-Kirchhoff conditions
\[\partial_n u(x) = \triangle u(x).\]
For the converse direction let $\phi\in \CD(\CS)$ with compact support. Then let $\tilde{\phi}\in \CD(\CS)$ such that $\phi- \tilde{\phi}$ vanishes on the vertex set. Since we have $-u''_e b(e) = f_e a(e)$ in the distributional sense, we have
\[\CS(u,\phi-\tilde{\phi})= \CE(u,\phi-\tilde{\phi}) = \langle f,\phi-\tilde{\phi} \rangle.\]
Thus it suffices to show the claim when $\tilde{\phi}$ is of the form $\phi_n$ as in the proof above. Using the $\triangle$-Kirchhoff conditions, for all $\epsilon>0$ there is $n \in \IN$ such that
\[|\CE(u, \phi_n) + \CQ(u,\phi_n)| = |\CE(u,\phi_n) + \triangle u(x)| =|\CE(u,\phi_n) + \partial_nu(x)| \leq \epsilon\]
since $-\CE(u,\phi_n) \to \partial_n u(x)$, as we have shown earlier in the proof. As $\epsilon$ does not depend on the choice of $\phi$ the proof is finished.
\end{proof}
The theorem will give us that all significant operators below are restrictions of the weak Laplacian -- at least in the regular case.
\begin{remark}
Given a transformation $\Phi:\FX_\Gamma \to \widetilde{\FX}_\Gamma$ one may wonder how this affects the weak Laplacian. Denoting with $\widetilde{\Delta}_\CS$ the operator coming from the original form $\CS$ and by $\widetilde{\Delta}_{\widetilde{\CS}}$ the operator coming from the image Dirichlet form $\widetilde{\CS}$, an easy calculation shows with $u\circ \Phi  =: \widetilde{u}$ that the following holds true:
\begin{itemize}
\item[(i)] $u\in \CD(\CS)^*_\loc$ if and only if $\widetilde{u} \in \CD(\widetilde{\CS})^*_\loc $,
\item[(ii)] $u\in \dom \widetilde{\Delta}_{\CS}$ if and only if $\widetilde{u} \in \dom \widetilde{\Delta}_{\widetilde{\CS}}$
\item[(iii)] $\widetilde{\Delta}_\CS u = f$ if and only if  $\widetilde{\Delta}_{\widetilde{\CS}} \widetilde{u} = \widetilde{f}$, for $f\in L_\loc^1(\FX_\Gamma)$ and $\widetilde{f}= f\circ \Phi$.
\end{itemize}
To see $(iii)$, we use the previous theorem and get that $\widetilde{\Delta}_\CS u = f$ is equivalent to
\[-u_e'' b(e)  = f_e a(e)\]
on each edge and
\[\sum_{e\sim x} b(e) u_e'(0) = \triangle u(x)\]
for each vertex. Now we apply the transformation rules for the weights $l(e),a(e),b(e)$ and get that both families of equations turn into
\[-u_e'' \tfrac{l(e)^2}{\tilde{l}(e)^2} \tilde{b}(e) = f_e \tilde{a}(e)\]
on each edge and
\[\sum_{e\sim x} \tilde{b}(e) u_e'(0)\tfrac{l(e)}{\tilde{l}(e)} = \triangle u(x).\]
By the chain rule for $\tilde{u}_e (t) = u_e(\tfrac{l(e)}{\tilde{l}(e)} t)$ we see that this is equivalent to $\widetilde{\Delta}_{\widetilde{\CS}} \widetilde{u} = \widetilde{f}$.\\
Especially point $(iii)$ is interesting for us as it gives us the possibility to choose an appropriate normal form of the graph. Choosing the canonical representation we obtain unweighted Kirchhoff conditions, or more precisely the divergence is unweighted. However the edgewise equation stays weighted. Choosing the intrinsic representation we get an unweighted equation on each edge but keep the weighted divergence.
\end{remark}
\section{The weak Laplacian under the magnifying glass}
From the previous section we see that existence and uniqueness of the weak equations are related to a discrete problem which is phrased in terms of the continuity conditions and the $\triangle$-Kirchhoff conditions. In the situation of a compact graph, the equation above results in a finite dimensional problem and thus solvability of the equation is completely determined by the discrete problem. The situation is different in the non-compact case and will be analyzed below.\medskip\\
In this section we consider weak equations of the type
\[\widetilde{\Delta} u + \alpha u =f.\]
To be more precisely we start with the following definition.
\begin{definition}\index{weak Laplacian!weak solution}
Let $f\in L^1_\loc(\FX_\Gamma)$ and $\alpha\geq0$. Then a function $u\in \dom \widetilde{\Delta}$ is called a weak $\alpha$-solution if
\[\CS(u,\phi) + \alpha \langle u,\phi\rangle = \langle f, \phi\rangle\]
holds for all $\phi\in \CD(\CS)$ with compact support.
\end{definition}
Our aim is to reduce this equation to a purely discrete problem. For this purpose we choose the intrinsic representation. Assume that $u$ is a weak $\alpha$-solution and let $e=(x,y)\in E$. Then we know from the previous section that $u_e$ is a distributional solution of
\[-u_e'' + \alpha u_e = f_e.\]
This could be solved by use of Green's functions. Let therefore $h_e^\alpha$ be the function on $[0,l_i(e)]$ defined by
\[h_e^\alpha(t) := \frac{\sinh(\sqrt{\alpha} (l_i(e)-t))}{\sinh(\sqrt{\alpha} l_i(e))}.\]
for $\alpha >0$ and
\[h_e^0 (t):= \frac{l_i(e)-t}{l_i(e)}.\]
It is clear that the functions $h_e^\alpha(t)$ and $h_e^\alpha(l_i(e)-t)$ form a fundamental system of the above differential equation with $h_e^\alpha(0)=1$ and $h_e^\alpha(l_i(e))=0$. The Green function is then defined as
\[G^\alpha_e(t,s)= -\frac{1}{{h_e^\alpha}'(l_i(e))} \begin{cases} h_e^\alpha(s)h_e^\alpha(l_i(e)-t) & \mbox{for } 0\leq t\leq s\leq l_i(e)\\ h_e^\alpha(t) h_e^\alpha(l_i(e)-s) & \mbox{for } 0\leq s\leq t \leq l_i(e) \end{cases}.\]
Thus the unique solution of $-u_e'' + \alpha u_e = f_e$ with boundary values $u_e(0)=u(x)$ and $u_e(l_i(e))=u(y)$ can be represented as
\begin{eqnarray*}
u_e(t) &=& \int\limits_0^{l_i(e)} G^\alpha_e(t,s) f_e(s) ds + u(x) h_e^\alpha(t) + u(y) h_e^\alpha(l_i(e)-t)\\
&=& h_e^\alpha(l_i(e)-t) \Bigl(u(y) - \int\limits_t^{l_i(e)} \frac{h_e^\alpha(s)}{{h_e^\alpha}'(l_i(e))} f_e(s)ds\Bigr)\\ && + h_e^\alpha(t)\Bigl(u(x) - \int\limits_0^t \frac{h_e^\alpha(l_i(e)-s)}{{h_e^\alpha}'(l_i(e))} f_e(s)ds\Bigr).
\end{eqnarray*}
From the first equality we see that the solution is given as a sum of two functions, viz, the unique solution of the equation $-u_e'' + \alpha u_e = f_e$ with zero vertex conditions given by
\[\int\limits_0^{l_i(e)} G^\alpha_e(t,s) f_e(s) ds,\]
plus a function which is a certain interpolation of a function living on the vertex set $\FV$, given in edge coordinates by
\[u(x) h_e^\alpha(t) + u(y) h_e^\alpha(l_i(e)-t).\]
This gives rise to the next definition.
\begin{remark}
Note that the set $\FV$ with the trace topology gives a topological space equipped with the discrete topology, and thus the set of all continuous functions on $\FV$ equals the set of all functions on $\FV$. In what follows we will denote functions defined on the vertex set by capital letters.
\end{remark}
\begin{definition}
Let $\FX_\Gamma$ be a metric graph and $\CS$ a graph Dirichlet form.
\begin{itemize}\index{part Green operator}
\item[(i)] The part Green operator is defined as
\[G_\alpha: L^1_\loc (\FX_\Gamma) \to \CC(\FX_\Gamma),\]
such that
\[(G_\alpha f)_e(t) = \int\limits_0^{l_i(e)} G^\alpha_e (t,s)f_e(s) ds\]
for all $e\in E$.
\item[(ii)] The $\alpha$-harmonic extension is defined as\index{harmonic extension}
\[H_\alpha: \CC(\FV) \to \CC(\FX_\Gamma),\]
such that
\[(H_\alpha U )_e(t) = U(\partial^+(e)){h_e^\alpha}(t) + U(\partial^-(e)) {h_e^\alpha}(l_i(e)-t).\]
\end{itemize}
\end{definition}
We start with the part Green operator. The reason why we call it part Green operator is that it coincides with the part of the Dirichlet form $\CS$ on the subspace of $\CD(\CS)$ where all functions are assumed to  be zero on the set of vertices. We will come back to this in chapter 4.
\begin{prop}
The part Green operator is a contraction on $L^p$ for all $1\leq p \leq \infty$. Furthermore for $f\in L^1_\loc(\FX_\Gamma)$ we have that $(G_\alpha f)_e:(0,l_i(e)) \to \IR$ is two times differentiable.
\end{prop}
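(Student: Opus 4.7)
The plan is to exploit the fact that, by the definition of $G_\alpha$, the operator decouples along the edges: for $f\in L^1_\loc(\FX_\Gamma)$ the component $(G_\alpha f)_e$ depends only on $f_e$. Hence both assertions reduce to the corresponding statements on a single interval $(0,l_i(e))$ for the Dirichlet Green's function of the Sturm--Liouville operator $-\partial^2 + \alpha$; the global estimates on $\FX_\Gamma$ are then recovered by summing the edgewise estimates weighted by $\omega(e)$.

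For the contraction statement I would invoke Schur's test. The kernel $G_e^\alpha(t,s)$ is manifestly non-negative, since $h_e^\alpha\geq 0$ on $[0,l_i(e)]$ and $-{h_e^\alpha}'(l_i(e))=\sqrt{\alpha}/\sinh(\sqrt{\alpha}\,l_i(e))>0$ (with the analogous verification for $\alpha=0$), and it is symmetric in $(t,s)$, so the two Schur hypotheses coincide. The entire task is to control
\[
(G_e^\alpha \mathbf{1})(t) \;=\; \int_0^{l_i(e)} G_e^\alpha(t,s)\,ds.
\]
But by construction this integral is the unique classical solution of the boundary value problem $-u''+\alpha u = 1$, $u(0)=u(l_i(e))=0$, and the weak maximum principle for second-order ordinary differential operators gives the uniform bound $0\leq (G_e^\alpha \mathbf{1})(t)\leq 1/\alpha$. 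Schur's test then yields the desired norm estimate for every $p\in[1,\infty]$; the endpoints $p=1$ and $p=\infty$ also follow directly from Fubini together with the same kernel integral bound.

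For the second assertion, $(G_\alpha f)_e$ is by construction the unique solution in the local Sobolev space $W_{\loc}^{2,1}(0,l_i(e))$ of $-u''+\alpha u = f_e$ with vanishing boundary values. The smoothness of $h_e^\alpha$ permits differentiation of the integral formula under the integral sign; the two pointwise boundary contributions arising from differentiating the variable endpoints of the upper and lower pieces of the piecewise kernel cancel, producing a continuous first derivative $(G_\alpha f)_e'$. A further differentiation, combined with the jump relation $\partial_t G_e^\alpha(s^+,s)-\partial_t G_e^\alpha(s^-,s) = -1$ encoded in the Wronskian of the fundamental system $\{h_e^\alpha(\cdot),\,h_e^\alpha(l_i(e)-\cdot)\}$, yields the distributional identity $(G_\alpha f)_e'' = \alpha (G_\alpha f)_e - f_e \in L^1_\loc$, so that $(G_\alpha f)_e$ is twice weakly differentiable on the open edge.

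The main obstacle here is one of bookkeeping rather than mathematics: the jump and cancellation terms produced by the piecewise structure of the Green's function must be tracked carefully, but once the jump relation above has been verified from the Wronskian everything follows by a direct computation.
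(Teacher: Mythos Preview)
Your approach is essentially the paper's: decouple edgewise, bound the integral operator with kernel $G_e^\alpha$ at the endpoints $p=1,\infty$, and interpolate. The paper does the endpoints by hand and invokes Riesz--Thorin, while you package the same two estimates as Schur's test; for a symmetric non-negative kernel these are the same computation. Your treatment of the differentiability part is in fact more careful than the paper's one-line remark.

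There is one substantive discrepancy in the constant. Your maximum-principle bound $(G_e^\alpha\mathbf{1})(t)\le 1/\alpha$ is correct for $\alpha>0$ but does not give a \emph{contraction} when $\alpha<1$, and says nothing at $\alpha=0$. The paper instead exploits the standing assumption $l_i(e)\le 1$: it bounds $\sup_{t,s}G_e^\alpha(t,s)\cdot l_i(e)\le 1$ (for the $L^1$ estimate) and $\int_0^{l_i(e)}G_e^\alpha(t,s)\,ds\le 1$ (for the $L^\infty$ estimate) uniformly in $\alpha\ge 0$. Your argument is easily repaired along the same lines: since the Dirichlet resolvent is monotone decreasing in $\alpha$, one has for every $\alpha\ge 0$
\[
(G_e^\alpha\mathbf{1})(t)\;\le\;(G_e^0\mathbf{1})(t)\;=\;\tfrac{t(l_i(e)-t)}{2}\;\le\;\tfrac{l_i(e)^2}{8}\;\le\;\tfrac18,
\]
which feeds directly into Schur's test and yields operator norm $\le 1$ on every $L^p$, including the case $\alpha=0$.
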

\begin{proof}
We start with the case $p=1$. For $f\in L^1(\FX_\Gamma,\omega)$ we obtain for $e\in E$
\begin{eqnarray*}
\int\limits_0^{l_i(e)}|(G_\alpha f)_e(t)| \omega(e) dt &\leq& \int\limits_0^{l_i(e)}\int\limits_0^{l_i(e)} |G_e^\alpha(t,s)| |f_e(s)| \omega(e) ds\: dt\\
&\leq& \sup_{t,s\in(0,l_i(e))} |G_e^\alpha (t,s)| l_i(e) \int\limits_0^{l_i(e)} |f_e(s)|\omega(e) ds
\end{eqnarray*}
Since $h_e^\alpha(t) \leq 1$ and $|{h_e^\alpha}' (l_i(e))| \geq l_i(e)$ we obtain
\[\sup_{t,s\in(0,l_i(e))} |G_e^\alpha (t,s)| l_i(e) \leq 1\]
and thus
\[\sum_{e\in E} \|(G_\alpha f)_e\|_1 \leq \sum_{e\in E} \|f_e\|_1.\]
In the case $p=\infty$ we have for $e\in E$
\[|(G_\alpha f)_e(t)| \leq \int\limits_0^{l_i(e)} |G_e^\alpha(t,s)| ds \sup_{s} |f_e(s)| \leq \sup_{s\in (0,l_i(e))} |f_e(s)|\]
and thus $G_\alpha$ is a bounded operator on $L^\infty$. Applying the Riesz Thorin interpolation theorem gives that $G_\alpha$ is also bounded on $L^p$.\\
The second claim follows easily from the local integrability of $f$, as it allows the interchange of differentiation and integration.
\end{proof}
Our next aim is to prove similar results for $H_\alpha$. The question is for which measure $M$ is the discrete space $\ell^p(\FV,M)$ mapped into $L^p(\FX_\Gamma, \omega)$. The next proposition shows, that the measure defined by
\[M(x) := \sum_{e\sim x} l_i(e)\omega(e)\]
for all $x\in \FV$ does the job. Note that this quantity is invariant under a length transformation $\Phi$, as it equals the measure of the set $\bigstar(x)$.
\begin{prop}
For all $\alpha\geq0$ and $p\in [1,\infty]$ the operator
\[H_\alpha: \ell^p(\FV, M) \to L^p(\FX_\Gamma, \omega)\]
is bounded. Furthermore, for all $U\in \CC(\FV)$ and $e\in E$ the function $(H_\alpha U)_e : (0,l_i(e)) \to \IR$ is differentiable infinitely often.
\end{prop}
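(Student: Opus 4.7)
Proof proposal. The plan is to reduce the boundedness to an edge-wise pointwise bound on $(H_\alpha U)_e(t)$ and then to reassemble the edge contributions into a vertex sum that is exactly the $\ell^p(\FV,M)$-norm. Two elementary properties of the profile $h_e^\alpha$ drive the argument: first, $h_e^\alpha$ decreases monotonically from $1$ to $0$ on $[0,l_i(e)]$, so $0\le h_e^\alpha(t)\le 1$; second, I claim
\[h_e^\alpha(t)+h_e^\alpha(l_i(e)-t)\le 1\qquad\text{for all } t\in[0,l_i(e)].\]
For $\alpha=0$ this is an equality by direct computation; for $\alpha>0$, the identity $\sinh a+\sinh b=2\sinh(\tfrac{a+b}{2})\cosh(\tfrac{a-b}{2})$ with $a=\sqrt{\alpha}(l_i(e)-t)$ and $b=\sqrt{\alpha}\,t$, together with $\sinh(\sqrt{\alpha}\,l_i(e))=2\sinh(\tfrac{\sqrt{\alpha}\,l_i(e)}{2})\cosh(\tfrac{\sqrt{\alpha}\,l_i(e)}{2})$, reduces the sum to $\cosh(\sqrt{\alpha}(\tfrac{l_i(e)}{2}-t))/\cosh(\tfrac{\sqrt{\alpha}\,l_i(e)}{2})\le 1$.

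Now fix $p\in[1,\infty)$ and $U\in\ell^p(\FV,M)$. By the two estimates, $(H_\alpha U)_e(t)$ is a convex combination (with total weight at most one) of $U(\partial^+(e))$ and $U(\partial^-(e))$ with coefficients $h_e^\alpha(t)$ and $h_e^\alpha(l_i(e)-t)$. Jensen's inequality for the convex map $x\mapsto|x|^p$, applied after restoring the total weight to one by inserting a zero mass point, yields the pointwise estimate
\[|(H_\alpha U)_e(t)|^p\le h_e^\alpha(t)|U(\partial^+(e))|^p+h_e^\alpha(l_i(e)-t)|U(\partial^-(e))|^p.\]
Integrating in $t$, using $\int_0^{l_i(e)}h_e^\alpha(t)\,dt\le l_i(e)$, multiplying by $\omega(e)$ and summing over $e\in E$, the double sum regroups by vertex to give
\[\|H_\alpha U\|_{L^p(\FX_\Gamma,\omega)}^p\le\sum_{e\in E}l_i(e)\omega(e)\bigl(|U(\partial^+(e))|^p+|U(\partial^-(e))|^p\bigr)=\sum_{x\in\FV}M(x)\,|U(x)|^p=\|U\|_{\ell^p(\FV,M)}^p.\]

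The case $p=\infty$ is even shorter: property (ii) gives $|(H_\alpha U)_e(t)|\le (h_e^\alpha(t)+h_e^\alpha(l_i(e)-t))\|U\|_\infty\le \|U\|_\infty$ pointwise, so $H_\alpha$ is a contraction from $\ell^\infty(\FV)$ into $L^\infty(\FX_\Gamma,\omega)$. The smoothness claim is immediate from the explicit formula: $(H_\alpha U)_e$ is a constant-coefficient linear combination of $t\mapsto\sinh(\sqrt{\alpha}(l_i(e)-t))$ and $t\mapsto\sinh(\sqrt{\alpha}\,t)$ (respectively of $l_i(e)-t$ and $t$ when $\alpha=0$), both of which are real-analytic on $(0,l_i(e))$.

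The principal obstacle is the auxiliary bound (ii); this is the one place where the specific form of $h_e^\alpha$ really matters, as it is what prevents $H_\alpha$ from amplifying the pointwise modulus beyond $\|U\|_\infty$. Once (ii) is in place, the remaining work is the routine edge-to-vertex bookkeeping that distributes the edge mass $l_i(e)\omega(e)$ to its two endpoints, recovering $M(x)=\sum_{e\sim x}l_i(e)\omega(e)$ at each vertex.
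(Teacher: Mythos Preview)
Your proof is correct and follows the same overall skeleton as the paper's argument: bound $|(H_\alpha U)_e(t)|^p$ pointwise on each edge, integrate, then regroup the edge sum into a vertex sum producing exactly $M(x)=\sum_{e\sim x}l_i(e)\omega(e)$.

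The difference lies in the pointwise estimate. The paper simply applies $|a+b|^p\le c(|a|^p+|b|^p)$ to $(H_\alpha U)_e(t)=U(\partial^+(e))h_e^\alpha(t)+U(\partial^-(e))h_e^\alpha(l_i(e)-t)$ and then uses $(h_e^\alpha)^p\le 1$, obtaining boundedness with a $p$-dependent constant. You instead establish the extra inequality $h_e^\alpha(t)+h_e^\alpha(l_i(e)-t)\le 1$ and feed it into Jensen's inequality, which yields the sharper pointwise bound $|(H_\alpha U)_e(t)|^p\le h_e^\alpha(t)|U(\partial^+(e))|^p+h_e^\alpha(l_i(e)-t)|U(\partial^-(e))|^p$ and hence operator norm at most $1$ for every $p$. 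For $p=\infty$ the paper invokes the maximum principle for positive $\alpha$-harmonic functions on an interval, while your sum bound gives the same conclusion in one line. So your route is a mild refinement of the paper's: same structure, but the convex-combination observation buys you a cleaner constant and a unified treatment of all $p$.
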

\begin{proof}
Let first $p<\infty$, then for $U\in \ell^p(\FV,M)$ we have for $e\in E$
\begin{eqnarray*}
\int\limits_0^{l_i(e)} |(H_\alpha U)_e (t)|^p \omega(e) dt &\leq& c\Bigl(U(\partial^+(e))^p \int\limits_0^{l_i(e)} h^\alpha_e (t)^p \omega(e)dt \\
&& + U(\partial^-(e))^p \int\limits_0^{l_i(e)} h^\alpha_e (l_i(e)-t)^p \omega(e)dt\Bigr) \\
&\leq&c( U(\partial^+(e))^p l_i(e)\omega(e) + U(\partial^-(e))^p l_i(e)\omega(e)).
\end{eqnarray*}
We sum over all edges $e\in E$ and recall that this is the same as summing over all vertices $x\in \FV$ and all adjacent edges to $x$, i.e.
\begin{eqnarray*}
\sum_{e\in E} \|(H_\alpha U)_e\|_p^p &=& \frac{1}{2} \sum_{x\in \FV} \sum_{e\sim x} \|(H_\alpha U)_e\|_p^p\\
&\leq & c \Bigl(\sum_{x\in \FV} |U(x)|^p \sum_{e\sim x} l_i(e)\omega(e) + \sum_{e\sim x}  |U(\partial^-(e))|^p l_i(e)\omega(e)\Bigr).
\end{eqnarray*}
The first summand equals $\sum_{x\in \FV} |U(x)|^p M(x)$ and in the second sum a reordering of the sum gives that this is equal to
\[\sum_{y\in \FV}\sum_{e\sim y}  |U(y)|^p l_i(e)\omega(e).\]
Thus we have shown that
\[\sum_{e\in E} \|(H_\alpha U)_e\|_p^p \leq c \sum_{x\in \FV} |U(x)|^p M(x).\]
For $p=\infty$ we first note that
\[|(H_\alpha U)_e (t)| \leq (H_\alpha |U|)_e (t)\]
and by the maximum principle for positive $\alpha$-harmonic functions on an interval we have
\[\sup_{t} (H_\alpha |U|)_e (t) \leq \max \{|U|(\partial^+(e)), |U|(\partial^-(e))\}\]
which gives immediately
\[\| H_\alpha U\|_\infty \leq \|U\|_\infty.\]
The second claim on the differentiability is obvious.
\end{proof}
We now summarize the previous results and discussion.
\begin{theorem}
Let $\FX_\Gamma$ be a metric graph and $\CS$ a graph Dirichlet form. Let $f\in L^1_\loc(\FX_\Gamma)$, $\alpha \geq 0$ and let $u$ be a weak $\alpha$-solution of the equation $\widetilde{\Delta} u + \alpha u = f$. Then we have
\[u = G_\alpha f + H_\alpha (u|_\FV).\]
In particular, for all $x\in \FV$ the weighted divergence of $u$ in $x$ exists and is given by
\begin{eqnarray*}
\partial_n u(x) &=& \sum_{e\sim x} \int\limits_0^{l_i(e)} h_e^\alpha(s) f_e(s)\omega(e) ds \\
&&- \sum_{e\sim x} - {h_e^\alpha}'(l_i(e))\omega(e)(u(x)-u(\partial^-(e)))\\
&& - u(x)\sum_{e\sim x} \omega(e)({h_e^\alpha}'(l_i(e)) - {h_e^\alpha}'(0)).
\end{eqnarray*}
\end{theorem}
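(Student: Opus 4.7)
The plan is to exploit the intrinsic representation (chosen at the start of this section) to reduce the claim to a pointwise ODE problem on each edge, to solve that ODE by the Green's function method, and then to read off the divergence formula by differentiation at the vertex.

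First, rewrite $\widetilde{\Delta} u + \alpha u = f$ as $\widetilde{\Delta} u = f - \alpha u$ and apply Theorem 3.2 (the characterisation of weak solutions). On each edge $e$, the function $u_e$ satisfies $-u_e'' b(e) = (f_e - \alpha u_e) a(e)$ distributionally. Since $a(e) = b(e) = \omega(e)$ in the intrinsic representation, this simplifies to
\[-u_e'' + \alpha u_e = f_e \qquad \text{on } (0, l_i(e)).\]
Because $u \in \CD(\CS)_\loc^*$ is absolutely continuous, the boundary traces of $u_e$ coincide with $u(\partial^+(e))$ and $u(\partial^-(e))$.

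Next, the ODE $-v'' + \alpha v = g$ admits $h_e^\alpha(t)$ and $h_e^\alpha(l_i(e)-t)$ as a fundamental system taking the values $1,0$ and $0,1$ at the endpoints. By the standard variation-of-parameters argument, the unique solution with prescribed boundary values $A,B$ is
\[v(t) = \int\limits_0^{l_i(e)} G_e^\alpha(t,s)\, g(s)\, ds + A\, h_e^\alpha(t) + B\, h_e^\alpha(l_i(e)-t).\]
Applied edge by edge to $u_e$ with $g = f_e$, $A = u(\partial^+(e))$, $B = u(\partial^-(e))$, this expression is exactly $(G_\alpha f)_e + (H_\alpha (u|_\FV))_e$. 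Assembling over all $e \in E$ yields the first claim $u = G_\alpha f + H_\alpha (u|_\FV)$.

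For the divergence, differentiate the above representation on each $e \sim x$ (oriented so that $x = \partial^+(e)$) at $t = 0$. The $H_\alpha$-part contributes $u(x){h_e^\alpha}'(0) - u(\partial^-(e)){h_e^\alpha}'(l_i(e))$. For the Green's part, split the integral at $t$ before differentiating; the boundary contributions coming from the moving endpoint vanish because $h_e^\alpha(l_i(e)) = 0$, leaving
\[(G_\alpha f)_e'(0) = \int\limits_0^{l_i(e)} h_e^\alpha(s)\, f_e(s)\, ds.\]
Multiplying each edge contribution by $\omega(e)$, summing over $e \sim x$, and then rewriting the linear combination of $u(x){h_e^\alpha}'(0)$ and $u(\partial^-(e)){h_e^\alpha}'(l_i(e))$ by adding and subtracting $u(x){h_e^\alpha}'(l_i(e))$ produces the three-line formula in the statement. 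The only delicate point is the careful differentiation of the piecewise Green's function at the endpoint, where one must track the moving-endpoint terms and invoke $h_e^\alpha(l_i(e)) = 0$ to cancel the spurious boundary contributions; everything else is bookkeeping adapted to the intrinsic normalisation.
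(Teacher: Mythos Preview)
Your argument is correct and follows essentially the same route as the paper: reduce to the edgewise ODE via the characterisation of weak solutions, solve by the Green's function with the $h_e^\alpha$ fundamental system to obtain $u = G_\alpha f + H_\alpha(u|_\FV)$, and then differentiate at $t=0$ and regroup. One small inaccuracy worth noting: the two moving-endpoint contributions in the derivative of $(G_\alpha f)_e$ cancel each other for every $t$ (they are $\pm h_e^\alpha(t)h_e^\alpha(l_i(e)-t)f_e(t)$), not because $h_e^\alpha(l_i(e))=0$; your conclusion $(G_\alpha f)_e'(0)=\int_0^{l_i(e)} h_e^\alpha(s)f_e(s)\,ds$ is nonetheless right.
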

\begin{proof}
The representation
\[u = G_\alpha f + H_\alpha (u|_\FV),\]
follows from the discussion in the beginning of the section. The previous results give the differentiability of each component $u_e$ and we obtain
\begin{eqnarray*}
u_e'(t) &=& \int\limits_0^{l_i(e)} \tfrac{\partial}{\partial t} G_e^\alpha(t,s) f_e(s) \:ds  +u(x){h_e^\alpha}'(t) - u(y){h_e^\alpha}'(l_i(e)-t)\\
&=& \frac{{h_e^\alpha}'(l_i(e)-t)}{{h_e^\alpha}'(l_i(e))}\int\limits_t^{l_i(e)} h_e^\alpha(s)f_e(s)ds-\frac{{h_e^\alpha}'(t)}{{h_e^\alpha}'(l_i(e))} \int\limits_0^{t} h_e^\alpha(l_i(e)-s) f_e(s)ds\\
&&+u(x){h_e^\alpha}'(t) - u(y){h_e^\alpha}'(l_i(e)-t).
\end{eqnarray*}
Thus, we obtain for the divergence of $u$ in $x\in \FV$
\begin{eqnarray*}
\partial_n u(x) &=& \partial_n G_\alpha f (x) + \partial_nH_\alpha(u|_\FV)\\
&=& \sum_{e\sim x} \int\limits_0^{l_i(e)} h_e^\alpha(s) f_e(s)\omega(e) ds \\
&&- \sum_{e\sim x} - {h_e^\alpha}'(l_i(e))\omega(e)(u(x)-u(\partial^-(e)))\\
&& - u(x)\sum_{e\sim x} \omega(e)({h_e^\alpha}'(l_i(e)) - {h_e^\alpha}'(0)).
\end{eqnarray*}
\end{proof}
We have already split the sum of the weighted divergence into three parts by intention, as each of them needs to be analyzed.
\begin{lemma}
The operator
\[M^{-1}\partial_n G_\alpha: L^1_\loc(\FX_\Gamma)\to \CC(\FV),\]
is given by
\[M^{-1}\partial_n G_\alpha f(x) = \tfrac{1}{M(x)}\sum_{e\sim x} \int\limits_0^{l_i(e)} h_e^\alpha (s) f_e(s) \omega(e) \: ds \]
and for all $p\in [1,\infty]$ it is a bounded operator from $L^p (\FX_\Gamma,\omega)$ to $\ell^p (\FV,M)$.
\end{lemma}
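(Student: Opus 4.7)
The explicit formula will follow directly from the divergence identity in the previous theorem, applied to $u := G_\alpha f$. Because $h_e^\alpha(l_i(e)) = 0$, the kernel $G_e^\alpha(t,s)$ vanishes at $t\in\{0,l_i(e)\}$ for every $s$, hence $G_\alpha f$ vanishes identically on $\FV$. Every term in the divergence formula that carries a factor $u(x)$ or $u(\partial^-(e))$ therefore drops out, leaving only the claimed expression
\[\partial_n G_\alpha f(x) = \sum_{e\sim x}\omega(e)\int_0^{l_i(e)} h_e^\alpha(s)f_e(s)\,ds.\]

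For the $L^p$--$\ell^p$ boundedness I plan to treat all $p\in[1,\infty)$ by a single H\"older estimate per vertex and dispose of $p=\infty$ separately. For $x\in\FV$ set $N(x) := \sum_{e\sim x}\omega(e)\int_0^{l_i(e)} h_e^\alpha(s)\,ds$; since $h_e^\alpha \leq 1$ one has $N(x) \leq M(x)$. H\"older's inequality applied to the positive measure $h_e^\alpha(s)\omega(e)\,ds$ on the union of the edges incident to $x$, with conjugate exponents $p$ and $q$, yields
\[|\partial_n G_\alpha f(x)|^p \leq N(x)^{p-1} \sum_{e\sim x}\omega(e)\int_0^{l_i(e)} h_e^\alpha(s)|f_e(s)|^p\,ds,\]
using $p/q = p-1$. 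Dividing by $M(x)^{p-1}$ and invoking $N(x) \leq M(x)$ gives $|M(x)^{-1}\partial_n G_\alpha f(x)|^p\,M(x) \leq \sum_{e\sim x}\omega(e)\int_0^{l_i(e)} h_e^\alpha(s)|f_e(s)|^p\,ds$.

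Summing over $x\in\FV$, each edge $e=(x,y)$ is visited from its two endpoints with opposite parametrisations, so the combined integrand is $\phi_e(s)|f_e(s)|^p$, where $\phi_e(s) := h_e^\alpha(s) + h_e^\alpha(l_i(e)-s)$. The decisive estimate is $\phi_e \leq 1$ on $[0,l_i(e)]$: indeed $\phi_e'' = \alpha\phi_e \geq 0$ on $(0,l_i(e))$ with $\phi_e(0) = \phi_e(l_i(e)) = 1$, so $\phi_e$ is convex and attains its maximum at the endpoints (the case $\alpha = 0$ being trivial with $\phi_e\equiv 1$). Combining these steps yields $\|M^{-1}\partial_n G_\alpha f\|_{\ell^p(\FV,M)}^p \leq \|f\|_{L^p(\FX_\Gamma,\omega)}^p$. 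The case $p=\infty$ is immediate from $h_e^\alpha\leq 1$ and $N(x)\leq M(x)$. The only genuinely technical point is the convexity bound on $\phi_e$ that controls the double-counting over oriented edges; everything else is routine bookkeeping.
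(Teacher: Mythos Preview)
Your proof is correct and takes a genuinely different route from the paper. The paper establishes the endpoint cases $p=1$ and $p=\infty$ by crude estimates (using only $h_e^\alpha\le 1$) and then invokes the Riesz--Thorin interpolation theorem for the intermediate exponents. You instead treat all $p\in[1,\infty)$ directly by a single H\"older step at each vertex, and the key new ingredient is the convexity bound $h_e^\alpha(s)+h_e^\alpha(l_i(e)-s)\le 1$, which tames the double-counting when summing over vertices. This is more elementary---no interpolation needed---and in fact yields the sharper conclusion that the operator norm is at most $1$; the paper's $p=1$ argument, after the $h_e^\alpha\le 1$ step, only gives a bound of $2$ from counting each edge at both endpoints (the paper writes $\|f\|_1$ there, but the double sum $\sum_{x}\sum_{e\sim x}\|f_e\|_1$ is really $2\|f\|_1$). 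The trade-off is that the paper's argument is a couple of lines shorter once one is willing to quote Riesz--Thorin.
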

\begin{proof}
Let $f\in L^1(\FX_\Gamma,\omega)$ and $x\in \FV$. Then we have
\begin{eqnarray*}
|\partial_n G_\alpha f (x)| &\leq & \sum_{e\sim x} \int\limits_0^{l_i(e)} h_e^\alpha (s)|f_e(s)| \omega(e) \:ds\\
&\leq& \sum_{e\sim x} \int\limits_0^{l_i(e)} |f_e(s)| \omega(e) \: ds.
\end{eqnarray*}
Thus,
\begin{eqnarray*}
\|M^{-1} \partial_n G_\alpha f\|_1 &=& \sum_{x\in \FV} |M^{-1} \partial_n G_\alpha f (x)| M(x)\\
&=& \sum_{x\in \FV} |\partial_n G_\alpha f (x)|\\
&\leq & \sum_{x\in \FV} \sum_{e\sim x} \|f_e\|_1\\
&=& \|f\|_1.
\end{eqnarray*}
Let now $f\in L^\infty(\FX_\Gamma,\omega)$ and $x\in \FV$. Then
\[ |\partial_n G_\alpha f(x)| \leq \|f_e\|_\infty \sum_{e\sim x} l_i(e)\omega(e)\]
which immediately gives
\[\|M^{-1} \partial_n G_\alpha f\|_\infty \leq \|f\|_\infty.\]
The Riesz-Thorin interpolation theorem gives now the boundedness for each $p\in (1,\infty)$.
\end{proof}
The operator in the previous lemma is closely connected with the operator $H_\alpha$.
\begin{theorem}
Let $p,q\in[1,\infty]$ with $1=\tfrac{1}{p} + \tfrac{1}{q}$.Then for all $f\in \ell^p (\FV,M)$ and $g\in L^q(\FX_\Gamma,\omega)$ we have the relation
\[\langle H_\alpha f , g\rangle = \langle f, M^{-1} \partial_n G_\alpha g \rangle,\]
where $\langle\cdot,\cdot\rangle$ denotes the corresponding duality pairing.
\end{theorem}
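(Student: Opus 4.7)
The plan is to prove the identity by a direct computation: expand both sides using the explicit formulas for $H_\alpha$ and $M^{-1}\partial_n G_\alpha$, then reorganize an edge-indexed sum into a vertex-incidence sum. To justify the interchange of summation and integration I will invoke Fubini--Tonelli, whose hypotheses are supplied by the two boundedness results (Proposition \textit{on} $G_\alpha$ and Proposition \textit{on} $H_\alpha$) together with H\"older's inequality, which guarantees that both sides converge absolutely.

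First I would expand the left-hand side in edge-coordinates using the formula
\[(H_\alpha f)_e(t) = f(\partial^+(e)) h_e^\alpha(t) + f(\partial^-(e)) h_e^\alpha(l_i(e)-t),\]
yielding
\[\langle H_\alpha f, g\rangle = \sum_{e\in E} f(\partial^+(e))\!\!\int_0^{l_i(e)}\!\!\! h_e^\alpha(t) g_e(t)\,\omega(e)\,dt + \sum_{e\in E} f(\partial^-(e))\!\!\int_0^{l_i(e)}\!\!\! h_e^\alpha(l_i(e)-t) g_e(t)\,\omega(e)\,dt.\]
In the second sum I would substitute $s=l_i(e)-t$, which amounts to passing to the opposite orientation of $e$; since $h_e^\alpha$ and $\omega(e)$ are invariant under reversal, after the change of variable each summand becomes $\int_0^{l_i(e)} h_e^\alpha(s) g_e(s)\omega(e)\,ds$ with $e$ now oriented outward from $\partial^-(e)$. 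Consequently both sums together reindex as a single sum over pairs $(x,e)$ with $x\in\FV$ and $e\sim x$ (with $e$ oriented outward from $x$), giving
\[\langle H_\alpha f, g\rangle = \sum_{x\in\FV} f(x) \sum_{e\sim x} \int_0^{l_i(e)} h_e^\alpha(s) g_e(s)\,\omega(e)\,ds.\]

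Recognising the inner sum as $\partial_n G_\alpha g(x)$ by the formula in the previous lemma and cancelling $M(x)$ with $M^{-1}$, this rewrites to $\sum_{x\in\FV} f(x)\, \partial_n G_\alpha g(x)=\langle f, M^{-1}\partial_n G_\alpha g\rangle$, which is the right-hand side.

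The only genuine issue is justifying the reordering of an infinite sum of integrals. The previous proposition shows $H_\alpha\colon \ell^p(\FV,M)\to L^p(\FX_\Gamma,\omega)$ is bounded, while the preceding lemma shows $M^{-1}\partial_n G_\alpha\colon L^q(\FX_\Gamma,\omega)\to\ell^q(\FV,M)$ is bounded; H\"older's inequality then yields $|\langle H_\alpha f, g\rangle|\leq \|H_\alpha f\|_p\|g\|_q<\infty$ and $|\langle f, M^{-1}\partial_n G_\alpha g\rangle|\leq \|f\|_p\|M^{-1}\partial_n G_\alpha g\|_q<\infty$, and applying the same bounds to $|f|$ and $|g|$ gives absolute convergence, so Fubini--Tonelli legitimates every interchange performed above. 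The main obstacle (beyond book-keeping) is thus merely the orientation bookkeeping in the edge-to-vertex reindexing; once the change of variable $s=l_i(e)-t$ is correctly interpreted as a reorientation of the edge, the identity drops out.
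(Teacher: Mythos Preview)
Your proof is correct and follows essentially the same approach as the paper. The paper's proof tests the identity against the basis elements $\delta_x/M(x)$ (computing $\langle H_\alpha \tfrac{\delta_x}{M(x)}, g\rangle$ and recognising it as $\tfrac{1}{M(x)}\partial_n G_\alpha g(x)$), whereas you carry out the same edge-to-vertex reindexing directly for general $f$; your explicit invocation of Fubini--Tonelli via the boundedness propositions is a welcome addition that the paper leaves implicit.
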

\begin{proof}
Let $g\in L^q(\FX_\Gamma,\omega)$ and $\delta_x$ the function being zero for all $y\in \FV\setminus \{x\}$ and $\delta_x(x)=1$. We then have
\[H_\alpha^* g(x) =  \langle \tfrac{\delta_x}{M(x)},  H_\alpha^* g\rangle =\langle H_\alpha \tfrac{\delta_x}{M(x)},   g\rangle.\]
As $\delta_x$ is supported only in $\{x\}$ its harmonic extension is supported in $\bigstar(x)$ and for $e\sim x$ we obtain
\[(H_\alpha \tfrac{\delta_x}{M(x)})_e(t)= \tfrac{1}{M(x)} h_e^\alpha(t)\]
and thus
\begin{eqnarray*}
\langle H_\alpha \tfrac{\delta_x}{M(x)},   g\rangle &=& \tfrac{1}{M(x)}\sum_{e\sim x} \int\limits_0^{l_i(e)} h_e^\alpha (s) g(s) \omega(e) \:ds \\
&=&\tfrac{1}{M(x)}\partial_n G_\alpha g (x)
\end{eqnarray*}
which hold for all $x\in \FV$.
\end{proof}
The previous theorem gives the connection between graphs Dirichlet forms and discrete Dirichlet forms. This will be exploited in greater detail in the last section of this chapter. The theorem above implies the following.
\begin{coro}
Let $p\in [1,\infty)$ and $q$ such that $1=\tfrac{1}{p} + \tfrac{1}{q}$. Then the adjoint of the operator
\[H_\alpha:\ell^p (\FV,M) \to L^p(\FX_\Gamma,\omega)\]
is given by
\[M^{-1}\partial_n G_\alpha:L^q (\FX_\Gamma,\omega) \to \ell^q (\FV,M),\]
and the adjoint of the operator
\[M^{-1}\partial_n G_\alpha:L^p (\FX_\Gamma,\omega) \to \ell^p (\FV,M)\]
is given by the operator
\[H_\alpha:\ell^q (\FV,M) \to L^q(\FX_\Gamma,\omega).\]
\end{coro}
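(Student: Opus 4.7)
The plan is to derive the corollary essentially as a direct consequence of Theorem 3.14, once the boundedness of the operators and the standard duality identifications are in place. The content is more interpretive than computational: Theorem 3.14 already exhibits the defining bilinear identity for an adjoint, and Propositions 3.9 and 3.10 guarantee that each operator maps continuously between the relevant Banach spaces, so only the bookkeeping of domains, codomains and dual pairings remains.

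Concretely, I would proceed in three short steps. First, I would recall the canonical duality identifications $L^p(\FX_\Gamma,\omega)^* \cong L^q(\FX_\Gamma,\omega)$ and $\ell^p(\FV,M)^* \cong \ell^q(\FV,M)$, valid for $p\in[1,\infty)$ and the conjugate exponent $q\in(1,\infty]$, via the pairings $\langle f,g\rangle = \int fg\,d\omega$ and $\langle F,G\rangle = \sum_{x\in\FV} F(x)G(x)M(x)$, respectively. Second, I would quote Proposition 3.10 to ensure that $H_\alpha\colon \ell^p(\FV,M)\to L^p(\FX_\Gamma,\omega)$ is bounded, and Proposition 3.9 together with the lemma preceding Theorem 3.14 to conclude that $M^{-1}\partial_n G_\alpha\colon L^q(\FX_\Gamma,\omega)\to \ell^q(\FV,M)$ is bounded. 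Third, I would invoke Theorem 3.14, which states exactly
\[
\langle H_\alpha f, g\rangle \;=\; \langle f,\, M^{-1}\partial_n G_\alpha g\rangle
\]
for all $f\in\ell^p(\FV,M)$ and $g\in L^q(\FX_\Gamma,\omega)$. By the standard uniqueness of the Banach space adjoint of a bounded linear operator, this identity forces $H_\alpha^* = M^{-1}\partial_n G_\alpha$ as operators between the stated dual spaces. The second half of the corollary is obtained by swapping the roles of $p$ and $q$: boundedness of $M^{-1}\partial_n G_\alpha\colon L^p\to\ell^p$ comes again from the lemma, boundedness of $H_\alpha\colon \ell^q\to L^q$ from Proposition 3.10, and the same bilinear identity (read from right to left) identifies the adjoint.

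The only delicate point is the endpoint behaviour. When $p=1$ we have $q=\infty$, and one must be careful that $(\ell^1(\FV,M))^* = \ell^\infty(\FV,M)$ and $(L^1(\FX_\Gamma,\omega))^* = L^\infty(\FX_\Gamma,\omega)$, which hold because $M$ and $\omega$ are $\sigma$-finite. The case $p=\infty$ is not asserted, which is consistent with the fact that $(L^\infty)^*$ is strictly larger than $L^1$ in general, so no statement is needed there. Apart from this observation, there is no genuine obstacle: the real work was already carried out in Theorem 3.14 and the two boundedness propositions, and the corollary is a reformulation of that identity as a statement about adjoints.
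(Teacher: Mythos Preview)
Your proposal is correct and is exactly the intended argument: the paper states the corollary without proof, treating it as an immediate consequence of the preceding duality identity together with the boundedness established in the earlier propositions and lemma. Your write-up simply makes this explicit, including the endpoint remark about $p=1$, and there is nothing to add.
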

We continue with the remaining parts in the divergence.
\begin{prop}
For all $\alpha >0$ there exists $c>0$ such that
\[c^{-1} \frac{1}{l_i(e)} \leq -{h_e^\alpha}' (l_i(e)) \leq c \frac{1}{l_i(e)} \]
and
\[ c^{-1} l_i(e) \leq {h_e^\alpha}'(l_i(e)) - {h_e^\alpha}' (0) \leq C l_i(e).\]
\end{prop}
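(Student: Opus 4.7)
The plan is to reduce both estimates to elementary monotonicity properties of the functions $\beta \mapsto \beta/\sinh\beta$ and $\beta \mapsto \tanh(\beta/2)/\beta$ on a bounded interval. Writing $L := l_i(e)$ and $\beta := \sqrt{\alpha}L$, a direct differentiation of the defining formula $h_e^\alpha(t) = \sinh(\sqrt{\alpha}(L-t))/\sinh(\sqrt{\alpha}L)$ gives ${h_e^\alpha}'(t) = -\sqrt{\alpha}\cosh(\sqrt{\alpha}(L-t))/\sinh(\sqrt{\alpha}L)$. Evaluating at the two endpoints and using the identity $(\cosh x-1)/\sinh x = \tanh(x/2)$ I obtain the two closed-form expressions
\[
-{h_e^\alpha}'(L) \;=\; \frac{\sqrt{\alpha}}{\sinh(\sqrt{\alpha}L)} \;=\; \frac{1}{L}\cdot\frac{\beta}{\sinh\beta},
\qquad
{h_e^\alpha}'(L) - {h_e^\alpha}'(0) \;=\; \sqrt{\alpha}\,\tanh(\beta/2).
\]

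The first identity already shows that the inequality $c^{-1}/L \leq -{h_e^\alpha}'(L) \leq c/L$ is equivalent to $\beta/\sinh\beta$ being bounded above and below by positive constants uniformly in $e$. The crucial ingredient making this uniform is the standing assumption $\nu(e)l_c(e)^2 \leq 1$ from the preceding section, which together with the identity $l_i(e)^2 = l_c(e)^2\nu(e)$ forces $L\leq 1$ and hence $\beta \in (0,\sqrt{\alpha}\,]$. Since $\beta \mapsto \beta/\sinh\beta$ is continuous, strictly decreasing on $(0,\infty)$, has limit $1$ at $0$ and is strictly positive, it lies in $[\sqrt{\alpha}/\sinh\sqrt{\alpha},\,1]$ on this interval; taking $c := \sinh(\sqrt{\alpha})/\sqrt{\alpha}$ yields the first estimate.

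For the second estimate I rewrite the second identity as
\[
{h_e^\alpha}'(L) - {h_e^\alpha}'(0) \;=\; L\cdot \alpha\cdot \frac{\tanh(\beta/2)}{\beta},
\]
so that the claimed two-sided bound $c^{-1}L \leq {h_e^\alpha}'(L) - {h_e^\alpha}'(0) \leq cL$ is equivalent to $\alpha\tanh(\beta/2)/\beta$ being bounded away from $0$ and $\infty$ uniformly in $e$. The function $\beta \mapsto \tanh(\beta/2)/\beta$ is continuous, strictly positive, strictly decreasing on $(0,\infty)$, and has limit $1/2$ at $0$; restricted to the bounded interval $(0,\sqrt{\alpha}\,]$ it is therefore pinched between two positive constants depending only on $\alpha$, giving the claim.

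No step presents a real obstacle; the only point that requires care is to notice that it is precisely the standing assumption on $l_c(e)^2\nu(e)$ which restricts $\beta$ to a bounded interval and thereby converts the pointwise limits $\beta/\sinh\beta \to 1$, $\tanh(\beta/2)/\beta \to 1/2$ into genuine \emph{uniform} two-sided bounds. The constants produced blow up as $\alpha\to 0$ or $\alpha\to\infty$, which is compatible with the statement that $c$ is allowed to depend on $\alpha$.
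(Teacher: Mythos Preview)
Your proof is correct and follows essentially the same route as the paper: compute the two closed-form expressions $-{h_e^\alpha}'(l_i(e)) = \sqrt{\alpha}/\sinh(\sqrt{\alpha}l_i(e))$ and ${h_e^\alpha}'(l_i(e)) - {h_e^\alpha}'(0) = \sqrt{\alpha}(\cosh(\sqrt{\alpha}l_i(e))-1)/\sinh(\sqrt{\alpha}l_i(e))$, then read off the two-sided bounds. The paper's proof stops at these formulas with ``follows immediately''; you supply the details the paper leaves implicit, in particular the explicit invocation of the standing assumption $l_i(e)\leq 1$ that makes the bounds uniform in $e$.
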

\begin{proof}
This follows immediately from
\[-{h_e^\alpha}' (l_i(e)) = \frac{\sqrt{\alpha}}{\sinh(\sqrt{\alpha} l_i(e))}\]
and
\[{h_e^\alpha}'(l_i(e)) - {h_e^\alpha}' (0) = \sqrt{\alpha} \frac{\cosh(\sqrt{\alpha} l_i(e))-1}{\sinh(\sqrt{\alpha} l_i(e))}.\]
\end{proof}
The previous proposition legitimates the following definition.
\begin{definition}\index{Kirchhoff Laplacian}
For $\alpha \geq 0$ the Kirchhoff Laplacian is the operator defined as
\[\triangle_\alpha: \CC(\FV) \to \CC(\FV),\]
acting as
\[\triangle_\alpha U(x)=  \sum_{e\sim x} - {h_e^\alpha}'(l_i(e))\omega(e)(U(x)-U(\partial^-(e))).\]
The vertex potential $M_\alpha:\FV \to (0,\infty)$ for $x\in \FV$ is defined as
\[M_\alpha (x) = \sum_{e\sim x} ({h_e^\alpha}'(l_i(e)) - {h_e^\alpha}'(0))\omega(e).\]
\end{definition}
From the definition we see that in the case of $\alpha =0$ the vertex potential vanishes.
We immediately obtain the following proposition.
\begin{prop}
For $\alpha\geq0$ we have for $x\in \FV$ and $U\in \CC(\FV)$
\[\partial_n H_\alpha U(x)=-\triangle_\alpha U(x)-M_\alpha(x) U(x).\]
\end{prop}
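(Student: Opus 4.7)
The plan is to obtain the identity by a direct calculation of $\partial_n H_\alpha U(x)$ from the explicit formula for $H_\alpha U$, followed by a telescoping step that separates out the pieces matching $\triangle_\alpha$ and $M_\alpha$.

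First, I would fix $x\in\FV$ and, as is customary for evaluating the weighted divergence, orient each edge $e\sim x$ so that $\partial^+(e)=x$. Differentiating the defining formula of the harmonic extension, one gets
\[(H_\alpha U)_e'(t) = U(\partial^+(e))\,{h_e^\alpha}'(t) - U(\partial^-(e))\,{h_e^\alpha}'(l_i(e)-t),\]
so that evaluating at $t=0$ with the chosen orientation yields
\[(H_\alpha U)_e'(0) = U(x)\,{h_e^\alpha}'(0) - U(\partial^-(e))\,{h_e^\alpha}'(l_i(e)).\]
Multiplying by the ellipticity weight $\omega(e)$ (we are in the intrinsic representation, so $b(e)=\omega(e)$) and summing over $e\sim x$ gives the raw expression
\[\partial_n H_\alpha U(x) = \sum_{e\sim x} \omega(e)\,{h_e^\alpha}'(0)\,U(x) - \sum_{e\sim x} \omega(e)\,{h_e^\alpha}'(l_i(e))\,U(\partial^-(e)).\]

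The key step is now an algebraic telescoping: I would add and subtract the quantity $\sum_{e\sim x}\omega(e)\,{h_e^\alpha}'(l_i(e))\,U(x)$ in order to split the sum into the two desired contributions. This rearrangement produces
\[\partial_n H_\alpha U(x) = \sum_{e\sim x}\omega(e)\,{h_e^\alpha}'(l_i(e))\bigl(U(x)-U(\partial^-(e))\bigr) - U(x)\sum_{e\sim x}\omega(e)\bigl({h_e^\alpha}'(l_i(e))-{h_e^\alpha}'(0)\bigr).\]
Comparing with the definitions of $\triangle_\alpha$ and $M_\alpha$, the first sum is precisely $-\triangle_\alpha U(x)$ (the extra sign comes from the minus already absorbed into the definition of $\triangle_\alpha$), while the second sum is $M_\alpha(x)\,U(x)$. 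Combining them yields the claimed identity
\[\partial_n H_\alpha U(x) = -\triangle_\alpha U(x)-M_\alpha(x)\,U(x).\]

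There is no real obstacle: the statement is essentially a bookkeeping lemma extracted from the divergence formula already established in Theorem 3.6. In fact, one could alternatively read it off directly from that theorem by setting $f=0$ in the representation $u=G_\alpha f + H_\alpha(u|_\FV)$; since $G_0=0$ and all three terms of the divergence formula there correspond one-to-one with the three summands appearing in the computation above, the equality is immediate. The only point requiring care is the consistent use of the outward orientation at $x$, which is what allows $(H_\alpha U)_e'(0)$ to be interpreted as the outward derivative entering $\partial_n$; once this convention is fixed the rest is mechanical.
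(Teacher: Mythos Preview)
Your proof is correct and matches the paper's reasoning. The paper does not spell out a proof here at all; the proposition is stated as an immediate consequence of the divergence computation already carried out in the proof of Theorem~3.11 together with the definitions of $\triangle_\alpha$ and $M_\alpha$ just introduced, which is exactly the shortcut you describe in your final paragraph (minor slip: you mean $G_\alpha 0 = 0$, not $G_0 = 0$).
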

\begin{remark}
As mentioned in section 2.3 we have already noted that in \cite{KL-11} the weighted Laplace was defined on the set $F$ consisting of all functions $U$ such that for all $x\in \FV$ the sum
\[\sum_{y\in \FV} b(x,y) |U(y)|\]
is finite, where $b(x,y)$ denotes the edge weights. In our case this agrees with $\CC(\FV)$ as the sum is a finite sum due to the local finiteness of the metric graph.
\end{remark}
With a little abuse of notation we will also apply the Kirchhoff Laplacian to functions defined on $\FX_\Gamma$, which is possible by applying the Kirchhoff Laplacian to the restriction of the function to the set of vertices.\medskip\\
We know from the previous section that weak $\alpha$-solutions satisfy the $\triangle$-Kirchhoff conditions. Rephrased with the operators above, we obtain that for weak $\alpha$-solutions, the $\triangle$-Kirchhoff conditions are given by
\[\triangle u(x) = \partial_n u(x) = \partial_nG_\alpha f (x) - \triangle_\alpha u (x) - M_\alpha (x) u(x).\]
This is an inhomogeneous linear difference equation of second order, and using the part Green operator and the $\alpha$-harmonic extension operator we can actually solve the original problem, if we can solve this discrete equation. We finish this section with a pre-version of the Krein resolvent formula. It relates weak solutions of the graph Dirichlet form with weak solutions of a discrete Dirichlet form.
\begin{theorem}
Let $f\in L^1_\loc(\FX_\Gamma)$ and $\alpha\geq 0$. If $u\in \dom \widetilde{\Delta}$ is a weak $\alpha$-solution, i.e.
\[\widetilde{\Delta} u + \alpha u = f,\]
then its restriction to $\FV$ satisfies the pointwise discrete operator equation
\[\triangle u(x) + \triangle_\alpha u(x) +M_\alpha (x) u(x)= \partial_nG_\alpha f(x).\]
Conversely, if $U\in \CC(\FV)$ satisfies this pointwise discrete operator equation, then the function defined by
\[u(x) = G_\alpha f (x) + H_\alpha U (x)\]
is a weak $\alpha$-solution.
\end{theorem}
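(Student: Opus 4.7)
The proof combines three earlier structural results: the characterization of weak solutions in Theorem 3.6, the decomposition $u = G_\alpha f + H_\alpha(u|_\FV)$ from Theorem 3.13, and the identity $\partial_n H_\alpha U = -\triangle_\alpha U - M_\alpha U$ from Proposition 3.17. Essentially, the present theorem reassembles these pieces into a pointwise equation on $\FV$.

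\textbf{Forward direction.} Starting from a weak $\alpha$-solution $u$, Theorem 3.6 gives that $u$ satisfies the edgewise equation $-u_e'' b(e) + \alpha u_e a(e) = f_e a(e)$ distributionally and the $\triangle$-Kirchhoff identity $\partial_n u(x) = \triangle u(x)$ at every vertex. The edgewise equation combined with the prescribed vertex values $u|_\FV$ forces the representation $u = G_\alpha f + H_\alpha(u|_\FV)$ via Theorem 3.13. Differentiating edgewise and summing, one obtains by linearity $\partial_n u(x) = \partial_n G_\alpha f(x) + \partial_n H_\alpha(u|_\FV)(x)$, and Proposition 3.17 rewrites the second term as $-\triangle_\alpha u(x) - M_\alpha(x) u(x)$. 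Substituting into $\triangle u(x) = \partial_n u(x)$ and rearranging yields the discrete equation in the statement.

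\textbf{Reverse direction.} Given $U \in \CC(\FV)$ solving the discrete equation, I would set $u := G_\alpha f + H_\alpha U$ and verify the two conditions of Theorem 3.6. The edgewise equation is immediate: by construction of the Green function $G_\alpha^e$, each $(G_\alpha f)_e$ is a distributional solution of $-v'' + \alpha v a(e)/b(e) = f_e a(e)/b(e)$ with vanishing boundary values, while $(H_\alpha U)_e$ is a linear combination of the fundamental system $h_e^\alpha(t), h_e^\alpha(l_i(e)-t)$ of the corresponding homogeneous equation. Since $G_\alpha f$ vanishes on $\FV$ by the boundary behaviour of $h_e^\alpha$, one has $u|_\FV = U$, so $\triangle u(x) = \triangle U(x)$. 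Applying Proposition 3.17 once more gives $\partial_n u(x) = \partial_n G_\alpha f(x) - \triangle_\alpha U(x) - M_\alpha(x) U(x)$, and inserting the hypothesised discrete equation rewrites this as $\triangle U(x) = \triangle u(x)$, which is the Kirchhoff condition. Invoking Theorem 3.6 in the other direction then yields that $u$ is a weak $\alpha$-solution.

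\textbf{Main obstacle.} The analytic content has already been distilled into Theorems 3.6 and 3.13 and Proposition 3.17, so the argument is essentially bookkeeping. The only genuine care lies in the reverse direction: I must check that $u = G_\alpha f + H_\alpha U$ actually lies in $\dom \widetilde{\Delta} \subset \CD(\CS)^*_\loc$. Absolute continuity across vertices is automatic because both summands are continuous and agree with $U$ on $\FV$; local integrability of $u'$ follows from the edgewise smoothness provided by Propositions 3.9 and 3.10; the requirement $u \in \CC_j(\FX_\Gamma)$ is exactly what allows $\triangle U(x)$ to be interpreted pointwise in the discrete equation, so it is already tacit in the hypothesis and needs no further verification.
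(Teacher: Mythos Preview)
Your proof is correct and follows essentially the same route as the paper: both directions combine the Kirchhoff characterization (Theorem 3.6), the decomposition $u = G_\alpha f + H_\alpha(u|_\FV)$ with its divergence formula (this is Theorem 3.11, not 3.13), and the identity $\partial_n H_\alpha U = -\triangle_\alpha U - M_\alpha U$. Your explicit check that $u \in \CD(\CS)_\loc^*$ in the converse direction---in particular the remark that the $\CC_j$ condition is already implicit in the well-definedness of $\triangle U(x)$---is a nice bit of care that the paper leaves tacit.
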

\begin{proof}
If $u\in \dom \widetilde{\Delta}$ is a weak $\alpha$-solution, then by Theorem 3.6 it satisfies the $\triangle$-Kirchhoff conditions
\[\partial_n u(x) = \triangle u(x)\]
for all $x\in \FV$. By Theorem 3.11 we get for the weighted divergence
\[\partial_n u(x) = \partial_n G_\alpha f - \triangle_\alpha u(x) - M_\alpha (x) u(x).\]
Both equations together yield that $u|_\FV$ satisfies the discrete equation. Conversely, if $U\in \CC(\FV)$ is a solution to the discrete problem, then by the representation formula we have that for each $e=(x,y)$ the component $u_e$ of
\[u(x) = G_\alpha f (x) + H_\alpha U (x)\]
satisfies $-u_e'' + \alpha u_e = f$ and $u_e$ has boundary values $u_e(0) = u(x)$ and $u_e(l(e))= u(y)$. Furthermore the discrete equation above are the $\triangle$-Kirchhoff conditions for the function $u$. Thus it is a weak solution of $\widetilde{u} + \alpha u = f$ by Theorem 3.6.
\end{proof}
So far we have obtained an existence result, as we have reduced the continuous equation to a discrete one. In the next section we consider the homogeneous equation which is connected with the uniqueness of the equation.
\section{Harmonic functions}
In this section we study solutions of the weak equation where there right hand side $f$ is identically zero. This motivates the following definition.
\begin{definition}\index{harmonic functions!$\alpha$-sub / superharmonic}
Let $\FY\subset \FX_\Gamma$ be open and $\alpha \geq0$.
Then a function $u\in \CD(\CS)_\loc^*$ is called $\alpha$-subharmonic in $\FY$ if
\[\CS(u,\phi) + \alpha \langle u,\phi\rangle \leq 0\]
holds for all nonnegative $\phi\in \CD(\CS)$ with compact support in $\FY$. Conversely, if $-u$ is $\alpha$-subharmonic in $\FY$ we call $u$ $\alpha$-superharmonic. If $u$ is both, $\alpha$-subharmonic and $\alpha$-superharmonic in $\FY$, we call it $\alpha$-harmonic in $\FY$. In this case $u\in \dom \widetilde{\Delta}$ and $u$ is a weak $\alpha$-solution in $\FY$ of the equation $\widetilde{\Delta}u + \alpha u =0$.
\end{definition}
As mentioned in the definition, a function $u\in \CD(\CS)_\loc^*$ which is $\alpha$-harmonic is already in the domain of the weak Laplacian. This is not true for general $\alpha$-sub- and superharmonic functions. From the definition we get that if $u\in \CD(\CS)_\loc^*$ is $\alpha$-superharmonic, then by the Riesz representation theorem there exists a positive Radon measure $\mu$ such that
\[\CS(u,\phi) + \alpha \langle u,\phi\rangle = \mu(\phi).\]
Sub- and super-harmonic functions play an important role in potential theory. It is an important observation that we can restrict to a certain subclass of them, given in terms of the $\alpha$-harmonic extension. We begin by calculating the Radon measure which is connected with the weak Laplacian of the $\alpha$-harmonic extension of functions defined on the set of vertices.
\begin{prop}\index{Kirchhoff conditions! sub-$\triangle$ / super-$\triangle$}
Let $U\in \CC(\FV)$. Then we have for $u:= H_\alpha U$
\[\CS(u,\phi) +\alpha \langle u, \phi\rangle = \mu(\phi)\]
where $\mu$ is the signed Radon measure defined by
\[\mu(\phi)= \sum_{x\in \FV} (\triangle_\alpha U(x) + M_\alpha (x)U(x) + \triangle U(x)) \delta_x(\phi).\]
In particular, the  $\alpha$-harmonic extension  of $U\in \CC(\FV)$ is $\alpha$-subharmonic (resp. $\alpha$-superharmonic) if and only if it satisfies the sub-$\triangle$-Kirchhoff conditions (resp. super-$\triangle$-Kirchhoff conditions) given by
\[ \triangle_\alpha u(x)+ M_\alpha (x) u(x) + \triangle u(x) \leq 0 \mbox{ (resp. $\geq$)}. \]
\end{prop}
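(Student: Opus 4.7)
The plan is to verify the identity directly by splitting $\CS = \CE + \CQ$ and handling each piece, then conclude the in-particular statement from the signs of Radon measures. Passing to the intrinsic representation (so that $b(e)=a(e)=\omega(e)$ and the edge equations read $-u_e''+\alpha u_e = f_e$) costs nothing by the invariance noted in the Remark after Theorem~3.6, and simplifies the accounting. First I would check that $u = H_\alpha U$ lies in $\CD(\CS)_\loc^*$ so that the left-hand side is meaningful: continuity in each vertex is built into the definition of $H_\alpha$, the components are smooth on each edge by Proposition~3.10, and membership in $\CC_j(\FX_\Gamma)$ is automatic because $u|_\FV = U \in \CC(\FV)$ and the metric graph is locally finite.

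Next I would compute $\CE(u,\phi) + \alpha\langle u,\phi\rangle$ for $\phi\in\CD(\CS)$ of compact support. Since the vertex set is $\omega$-negligible, both terms split edge-by-edge as $\sum_e \omega(e)\int_0^{l_i(e)}(u_e'\phi_e'+\alpha u_e\phi_e)\,dt$. Integrate by parts on each edge. The defining ODE $-h_e^{\alpha\,\prime\prime} + \alpha h_e^\alpha = 0$ implies $-u_e''+\alpha u_e = 0$ pointwise, so the bulk terms drop out and we are left with the boundary contributions $\sum_e \omega(e)[u_e'\phi_e]_0^{l_i(e)}$. Collecting endpoints at each vertex $x$ and recalling that $\partial_n u(x) = \sum_{e\sim x} \omega(e)u_e'(0)$ with edges oriented outward from $x$, these boundary terms rearrange to $-\sum_{x\in\FV}\partial_n u(x)\,\phi(x)$. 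Now apply Proposition~3.13 to identify $\partial_n H_\alpha U(x) = -\triangle_\alpha U(x) - M_\alpha(x)U(x)$, yielding
\[
\CE(u,\phi) + \alpha\langle u,\phi\rangle = \sum_{x\in\FV} \bigl(\triangle_\alpha U(x) + M_\alpha(x)U(x)\bigr)\,\phi(x).
\]

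For the discrete piece, a direct symmetric-bilinear-form manipulation (swap the role of $x$ and $y$ in the jump sum and use symmetry of $j$) gives, for compactly supported $\phi$,
\[
\CQ(u,\phi) = \sum_{x\in\FV}\Bigl(\sum_{y\in\FV} j(x,y)(u(x)-u(y)) + k(x)u(x)\Bigr)\phi(x) = \sum_{x\in\FV} \triangle U(x)\,\phi(x),
\]
using $u|_\FV = U$. Adding the two contributions produces exactly $\mu(\phi)$ with $\mu$ the signed Radon measure in the statement. The in-particular claim is then immediate: $u$ is $\alpha$-subharmonic iff $\mu(\phi)\le 0$ for all nonnegative $\phi\in\CD(\CS)_c$, and since $\mu$ is purely atomic on $\FV$ this is equivalent to $\triangle_\alpha U(x) + M_\alpha(x)U(x) + \triangle U(x) \le 0$ for every $x$ (choose $\phi$ as a nonnegative bump of unit height at a single prescribed vertex, supported inside $\bigstar(x)$).

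The main technical obstacle I anticipate is bookkeeping: keeping the orientation conventions consistent so that the edge-boundary terms assemble correctly into the weighted divergence $\partial_n u(x)$, and making sure the intrinsic-representation reduction truly collapses the edgewise equation to $-u_e''+\alpha u_e = 0$ without hidden factors of $\omega(e)$. Apart from this, the computation is a direct integration by parts plus an application of Proposition~3.13, and no new analytic ingredient is required.
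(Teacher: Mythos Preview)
Your argument is correct and follows the same route as the paper's proof: edgewise integration by parts using that $(H_\alpha U)_e$ solves $-u_e''+\alpha u_e=0$, then assembling the boundary terms into $-\partial_n u(x)$ and invoking the identity $\partial_n H_\alpha U(x)=-\triangle_\alpha U(x)-M_\alpha(x)U(x)$ (this is Proposition~3.17 in the paper's numbering, not~3.13). The paper's proof is just a one-sentence sketch of exactly this computation, so your version is simply a more explicit rendering of the same idea.
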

\begin{proof}
Let $x\in \FV$ and $\phi_x\CD(\CS)_c$ be a function with support in the star neighborhood of $x$. Then by integration by parts and since $H_\alpha U$ is $\alpha$-harmonic on each edge, we obtain that $\mu$ has the desired representation. The remaining claims thus follow.
\end{proof}
The proposition gives also a one-to-one correspondence between $\alpha$-harmonic functions defined on the continuous graphs and $1$-harmonic functions of the discrete problem, i.e.
\[ M_\alpha (x)^{-1}(\triangle_\alpha + \triangle) u(x)+ u(x) =0.\]
Unfortunately there is no one-to-one correspondence between continuous and discrete subharmonic functions. However as we will show next, when considering $\alpha$-subharmonic functions it is sufficient to study those which are harmonic on each edge. It is basically a generalization of the maximum principle on an interval.
\begin{prop}\index{harmonic extension!maximum principle}
Let $\alpha\geq 0$ and $u$ be $\alpha$-subharmonic. Then $H_\alpha(u|_\FV)$ is $\alpha$-subharmonic and $H_\alpha(u|_\FV)\geq u$.
\end{prop}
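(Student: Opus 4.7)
The plan is to establish the inequality $u \leq v := H_\alpha(u|_{\FV})$ edge by edge from the distributional interval maximum principle, and then to compare the inward normal derivatives at each vertex in order to verify the sub-$\triangle$-Kirchhoff conditions for $v$ by means of Proposition 3.16.

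First I would test $\alpha$-subharmonicity of $u$ against nonnegative $\phi \in \CD(\CS)_c$ supported inside a single open edge $e$. Since $\CQ(u,\phi)=0$ for such $\phi$, this yields the distributional inequality $-u_e'' b(e) + \alpha u_e a(e) \leq 0$ on $(0,l(e))$; after passing to the intrinsic representation simply $-u_e'' + \alpha u_e \leq 0$. The component $v_e$, on the other hand, is the smooth solution of $-v_e'' + \alpha v_e = 0$ on $(0,l_i(e))$ with $v_e(0)=u(\partial^+(e))=u_e(0)$ and $v_e(l_i(e))=u(\partial^-(e))=u_e(l_i(e))$. Setting $w:=v_e-u_e$, one has $-w'' + \alpha w \geq 0$ distributionally with $w(0)=w(l_i(e))=0$. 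The classical one-dimensional weak maximum principle then gives $w \geq 0$, i.e. $u_e \leq v_e$ on every edge. This proves the pointwise comparison $u\leq H_\alpha(u|_{\FV})$.

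Next I would extract a vertex inequality from subharmonicity of $u$. Since $-u_e'' + \alpha u_e$ is a nonpositive distribution, it equals $-\mu_e$ for a nonnegative Radon measure $\mu_e$, so $u_e'$ has bounded variation on compact subintervals of $(0,l_i(e))$ and the one-sided limit $u_e'(0^+)$ exists; hence $\partial_n u(x)=\sum_{e\sim x} b(e)\, u_e'(0^+)$ is well-defined. Plugging the sequence of cut-off functions $\phi_n\in\CD(\CS)_c$ (supported in $B_{1/n}(x)$, equal to $1$ at $x$ and edgewise affine) into the subharmonicity inequality and passing to the limit exactly as in the proof of Theorem~3.6 yields the sub-Kirchhoff inequality
\[
\triangle u(x) \leq \partial_n u(x) \qquad \text{for every } x\in \FV.
\]

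Now I would compare the inward derivatives of $u$ and $v$ at each vertex $x$. Fix an edge $e\sim x$ oriented with $\partial^+(e)=x$. From $v_e\geq u_e$ on $e$ and $v_e(0)=u_e(0)$ we get $(v_e-u_e)(t)/t \geq 0$ for $t\in(0,l_i(e))$; since both one-sided derivatives at $0$ exist, taking $t\to 0^+$ gives $v_e'(0)\geq u_e'(0^+)$. Summing over $e\sim x$ (with weights $b(e)$ in the original scale) yields $\partial_n v(x) \geq \partial_n u(x)$. Combining with the previous step and using that the discrete Laplacian only sees values on $\FV$, so $\triangle u(x)=\triangle U(x)$ with $U:=u|_{\FV}$, we obtain
\[
\triangle U(x)\ \leq\ \partial_n u(x)\ \leq\ \partial_n v(x)\ =\ -\triangle_\alpha U(x) - M_\alpha(x)U(x),
\]
where the last equality is Proposition~3.13. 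Rearranging gives the sub-$\triangle$-Kirchhoff condition $\triangle_\alpha U(x) + M_\alpha(x)U(x) + \triangle U(x) \leq 0$ at every $x\in\FV$, and Proposition~3.16 concludes that $v=H_\alpha(u|_{\FV})$ is $\alpha$-subharmonic.

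The step I expect to be the most delicate is the interplay between the distributional nature of $u$ on the edges and the requirement to evaluate one-sided derivatives at vertices: one must verify that the Radon measure $-u_e''+\alpha u_e$ allows $u_e'(0^+)$ to exist and that the limit of $-\CE(u,\phi_n)$ truly equals $\partial_n u(x)$ under only an inequality (not an identity), and that the comparison $v_e'(0)\geq u_e'(0^+)$ is legitimate even though $u_e$ may fail to be $\CC^1$. Everything else is a routine combination of the one-dimensional maximum principle with Propositions~3.13 and~3.16.
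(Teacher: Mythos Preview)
Your proof is correct and follows essentially the same route as the paper: first obtain $u\le v$ on each edge from the one-dimensional maximum principle, then use the pointwise comparison together with $u|_\FV=v|_\FV$ to get $\partial_n v(x)\ge \partial_n u(x)$, combine this with the sub-Kirchhoff inequality $\triangle u(x)\le \partial_n u(x)$ coming from $\alpha$-subharmonicity of $u$, and conclude via Proposition~3.20 (your 3.16). The paper phrases the last step directly as $-\partial_n v(x)+\triangle v(x)\le -\partial_n u(x)+\triangle u(x)\le 0$ without passing through $\triangle_\alpha$ and $M_\alpha$, but this is only a cosmetic difference; your explicit invocation of Proposition~3.17 (your 3.13) amounts to the same identity. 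If anything, you are more careful than the paper about the existence of $u_e'(0^+)$ and about why the limit of $-\CE(u,\phi_n)$ still equals $\partial_n u(x)$ under a mere inequality, a point the paper glosses over.
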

\begin{proof}
We first show the second claim. We set $v:=H_\alpha(u|_\FV)$. By definition $v$ is $\alpha$-harmonic on each edge and thus $u-v$ is an $\alpha$-subsolution on each edge with $(u-v)|_\FV \equiv 0$. Thus by the minimal principle of the ODE we obtain $u-v \leq 0$. To show the first claim, let $\phi \in \CC^\infty_c(\FX_\Gamma)$ supported in a small neighborhood of a vertex $x\in \FV$. We have to show that
\[\CS(v,\phi) + \alpha (v,\phi) \leq 0\]
which is in turn equivalent to
\[-\partial_n v(x) + \triangle v(x) \leq 0\]
due to the previous proposition.
By definition we have $\triangle v = \triangle u$. Since $u\leq v$ and $u = v$ on $\FV$ we have
\[\frac{v_e(t)- v(x)}{t} \geq \frac{u_e(t)- u(x)}{t}\]
and thus
\[\frac{1}{t} \int\limits_0^t v_e'(t) \:dt \geq \frac{1}{t} \int\limits_0^t u_e'(t) \:dt\]
hence $\partial_n v(x) \geq \partial_n u(x)$.
This yields
\[-\partial_n v(x) + \triangle v(x) \leq - \partial_n u(x) + \triangle u(x) \leq 0\]
where the last inequality follows since $u$ is $\alpha$-subharmonic.
\end{proof}
We now come to a first uniqueness result, which is actually a version of the maximum principle. It involves ideas in the discrete setting from \cite{KL-10}.
\begin{theorem}\index{harmonic functions!$L^p$ maximum principle}
Let $\alpha>0$, $1\leq p<\infty$ and assume that $u\in L^p(\FX_\Gamma,\omega)$ is $\alpha$-harmonic. Assume that for all rays $\Fp=(p_0,p_1,\dots)$ we have
\[\sum_{n=1}^\infty \omega(p_k,p_{k+1})l_i(p_k,p_{k+1})=\infty,\]
then $u\equiv 0$.
\end{theorem}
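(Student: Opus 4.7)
\emph{Plan.} The strategy is to reduce the continuous $\alpha$-harmonic equation to a purely discrete eigenvalue-type equation on $\FV$ and then apply an $\ell^p$-Liouville theorem on the resulting discrete graph, exploiting the fact that $\alpha>0$ produces a strictly positive killing term $M_\alpha(x)$ and that the ray-divergence condition on $\omega\cdot l_i$ provides the discrete completeness needed for cutoff arguments.

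\emph{Step 1 (reduction to $\FV$).} Because $u$ is $\alpha$-harmonic, each component $u_e$ solves the ODE $-u_e''+\alpha u_e=0$, so by the representation formula underlying Theorem~3.11 (with $f\equiv 0$) we have $u=H_\alpha U$ with $U:=u|_{\FV}$. Theorem~3.18 applied with $f\equiv0$ then forces $U$ to satisfy the pointwise discrete equation
\[
\triangle U(x)+\triangle_\alpha U(x)+M_\alpha(x)U(x)=0,\qquad x\in\FV.
\]
The key point is that $M_\alpha(x)>0$ for every $x$, so the total killing part $k(x)+M_\alpha(x)$ is bounded below by the strictly positive function $M_\alpha$.

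\emph{Step 2 (discrete integrability).} I would show that $u=H_\alpha U\in L^p(\FX_\Gamma,\omega)$ forces $U\in \ell^p(\FV,M)$ with $M(x)=\sum_{e\sim x}l_i(e)\omega(e)$. On each edge $e=(x,y)$ the explicit form $u_e(t)=U(x)h_e^\alpha(t)+U(y)h_e^\alpha(l_i(e)-t)$ gives a two-sided comparison $|U(x)|^p+|U(y)|^p\asymp \|u_e\|_{L^p(\omega)}^p/(l_i(e)\omega(e))$, the constants being uniform because the standing assumption $\nu(e)l_c(e)^2\le1$ yields $l_i(e)\le1$, so the $h_e^\alpha$ are uniformly comparable to the linear profile $h_e^0$. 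Summing over edges transfers the $L^p(\omega)$ bound on $u$ to an $\ell^p(M)$ bound on $U$.

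\emph{Step 3 (discrete $L^p$-Liouville).} It remains to show that the only $V\in\ell^p(\FV,M)$ solving $(\triangle+\triangle_\alpha+M_\alpha)V=0$ is $V\equiv0$. This is in the spirit of the discrete Liouville theorems of \cite{KL-10}. Pair the equation with $|V|^{p-2}V\cdot\eta_n^2$ (replace $|V|^{p-2}V$ by $\mathrm{sgn}(V)$ if $p=1$), where $\eta_n$ is a sequence of $1$-Lipschitz cutoffs in the intrinsic path metric with $\eta_n\nearrow1$ pointwise; such cutoffs exist precisely because the ray divergence condition $\sum\omega(p_k,p_{k+1})l_i(p_k,p_{k+1})=\infty$ is the statement that every ray has infinite intrinsic measure, so metric balls are compact and one may build cutoffs as in the proof of Theorem~2.24. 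Summation by parts (the discrete Green formula) decomposes the resulting identity into a non-negative diffusion-type piece, a non-negative jump-type piece, and the killing piece $\sum_x M_\alpha(x)|V(x)|^p\eta_n(x)^2 M(x)$, modulo a commutator remainder between $\eta_n$ and $|V|^{p-2}V$ that can be absorbed into the main term by Young's inequality exactly as in the classical discrete setup. Letting $n\to\infty$ forces $\sum_x M_\alpha(x)|V(x)|^p M(x)\le 0$, and since $M_\alpha>0$ this yields $V\equiv0$, hence $u=H_\alpha U\equiv0$.

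\emph{Main obstacle.} Step~3 is the delicate part. The purely combinatorial difficulty is to confirm that the given ray condition suffices to produce cutoff sequences on $\FV$ with uniformly controlled discrete gradients \emph{both} with respect to the diffusion-induced weights and the jump weights $j$ simultaneously, and to carry out the Leibniz/commutator estimate for $|V|^{p-2}V$ in the non-local operator $\triangle+\triangle_\alpha$. For $p=2$ this is routine; for general $p\in[1,\infty)$ it mirrors the techniques developed for nonlocal discrete Schr\"odinger operators with positive potential, and the positivity $M_\alpha>0$ is what makes the argument close without requiring any further spectral information.
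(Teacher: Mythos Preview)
Your Step~3 contains a genuine gap. You assert that the ray condition $\sum_k \omega(p_k,p_{k+1})l_i(p_k,p_{k+1})=\infty$ implies that intrinsic metric balls are compact, and from this you wish to build Lipschitz cutoffs. But the ray condition only says that every ray has infinite $\omega$-\emph{measure}; intrinsic completeness requires instead that every ray has infinite intrinsic \emph{length}, i.e.\ $\sum_k l_i(p_k,p_{k+1})=\infty$ (Proposition~1.17). These are different hypotheses: on $[0,1)$ with edge lengths $l_i(e_n)=2^{-n}$ and weights $\omega(e_n)=2^{2n}$, every ray has infinite measure yet the space is incomplete and admits no compactly supported cutoffs approaching $1$. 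So your cutoff construction collapses exactly in the situations the theorem is meant to cover. A second, independent difficulty is that even granted cutoffs, the commutator estimate for the long-range jump operator $\triangle$ (with arbitrary $j$) against $|V|^{p-2}V\,\eta_n^2$ is not routine: the jump weights are not assumed to decay with distance, so the cross terms require a separate argument you have not supplied.

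The paper sidesteps both issues with a far more elementary device. From the discrete equation $(\triangle+\triangle_\alpha+M_\alpha)U=0$ with $M_\alpha>0$, if $U(x_0)<0$ at some vertex then the equation forces the existence of a neighbour $x_1$ with $U(x_1)<U(x_0)$; iterating produces a combinatorial ray $(x_0,x_1,\dots)$ along which $U<U(x_0)<0$. On each edge $e_n$ of this ray both endpoint values have the same (negative) sign, so the explicit $\sinh$-profile of $u_{e_n}$ gives $|u_{e_n}(t)|\ge c\,|U(x_0)|$ on a fixed fraction of the edge, hence $\|u_{e_n}\|_p^p\ge c'\,l_i(e_n)\omega(e_n)$. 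Summing over $n$ yields $\|u\|_p^p\ge c'\sum_n l_i(e_n)\omega(e_n)=\infty$, a contradiction. No cutoffs, no discrete $\ell^p$-membership of $U$, and no control on $j$ beyond the pointwise equation are needed; in particular your Step~2 becomes unnecessary.
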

\begin{proof}
Since $u$ is $\alpha$-harmonic it satisfies the $\triangle$-Kirchhoff conditions, that is
\[\triangle_\alpha u(x)+ M_\alpha u(x) + \triangle u(x) = 0.\]
Choose for abbreviation $b(x,y)>0,c(x)>0$ such that
\[\triangle_\alpha u(x)+ M_\alpha u(x) + \triangle u(x) = \sum_{y\in \FV} b(x,y) (u(x)-u(y)) + c(x) u(x).\]
Assume that there is some $x_0$ with $u(x_0)<0$. Since
\[0\leq \sum_{y\in \FV} b(x_0,y) (u(x_0)-u(y)) + c(x_0) u(x_0)\]
there has to be an $x_1\sim x_0$ with $u(x_1)<u(x_0)$ and inductively a sequence $(x_n)_n$ of connected points with $u(x_n)<u(x_0) <0$. Then with $e_n=(x_n,x_{n+1})$ we have
\begin{eqnarray*}
\|u\|_p^p &\geq & \sum_{n=1}^\infty \|u_{e_n}\|_p^p\\
&=&\sum_{n=1}^\infty \int\limits_0^{l_i(e_n)} |u(x_n)\tfrac{\sinh ( \sqrt{\alpha}(l_i(e_n)  - t) )}{\sinh ( \sqrt{\alpha} l_i(e) )} + u(x_{n+1}) \tfrac{\sinh( \sqrt{\alpha} t )}{\sinh ( \sqrt{\alpha} l_i(e) )}|^p \omega(e_n) dt\\
&\geq&C \sum_{n=1}^\infty  |u(x_0)| \Bigl(\frac{\sinh \tfrac{l_i(e_n)}{2}}{\sinh l_i(e_n)}\Bigr)^p \int\limits_{-\frac{l_i(e_n)}{2}}^{\frac{l_i(e_n)}{2}} (\cosh t)^p \omega(e_n) dt\\
&\geq& C \sum_{n=1}^\infty l_i(e_n)\omega(e_n)
\end{eqnarray*}
where we have used that $\cosh t \geq 1$ and $\tfrac{\sinh \tfrac{l_i(e_n)}{2}}{\sinh l_i(e_n)}\geq C$ when $l_i(e_n)\leq 1$. Due to our assumption on the measures and the lengths the right hand side diverges, which contradicts $u\in L^p$.\\
Analogously we show that the case $u(x_0)>0$ cannot appear. Thus we have shown $u\equiv 0$.
\end{proof}
Let us note that the assumption is somehow natural as it is actually equivalent to all unbounded sets having infinite measure. If the intrinsic measure satisfies $\inf\limits_{e\in E} \omega(e) >0$, then metric completeness implies the assumption in the theorem.\medskip\\
We now turn to a local version of the maximum principle which will be of use in the next section.
\begin{lemma}\index{harmonic functions!local maximum principle}
Let $\FX_\Gamma$ be a metric graph and $\CS$ a graph Dirichlet form. Let $\FY\subset \FX_\Gamma$ precompact be given. Assume that the function $u$ on $\FX_\Gamma$ satisfies
\begin{itemize}
\item[(i)] $u$ is $\alpha$-superharmonic in $\FY$ for some $\alpha>0$,
\item[(ii)] $u\geq 0$ on $\FX_\Gamma \setminus \FY$.
\end{itemize}
Then $u\equiv 0$ or $u>0$ on each connected component of $\FY$. In particular $u\geq 0$.
\end{lemma}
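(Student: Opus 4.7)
The plan is to prove $u\geq 0$ first, and then bootstrap this weak form to the strong minimum principle by combining the one-dimensional ODE minimum principle on the edges with a Hopf-type argument at the vertices. For the weak part I would test superharmonicity against $v:=u_-=-(u\wedge 0)$. Since $u\geq 0$ on $\FX_\Gamma\setminus\FY$ by hypothesis (ii), the function $v$ is continuous, nonnegative, vanishes on $\partial\FY$, and has compact support contained in $\overline{\FY}$; as $\CS$ is a Dirichlet form it is closed under the normal contraction $t\mapsto(-t)_+$, so $v\in\CD(\CS)$. A case analysis on the four sign combinations of $u(x),u(y)$ shows $(u(x)-u(y))(u_-(x)-u_-(y))\leq -(u_-(x)-u_-(y))^2$, and analogous edgewise/pointwise identities give
\[\CE(u,v)=-\CE(v),\qquad \CQ(u,v)\leq -\CQ(v),\qquad \langle u,v\rangle=-\|v\|_2^2.\]

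Since the definition of superharmonicity only admits test functions with compact support in the open set $\FY$, I would approximate $v$ in $\|\cdot\|_\CS$ by $v_n:=\eta_n v$, where $\eta_n$ are Lipschitz cutoffs equal to $1$ on $\{x\in\FY:\dist(x,\partial\FY)>1/n\}$ and $0$ in a neighbourhood of $\partial\FY$; each $v_n$ is nonnegative and lies in $\CD(\CS)_c$ with support inside $\FY$. Passing to the limit in the defining inequality yields $0\leq \CS(u,v)+\alpha\langle u,v\rangle\leq -\CS(v)-\alpha\|v\|_2^2\leq 0$, and since $\alpha>0$ this forces $v\equiv 0$, i.e. $u\geq 0$ on all of $\FX_\Gamma$.

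For the strong minimum principle, fix a connected component $\FY'$ of $\FY$ and suppose $u(x_0)=0$ for some $x_0\in\FY'$. Set $\FZ:=\{x\in\FY':u(x)=0\}$. This set is closed in $\FY'$ by continuity; I would show it is also open, whence $\FZ=\FY'$ by connectedness. If $x_0$ lies in the interior of some edge, the ODE inequality $-u''_e+\alpha u_e\geq 0$ combined with $u_e\geq 0$ and a zero interior minimum forces $u_e\equiv 0$ on the connected component of $e\cap\FY'$ containing $x_0$, by the classical strong minimum principle for second-order linear ODE with nonnegative zero-order coefficient $\alpha>0$; this produces a neighbourhood of $x_0$ in $\FZ$.

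The main obstacle is the case when $x_0\in\FV$ is a vertex. Here, taking the same shrinking-support test functions as in the proof of Theorem 3.6 and passing to the limit in the superharmonic inequality yields the extended super-$\triangle$-Kirchhoff condition $-\partial_n u(x_0)+\triangle u(x_0)\geq 0$. At a zero minimum one has $\triangle u(x_0)=-\sum_{y} j(x_0,y)u(y)-k(x_0)\cdot 0\leq 0$ and $\partial_n u(x_0)=\sum_{e\sim x_0} b(e)u'_e(0_e)\geq 0$ (right-derivatives at a zero minimum of a nonnegative function are nonnegative), so both quantities must vanish. In particular $u'_e(0_e)=0$ for every edge $e\sim x_0$ and $u(y)=0$ for every $y$ with $j(x_0,y)>0$. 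To conclude $u_e\equiv 0$ on each incident edge I would use the Green-function representation from Theorem 3.11: writing $-u''_e+\alpha u_e=\mu_e$ with $\mu_e\geq 0$ distributionally, the formula $u_e(t)=\int_0^{l_i(e)} G_e^\alpha(t,s)\,d\mu_e(s)+u(w)h_e^\alpha(l_i(e)-t)$ (the $h_e^\alpha(t)$ term drops out because $u(x_0)=0$) gives, after differentiation at $0$, the nonnegative decomposition $u'_e(0)=\int_0^{l_i(e)} h_e^\alpha(s)\,d\mu_e(s)+u(w)\,(-h_e^\alpha{}'(l_i(e)))$. Its vanishing forces $\mu_e=0$ and $u(w)=0$, so $u_e$ is an ODE-harmonic function with Dirichlet data $0$ at both endpoints, hence identically zero. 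Thus a star-neighbourhood of $x_0$ within $\FY'$ is contained in $\FZ$, $\FZ$ is open, and the strong minimum principle follows.
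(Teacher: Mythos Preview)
Your argument is correct, but the paper's route is markedly shorter and structurally different. Rather than testing against $u_-$ and running a PDE/Hopf-type analysis, the paper first invokes Proposition~3.21 to replace $u$ by $H_\alpha(u|_\FV)$: this edgewise $\alpha$-harmonic minorant is again $\alpha$-superharmonic, agrees with $u$ on $\FV$, and lies below $u$, so it suffices to treat that case. For such a function a negative interior edge minimum is impossible (since $u_e''=\alpha u_e$), so a putative negative minimum sits at a vertex $x_0\in\FY$; the super-$\triangle$-Kirchhoff inequality $\triangle_\alpha u(x_0)+M_\alpha(x_0)u(x_0)+\triangle u(x_0)\ge 0$ of Proposition~3.20 then gives an immediate contradiction because every summand on the left is nonpositive and $M_\alpha(x_0)u(x_0)<0$. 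The strong alternative is dismissed in one phrase (``by an irreducibility argument'').

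What each approach buys: the paper's reduction to $H_\alpha(u|_\FV)$ collapses the problem to a purely discrete minimum principle, entirely in line with the discretisation theme of the chapter, and avoids any cutoff or ODE comparison. Your variational route is self-contained and closer in spirit to standard elliptic theory; it does not rely on the $H_\alpha$-machinery at all. Two small points to tighten in your write-up: the convergence $\CS(u,\eta_n v)\to\CS(u,v)$ needs one line of justification (the cross term $\int u'\eta_n' v$ vanishes because $v$ is $\tfrac12$-H\"older and zero on $\partial\FY$, so a Hardy-type estimate controls it); and in the vertex step your Green-function identity from Theorem~3.11 should be read on a subinterval of $e$ contained in $\FY$ (or replaced by a direct comparison with $t\mapsto u_e(t_1)\sinh(\sqrt\alpha\,t)/\sinh(\sqrt\alpha\,t_1)$), since superharmonicity is only assumed in $\FY$ and the edge may protrude beyond it.
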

\begin{proof}
First we see that if $u$ satisfies these assumptions, then also $H_\alpha (u|_\FV)$ does. Furthermore we know that $u\geq H_\alpha (u|_\FV)$. Thus it again  suffices to show it for the case $H_\alpha (u|_\FV) = u$, which we will now assume.\\
Choose $x_0\in \FX_\Gamma$ such that $u$ attains a negative minimum there. As such an $u$ cannot attain a minimum in the interior of an edge, we will assume that $x_0\in \FV$. Thus we have for all $y\sim x_0$ that $u(x_0) \leq u(y)$ when $y\in \FY$ and also for $y\in \FX_\Gamma \setminus \FY$ as $u$ is nonnegative there. Since $u$ is $\alpha$-superharmonic it satisfies the super-$\triangle$-Kirchhoff conditions in $x_0$
\[\triangle_\alpha u(x_0) + M_\alpha (x_0) u(x_0) + \triangle u(x_0) \geq 0.\]
As all differences $u(x_0)- u(y)$ and $u(x_0)$ are non-positive, we obtain a contradiction and get that $u(x_0)$ is nonnegative. By an irreducibility argument we obtain  $u\equiv 0$ or $u>0$ on each connected component of $\FY$.
\end{proof}
The existence of certain harmonic functions is closely related to  the geometry of the graph and the global behavior of the parameters involved in the graph Dirichlet form. Unfortunately, the connection is hard to establish for general forms. It is however possible to do this for the special class of spherically symmetric graphs. Recall from Definition 1.14, that a spherically symmetric graph has a root $o\in \FX_\Gamma$ with the property that the inward and outward degree function is constant on the spheres $S_r(o)$. The next proposition shows an average property of spherically symmetric graphs if the weights of $\CS$ are additionally radial. Recall from Definition 1.16, that a weight $\omega(x,y)$ is called radial with respect to $o$ if
\[d(x_1,o)=d(x_2,o) \mbox{ and } d(y_2,o)=d(y_2,o) \Longrightarrow \omega(x_1,y_1)=\omega(x_2,y_2).\]
\begin{prop}
Let $\FX_\Gamma$ be a spherically symmetric graph and $\CS$ a graph Dirichlet form on $L^2(\FX_\Gamma, \nu)$ such that the weights $j,k,\nu$ are radial. If for $\alpha\geq0$ there exists an $\alpha$-harmonic function on $\FX_\Gamma$, then there exists a radial $\alpha$-harmonic function, i.e. a function which is constant on the spheres with respect to the root $o\in \FX_\Gamma$.
\end{prop}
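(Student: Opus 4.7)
The plan is to use a spherical averaging argument, exploiting the compatibility of the averaging operator with the discrete problem that characterizes $\alpha$-harmonicity via Theorem~3.18. Given an $\alpha$-harmonic function $u$, I would define the averaging operator on functions $U:\FV\to\IR$ by
\[ (PU)(x) := \tfrac{1}{\#S_{r_n}(o)} \sum_{y\in S_{r_n}(o)} U(y), \qquad x\in S_{r_n}(o), \]
producing a radial discrete function. The goal is to show that $PU$ satisfies the same discrete equation as $U$, so that $H_\alpha(PU)$ is the desired radial $\alpha$-harmonic function.

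First I would recall from Theorem~3.18 that $u$ being $\alpha$-harmonic is equivalent to $U:=u|_\FV$ solving the pointwise discrete equation
\[ \triangle U(x) + \triangle_\alpha U(x) + M_\alpha(x)U(x) = 0 \]
and $u = H_\alpha U$. Thus the bulk of the proof reduces to checking that $P$ commutes with each of the three operators $\triangle$, $\triangle_\alpha$ and multiplication by $M_\alpha$. Commutation with multiplication by $M_\alpha$ is immediate: by radiality of $\omega$ and $l_i$ together with spherical symmetry, $M_\alpha$ is constant on each sphere $S_{r_n}(o)$.

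The main computation is the commutation with $\triangle_\alpha$ and $\triangle$. For $\triangle_\alpha$, writing $\triangle_\alpha U(x)=\sum_{y\sim x} a(x,y)(U(x)-U(y))$ with $a(x,y)$ depending only on the generations of $x$ and $y$ (by radiality), I would sum over $x\in S_{r_n}(o)$ and use the double-counting identity $\#S_{r_n}(o)\deg_-(r_n)=\#S_{r_{n-1}}(o)\deg_+(r_{n-1})$ established in Section~1.2, so that the coefficients in front of $\overline{U}(r_{n\pm1})$ agree on both sides. For $\triangle$, the analogous computation uses radiality of $j$ and $k$: the sum $\sum_y j(x,y)(U(x)-U(y))$ rewrites as $\sum_m j_{n,m}\#S_{r_m}(o)(U(x)-\overline{U}(r_m))$, whose average over $S_{r_n}(o)$ depends only on $(\overline{U}(r_m))_m$. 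Hence $P$ commutes with the full discrete operator, and $PU$ again solves the discrete equation.

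To finish, I would verify that $H_\alpha$ preserves radiality: if $V$ is radial on $\FV$, then on an edge $e$ of generation $n$, the expression $V(\partial^+(e))h_e^\alpha(t)+V(\partial^-(e))h_e^\alpha(l_n-t)$ depends only on $t$ and $n$, hence only on the distance $r_{n-1}+t$ from $o$. Applying this to $V=PU$ yields a radial $\alpha$-harmonic function $H_\alpha(PU)$ by Theorem~3.18. The main obstacle is the combinatorial bookkeeping in the commutation step, in particular handling the two families of ``inward'' and ``outward'' edges at each vertex and correctly invoking the double-counting identity so that the averaging of the nearest-neighbor terms in $\triangle_\alpha$ produces precisely the coefficients of $\triangle_\alpha(PU)$.
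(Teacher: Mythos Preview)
Your approach is correct and follows the same core idea as the paper, namely spherical averaging. The paper averages the given $\alpha$-harmonic function $u$ directly on the metric graph (averaging $u_e$ over all edges in the same generation) and then verifies by hand that the averaged function is still edgewise $\alpha$-harmonic, continuous, and satisfies the $\triangle$-Kirchhoff conditions via the double-counting identity. You instead pass through the discrete characterisation of Theorem~3.18, average on $\FV$, and check that $P$ commutes with $\triangle$, $\triangle_\alpha$ and $M_\alpha$; this is really the same computation packaged differently. Your route is arguably a little cleaner, since the reduction theorem absorbs the ``edgewise $\alpha$-harmonic plus continuity'' part automatically and leaves only the vertex bookkeeping, while the paper's direct argument has to redo the continuity check explicitly. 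Either way the decisive ingredient is the identity $\#S_{r_n}\deg_-(r_n)=\#S_{r_{n-1}}\deg_+(r_{n-1})$, which both proofs use in the same place.
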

\begin{proof}
The proof follows easily by direct calculation. Let $(r_j)_j$ be the sequence associated to the radial vertex set and let $S_j$ be the number of edges whose terminal vertex has distance $r_j$ to the root, whereas $S_{r_j}$ denotes the sphere with radius $r_j$. Let $u$ be $\alpha$-harmonic. Define $v:\FX_\Gamma \to \IR$ by averaging the function $u$, i.e. let $e\in E$ and let $j$ such that $d(\partial^-(e),o) =r_j$ then we define
\[v_e (t):= \frac{1}{S_{j}} \sum_{{\tilde{e}}\mid d(o,\partial^-(e))= r_j} u_{\tilde{e}}(x).\]
We easily conclude by linearity that $v_e$ fulfills $v_e'' = \alpha v_e$ for all $e\in E$. We are left to show continuity and the $\triangle$-Kirchhoff laws. Let $x\in S_{r_j}$ and assume that neither $\deg_+(r_j)$ nor $\deg_-(r_j)$ are zero, since these cases are trivial. Then for $e,f\in E$ with $\partial^-(e)=x$ and $\partial^+(f)=x$ we have by continuity of $u$
\[u_e(l_j-) = u_f(0+)\]
and hence
\[\frac{1}{\deg_-(r_j)}\sum_{e\mid \partial^-(e)=x} u_e(l_j-) = \frac{1}{\deg_+(r_j)}\sum_{f\mid \partial^+(f)=x} u_f(0+) \]
and thus by summing over all $x\in S_{r_j}$ we obtain
\[\frac{1}{\deg_-(r_j)} \sum_{e\mid \partial^-(e)\in S_{r_j}} u_e(l_j-) = \frac{1}{\deg_+(r_j)} \sum_{f\mid \partial^+(f)\in S_{r_j}} u_f(0+).\]
Multiplying this equation with $\frac{1}{S_j}$ and recall that $S_j\deg_+(r_j)= S_{j+1} \deg_+(r_{j+1})$ we get that
\[v_e(l_e-)=v_f(0+)\]
for all $e\sim f$. Moreover we get that for $x\in S_{r_j}$ the equality \[v(x)= \frac{1}{S_{r_j}} \sum_{y\in S_{r_j}} u(y)\]
holds. The calculations for the $\triangle$-Kirchhoff laws are analogous. Using linearity of the divergence and the discrete Laplacian we obtain for $x\in S_{r_j}$
\[\partial_n v(x) = \frac{1}{S_{r_j}} \sum_{y\in S_{r_j}} \partial_n u(y) \mbox{ and } \triangle v(x) = \frac{1}{S_{r_j}} \sum_{y\in S_{r_j}} \triangle u(y)\]
and hence $v$ also satisfies the $\triangle$-Kirchhoff conditions.
\end{proof}
\section{The Kirchhoff Dirichlet form}
In this section we have a closer look at the discrete operator $\triangle_\alpha$. First we recall some notation from the previous sections which are of use below. The Kirchhoff Laplacian was defined by means of the functions $h_\alpha$ for $\alpha \geq 0$. For $\alpha >0$ it is explicitly given by
\[\triangle_\alpha u(x)= \sum_{e=(y,x)\sim x} \frac{\omega(x,y) \sqrt{\alpha}}{\sinh(\sqrt{\alpha} l_i(x,y))}(u(x)-u(y))\]
and for $\alpha =0$ by
\[\triangle_\alpha u(x)= \sum_{e=(y,x)\sim x} \frac{\omega(x,y)}{ l_i(x,y)}(u(x)-u(y)).\]
The discrete functions $M_\alpha:\FV \to (0,\infty)$ are given by
\[M_\alpha (x)= \sum_{e=(x,y)\sim x} \omega(x,y)\sqrt{\alpha} \frac{\cosh(\sqrt{\alpha} l_i(x,y)) - 1}{\sinh(\sqrt{\alpha} l_i(x,y))}\]
for $\alpha>0$ and $M=M_0$ is given as
\[M(x)= \sum_{e=(x,y) \sim y} l_i(x,y) \omega (x,y) = \lambda_\omega(\bigstar(x)).\]
The formal weighted Laplacian which is connected with the jump and the killing weights is given as
\[\triangle u(x) =\sum_{y\in \FV} j(x,y)(u(x)-u(y)) + c(x) u(x).\]
We are now armed for this section and start with a proposition on discrete Dirichlet forms, taken from \cite{HKLW-12}.
\begin{prop}
For all  $u\in \CC(\FV)$ and $v\in \CC_c(\FV)$ we have
\[\langle \triangle_\alpha  u,v\rangle = \langle u, \triangle_\alpha v\rangle,\]
where all sums converge absolutely and agree with
\[\frac{1}{2} \sum_{x,y\in \FV}  - {h_e^\alpha}'(l_i(x,y))\omega(e)(u(x)-u(y))(v(x)-v(y))\]
\end{prop}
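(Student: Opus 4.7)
The plan is to reduce this assertion to the standard symmetrization / discrete Green's formula. First I would set $b_\alpha(x,y) := -{h_e^\alpha}'(l_i(x,y))\,\omega(e)$ for $e=(x,y)$ and $b_\alpha(x,y):=0$ otherwise. From the explicit formula
\[-{h_e^\alpha}'(l_i(e)) = \frac{\sqrt{\alpha}}{\sinh(\sqrt{\alpha}\, l_i(e))}\quad(\alpha>0),\qquad -{h_e^0}'(l_i(e))=\frac{1}{l_i(e)},\]
together with the fact that both $l_i$ and $\omega$ are symmetric edge weights on $\FV\times\FV$, I read off that $b_\alpha(x,y)=b_\alpha(y,x)$. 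In this notation the Kirchhoff Laplacian from the preceding definition becomes simply
\[\triangle_\alpha u(x) = \sum_{y\in\FV} b_\alpha(x,y)\,(u(x)-u(y)),\]
which is a finite sum at every $x$ by local finiteness of $\Gamma$.

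Next I would address absolute convergence. Since $v\in \CC_c(\FV)$ and $\FV$ carries the discrete topology, $v$ is supported on a finite set $F\subset\FV$. For fixed $x\in F$ only finitely many $y$ satisfy $b_\alpha(x,y)\neq 0$ (local finiteness), so $\langle\triangle_\alpha u,v\rangle=\sum_{x\in F} v(x)\triangle_\alpha u(x)$ is a finite sum; the same applies to the double sum on the right-hand side once one notes that a nonzero summand requires $b_\alpha(x,y)\neq 0$ \emph{and} either $x\in F$ or $y\in F$.

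The core computation is then the standard discrete integration by parts. I would write
\[\langle\triangle_\alpha u,v\rangle=\sum_{x,y\in\FV} b_\alpha(x,y)(u(x)-u(y))\,v(x),\]
swap the indices $x\leftrightarrow y$ in the same sum, use $b_\alpha(y,x)=b_\alpha(x,y)$ to get
\[\langle\triangle_\alpha u,v\rangle = -\sum_{x,y\in\FV} b_\alpha(x,y)(u(x)-u(y))\,v(y),\]
and average the two expressions to obtain
\[\langle\triangle_\alpha u,v\rangle=\frac{1}{2}\sum_{x,y\in\FV} b_\alpha(x,y)(u(x)-u(y))(v(x)-v(y)).\]
The right-hand side is manifestly symmetric under $u\leftrightarrow v$, which immediately yields $\langle\triangle_\alpha u,v\rangle=\langle u,\triangle_\alpha v\rangle$.

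There is essentially no obstacle here; the argument is entirely formal once one has the symmetry $b_\alpha(x,y)=b_\alpha(y,x)$ and the finite-support bookkeeping. The only point that deserves a sentence of care is checking that the rearrangement of sums is legitimate, but since all sums in sight reduce to finite sums this is trivial in the present setting.
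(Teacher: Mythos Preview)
Your proof is correct. The paper does not give its own proof of this proposition but simply cites \cite{HKLW-12}; your argument is precisely the standard discrete Green's formula (symmetrization of the double sum via $b_\alpha(x,y)=b_\alpha(y,x)$, with all rearrangements justified by finite support plus local finiteness), which is exactly what one finds in that reference.
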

It will be more convenient to consider the form introduced above in the space $\ell^2(\FV,M_\alpha)$. Note that we set $M_0=M$.
\begin{theorem}
For all $\alpha \geq 0$ the form defined by
\[\CQ_\alpha(u) = \frac{1}{2} \sum_{x,y\in \FV}  - {h_e^\alpha}'(l_i(x,y))\omega(e)(u(x)-u(y))^2,\]
is a discrete Dirichlet form on the space $\ell^2(\FV,M_\alpha)$ with domain
\[\CD(\CQ_\alpha)=\{u \in \ell^2(\FV,M_\alpha) \mid \CQ_\alpha(u) < \infty \}.\]
\end{theorem}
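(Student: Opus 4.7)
The plan is to recognize $\CQ_\alpha$ as an instance of the general discrete Dirichlet forms introduced in the reminder on page 2 of this chapter, applied to the jump weight $\widetilde{j}(x,y) := -{h^\alpha_e}'(l_i(x,y))\omega(e)\cdot\mathbf{1}_{x\sim y}$ and killing weight $\widetilde{k}\equiv 0$, on the measure space $\ell^2(\FV,M_\alpha)$. So the work is to verify the hypotheses of that reminder and to check closedness and Markovianity by direct arguments adapted to the present weights.

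First I would check that the coefficients are admissible: by the proposition preceding the definition of the Kirchhoff Laplacian we have $-{h^\alpha_e}'(l_i(e))>0$ for all $\alpha\geq 0$, and symmetry in $(x,y)$ is built into the use of the unoriented length $l_i(e)$ and the edge weight $\omega(e)$. Thus $\widetilde{j}$ is a legitimate symmetric, nonnegative weight on $\FV\times\FV$, and since the metric graph is locally finite, the sum $\sum_{y\in\FV}\widetilde{j}(x,y)$ has only finitely many nonzero terms for each $x$. Next I would observe that $\CC_c(\FV)$ (finitely supported functions) lies in $\CD(\CQ_\alpha)$, again by local finiteness, and is dense in $\ell^2(\FV,M_\alpha)$ because $M_\alpha(x)<\infty$ everywhere (each $M_\alpha(x)$ is a finite sum over $e\sim x$ by the proposition giving the two-sided bound $c^{-1}l_i(e)\leq {h^\alpha_e}'(l_i(e))-{h^\alpha_e}'(0)\leq cl_i(e)$). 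This handles the density of the domain.

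For closedness I would follow the Fatou argument sketched in the reminder. Let $(u_n)_n$ be a Cauchy sequence in $\ell^2(\FV,M_\alpha)$ with respect to $\|\cdot\|_{\CQ_\alpha}^2 := \|\cdot\|_{\ell^2(\FV,M_\alpha)}^2+\CQ_\alpha(\cdot)$. Then $u_n\to u$ in $\ell^2(\FV,M_\alpha)$, and in particular pointwise on $\FV$ (since all point evaluations are continuous on $\ell^2(\FV,M_\alpha)$). Applying Fatou's lemma to the nonnegative summand $\widetilde{j}(x,y)\big((u-u_n)(x)-(u-u_n)(y)\big)^2$ yields
\[
\CQ_\alpha(u-u_n)\leq \liminf_{m\to\infty}\CQ_\alpha(u_m-u_n)\leq \liminf_{m\to\infty}\|u_m-u_n\|_{\CQ_\alpha}^2,
\]
and the right-hand side tends to zero by the Cauchy property, giving $u_n\to u$ in $\|\cdot\|_{\CQ_\alpha}$ and, together with the previous step, $u\in\CD(\CQ_\alpha)$.

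For Markovianity, I would use the characterization recalled in the reminder at the beginning of Chapter 2: it suffices to check that $\CD(\CQ_\alpha)$ is stable under any normal contraction $T$ and $\CQ_\alpha(T\circ u)\leq\CQ_\alpha(u)$. This is immediate from the pointwise inequality $|T(u(x))-T(u(y))|\leq|u(x)-u(y)|$ together with $|T(u(x))|\leq|u(x)|$, which imply both that $T\circ u\in\ell^2(\FV,M_\alpha)$ whenever $u$ is and that each summand of $\CQ_\alpha(T\circ u)$ is bounded by the corresponding summand of $\CQ_\alpha(u)$. The main (modest) obstacle here is keeping the domain identification clean: one should verify that the maximal domain $\{u\in\ell^2(\FV,M_\alpha)\mid \CQ_\alpha(u)<\infty\}$ coincides with the $\|\cdot\|_{\CQ_\alpha}$-closure of $\CC_c(\FV)$, which follows once closedness is established by truncating $u\in\CD(\CQ_\alpha)$ to finite supports via an exhaustion $(K_n)_n$ of $\FV$ and applying dominated convergence to the sum defining $\CQ_\alpha$.
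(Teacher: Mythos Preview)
The paper does not give its own proof here; it simply notes that ``the theorem is purely on discrete Dirichlet forms and can be found in \cite{HKLW-12}.'' Your direct verification --- checking that $\widetilde j(x,y):=-{h^\alpha_e}'(l_i(x,y))\omega(e)\mathbf{1}_{x\sim y}$ is a symmetric nonnegative jump weight with $\sum_y\widetilde j(x,y)<\infty$ by local finiteness, that $\CC_c(\FV)\subset\CD(\CQ_\alpha)$ is dense in $\ell^2(\FV,M_\alpha)$, closedness via Fatou, and Markovianity via normal contractions --- is correct and spells out exactly what the citation outsources. One small remark: the two-sided bound on $M_\alpha$ you invoke is stated in the paper only for $\alpha>0$; for $\alpha=0$ one has $M_0=M$ by convention, and finiteness of $M(x)=\sum_{e\sim x}l_i(e)\omega(e)$ is again just local finiteness.

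Your final paragraph, however, contains a claim that is neither needed nor correct. The theorem asserts only that the form on its \emph{maximal} domain is a Dirichlet form; it does not claim regularity, i.e.\ that this maximal domain coincides with the $\|\cdot\|_{\CQ_\alpha}$-closure of $\CC_c(\FV)$. In fact the paper later shows (Theorem 3.30) that regularity of $\CQ_0+\CQ$ is equivalent to regularity of the graph Dirichlet form $\CS$, which can genuinely fail. Your proposed dominated-convergence argument for the truncations $u\mathbf{1}_{K_n}$ breaks down: the boundary terms $\widetilde j(x,y)u(x)^2$ with $x\notin K_n$, $y\in K_n$ are not dominated by the summands $\widetilde j(x,y)(u(x)-u(y))^2$ of $\CQ_\alpha(u)$. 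Simply drop that paragraph; the three ingredients (density, closedness, Markov property) already establish the theorem as stated.
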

The form $\CQ_\alpha$ will be called the Kirchhoff Dirichlet form. \index{Kirchhoff Dirichlet form} The theorem is purely on discrete Dirichlet forms and can be found in \cite{HKLW-12}.
 The next lemma shows that for varying $\alpha$ the spaces are isomorphic.
\begin{lemma}
For all $\alpha>0$ the Dirichlet spaces $\CQ_0+\CQ$ on $\ell^2(\FV,M)$ and $\CQ_\alpha +\CQ$ on $\ell^2(\FV,M_\alpha)$ are isomorphic.
\end{lemma}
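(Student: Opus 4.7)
The plan is to show that the identity map on functions realises the desired isomorphism, by establishing pointwise equivalences between the two measure families $M, M_\alpha$ and the two Kirchhoff edge weights, so that the underlying $\ell^2$-spaces coincide as vector spaces with equivalent norms and the two energy forms are comparable.

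The crucial ingredient is the standing assumption from Section~2.2, which via the relation $l_i(e)^2 = l_c(e)^2 \nu(e)$ gives a uniform upper bound $l_i(e)\le L$ for some $L>0$. Under this bound, for any fixed $\alpha>0$ the variable $s_e := \sqrt{\alpha}\, l_i(e)$ ranges over a bounded subset of $(0,\infty)$. The elementary identities
\[\frac{\sqrt{\alpha}}{\sinh(\sqrt{\alpha}\, l_i(e))} = \frac{1}{l_i(e)}\cdot \frac{s_e}{\sinh s_e}, \qquad \sqrt{\alpha}\,\frac{\cosh(\sqrt{\alpha}\,l_i(e))-1}{\sinh(\sqrt{\alpha}\,l_i(e))} = \frac{\alpha}{2}\, l_i(e)\cdot \frac{\tanh(s_e/2)}{s_e/2}\]
then show that each edge-weight entering $\CQ_\alpha$ (resp.\ each edge-summand of $M_\alpha$) differs from the corresponding edge-weight of $\CQ_0$ (resp.\ from the edge-summand of $M$) by a factor lying in a compact subset of $(0,\infty)$ depending only on $\alpha$ and $L$. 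Summing over the edges incident to a vertex yields constants $c_1,c_2>0$ with
\[c_1 M(x) \le M_\alpha(x) \le c_2 M(x) \quad \text{for all } x\in \FV,\]
and analogously $c_3\,\CQ_0(u) \le \CQ_\alpha(u) \le c_4\,\CQ_0(u)$ for every $u\in \CC(\FV)$.

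From the first inequality the vector spaces $\ell^2(\FV,M)$ and $\ell^2(\FV,M_\alpha)$ coincide and carry equivalent norms. Since $\CQ$ appears unchanged in both forms, combining this with the second inequality shows that the form domains $\CD(\CQ_0+\CQ)$ and $\CD(\CQ_\alpha+\CQ)$ agree as subsets of $\CC(\FV)$ and that the associated energy norms are equivalent. The identity map is therefore a topological isomorphism of the two Hilbert spaces, and it evidently preserves the Markovian property (both forms are built from the same lattice of normal contractions on $\CC(\FV)$), so it is an isomorphism of Dirichlet spaces.

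The only non-routine point will be the uniformity of the constants across all edges; this is exactly where the uniform upper bound on $l_i(e)$ is used, and without that assumption the ratios $s_e/\sinh s_e$ and $\tanh(s_e/2)/(s_e/2)$ could tend to zero, destroying the equivalence. No lower bound on $l_i(e)$ is needed because both ratios tend to $1$ as $s_e\downarrow 0$.
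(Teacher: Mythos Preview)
Your proof is correct and follows essentially the same route as the paper: the paper invokes Proposition~3.16, which records exactly the two-sided bounds $c^{-1}/l_i(e)\le -{h_e^\alpha}'(l_i(e))\le c/l_i(e)$ and $c^{-1}l_i(e)\le {h_e^\alpha}'(l_i(e))-{h_e^\alpha}'(0)\le c\,l_i(e)$, and then concludes $C^{-1}M\le M_\alpha\le CM$ and $C^{-1}\CQ_0\le\CQ_\alpha\le C\CQ_0$. You have simply unpacked that proposition by writing out the explicit ratio identities and invoking the uniform bound $l_i(e)\le 1$ coming from the standing intrinsic-length assumption.
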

\begin{proof}
This follows as we have by proposition 3.14 that for all $\alpha>0$ there exists $C>0$ such that
\[C^{-1} M \leq M_\alpha \leq C M\]
and
\[C^{-1} \CQ_0(u) \leq \CQ_\alpha (u) \leq C \CQ_0(u).\]
\end{proof}
Before we derive the connection of the graph Dirichlet form $\CS$ and the Kirchhoff Dirichlet form we continue with the following decomposition.
\begin{theorem}
Let $\FX_\Gamma$ be a metric graph and $\CS$ a graph Dirichlet form. Then for all $\alpha>0$ we have the orthogonal decomposition
\[\CD(\CS)= \Bigl(\bigoplus_{e\in E} W^{1,2}_o((0,l_i(e)),\omega)\Bigr) \oplus \CH^\FV_\alpha\]
with respect to the inner product $\CS(\cdot,\cdot) + \alpha(\cdot,\cdot)$, where $\CH^\FV_\alpha$ denotes the space of all functions which are $\alpha$-harmonic on the edges.
\end{theorem}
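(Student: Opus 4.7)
The plan is to exhibit the decomposition explicitly via the $\alpha$-harmonic extension operator $H_\alpha$ introduced in Section~2, and then verify orthogonality by integration by parts on each edge. We work throughout in the intrinsic representation, so that functions in $\CH^\FV_\alpha$ satisfy $-u_e'' + \alpha u_e = 0$ on each edge and the measure is $d\omega$.

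\textbf{Step 1 (definition of the splitting).} For $u \in \CD(\CS)$, the local Sobolev embedding Theorem~2.9 guarantees $u$ is continuous, so $u|_\FV$ is well-defined. Using the intrinsic bound $\omega(e) l_i(e)^2 \leq 1$ from the running assumption, together with Theorem~2.14, the restriction $u|_\FV$ lies in $\ell^2(\FV,M)$ (equivalently $\ell^2(\FV,M_\alpha)$ by Lemma~3.28), so I would set
\[
w := H_\alpha(u|_\FV), \qquad v := u - w.
\]
Proposition~3.10 then gives $w \in L^2(\FX_\Gamma,\omega)$, and a similar edgewise estimate using the explicit formulae for $h_e^\alpha$ shows $w' \in L^2$. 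Since $w$ agrees with $u$ on $\FV$, the discrete part $\CQ(w) = \CQ(u) < \infty$, so $w \in \CD(\CS)$ and hence $w \in \CH^\FV_\alpha$. Consequently $v = u - w$ vanishes on $\FV$, is absolutely continuous with $v' \in L^2(\FX_\Gamma,\omega)$, and thus belongs edgewise to $W^{1,2}_o((0,l_i(e)),\omega)$.

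\textbf{Step 2 (orthogonality).} Take any $v \in \bigoplus_e W^{1,2}_o((0,l_i(e)),\omega)$ and any $w \in \CH^\FV_\alpha$. Since $v|_\FV \equiv 0$, the jump and killing terms give $\CQ(v,w) = 0$, so
\[
\CS(v,w) + \alpha \langle v,w\rangle = \sum_{e\in E} \int_0^{l_i(e)} \bigl(v_e' w_e' + \alpha\, v_e w_e\bigr)\, \omega(e)\, dt.
\]
Integrating by parts on each edge and using that $v_e(0) = v_e(l_i(e)) = 0$ together with $-w_e'' + \alpha w_e = 0$, every edgewise integral vanishes. Hence the two subspaces are orthogonal in the inner product $\CS(\cdot,\cdot) + \alpha(\cdot,\cdot)$.

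\textbf{Step 3 (uniqueness and closure).} Orthogonality and the identity $u = v + w$ above yield uniqueness of the decomposition, and also show that the intersection is trivial: a function which is simultaneously zero on $\FV$ and $\alpha$-harmonic on each edge solves an ODE with zero Dirichlet data on every edge, hence is identically zero. Finally, both summands are easily seen to be closed in $\|\cdot\|_\CS + \alpha\|\cdot\|_2^2$ — the Sobolev summand because $W^{1,2}_o$ is closed on each interval and the direct sum of closed subspaces is closed, and $\CH^\FV_\alpha$ because it is the image of the bounded operator $H_\alpha$ applied to the closed subspace of $\CD(\CS)|_\FV$, or equivalently the kernel of the restriction-minus-harmonic-extension map.

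\textbf{Main obstacle.} The delicate point is confirming that $H_\alpha(u|_\FV)$ genuinely lies in $\CD(\CS)$, i.e.\ controlling the $L^2(\omega)$-norms of both $H_\alpha(u|_\FV)$ and its derivative in terms of $\|u\|_\CS$. This relies on the intrinsic normalization $\omega(e) l_i(e)^2 \le 1$ (so that Theorem~2.14 provides the requisite $\ell^2(\FV,M)$ bound on $u|_\FV$) and on the explicit hyperbolic representation of $h_e^\alpha$, which yields edgewise bounds $\|(H_\alpha U)_e\|_2^2 \lesssim l_i(e)\omega(e)(U(\partial^+ e)^2 + U(\partial^- e)^2)$ and analogous estimates for its derivative. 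Once this boundedness is established, the decomposition is straightforward.
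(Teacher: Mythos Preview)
Your argument is correct, but it takes a different and more laborious route than the paper's. The paper simply observes that the subspace $\CD(\CS)_\FV := \{u \in \CD(\CS) : u|_\FV \equiv 0\}$ is closed (vertex evaluations being continuous by the local Sobolev embedding) and coincides with $\bigoplus_e W^{1,2}_o((0,l_i(e)),\omega)$; the orthogonal decomposition $\CD(\CS) = \CD(\CS)_\FV \oplus \CD(\CS)_\FV^\perp$ is then automatic from Hilbert space theory, and testing against $\phi \in \bigoplus_e C_c^\infty(0,l_i(e))$ identifies the complement as the edgewise $\alpha$-harmonic functions. In particular, the paper never checks directly that $H_\alpha(u|_\FV) \in \CD(\CS)$ --- this is obtained for free as the orthogonal projection of $u$ onto $\CD(\CS)_\FV^\perp$. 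Your constructive approach must instead establish this membership by hand, which is precisely the ``main obstacle'' you flag: it requires the $\ell^2(\FV,M)$ control on $u|_\FV$ from Theorem~2.14 together with edgewise estimates on $H_\alpha$ and its derivative. Both routes are valid; the paper's abstract argument is shorter and sidesteps exactly the delicate step you identify, while your route has the virtue of exhibiting the projection explicitly and making the later isometry of Theorem~3.29 more transparent. One small point in your Step~3: the image of a bounded operator need not be closed, so that remark alone does not show $\CH_\alpha^\FV$ is closed; but this is harmless, since closedness follows either from the orthogonal decomposition itself or directly from the fact that $\CH_\alpha^\FV$ is cut out by the closed conditions $\CE(u,\phi)+\alpha(u,\phi)=0$ for test functions $\phi$ supported on single edges.
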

\begin{proof}
We denote by $\CD(\CS)_\FV$ the space of all functions of $\CD(\CS)$ which are zero on the set of vertices. By the Sobolev inequality this space is given as the closure of $\CD(\CE)\cap (\bigoplus\limits_{e\in E} \CC_c^\infty(0,l_i(e)))$ with respect to $\|\cdot\|_\CE$, as the discrete part vanishes on this space. Thus we have
\[\CD(\CS)_\FV = \bigoplus_{e\in E} W^{1,2}_o((0,l_i(e),\omega)).\]
Considering its orthogonal complement, we have $u\in\CD(\CS)^\bot$ if and only if
\[ \CE(u,\phi) + \alpha (u,\phi)=0\]
for all $\phi \in \bigoplus\limits_{e\in E} \CC_c^\infty (0,l_i(e))$ which gives the claim on $\CH_\alpha^\FV$.
\end{proof}
Basically the space $\bigoplus\limits_{e\in E} W^{1,2}_o((0,l_i(e),\omega))$ corresponds to a decoupled graph, i.e. the disjoint union of all edges of the graph with Dirichlet boundary conditions. In particular the discrete part $\CQ$ vanishes on this space. Thus all information of the original graph should be contained in $\CH_\alpha^\FV$. This is the content of the next theorem.
\begin{theorem}
Let $\FX_\Gamma$ be a metric graph and $\CS$ a graph Dirichlet form.
Then the restriction mapping
\[\cdot|_\FV: \CH_\alpha^\FV \to \CD(\CQ_\alpha+\CQ)\]
is an isometric isomorphism with inverse $H_\alpha$. In particular we have
\[\alpha\|u\|_{L^2}^2+ \CE(u)= \|u|_\FV\|_{\ell^2}^2 + \CQ_\alpha(u).\]
Furthermore, the space $\CH_\alpha$ is mapped bijectively to the space of $1$-harmonic functions of $\CQ+\CQ_\alpha$, i.e. functions $U$ satisfying
\[\triangle_\alpha U(x) + \triangle U(x) + M_\alpha (x) U(x) =0.\]
\end{theorem}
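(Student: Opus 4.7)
The plan is to proceed in four stages, exploiting the explicit formula for $H_\alpha$ together with integration by parts on each edge.

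First, I would verify that restriction and $H_\alpha$ are set-theoretically inverse bijections between $\CH_\alpha^\FV$ and $\CC(\FV)$: by construction $H_\alpha U$ is continuous and on each edge $e=(x,y)$ solves $-w''+\alpha w=0$ with boundary values $U(x), U(y)$, so $(H_\alpha U)|_\FV=U$. Conversely, for $u\in\CH_\alpha^\FV$ each component $u_e$ is the unique solution of the same Dirichlet problem with data $u|_\FV$, which forces $u=H_\alpha(u|_\FV)$.

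Second, I would establish the isometric identity for $u=H_\alpha U$. On each edge $e=(x,y)$ integration by parts combined with $u_e''=\alpha u_e$ collapses the bulk to pure boundary data:
\[\int_0^{l_i(e)}\!\bigl((u_e')^2+\alpha u_e^2\bigr)\omega(e)\,dt = \bigl[u_e u_e'\bigr]_0^{l_i(e)}\omega(e).\]
Substituting the explicit edge derivatives $u_e'(0)=U(x)(h_e^\alpha)'(0)-U(y)(h_e^\alpha)'(l_i(e))$ and the analogous formula at $l_i(e)$ expands this boundary term to $-(U(x)^2+U(y)^2)(h_e^\alpha)'(0)\omega(e)+2U(x)U(y)(h_e^\alpha)'(l_i(e))\omega(e)$. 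The main technical hurdle is the bookkeeping step: after summing over edges and re-indexing as a sum over vertices, the quadratic contributions assemble into $\sum_{x\in\FV}U(x)^2\sum_{e\sim x}\bigl((h_e^\alpha)'(l_i(e))-(h_e^\alpha)'(0)\bigr)\omega(e)$, which by the definition of $M_\alpha$ is exactly $\|U\|_{\ell^2(\FV,M_\alpha)}^2$, while the cross-terms combine with their $-(U(x)^2+U(y)^2)$ square-completion partners to give $\CQ_\alpha(U)=\tfrac12\sum_{x,y}-(h_e^\alpha)'(l_i(e))\omega(e)(U(x)-U(y))^2$. This yields $\CE(u)+\alpha\|u\|_{L^2}^2=\|U\|_{\ell^2(\FV,M_\alpha)}^2+\CQ_\alpha(U)$.

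Third, since $\CQ$ depends only on vertex values, $\CQ(H_\alpha U)=\CQ(U)$, so the two form norms match in full. Using the boundedness $H_\alpha:\ell^2(\FV,M)\to L^2(\FX_\Gamma,\omega)$ established earlier, together with the equivalence $M_\alpha\asymp M$ for $\alpha>0$ from the bounds on $h_e^\alpha$, finiteness transfers between sides, giving the bijective isometric correspondence between $\CH_\alpha^\FV$ and $\CD(\CQ_\alpha+\CQ)$.

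Finally, for the identification of $\CH_\alpha$, I would use $\CD(\CS)_\FV\subseteq\CD_o(\CS)$ so that the orthogonal decomposition of the preceding theorem gives $\CH_\alpha\subseteq\CH_\alpha^\FV$. Then $u=H_\alpha U\in\CH_\alpha^\FV$ belongs to $\CH_\alpha$ iff $\CS(u,\phi)+\alpha\langle u,\phi\rangle=0$ for every $\phi\in\CD_c(\CS)$, and by the proposition identifying the energy functional of $H_\alpha U$ with the signed Radon measure $\sum_x(\triangle_\alpha U(x)+M_\alpha(x)U(x)+\triangle U(x))\delta_x$, this vanishing is equivalent to the pointwise equation $\triangle_\alpha U(x)+M_\alpha(x)U(x)+\triangle U(x)=0$ for all $x\in\FV$, which is precisely the $1$-harmonic condition for $\CQ_\alpha+\CQ$ on $\ell^2(\FV,M_\alpha)$.
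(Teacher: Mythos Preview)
Your proof is correct and follows essentially the same architecture as the paper: establish the edge-by-edge identity $\alpha\|u_e\|_2^2+\|u_e'\|_2^2\,\omega(e) = (U(x)^2+U(y)^2)\bigl((h_e^\alpha)'(l_i(e))-(h_e^\alpha)'(0)\bigr)\omega(e) + \bigl(-(h_e^\alpha)'(l_i(e))\bigr)\omega(e)(U(x)-U(y))^2$, then sum and reorganize into $\|U\|_{\ell^2(M_\alpha)}^2+\CQ_\alpha(U)$. The one genuine difference is how you obtain that edge identity: the paper plugs in the explicit hyperbolic expressions for $u_e$ and $u_e'$ and evaluates the three integrals $A_1,A_2,A_3$ by brute force, whereas you use integration by parts together with $u_e''=\alpha u_e$ to collapse the bulk integral directly to the boundary term $[u_e u_e']_0^{l_i(e)}\omega(e)$. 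Your route is cleaner and avoids the hyperbolic integral calculus entirely, at the cost of nothing; the paper's route is more pedestrian but perhaps more transparent about where the explicit constants come from. For the final claim on $\CH_\alpha$, the paper simply says ``all remaining claims thus follow''; your invocation of the earlier proposition identifying $\CS(H_\alpha U,\phi)+\alpha\langle H_\alpha U,\phi\rangle$ with the signed measure $\sum_x(\triangle_\alpha U+M_\alpha U+\triangle U)(x)\delta_x$ makes that step explicit and is the right way to fill it in. One small remark: in your third stage the appeal to the $\ell^p\!\to\!L^p$ boundedness of $H_\alpha$ and to $M_\alpha\asymp M$ is unnecessary, since the exact isometric identity already transfers finiteness in both directions.
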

\begin{proof}
By the representation formulae of $\alpha$-harmonic functions we obtain for each component $u_e$ on the edge $e=(x,y)$
\[u_e(t) = u(x) \tfrac{\sinh\sqrt{\alpha}(l_i(e)-t)}{\sinh\sqrt{\alpha} l_i(e)} + u(y) \tfrac{\sinh\sqrt{\alpha}t}{\sinh \sqrt{\alpha}l_i(e)}\]
and thus
\[u_e'(t)=- u(x)\sqrt{\alpha} \tfrac{\cosh\sqrt{\alpha}(l_i(e)-t)}{\sinh \sqrt{\alpha}l_i(e)} + u(y)\sqrt{\alpha} \tfrac{\cosh\sqrt{\alpha}t}{\sinh \sqrt{\alpha}l_i(e)}.\]
A short calculation gives
\[\alpha \|u_e\|_{L^2}^2 + \|u_e'\|_{L^2}^2 = u(x)^2 A_1 + 2 u(x)u(y) A_2 + u(y)^2 A_3\]
where
\[A_1 = \alpha \int\limits_0^{l_i(e)} \tfrac{\sinh^2\sqrt{\alpha}(l_i(e)-t)+ \cosh^2\sqrt{\alpha}(l_i(e)-t)}{\sinh^2 \sqrt{\alpha}l_i(e)} dt\]
\[A_2 = \alpha \int\limits_0^{l_i(e)} \tfrac{\sinh\sqrt{\alpha}(l_i(e)-t)\sinh\sqrt{\alpha}t - \cosh\sqrt{\alpha}(l_i(e)-t)\cosh\sqrt{\alpha}t}{\sinh^2 \sqrt{\alpha}l_i(e)} dt\]
\[A_3 = \alpha \int\limits_0^{l_i(e)} \tfrac{\sinh^2\sqrt{\alpha}t+ \cosh^2\sqrt{\alpha}t}{\sinh^2\sqrt{\alpha} l_i(e)} dt.\]
Evaluating the integrals we obtain
\[A_1=A_3 = \sqrt{\alpha}\tfrac{\cosh \sqrt{\alpha}l_i(e)}{\sinh\sqrt{\alpha}l_i(e)}\]
\[A_2 = -\sqrt{\alpha}\tfrac{1}{\sinh l_i(e)}.\]
Reorganizing this sum gives
\[\alpha\|u_e\|_2^2 + \|u_e'\|_2^2 = (u(x)^2 + u(y)^2) \sqrt{\alpha}\tfrac{\cosh \sqrt{\alpha}l_i(e) - 1}{\sinh \sqrt{\alpha}l_i(e)} + \tfrac{\sqrt{\alpha}}{\sinh \sqrt{\alpha}l_i(e)} (u(x) - u(y))^2 ,\]
and hence by summing over all $e\in E$ we get
\begin{eqnarray*}
\alpha \|u\|_{L^2}^2 + \CE(u) &=& \sum_{e\in E} (\alpha\|u_e\|_{L^2}^2 + \|u_e'\|_{L^2}^2)\omega(e)\\
&=&\tfrac{1}{2} \sum_{x,y\in \FV }(u(x)^2 + u(y)^2)\omega(x,y) \sqrt{\alpha}\tfrac{\cosh \sqrt{\alpha}l_i(x,y) - 1}{\sinh \sqrt{\alpha}l_i(x,y)}\\
&&+ \sum_{x,y\in \FV}\omega(x,y)\tfrac{\sqrt{\alpha}}{\sinh \sqrt{\alpha}l_i(x,y)} (u(x) - u(y))^2\\
&=& \tfrac{1}{2} (\sum_{x \in \FV }u(x)^2 M_\alpha(x) + \sum_{y \in \FV }u(y)^2 M_\alpha(y))\\
&&+ \sum_{x,y\in \FV}\omega(x,y)\tfrac{\sqrt{\alpha}}{\sinh \sqrt{\alpha}l_i(x,y)} (u(x) - u(y))^2\\
&=& \|u\|_{\ell^2}^2 + \CQ_\alpha(u).
\end{eqnarray*}
This gives the in particular part. All remaining claims thus follow.
\end{proof}
Note that the definition of harmonicity of discrete functions is analogous as in our case. For a definition and further reading see the papers \cite{HK-10} and \cite{HKLW-12}. In the rest of this section we investigate certain connections between the continuous and the discrete world. We begin with regularity.
\begin{theorem}
The graph Dirichlet form $\CS$ is regular if and only if the discrete Dirichlet form $\CQ_0+\CQ$ is. In particular, a diffusion Dirichlet form is regular if and only if the associated Kirchhoff Dirichlet form is.
\end{theorem}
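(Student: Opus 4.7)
The plan is to exploit the orthogonal decomposition $\CD(\CS) = \CD(\CS)_\FV \oplus \CH_\alpha^\FV$ from Theorem 3.28 together with the isometric identification $\cdot|_\FV\colon \CH_\alpha^\FV \to \CD(\CQ_\alpha+\CQ)$ with inverse $H_\alpha$ from Theorem 3.29, working with the fixed parameter $\alpha=1$ so that the underlying inner product is the energy inner product. By Lemma 3.26, regularity of $\CQ_1+\CQ$ and of $\CQ_0+\CQ$ are equivalent, so it is enough to transfer regularity between $\CS$ and $\CQ_1+\CQ$. The uniform density in $\CC_c(\FX_\Gamma)$ required for regularity of $\CS$ has already been established, since $\CD_c(\CE)\subset \CD(\CS)\cap\CC_c(\FX_\Gamma)$ is uniformly dense there; likewise on the discrete side uniform density on $\FV$ is automatic because $\CC_c(\FV)$ consists of functions of finite support, and any finitely supported function lies in $\CD(\CQ_1+\CQ)$. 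Hence everything reduces to showing $\CD(\CS) = \CD_o(\CS)$ if and only if $\CD(\CQ_1+\CQ) = \CD_o(\CQ_1+\CQ)$, where the subscript $o$ denotes the energy closure of the compactly supported elements.

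The key step is the compatibility of the orthogonal decomposition with the subspace of compactly supported functions, in both directions. If $\varphi\in \CD(\CS)\cap \CC_c(\FX_\Gamma)$, then $\varphi|_\FV$ has finite support; by local finiteness of the graph the harmonic extension $H_1(\varphi|_\FV)$ is then nonzero only on the finitely many edges incident to this support, so it has compact support, and consequently the decoupled summand $\varphi - H_1(\varphi|_\FV) \in \CD(\CS)_\FV$ also has compact support, with each edge component lying in $W^{1,2}_o((0,l_i(e)),\omega)\cap C_c$. Conversely, if $U\in \CD(\CQ_1+\CQ)\cap \CC_c(\FV)$, the same local finiteness argument shows $H_1 U \in \CD(\CS)\cap \CC_c(\FX_\Gamma)$. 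From this and the orthogonality of the decomposition I can deduce
\[\CD_o(\CS) = \CD(\CS)_\FV \oplus (\CH_1^\FV \cap \CD_o(\CS)),\]
since (i) each $W^{1,2}_o((0,l_i(e)),\omega)$ already contains $C_c^\infty(0,l_i(e))$ densely, placing the decoupled summand entirely in $\CD_o(\CS)$, and (ii) an approximating sequence $\varphi_n\to u$ with $\varphi_n\in \CD(\CS)\cap\CC_c(\FX_\Gamma)$ splits by orthogonal projection into two sequences of compactly supported pieces that converge separately to the two orthogonal components of $u$.

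Under the isometric isomorphism of Theorem 3.29, the subspace $\CH_1^\FV \cap \CD_o(\CS)$ corresponds exactly to $\CD_o(\CQ_1+\CQ)$: the bijection $H_1 \leftrightarrow \cdot|_\FV$ between compactly supported elements on either side, together with the isometry of norms, transports the closure on one side to the closure on the other. Therefore $\CS$ is regular if and only if $\CH_1^\FV = \CH_1^\FV \cap \CD_o(\CS)$, which in turn is equivalent to $\CD(\CQ_1+\CQ) = \CD_o(\CQ_1+\CQ)$, i.e. regularity of the discrete form $\CQ_1+\CQ$, and hence of $\CQ_0+\CQ$. The diffusion case $\CQ\equiv 0$ is a direct specialisation. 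The main obstacle is the compact-support preservation under the two orthogonal projections; once local finiteness of the metric graph is used to ensure that $H_1$ sends finitely supported vertex functions to compactly supported graph functions, the rest of the argument is a direct consequence of the orthogonal decomposition and the isometric identification.
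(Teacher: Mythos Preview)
Your proof is correct but takes a genuinely different route from the paper's own argument. The paper argues via harmonic functions: using Theorem 2.45, non-regularity of $\CS$ is equivalent to the existence of a nontrivial $1$-harmonic function in $\CH_1\subset\CD(\CS)$; Theorem 3.29 then says that $\CH_1$ is carried by $\cdot|_\FV$ bijectively onto the space of discrete $1$-harmonic functions of $\CQ+\CQ_1$; finally the isomorphism in Lemma 3.26 and the discrete analogue of Theorem 2.45 (cited from \cite{HKLW-12}) give the equivalence with regularity of $\CQ+\CQ_0$. Your approach bypasses harmonic functions entirely: you use only the orthogonal splitting of Theorem 3.28, the isometry of Theorem 3.29, and the elementary observation that local finiteness forces $H_1$ and $\cdot|_\FV$ to respect compact support, so that the closure $\CD_o$ on one side corresponds exactly to the closure on the other. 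This is more direct and more self-contained, in particular it avoids the appeal to the external discrete result in \cite{HKLW-12}; the paper's route, on the other hand, makes the role of the harmonic obstruction explicit, which fits with the broader narrative of Section 2.6 and the pre-Krein formula.
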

\begin{proof}
We use the characterization of regularity by harmonic functions in section 2.6. Thus $\CS$ is not regular if and only if the space $\CH_1$ is nontrivial. This space is non-trivial if for some $\alpha>0$ the discrete Dirichlet form $\CQ+\CQ_\alpha$ has a nontrivial $1$-harmonic function. As $\CQ+\CQ_\alpha$ and $\CQ + \CQ_0$ are isomorphic, such a function exists if and only if the Dirichlet form $\CQ+\CQ_0$ admits a nontrivial $1$-harmonic function. By a similar decomposition as in section 2.6, which was given in \cite{HKLW-12}, we have
\[\CD(\CQ + \CQ_0) = \CD_o(\CQ+\CQ_0) \oplus \Ch_1\]
where $\Ch_1$ denotes the space of $1$-harmonic functions of $\CQ+\CQ_0$. Thus we see that regularity of $\CQ+\CQ_0$ corresponds to a trivial space $h_1$, which yields the claim.
\end{proof}
The appearance of harmonic functions with certain integrability, resp. summability, is also in the focus of analysis. For this reason we consider further mapping properties of $H_\alpha$. We already know that as operator on $\CD(\CQ_0+\CQ)$ it has a bounded inverse, given as the restriction mapping. As we know that $H_\alpha$ maps $\ell^p$ into $L^p$ we may ask whether it is also invertible on its image with bounded inverse. This is the content of the next theorem.
\begin{theorem}\index{harmonic extension!invertibility}
Let $\FX_\Gamma$ be a metric graph. Then the following holds true.
\begin{itemize}
\item[(i)] For $1\leq p \leq 2$ the operator $H_\alpha: \ell^p(\FV,M) \to L^p(\FX,\omega)$ has a bounded inverse on its image.
\item[(ii)] For $2< p < \infty$ the operator $H_\alpha: \ell^p(\FV,M^p) \to L^p(\FX,\omega)$ has a bounded inverse on its image, where $M^p(x):= \sum\limits_{e\sim x} l_i(e)^{p-2}\omega(e)$.
\item[(iii)] For $p=\infty$ the operator $H_\alpha: \ell^p(\FV,M) \to L^p(\FX,\omega)$ has a bounded inverse on its image.
\end{itemize}
\end{theorem}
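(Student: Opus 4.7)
The three statements share a common architecture: decompose into edges, bound the $L^p$-norm of $(H_\alpha U)_e$ from below by a suitable combination of $|U(\partial^+ e)|^p$ and $|U(\partial^- e)|^p$, then sum over $E$ and reorganise the double sum so that each vertex $x$ is visited through its incident edges. On an edge $e=(x,y)$ the function $(H_\alpha U)_e$ is the unique $\alpha$-harmonic function on $[0,l_i(e)]$ with boundary values $U(x)$ and $U(y)$, so its $L^p$-norm is a function of these two numbers, and the task on each edge is elementary in principle.

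For (iii), apply the one-dimensional maximum principle (the restriction of Lemma 3.23 to a single edge) to obtain $\|(H_\alpha U)_e\|_\infty = \max\{|U(x)|,|U(y)|\}$; taking the supremum in $e$ gives $\|H_\alpha U\|_\infty = \sup_{z\in\FV} |U(z)| = \|U\|_{\ell^\infty}$, so that $H_\alpha$ is in fact isometric onto its $L^\infty$-image and the weight $M$ is irrelevant for $p=\infty$.

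For (i), change variables $s=t/l_i(e)$ to rewrite
\[\|(H_\alpha U)_e\|^p_{L^p(\omega)} = \omega(e)\, l_i(e)\int_0^1 \bigl|U(x)\varphi_e(s)+U(y)\varphi_e(1-s)\bigr|^p ds,\quad \varphi_e(s):=h_e^\alpha(l_i(e)s).\]
Under the standing assumption $l_i(e)\leq 1$ from Section 2.2, the rescaled profiles $\varphi_e$ form, for fixed $\alpha$, a family in $C[0,1]$ uniformly comparable to $s\mapsto 1-s$. A compactness/homogeneity argument on the sphere $\{|a|^p+|b|^p=1\}$ in $\IR^2$ then produces $c_{p,\alpha}>0$ with $\int_0^1|a\varphi_e(s)+b\varphi_e(1-s)|^p\,ds \geq c_{p,\alpha}(|a|^p+|b|^p)$, yielding the edge-wise bound $\|(H_\alpha U)_e\|^p_{L^p(\omega)} \geq c_{p,\alpha}(|U(x)|^p+|U(y)|^p)\,l_i(e)\omega(e)$. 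Summing and reorganising gives $\|H_\alpha U\|^p_{L^p}\geq c_{p,\alpha}\sum_{x}|U(x)|^p M(x)=c_{p,\alpha}\|U\|_{\ell^p(\FV,M)}^p$.

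For (ii), the direct argument of (i) produces, after summation, only the weight $M$, not the $p$-dependent weight $M^p$ with factor $l_i(e)^{p-2}$. The plan is to combine the edge-wise $L^2$-bound from (i) with $p=2$, namely $\|(H_\alpha U)_e\|^2_{L^2(\omega)}\geq c_2(|U(x)|^2+|U(y)|^2)\, l_i(e)\omega(e)$, with the pointwise bound $\|(H_\alpha U)_e\|_\infty\geq\max(|U(x)|,|U(y)|)$ from (iii), via an interpolation-type inequality of the form
\[\int_0^{l_i(e)}|(H_\alpha U)_e|^p\,dt \;\geq\; C\,\|(H_\alpha U)_e\|_\infty^{p-2}\int_0^{l_i(e)}|(H_\alpha U)_e|^2\,dt\]
that is valid because $(H_\alpha U)_e$ is harmonic and therefore has controlled oscillation on the interval. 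Tracking the powers of $l_i(e)$ on the right produces the factor $l_i(e)^{p-2}\omega(e)$ in the edge-wise lower bound; using $(|a|^2+|b|^2)^{p/2}\geq|a|^p+|b|^p$ for $p\geq 2$ and summing as in (i) yields the weight $M^p$.

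The main obstacle is part (ii): justifying the refined edge-wise bound with the additional factor $l_i(e)^{p-2}$ is the heart of the matter, because a generic function on an interval only satisfies the reverse inequality $\int|f|^p\leq \|f\|_\infty^{p-2}\int|f|^2$. Hence the harmonic nature of $(H_\alpha U)_e$ must be exploited in a non-trivial way, together with the uniform upper bound $l_i(e)\leq 1$ that prevents the hyperbolic-sine profile from deviating too much from its linear approximation. Everything else in the proof—the maximum principle for (iii), the reduction via change of variables and compactness for (i), the final summation and re-indexing—is routine.
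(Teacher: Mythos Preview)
Your arguments for (iii) and for (i) are correct. Part (iii) is the same as in the paper (``obvious'' via the maximum principle). For (i) your route is genuinely different from the paper's: the paper uses a scaled Sobolev--interpolation inequality on each edge, namely
\[
|u_e(x)|^p \;\le\; C\bigl(L^{-1}+L^{1-p}+L^{p+1}\bigr)\int_0^L |u_e|^p\,dt,\qquad L=l_i(e),
\]
obtained from the standard $W^{2,p}$ embedding together with $u_e''=\alpha u_e$, and then multiplies by the appropriate vertex weight and sums. Your compactness argument on the rescaled profile $\varphi_e(s)=h_e^\alpha(l_i(e)s)$ is more elementary and equally valid: since $\varphi_e$ and $\varphi_e(1-\cdot)$ are linearly independent for every $l_i(e)\in(0,1]$ and converge uniformly to $1-s$, $s$ as $l_i(e)\to 0$, the functional $(a,b,l)\mapsto\int_0^1|a\varphi_l(s)+b\varphi_l(1-s)|^p\,ds$ is continuous and strictly positive on the compact set $\{|a|^p+|b|^p=1\}\times[0,1]$.

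The real issue is (ii). Your proposed ``reverse interpolation'' inequality
\[
\int_0^{l_i(e)}|(H_\alpha U)_e|^p\,dt \;\ge\; C\,\|(H_\alpha U)_e\|_\infty^{p-2}\int_0^{l_i(e)}|(H_\alpha U)_e|^2\,dt
\]
is left unproved, and you correctly flag it as the obstacle. But you have overlooked that your own argument for (i) already settles (ii): nothing in the compactness step used $p\le 2$. The bound
\[
\int_0^1|a\varphi_e(s)+b\varphi_e(1-s)|^p\,ds \;\ge\; c_{p,\alpha}\bigl(|a|^p+|b|^p\bigr)
\]
holds for every $p\in[1,\infty)$, so after the change of variables and summation you get $\|H_\alpha U\|_{L^p}^p \ge c\,\|U\|_{\ell^p(\FV,M)}^p$ for all finite $p$. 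Since $l_i(e)\le 1$ implies $M^p(x)\le M(x)$ for $p>2$, this is in fact stronger than the stated (ii). The paper's Sobolev approach produces the weaker weight $M^p$ because the term $L^{1-p}$ in its pointwise estimate blows up for $p>2$ as $L\to 0$; your direct lower bound avoids this loss. So discard the interpolation detour and simply note that the compactness argument covers all $p\in[1,\infty)$.
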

\begin{proof}
We start with a Sobolev inequality. For a function $u\in W^{2,2}(0,L)$ we have the estimate
\[ L^{2p} |f'(x)|^p \leq c \Bigl(L \int\limits_0^L |f(t)|^p dt + L^{2p+1} \int\limits_0^L |f''(t)|^p dt\Bigr)\]
and thus
\begin{eqnarray*}
|f(x)|^p &\leq & C\Bigl( L^{-1} \int\limits_0^L |f(t)|^p dt + L^{p-1} \int\limits_0^L |f'(t)|^p dt\Bigr)\\
&\leq& C\Bigl( L^{-1} \int\limits_0^L |f(t)|^p dt + L^{-p-1} (L^2 \int\limits_0^L |f(t)|^p dt + L^{2p+2} \int\limits_0^L |f''(t)|^p dt)\Bigr).
\end{eqnarray*}
If now $f$ additionally satisfies $f'' = \alpha f$ we obtain
\[|f(x)|^p \leq C (L^{-1} + L^{-p+1} + L^{p+1}) \int\limits_0^L |f(t)|^p dt.\]
Turning back to the graph case, let $u \in H_\alpha(\ell^p(\FV,M))$ and $1\leq p \leq 2$. Then we have from the result above
\begin{eqnarray*}
\sum_{x\in \FV} |u(x)|^p M(x) &\leq& C \sum_{x\in \FV} \sum_{e\sim x} (1+l_i(e)^{-p+2} + l_i(e)^{p+2})\omega(e) \int\limits_0^{l_i(e)} |u_e(t)|^p dt \\
&\leq& C \|u\|_p^p
\end{eqnarray*}
as we have assumed $p\leq 2$ and $l_i(e)\leq 1$. For $p\geq 2$ we analogously have
\begin{eqnarray*}
\lefteqn{\sum_{x\in \FV} |u(x)|^p (\sum_{e\sim x} l_i(e)^{p-2}\omega(e))}\\
&\leq& C \sum_{x\in \FV} \sum_{e\sim x} (l_i(e)^{p-2} + 1 + l_i(e)^{2p})\omega(e) \int\limits_0^{l_i(e)} |u_e(t)|^p dt\\
&\leq& C \|u\|_p^p.
\end{eqnarray*}
Part (iii) is obvious.
\end{proof}
The theorem above could be read as follows. Whenever $\CS$ admits an $\alpha$-harmonic function in $L^p$, then the discrete form $\CQ_\alpha + \CQ$ admits a harmonic function in $\ell^p$ as well. This follows since $H_\alpha$ maps harmonic functions to harmonic functions. Another property of interest is stochastic completeness or conservation of probability. Though we will not give its definition here, we remark that this is equivalent to all nonnegative and bounded $\alpha$-harmonic functions being zero. We this in mind we can prove the following theorem, which treats the case of a diffusion Dirichlet form.
\begin{theorem}\index{theorem!$L^p$-Liouville type}
A diffusion Dirichlet form $\CE$ is stochastically complete if and only if the associated Kirchhoff Dirichlet form $\CQ_\alpha$ is.
\end{theorem}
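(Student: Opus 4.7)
The plan is to use the bijection between continuous $\alpha$-harmonic functions of $\CE$ and discrete $1$-harmonic functions of $\CQ_\alpha$ provided by Theorem 3.28, together with the standard characterization of stochastic completeness by the absence of nontrivial bounded nonnegative harmonic functions.

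First I would recall that stochastic completeness (for both the continuous and the discrete setting) is equivalent to: for some (equivalently any) $\beta>0$ the only bounded nonnegative $\beta$-harmonic function is zero. In the diffusion case this means the only bounded nonnegative $u\in\dom\widetilde{\triangle}$ with $\widetilde{\triangle}u+\beta u=0$ is $u\equiv 0$; for $\CQ_\alpha$ with associated discrete Laplacian $L_\alpha=M_\alpha^{-1}\triangle_\alpha$ on $\ell^2(\FV,M_\alpha)$ it means the only bounded nonnegative $U$ with $L_\alpha U+\gamma U=0$ is $U\equiv 0$. In the diffusion case $\CQ=0$ and the discrete operator $\triangle$ of Definition 3.4 vanishes, so Theorem 3.28 specializes: the restriction map gives a bijection between $\CH_\alpha$ and the set of $U\in\CC(\FV)$ satisfying $\triangle_\alpha U(x)+M_\alpha(x)U(x)=0$, with inverse $H_\alpha$. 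Dividing by $M_\alpha(x)$, this last equation reads $L_\alpha U+U=0$, i.e. $U$ is precisely $1$-harmonic for $L_\alpha$. I would then choose to work at the specific parameters $\beta=\alpha$ on the continuous side and $\gamma=1$ on the discrete side, which matches this correspondence exactly.

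The next step is to check that this bijection preserves boundedness and nonnegativity in both directions. Nonnegativity is immediate: if $U\ge 0$ then $(H_\alpha U)_e(t)=U(\partial^+(e))h_e^\alpha(t)+U(\partial^-(e))h_e^\alpha(l_i(e)-t)\ge 0$ because $h_e^\alpha\ge 0$, and conversely if $u\ge 0$ on $\FX_\Gamma$ then certainly $u|_\FV\ge 0$. For boundedness, the inequality $\|u|_\FV\|_\infty\le\|u\|_\infty$ is trivial, while $\|H_\alpha U\|_\infty\le\|U\|_\infty$ was already recorded in the proof of Proposition 3.10 by the one-dimensional maximum principle for positive $\alpha$-harmonic functions on an interval (the maximum of $(H_\alpha|U|)_e$ is attained at an endpoint of $\FX_e$). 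Putting these observations together, bounded nonnegative $\alpha$-harmonic functions of $\CE$ are in explicit one-to-one correspondence with bounded nonnegative $1$-harmonic functions of $\CQ_\alpha$.

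Hence there exists a nontrivial bounded nonnegative $\alpha$-harmonic function for $\CE$ if and only if there exists a nontrivial bounded nonnegative $1$-harmonic function for $\CQ_\alpha$, and by the equivalence recalled in the first step this yields the stated equivalence of stochastic completeness. The only genuinely subtle point will be justifying the standard characterization of stochastic completeness in terms of bounded nonnegative $\beta$-harmonic functions in a form applicable to both $\CE$ and $\CQ_\alpha$, and the independence of the condition from the particular value of $\beta$; everything else is a direct application of the already-established bijection $H_\alpha$ and the elementary preservation properties above.
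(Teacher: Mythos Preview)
Your approach is essentially identical to the paper's: both invoke the characterization of stochastic completeness via the nonexistence of nontrivial bounded nonnegative $\alpha$-harmonic functions (Sturm for the strongly local side, Keller--Lenz for the discrete side), transfer such functions through the restriction/$H_\alpha$ correspondence, and observe that boundedness and nonnegativity are preserved.

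One small labelling issue: the correspondence you need is the \emph{pointwise} one valid for arbitrary $U\in\CC(\FV)$, which is the content of Proposition~3.20 (or equivalently Theorem~3.18 with $f=0$), not Theorem~3.28/3.29. The latter pair concerns the $L^2$-orthogonal decomposition $\CD(\CS)=\bigoplus_e W^{1,2}_o\oplus\CH_\alpha^\FV$ and the isometry $\CH_\alpha^\FV\to\CD(\CQ_\alpha+\CQ)$; in particular $\CH_\alpha$ there sits inside $\CD(\CS)\subset L^2$, whereas for stochastic completeness you need to handle bounded (not necessarily square-integrable) harmonic functions. Your subsequent argument --- checking directly via the explicit formula for $H_\alpha$ and the edgewise maximum principle from Proposition~3.10 that the bijection respects $\ell^\infty$ and positivity --- is exactly what fills this gap, and is in fact more explicit than the paper, which simply asserts the preservation by reference to Proposition~3.20.
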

\begin{proof}
In \cite{St-94}, the equivalence of stochastic completeness with the absence of nontrivial, nonnegative and bounded $1$-harmonic functions was shown. Secondly, Proposition 3.20 says that $u$ is a nontrivial, nonnegative and bounded $1$-harmonic function of $\CS$ if and only if its restriction to the vertex set is a nontrivial, nonnegative and bounded $1$-harmonic of the Kirchhoff Dirichlet form, i.e. we have
\[ M_\alpha (x)^{-1}(\triangle_\alpha + \triangle) u(x)+ u(x) =0.\]
Finally, in \cite{KL-11} it was shown in the setting of discrete Dirichlet forms, that the absence of nontrivial, nonnegative and bounded $1$-harmonic functions is equivalent to stochastic completeness. Putting these three pieces together yields the claim.
\end{proof}
In \cite{Folz,Huang} the authors show a volume growth criterion, which is due to Grigor'yan \cite{Gr-99} for manifolds, for discrete graphs. This was an open problem that received quite some attention, see \cite{Wo-10,GHM,MUW-12}. Their basic strategy is to construct an appropriate metric graph. There they can apply the Grigor'yan criterion in the setting of strongly local Dirichlet spaces, see \cite{St-94}. Finally they translate this result back, which yields the desired result. The theorem above gives us the same possibility. The usual problem is to relate the volume growths in both settings with each other. As the measure of the discrete problem is nothing but the measure of star neighborhoods, we see that we can be optimistic to deduce the same results as in \cite{Folz,Huang}. This will be further discussed in \cite{HW}.\\
We continue with an improvement on the properties of the restriction map compared to the results in chapter 2.
\begin{coro}
The restriction map
\[\cdot|_\FV:\CD(\CE) \to \ell^2(\FV,M)\]
is continuous.
\end{coro}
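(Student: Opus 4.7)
The plan is to use the orthogonal decomposition from Theorem 3.28 combined with the isometry of Theorem 3.29. Fix some $\alpha > 0$ and observe that for a diffusion Dirichlet form we have $\CQ \equiv 0$, so Theorem 3.28 gives
\[\CD(\CE) = \Bigl(\bigoplus_{e\in E} W^{1,2}_o((0,l_i(e)),\omega)\Bigr) \oplus \CH^\FV_\alpha\]
as an orthogonal decomposition with respect to $\CE(\cdot,\cdot) + \alpha(\cdot,\cdot)$. Given any $u \in \CD(\CE)$, write $u = u_0 + u_H$ with $u_0$ in the first summand and $u_H \in \CH^\FV_\alpha$.

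The first key observation is that $u_0$ vanishes identically on the vertex set $\FV$, so the restriction satisfies $u|_\FV = u_H|_\FV$. The second step is to apply the isometry of Theorem 3.29 (with $\CQ = 0$ so that $\CD(\CQ_\alpha + \CQ) = \CD(\CQ_\alpha)$ equipped with $\|\cdot\|_{\ell^2(\FV,M_\alpha)}^2 + \CQ_\alpha(\cdot)$) to $u_H$, yielding
\[\|u|_\FV\|_{\ell^2(\FV,M_\alpha)}^2 \;\leq\; \|u_H|_\FV\|_{\ell^2(\FV,M_\alpha)}^2 + \CQ_\alpha(u_H) \;=\; \alpha\|u_H\|_{L^2}^2 + \CE(u_H).\]
By orthogonality of the decomposition with respect to the $\alpha$-form inner product we have $\alpha\|u_H\|_{L^2}^2 + \CE(u_H) \leq \alpha\|u\|_{L^2}^2 + \CE(u)$.

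Finally, to pass from $\ell^2(\FV, M_\alpha)$ to $\ell^2(\FV, M)$, invoke the pointwise comparability $C^{-1} M(x) \leq M_\alpha(x) \leq C\, M(x)$, which was already used in Lemma 3.27 and follows from Proposition 3.14 together with the standing assumption $\nu(e) l_c(e)^2 \leq 1$ (equivalently $l_i(e) \leq 1$). Putting everything together:
\[\|u|_\FV\|_{\ell^2(\FV,M)}^2 \;\leq\; C\,\|u|_\FV\|_{\ell^2(\FV,M_\alpha)}^2 \;\leq\; C\bigl(\alpha\|u\|_{L^2}^2 + \CE(u)\bigr),\]
proving continuity.

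There is no real obstacle here; everything reduces to bookkeeping once the decomposition and the isometry are invoked. The only mild subtlety is making sure that the comparability of $M$ with $M_\alpha$ is invoked explicitly and that the $\alpha$-orthogonality used to bound $\alpha\|u_H\|_{L^2}^2 + \CE(u_H)$ by the full $\alpha$-form norm of $u$ is justified by Theorem 3.28.
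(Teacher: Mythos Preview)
Your proof is correct and follows essentially the same route as the paper: decompose $u=u_0+u_H$ via Theorem 3.28, use that $u|_\FV=u_H|_\FV$, apply the isometry of Theorem 3.29 together with the $M\!\sim\! M_\alpha$ comparability from Lemma 3.27, and finish with $\alpha$-orthogonality. The only cosmetic difference is that the paper first passes to $\|\cdot\|_{\ell^2(M)}^2+\CQ_0$ and then invokes Lemma 3.27 to reach $\alpha\|u_H\|_{L^2}^2+\CE(u_H)$, whereas you apply the isometry in the $M_\alpha$-norm first and switch to $M$ at the end; the content is identical.
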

\begin{proof}
For $u\in \CD(\CE)$ we decompose the function as $u=u_o+u_h$. Then $u|_\FV = u_h |_\FV$ and thus
\begin{eqnarray*}
\|u|_\FV\|_{\ell^2}^2 &\leq& \|u_h\|_{\ell^2}^2 + \CQ_0(u_h)\\
&\leq& C (\|u_h\|_{L^2}^2 + \CE(u_h))\\
&\leq & C \|u\|_\CE
\end{eqnarray*}
\end{proof}
The previous corollary improves the former result with the vertex measure $m(x)=\min\limits_{e\sim x} \omega(e)l(e)$. Thus we can also improve the result on the relative boundedness.
\begin{coro}
Assume the discrete part $\CQ$ of the graph Dirichlet form is a bounded form on $\ell^2(\FV,M)$, then $\CQ$ relatively $\CE$-bounded.
\end{coro}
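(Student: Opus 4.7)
The plan is to combine the continuity of the restriction map from the previous corollary with the assumed $\ell^2(\FV,M)$-boundedness of $\CQ$, in the framework set up by Proposition 2.34. Recall that Proposition 2.34 characterizes $\CE$-relative boundedness of $\CQ$ by the existence of a constant $C>0$ such that
\[\CQ(u) \leq C(\|u\|_2^2 + \CE(u))\]
for all $u\in\CD(\CE)$. So I just need to produce such a constant.

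First, I would observe that the form $\CQ$ depends only on the values of its argument on $\FV$; in particular, for any $u\in\CD(\CE)$, the restriction $u|_\FV$ is well-defined (by the local Sobolev embedding, Theorem 2.9, functions in $\CD(\CE)$ are continuous), and $\CQ(u) = \CQ(u|_\FV)$ whenever either side is finite. By the assumption that $\CQ$ is a bounded form on $\ell^2(\FV,M)$, there exists $C_1>0$ with
\[\CQ(U) \leq C_1\, \|U\|_{\ell^2(\FV,M)}^2 \qquad \text{for all } U\in\ell^2(\FV,M).\]
Second, I would invoke Corollary 3.26, which gives the continuity of the restriction map $\cdot|_\FV:\CD(\CE)\to\ell^2(\FV,M)$: there exists $C_2>0$ with
\[\|u|_\FV\|_{\ell^2(\FV,M)}^2 \leq C_2\,\|u\|_\CE^2 = C_2\,(\|u\|_2^2 + \CE(u))\]
for all $u\in\CD(\CE)$. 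This in particular shows $u|_\FV\in\ell^2(\FV,M)$, so the preceding bound on $\CQ$ applies.

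Combining the two estimates yields, for every $u\in\CD(\CE)$,
\[\CQ(u) = \CQ(u|_\FV) \leq C_1\,\|u|_\FV\|_{\ell^2(\FV,M)}^2 \leq C_1 C_2\,(\|u\|_2^2+\CE(u)),\]
which is precisely the criterion of Proposition 2.34 with $C=C_1C_2$. Hence $\CQ$ is $\CE$-bounded. There is no real obstacle here beyond checking the identification $\CQ(u)=\CQ(u|_\FV)$, which is immediate from the definition of $\CQ$ as a sum over $\FV\times\FV$; all the substantial work (the continuity of the restriction, which ultimately relies on the edgewise Sobolev inequality packaged via Theorem 3.20) has already been done in establishing Corollary 3.26.
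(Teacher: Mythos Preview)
Your proposal is correct and follows exactly the approach the paper intends: the corollary is stated without proof precisely because it is immediate from the continuity of the restriction map (the previous corollary) together with the abstract criterion of Proposition~2.40 (or, equivalently, the inequality characterization in Proposition~2.35 that you invoke). You have simply written out this implicit two-line argument in full.
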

The next theorem gives a Sobolev type inequality. In chapter $2$ we have proven it for a diffusion under the assumption that the graph is connected. We now prove it under the assumption of irreducibility for a general graph Dirichlet form. Before we state and prove it we start with two lemmata concerning the part on the vertices.
\begin{lemma}
Let $\FX_\Gamma$ be a metric graph. Then for all $u\in  \bigoplus\limits_{e\in E} W^{1,2}_0((0,l_i(e)),\omega)$ the inequality
\[\|u\|_2^2 \leq  \CE(u)\]
holds. In particular, the Sobolev inequality holds if and only if we have the strong Sobolev inequality
\[\|u\|^2_\infty \leq C \CE(u).\]
\end{lemma}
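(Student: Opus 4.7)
The plan is to work in the intrinsic representation, where $\omega$ plays the role of both the measure weight and the ellipticity weight, and where the global assumption $\nu(e) l_c(e)^2 \leq 1$ translates (via Proposition 3.16, $l_i(e)^2 = l_c(e)^2 \nu(e)$) into the uniform bound $l_i(e) \leq 1$. Since the direct sum decouples completely on $\bigoplus_{e\in E} W^{1,2}_0((0,l_i(e)),\omega)$, it will suffice to establish the edgewise estimate and then sum.

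First I would fix an edge $e \in E$, set $L := l_i(e) \leq 1$, and let $u_e \in W^{1,2}_0(0,L)$. Since $u_e(0) = 0$, for any $t \in (0,L)$ the Cauchy--Schwarz inequality gives
\[
|u_e(t)|^2 = \Bigl|\int_0^t u_e'(s)\, ds\Bigr|^2 \leq t \int_0^t |u_e'(s)|^2 \, ds \leq L \int_0^L |u_e'(s)|^2 \, ds.
\]
Integrating in $t$ over $(0,L)$ yields the Friedrichs--type inequality
\[
\int_0^L |u_e(t)|^2 \, dt \leq L^2 \int_0^L |u_e'(t)|^2 \, dt \leq \int_0^L |u_e'(t)|^2 \, dt,
\]
where the last step is the only place the assumption $L = l_i(e) \leq 1$ enters. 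Multiplying by $\omega(e)$ and summing over $e \in E$ produces
\[
\|u\|_2^2 = \sum_{e\in E} \omega(e) \int_0^{l_i(e)} |u_e|^2 \, dt \leq \sum_{e\in E} \omega(e) \int_0^{l_i(e)} |u_e'|^2 \, dt = \CE(u),
\]
which is the first assertion.

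For the \emph{in particular} part, observe that the inequality just proven shows that on the subspace $\bigoplus_e W^{1,2}_0((0,l_i(e)),\omega)$ the energy norm satisfies $\|u\|_\CE^2 = \|u\|_2^2 + \CE(u) \leq 2\,\CE(u)$, and trivially $\CE(u) \leq \|u\|_\CE^2$. Hence an ordinary Sobolev inequality $\|u\|_\infty^2 \leq C\|u\|_\CE^2$ on this subspace is equivalent to the strong Sobolev inequality $\|u\|_\infty^2 \leq C'\CE(u)$, with at most a factor of $2$ lost in the constant. There is no genuine obstacle here: the only subtle point is to make sure one invokes the intrinsic assumption $l_i(e)\leq 1$ at the right moment, and that the boundary trace of $u_e$ really vanishes so that the representation $u_e(t) = \int_0^t u_e'(s)\,ds$ is available; both are built into the definition of $W^{1,2}_0$ and the standing assumption on the metrization.
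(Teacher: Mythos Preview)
Your proof is correct and follows essentially the same approach as the paper: the edgewise pointwise bound $|u_e(t)|^2 \leq l_i(e)\int_0^{l_i(e)} |u_e'|^2\,ds$ (which the paper states as $\|u_e\|_\infty^2 \leq l_i(e)\int |u_e'|^2$), followed by integration and the assumption $l_i(e)\leq 1$, then summation with the weight $\omega(e)$. Your treatment of the ``in particular'' part via the equivalence $\|u\|_\CE^2 \asymp \CE(u)$ on this subspace is exactly the intended argument.
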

\begin{proof}
For all $e\in E$ we have
\[\|u_e\|_\infty^2 \leq l_i(e) \int\limits_0^{l_i(e)} |u'_e(t)|^2 \: dt\]
and therefore as $l_i(e)\leq 1$ we have
\[\|u\|_2^2 \leq \sum_{e\in E} l_i(e)^2 \|u_e'\|_2^2 \leq \CE(u).\]
\end{proof}
The next lemma is a certain version of an isoperimetric inequality.
\begin{lemma}
Let $\FX_\Gamma$ be a metric graph. Then the strong Sobolev inequality holds for $ \bigoplus\limits_{e\in E} W^{1,2}_0((0,l_i(e)),\omega)$ if and only if we have
\[l_i(e)\leq C \omega(e).\]
\end{lemma}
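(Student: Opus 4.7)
The plan is to unravel the statement edge-by-edge and exploit that the direct sum carries the supremum norm, so the $L^\infty$ norm of $u = (u_e)_{e\in E}$ is $\sup_{e\in E}\|u_e\|_\infty$, while the energy $\CE(u)$ is the $\ell^1$-sum $\sum_{e\in E}\omega(e)\int_0^{l_i(e)}|u_e'|^2\,dt$. Since each component $u_e$ lies in $W^{1,2}_0(0,l_i(e))$ (zero boundary values at both endpoints of the edge), I can apply the one-dimensional Sobolev inequality on a bounded interval with Dirichlet boundary conditions, whose sharp form is
\[\|u_e\|_\infty^2 \;\leq\; \tfrac{l_i(e)}{4}\int\limits_0^{l_i(e)}|u_e'(t)|^2\,dt,\]
as shown by testing against the tent function $t\mapsto \min(t,l_i(e)-t)$.

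For the forward direction, I assume $l_i(e)\leq C\omega(e)$ uniformly in $e$. Dividing and multiplying by $\omega(e)$ in the inequality above gives
\[\|u_e\|_\infty^2 \;\leq\; \tfrac{l_i(e)}{4\omega(e)}\cdot \omega(e)\int\limits_0^{l_i(e)}|u_e'|^2\,dt \;\leq\; \tfrac{C}{4}\,\CE_e(u),\]
where $\CE_e(u)$ denotes the contribution of $e$ to $\CE(u)$. Since $\CE_e(u)\leq \CE(u)$, taking the supremum over $e$ yields $\|u\|_\infty^2\leq \frac{C}{4}\CE(u)$, which is the strong Sobolev inequality.

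For the converse direction, I assume $\|u\|_\infty^2\leq C\,\CE(u)$ for all $u\in \bigoplus_e W^{1,2}_0((0,l_i(e)),\omega)$. Given a fixed edge $e$, I take the test function equal to the tent function $\min(t,l_i(e)-t)$ on $e$ and zero on every other edge; this indeed lies in the direct sum, since it vanishes at both endpoints of $e$. A direct computation gives $\|u\|_\infty = l_i(e)/2$ and $\CE(u)=\omega(e)\cdot l_i(e)$, so plugging into the strong Sobolev inequality produces
\[\tfrac{l_i(e)^2}{4} \;\leq\; C\,\omega(e)\,l_i(e),\]
i.e.\ $l_i(e)\leq 4C\,\omega(e)$, which is the desired uniform bound with a different constant.

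The argument is elementary once one has the sharp one-dimensional Sobolev constant; the only point to be careful about is that the constant in the hypothesis $l_i(e)\leq C\omega(e)$ must be uniform in $e$, matching the uniform constant in the strong Sobolev inequality. This is forced automatically by the two directions, since each edge $e$ appears individually both on the left (via its tent function) and on the right (through the sup over components). No global connectivity of $\FX_\Gamma$ is used, and the standing assumption $l_i(e)\leq 1$ from the intrinsic representation is not needed either; only the edgewise Dirichlet boundary condition, which is built into the space $W^{1,2}_0$.
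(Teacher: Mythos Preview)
Your proof is correct and follows essentially the same route as the paper: test with a tent function supported on a single edge for one direction, and use the one-dimensional Sobolev inequality on each edge for the other. The only cosmetic differences are that the paper normalizes the tent function to have peak value $1$ (yielding $1\leq C\,\omega(e)/l_i(e)$ directly) and uses the cruder constant $l_i(e)$ in place of your sharp $l_i(e)/4$, neither of which affects the argument.
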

\begin{proof}
Assume the strong Sobolev inequality holds. Let $e'\in E$ and choose $u\in  \bigoplus\limits_{e\in E} W^{1,2}((0,l_i(e)),\omega)$ such that $u_{e'}(\tfrac{l_i(e)}{2}) =1$ and linearly interpolated on the rest of the interval. Then by the Sobolev inequality we have
\[1 \leq  C \CE(u) = \tfrac{\omega(e)}{l_i(e)}\]
giving that $l_i(e) \leq C \omega(e)$. To prove the converse we have for a function $u\in  \bigoplus\limits_{e\in E} W^{1,2}_0((0,l_i(e)),\omega)$ and $e\in E$ that
\[\|u_e\|_\infty^2 \leq l_i(e) \int\limits_0^{l_i(e)} |u'_e(t)|^2 dt \leq C \omega(e) \int\limits_0^{l_i(e)} |u'_e(t)|^2 dt\]
which yields claim by summing over $e\in E$.
\end{proof}
\begin{theorem}\index{theorem!Sobolev embedding!global version}
Let $\FX_\Gamma$ be a metric graph and $\CS$ a graph Dirichlet form. Then the Sobolev inequality holds for $u \in \CD(\CS)$ if and only if \[l_i(e) \leq C \omega (e)\]
for all $e\in E$ and we have the discrete Sobolev inequality
\[\|U\|_\infty^2 \leq C ( \|U\|_{\ell^2}^2 + \CQ_0(U) + \CQ(U))\]
for all $U\in \CD(\CQ+\CQ_0)$.
\end{theorem}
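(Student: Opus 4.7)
The plan is to exploit the orthogonal decomposition $\CD(\CS)=\bigl(\bigoplus_{e\in E} W^{1,2}_o((0,l_i(e)),\omega)\bigr)\oplus \CH^\FV_\alpha$ from Theorem 3.34 together with the isometric isomorphism $H_\alpha:\CD(\CQ_\alpha+\CQ)\to\CH^\FV_\alpha$ from Theorem 3.35 (working at $\alpha=1$). Each $u\in\CD(\CS)$ then splits as $u=u_o+u_h$ with $u_h=H_1 U$, where $U:=u|_\FV\in\CD(\CQ_1+\CQ)$. Since $u_o$ vanishes on $\FV$ one has $\CQ(u_o)=0$ and $\CQ(u)=\CQ(U)$, and by orthogonality (at $\alpha=1$) plus Theorem 3.35,
\[
\CE(u_o)+\|u_o\|_2^2+\|U\|_{\ell^2(\FV,M_1)}^2+\CQ_1(U)+\CQ(U)=\CS(u)+\|u\|_2^2=\|u\|_\CS^2.
\]
By Lemma 3.32 the norms $\|\cdot\|_{\ell^2(\FV,M_1)}$, $\CQ_1$ on one hand and $\|\cdot\|_{\ell^2(\FV,M)}$, $\CQ_0$ on the other are equivalent, so the right hand side is comparable to $\CE(u_o)+\|u_o\|_2^2+\|U\|_{\ell^2(\FV,M)}^2+\CQ_0(U)+\CQ(U)$. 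Moreover, since each $(u_h)_e$ is $1$-harmonic on $(0,l_i(e))$, the one-dimensional maximum principle gives $\|u_h\|_\infty=\|U\|_\infty$, whereas $\|u_o\|_\infty$ only depends on the ``decoupled'' piece.

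For the direct implication I would test the Sobolev inequality on the two summands separately. Applied to $u=u_o\in\bigoplus W^{1,2}_o$ it reduces, using the preceding lemma $\|u_o\|_2^2\leq\CE(u_o)$, to the strong Sobolev inequality $\|u_o\|_\infty^2\leq C\,\CE(u_o)$, which by the isoperimetric lemma (Lemma 3.36) is equivalent to $l_i(e)\leq C\omega(e)$. Applied to $u=H_1 U$ for an arbitrary $U\in\CD(\CQ+\CQ_0)$ it yields, via $\|U\|_\infty=\|H_1 U\|_\infty$ and the identity displayed above, exactly the discrete Sobolev inequality $\|U\|_\infty^2\leq C(\|U\|_{\ell^2}^2+\CQ_0(U)+\CQ(U))$.

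For the reverse implication I would just put everything back together. Given both hypotheses, Lemma 3.36 provides $\|u_o\|_\infty^2\leq C\,\CE(u_o)$, while the assumed discrete Sobolev inequality combined with $\|u_h\|_\infty=\|U\|_\infty$ gives $\|u_h\|_\infty^2\leq C(\|U\|_{\ell^2}^2+\CQ_0(U)+\CQ(U))$. Then
\[
\|u\|_\infty^2\leq 2\|u_o\|_\infty^2+2\|u_h\|_\infty^2\leq C\bigl(\CE(u_o)+\|U\|_{\ell^2}^2+\CQ_0(U)+\CQ(U)\bigr)\leq C'\|u\|_\CS^2,
\]
the last step using the equivalence of $M\sim M_1$, $\CQ_0\sim\CQ_1$ and the energy identity above.

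The only delicate point is keeping the bookkeeping between the parameters at $\alpha=0$ (which appear in the discrete Sobolev inequality and in the isoperimetric bound) and those at $\alpha=1$ (which appear naturally in the orthogonal decomposition and the isometry of Theorem 3.35); Lemma 3.32 is precisely what makes this harmless. Everything else is really just the decomposition followed by the two preceding lemmas applied to the respective summands.
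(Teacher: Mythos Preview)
Your proposal is correct and follows essentially the same approach as the paper: decompose $u=u_o+u_h$ via the orthogonal splitting of $\CD(\CS)$, handle $u_o$ with the isoperimetric lemma (equivalence of the strong Sobolev inequality on $\bigoplus W^{1,2}_o$ with $l_i(e)\le C\omega(e)$), handle $u_h$ via the maximum principle $\|u_h\|_\infty=\|U\|_\infty$ and the isometry to the discrete form, and then recombine. Your explicit bookkeeping between the $\alpha=0$ and $\alpha=1$ data via the equivalence lemma is slightly more careful than the paper's terse version, but the argument is the same.
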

\begin{proof}
Decompose $u\in \CD(\CS)$ as $u_o+ u_h$. Then we know from the previous lemma that the validity of the Sobolev inequality is equivalent to $l_i(e) \leq \omega(e)$. For $u_h$ we have
\begin{eqnarray*}
\|u_h\|_\infty ^2 &=& \sup_{x\in \FV} |u(x)|^2\\
&\leq& C( \|u_h\|_{\ell^2}^2 + \CQ(u_h) + \CQ_0(u_h))\\
&\leq& C ( \|u_h\|_{L^2}^2 + \CS(u_h))
\end{eqnarray*}
where the first equality follows from the maximum principle. If now the discrete inequality is valid and $l_i(e)\leq C \omega (e)$ holds true we have
\begin{eqnarray*}
\|u\|_\infty^2 &\leq & 2 (\|u_o\|^2_\infty + \|u_h\|^2_\infty)\\
&\leq& C (\CE(u_o) + \|u_h\|_{L^2}^2 + \CS(u_h))\\
&\leq& C  (\|u\|_{L^2}^2 + \CS(u)).
\end{eqnarray*}
The converse direction is proven easily by applying the Sobolev inequality separately to $\bigoplus\limits_{e\in E} W^{1,2}((0,l_i(e)),\omega)$ and $\CH_1^\FV$.
\end{proof}

\section{Part Dirichlet forms and approximation}
In this section we are interested in assigning to each open subset $\FY$ an appropriate operator which is connected with the weak Laplacian. This is done via the tool of part processes and goes as follows. Let $\FY$ be a subset of $\FX_\Gamma$ and consider the set
\[\CD(\CS)_{\FY,o} = \{u\in \CD(\CS) \mid u|_{\FX_\Gamma\setminus \FY} \equiv 0\}.\]\index{Dirichlet form!part}
We consider this space as a subspace of $L^2(\FY,\omega)$. Applying general results of part Dirichlet forms to our setting, cf. \cite{CF-12} chapter 4, we derive that if $\FY$ is open and relatively compact then $\CS_{\FY,o}$ is a regular Dirichlet form on $L^2(\FY,\omega)$.\\
Closely connected to part Dirichlet forms is the following version of Dirichlet's problem in a relatively compact open subset $\FY\subset \FX_\Gamma$: For $f\in L^2(\FX_\Gamma,\omega)$, $w\in \CD(\CS)_\loc^*$ and $\alpha >0$, does there exist a function $u$ such that
\[u=w \mod \CD(\CS)_{\FY,o},\]
and
\[\CS(u,\phi) + \alpha (u,\phi) = (f,\phi)\]
for all $\phi \in \CD(\CS)_{\FY,o}$ with compact support, where the first condition means that we have $u=w+w_o$ for some $w_o\in \CD(\CS)_{\FY,o}$. As $\FY$ is a subgraph we can consider the weak Laplacian on $\FY$. We immediately obtain that the Dirichlet problem means that $u$ is a weak $\alpha$-solution of the equation $\widetilde{\Delta} u + \alpha u = f$ with the additional boundary value $u=w$ outside of $\FY$.\medskip\\
Substituting $v= u-w$ we get that $v\in \CD(\CS)_{\FY,o}$ and the equation above becomes
\[\CS(v,\phi) + \alpha (v,\phi) = - \CS(w,\phi) + (f-\alpha w, \phi).\]
As by Cauchy-Schwarz
\[|\CS(w,\phi)| \leq \CS(w) \|\phi\|_\CS\]
and
\[|(f-\alpha w, \phi)| \leq \|f-\alpha w\|_2 \|\phi\|_\CS\]
we get that the linear functional
\[f_w(\phi) := -\CS(w,\phi) + (f-\alpha w, \phi)\]
is bounded on the Hilbert space $\CD(\CS)_{\FY,o}$ equipped with the energy norm. Thus by the Riesz representation theorem there exists a $v\in \CD(\CS)_{\FY,o}$ such that
\[\CS(v,\phi) + \alpha (v,\phi) = f_w (\phi)\]
giving the existence and the uniqueness of a solution of the Dirichlet problem by $u = v+w$. Note that if $w\equiv 0$ this solution agrees with the associated resolvent of $\CS_{\FY,o}$, i.e. \[u= (\Delta_{\FY,o} + \alpha)^{-1} f\]
where $\Delta_{\FY,o}$ denotes the $L^2$ generator of the Dirichlet form $\CS_{\FY,o}$ which is uniquely defined by the identity
\[\CS(u,\phi) = (\Delta_{\FY,o} u , \phi) \quad \forall \phi\in \CD(\CS)_{\FY,o}.\]
In particular for $\FY = \FX_\Gamma$ we obtain the generator $\Delta_o$ of the regular Dirichlet form $\CS_o$. Our aim is to show that we can approximate the resolvent of $\CS_o$ by  the resolvents of the part Dirichlet forms $\CS_{\FY,o}$. Note that this is true for general Dirichlet forms and can, for example, be found in the works \cite{Stolli-95,StolliV-96}. We include a proof following \cite{KL-11}. The key ingredient for that is the domain monotonicity principle which will basically follow from the minimum principle.
\begin{lemma}\index{weak Laplacian!domain monotonicity}
Let $\FY_1,\FY_2 \subset \FX_\Gamma$ be precompact open subsets with $\FY_1 \subset \FY_2$. Then for any $x\in \FY_1$
\[(\Delta_{\FY_1 ,o} + \alpha)^{-1} f(x) \leq (\Delta_{\FY_2,o} + \alpha)^{-1} f(x)\]
for all $f\in L^2(\FX_\Gamma,\omega)$ with $f\geq 0$ and $\supp f \subset \FY_1$.
\end{lemma}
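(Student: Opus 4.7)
The plan is to reduce the inequality to an application of the local maximum/minimum principle (Lemma 3.23). Let $u_i := (\Delta_{\FY_i,o} + \alpha)^{-1} f$ for $i=1,2$, and extend $u_1$ by zero outside $\FY_1$ and $u_2$ by zero outside $\FY_2$, so that both functions live on $\FX_\Gamma$. The first preliminary step is to establish that $u_1, u_2 \geq 0$ on $\FX_\Gamma$; this is the standard positivity-preserving property of resolvents of Dirichlet forms, which follows from the Markovian property (if $u$ minimizes $\CS(\cdot)+\alpha\|\cdot\|_2^2-2(f,\cdot)$ with $f\geq 0$, then so does $u\vee 0$, and by uniqueness $u = u\vee 0\geq 0$).

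Next, I would set $v := u_2 - u_1$ and argue that $v$ is $\alpha$-harmonic on $\FY_1$ in the weak sense. Indeed, for any test function $\phi \in \CD(\CS)_{\FY_1,o}$ with compact support, $\phi$ also lies in $\CD(\CS)_{\FY_2,o}$ because $\FY_1 \subset \FY_2$, so both defining identities
\[
\CS(u_2,\phi) + \alpha(u_2,\phi) = (f,\phi), \qquad \CS(u_1,\phi) + \alpha(u_1,\phi) = (f,\phi)
\]
are available. Subtracting gives $\CS(v,\phi) + \alpha(v,\phi) = 0$ for all such $\phi$, i.e.\ $v$ is $\alpha$-harmonic on $\FY_1$ (and in particular $\alpha$-superharmonic). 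Using the characterization in Section~3.1 one verifies that $v \in \CD(\CS)_\loc^*$ with the required regularity.

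Now I would check the boundary behaviour: on $\FX_\Gamma \setminus \FY_1$ we have $u_1 \equiv 0$ by definition of $\CD(\CS)_{\FY_1,o}$, so $v = u_2 \geq 0$ there by the first step. Lemma~3.23, applied with $\FY = \FY_1$, is then directly applicable to $v$ and yields $v \geq 0$ on $\FY_1$, which is the claimed inequality $u_1(x) \leq u_2(x)$ for all $x \in \FY_1$.

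The main technical obstacle I foresee is not the algebraic manipulation but the book-keeping: one must confirm that $v$, regarded as a function on $\FX_\Gamma$ obtained by pasting $u_2 - u_1$ on $\FY_1$ with $u_2$ on the complement, sits in the space $\CD(\CS)_\loc^*$ to which the notions of $\alpha$-superharmonicity and Lemma~3.23 apply, and that the compactness assumption on $\FY_1$ in Lemma~3.23 is indeed met (which is given since $\FY_1$ is precompact). The positivity of $(\Delta_{\FY_i,o}+\alpha)^{-1}$ on nonnegative data, needed for step one, is a standard consequence of the Markovian property of $\CS$ and should be recalled explicitly.
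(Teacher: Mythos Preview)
Your proposal is correct and follows essentially the same approach as the paper: form the difference $v=u_2-u_1$, observe it is $\alpha$-harmonic on $\FY_1$ and nonnegative on $\FX_\Gamma\setminus\FY_1$, and conclude via the local minimum principle (Lemma~3.23). The paper's proof is considerably terser and does not spell out the positivity-preserving step for the resolvents or the membership in $\CD(\CS)_\loc^*$, so your added bookkeeping is welcome rather than superfluous.
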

\begin{proof}
We consider the function
\[u:= (\Delta_{\FY_2,o} + \alpha)^{-1} f - (\Delta_{\FY_1,o} + \alpha)^{-1} f\]
where we extend the resolvents on the complements of $\FY_1$ and $\FY_2$ by zero. Then by definition $u$ is $\alpha$-superharmonic on $\FY_1^\complement$ and $u$ attains its minimum on $\FY_1$ as it is precompact. Furthermore $u$ is $\alpha$-harmonic on $\FY_1$. Thus by the minimum principle we obtain $w\geq 0$.
\end{proof}
The domain monotonicity has the desired approximation result as consequence. We state it without proof, the interested reader may find one in \cite{KL-11}.
\begin{lemma}
Let $\FX_\Gamma$ be a metric graph and $\CS$ a graph Dirichlet form. Let $(\FY_n)_n$ be an exhausting sequence of $\FX_\Gamma$, i.e. each $\FY_n$ is open, relatively compact, $\overline{\FY}_n \subset \FY_{n+1}$ and $\bigcup\limits_{n\in \IN} \FY_n = \FX_\Gamma$. Then,
\[(\Delta_{\FY_n ,o} + \alpha)^{-1} f \to  (\Delta_{o} + \alpha)^{-1} f \]
in $L^2$-norm for all $f\in L^2(\FX_\Gamma, \omega)$.
\end{lemma}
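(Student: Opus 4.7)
The plan is to combine a uniform energy bound with weak compactness, identify the weak limit via the density of $\bigcup_m \CD(\CS)_{\FY_m,o}$ in $\CD_o(\CS)$, and finally upgrade to strong convergence by showing convergence of energy norms.

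First I would establish a uniform $\|\cdot\|_\CS$-bound on $u_n := (\Delta_{\FY_n,o}+\alpha)^{-1} f$. Testing the defining identity $\CS(u_n,u_n)+\alpha(u_n,u_n)=(f,u_n)$ against $u_n$ itself and using Cauchy--Schwarz yields $\alpha\|u_n\|_2^2 \leq \|f\|_2\|u_n\|_2$, hence $\|u_n\|_2 \leq \alpha^{-1}\|f\|_2$ and $\CS(u_n)\leq \alpha^{-1}\|f\|_2^2$. Extending $u_n$ by zero to all of $\FX_\Gamma$ places it in $\CD_o(\CS)$, so $(u_n)$ is bounded in the Hilbert space $(\CD_o(\CS),\|\cdot\|_\CS)$. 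Pass to a weakly convergent subsequence $u_{n_k}\rightharpoonup v$ in this space; in particular $u_{n_k}\to v$ weakly in $L^2(\FX_\Gamma,\omega)$.

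Next I would identify the weak limit $v$ with $u := (\Delta_o+\alpha)^{-1} f$. Fix $\phi \in \CD(\CS)_{\FY_{m},o}$; then $\phi \in \CD(\CS)_{\FY_{n},o}$ for all $n\geq m$ because of the nesting $\FY_m\subset \FY_n$. Consequently $\CS(u_n,\phi)+\alpha(u_n,\phi)=(f,\phi)$ for $n\geq m$; passing to the limit $n_k\to\infty$ using weak convergence in $\|\cdot\|_\CS$ and in $L^2$ gives $\CS(v,\phi)+\alpha(v,\phi)=(f,\phi)$. The key density step is then to show that $\bigcup_m \CD(\CS)_{\FY_m,o}$ is $\|\cdot\|_\CS$-dense in $\CD_o(\CS)$: by definition $\CD_o(\CS)$ is the energy closure of $\CD(\CS)\cap \CC_c(\FX_\Gamma)$, and any $\phi$ in this set has compact support, hence is supported in some $\overline{\FY}_m\subset \FY_{m+1}$, which places $\phi$ in $\CD(\CS)_{\FY_{m+1},o}$. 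By density the identity extends to all $\phi\in \CD_o(\CS)$, which characterizes $v$ uniquely as $u$. Since every weakly convergent subsequence of $(u_n)$ has $u$ as its limit, the whole sequence satisfies $u_n\rightharpoonup u$ in $(\CD_o(\CS),\|\cdot\|_\CS)$.

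Finally I would upgrade this to norm convergence. Setting $\phi=u_n$ in the defining equation yields $\CS(u_n)+\alpha\|u_n\|_2^2=(f,u_n)\to (f,u)=\CS(u)+\alpha\|u\|_2^2$, that is, $\|u_n\|_\CS^2 \to \|u\|_\CS^2$. Combined with weak convergence in the Hilbert space $(\CD_o(\CS),\|\cdot\|_\CS)$ this implies strong convergence $\|u_n-u\|_\CS\to 0$, and in particular $\|u_n-u\|_2\to 0$, which is the claim for general $f\in L^2(\FX_\Gamma,\omega)$.

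The main obstacle I anticipate is the density of $\bigcup_m \CD(\CS)_{\FY_m,o}$ in $\CD_o(\CS)$; everything else is a routine weak compactness plus Radon--Riesz argument. Here the density follows cleanly because $\CD_o(\CS)$ is defined as the energy closure of compactly supported domain functions, and the exhaustion property $\bigcup_m \FY_m=\FX_\Gamma$ ensures every compact support eventually lies inside some $\FY_m$. One should also note that the argument uses the fact, proved earlier via the Sobolev-type embedding $\CD(\CS)_{\loc}\hookrightarrow \CC(\FX_\Gamma)$, that continuous compactly supported functions in $\CD(\CS)$ vanish identically outside their (compact) support and hence genuinely belong to $\CD(\CS)_{\FY_m,o}$.
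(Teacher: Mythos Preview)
Your proof is correct. The paper does not give its own proof but refers to \cite{KL-11}, and the surrounding text makes clear that the intended argument rests on the preceding domain monotonicity lemma: for $f\geq 0$ one shows that $(\Delta_{\FY_n,o}+\alpha)^{-1}f$ is pointwise increasing in $n$ and dominated by $(\Delta_o+\alpha)^{-1}f$, so monotone convergence yields an $L^2$-limit which is then identified with the full resolvent; general $f$ follows by splitting into positive and negative parts. Your route bypasses domain monotonicity and the minimum principle entirely, using instead a direct functional-analytic scheme: uniform energy bounds, weak compactness in $(\CD_o(\CS),\|\cdot\|_\CS)$, identification of the weak limit via the density of $\bigcup_m \CD(\CS)_{\FY_m,o}$ in $\CD_o(\CS)$, and the Radon--Riesz upgrade from weak to strong convergence through convergence of norms. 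The advantage of your argument is that it is self-contained and makes no use of positivity; the advantage of the monotonicity route is that it yields pointwise monotone convergence and ties in with the order structure of Dirichlet forms, which the paper exploits elsewhere in the chapter (for instance in the monotone convergence of solutions lemma that immediately follows).
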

Next we show the monotone convergence of solutions.
\begin{lemma}\index{weak Laplacian!monotone convergence of solutions}
Let $(u_n)_n$ be a sequence of nonnegative functions in $\CD(\CS)_\loc^*$, $u\in L_\loc^1(\FX_\Gamma)$ and $f\in L^1_\loc(\FX_\Gamma)$ such that $u_n \leq u_{n+1}$, $u_n\to u$ pointwise and $\widetilde{\Delta} u_n \to f$ in the sense
\[\lim_{n\to \infty} \CS(u_n,\phi) = (f,\phi)\]
for all $\phi\in \CD(\CS)_c$. Then $u\in \CD(\CS)_\loc^*$ and $\widetilde{\Delta} u = f$.
\end{lemma}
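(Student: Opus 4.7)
Plan. Write $f_n := \widetilde{\Delta} u_n$, so by hypothesis $\CS(u_n,\phi) = (f_n,\phi)$ for all $\phi \in \CD(\CS)_c$, and $(f_n,\phi) \to (f,\phi)$. My first step is edgewise: for $\phi \in \CC_c^\infty((0,l(e)))$ extended by zero we have $\phi \in \CD(\CS)_c$ with vanishing discrete contribution, so Theorem 3.6 gives $-(u_n)_e'' b(e) = (f_n)_e a(e)$ in the distributional sense. Monotone convergence $u_n \to u$ in $L^1_\loc$ (which holds because $u \in L^1_\loc$ and $u_n \nearrow u$) together with the hypothesised weak convergence of $f_n$ permits passage to distributions, yielding $-u_e'' b(e) = f_e a(e)$ on each open edge. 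Hence each $u_e$ is locally absolutely continuous with derivative of bounded variation, and $u_e' \in L^\infty_\loc$ on the open edge.

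Next, to handle the vertices and the $\CC_j$-condition, I invoke the Krein-type pointwise formula of Theorem 3.18 in the intrinsic representation: with $\alpha > 0$ and $\widetilde{f}_n := f_n + \alpha u_n$, each $u_n$ satisfies
\[
\triangle u_n(x) + \triangle_\alpha u_n(x) + M_\alpha(x)\, u_n(x) \;=\; \partial_n G_\alpha \widetilde{f}_n(x)
\]
at every $x \in \FV$. The right-hand side is uniformly controlled in $n$ by $\|\widetilde{f}_n\|_{L^1(\bigstar(x))}$ via Lemma 3.13, and the latter converges because $\widetilde{f}_n \to \widetilde{f} := f + \alpha u$ in $L^1_\loc$. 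Expanding $\triangle u_n(x) = u_n(x)\bigl(\sum_{y} j(x,y) + k(x)\bigr) - \sum_{y} j(x,y)\, u_n(y)$ and using Monotone Convergence on the nonnegative series $\sum_{y} j(x,y)\, u_n(y) \nearrow \sum_{y} j(x,y)\, u(y)$, the left-hand side cannot blow up, so $\sum_{y} j(x,y)\, u(y) < \infty$ and $u_n(x) \to u(x) < \infty$ for every $x \in \FV$. Thus $u \in \CC_j$ and the Kirchhoff identity passes to the limit. Continuity of $u$ at the vertices then follows from the representation $u_n = G_\alpha \widetilde{f}_n + H_\alpha(u_n|_\FV)$ (Theorem 3.11) after $n \to \infty$, where $G_\alpha \widetilde{f}_n \to G_\alpha \widetilde{f}$ and $H_\alpha(u_n|_\FV) \to H_\alpha(u|_\FV)$ locally uniformly thanks to the continuity of the explicit kernels in Propositions 3.9 and 3.10.

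Finally, to secure $u' \in L^2_\loc$ and complete $\widetilde{\Delta} u = f$, choose a plateau cut-off $\eta \in \CD(\CS)_c$ and test against $\phi := \eta^2 u_n \in \CD(\CS)_c$. The product/chain rule for the diffusion part, combined with the algebraic identity
\[
(u(x)-u(y))(u(x)\eta(x)^2 - u(y)\eta(y)^2) \;=\; ((u\eta)(x)-(u\eta)(y))^2 - u(x)u(y)(\eta(x)-\eta(y))^2,
\]
and Cauchy--Schwarz to absorb the mixed gradient term, yield a uniform Caccioppoli-type bound
\[
\int (u_n')^2 \eta^2 \, d\lambda_b \;\leq\; C\bigl((f_n, \eta^2 u_n) + \|u_n\|_{L^\infty(\supp \eta)}^2 \,\omega(\supp\eta)\bigr),
\]
where the right-hand side is controlled uniformly in $n$ by Step 2 (continuity and pointwise bounds on $u$). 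Fatou then gives $u' \in L^2_\loc(\lambda_b)$, so $u \in \CD(\CS)_\loc^*$. Passing to the limit in $\CS(u_n,\phi) = (f_n,\phi)$ for arbitrary $\phi \in \CD(\CS)_c$ is now routine dominated convergence, using that the discrete contribution of $\phi$ involves only the finite vertex set $\FV \cap \supp\phi$. The main obstacle is the second step, precluding $u_n(x) \to \infty$ at a vertex; once that dichotomy is resolved via the Krein identity and monotone convergence on the jump sum, everything else is a standard Fatou/Caccioppoli exercise.
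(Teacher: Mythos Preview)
Your route through the Krein formula and a Caccioppoli estimate is substantially heavier than the paper's, and it rests on a claim you never justify: that $f_n\to f$ in $L^1_{\loc}$. The hypothesis only gives $(f_n,\phi)\to(f,\phi)$ for $\phi\in\CD(\CS)_c$. You invoke strong $L^1_{\loc}$ convergence of $\widetilde f_n$ in Step~2 both for the limit of $\partial_nG_\alpha\widetilde f_n(x)$ and for the locally uniform convergence of $G_\alpha\widetilde f_n$. The first is repairable by writing $\partial_nG_\alpha\widetilde f_n(x)=(\widetilde f_n,H_\alpha\delta_x)$ with $H_\alpha\delta_x\in\CD(\CS)_c$ and using weak convergence directly; the second genuinely needs more than weak convergence. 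In Step~3 the Caccioppoli bound leaves $(f_n,\eta^2u_n)$ on the right, and ``controlled by Step~2'' does not cover a pairing of weakly convergent $f_n$ against the \emph{moving} test function $\eta^2u_n$. That step is in any case unnecessary: once Step~1 gives $u_e''=-f_ea(e)/b(e)\in L^1(0,l(e))$ you obtain $u_e'\in W^{1,1}(0,l(e))\subset L^\infty\subset L^2$ on each closed edge for free.

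The paper bypasses all of this. For $\phi\in\CD(\CS)_c$ with $\phi''$ bounded one integrates by parts edgewise so that only $u_n$ itself appears under the integrals:
\[
\CS(u_n,\phi)=\sum_e\omega(e)\Bigl(-\int_0^{l_i(e)}(u_n)_e\,\phi_e''\,dt+(u_n\phi_e')\big|_0^{l_i(e)}\Bigr)+\sum_{x\in\FV}\phi(x)\Bigl(k(x)u_n(x)+\sum_{y}j(x,y)\bigl(u_n(x)-u_n(y)\bigr)\Bigr).
\]
Only finitely many summands survive. The edge integrals converge by dominated convergence with majorant $|u|\,\|\phi''\|_\infty\in L^1$; the boundary and killing terms converge pointwise. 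Since the total has the finite limit $(f,\phi)$ and every other piece has a finite limit, so must $\sum_y j(x,y)u_n(y)$ for each $x\in\supp\phi$; monotone convergence then forces $\sum_y j(x,y)u(y)<\infty$, giving $u\in\CC_j(\FX_\Gamma)$, and the identity $\CS(u,\phi)=(f,\phi)$ drops out. No Krein formula, no Caccioppoli.
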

\begin{proof}
Applying the definition of the weak Laplacian we need to show for the sequence $u_n$
\[\CS(u,\phi) = (f,\phi) = \lim_{n\to \infty} \CS(u_n,\phi).\]
As $\phi$ is compactly supported, we have
\begin{eqnarray*}
\lefteqn{\CS(u_n,\phi)}\\
 &=& \sum_{e\in E}\omega(e)\Bigl( \int\limits_0^{l_i(e)} -(u_n)_e(t) \phi''(t) \:dt + u_n(\partial^-(e))\phi'(\partial^-(e)) -  u_n(\partial^+(e))\phi'(\partial^+(e))\Bigr) \\
&& + \sum_{x\in \FV} \phi(x)\Bigl(u_n(x)k(x) + \sum_{y\in \FV} j(x,y)(u_n(x)-u_n(y))\Bigr).
\end{eqnarray*}
As
\[\Bigl|\int\limits_0^{l_i(e)} (u_n)_e(t) \phi''(t) \:dt\Bigr| \leq \|\phi''\|_\infty \int\limits_0^{l_i(e)} |u(t)| dt\]
we get by monotone convergence that
\[\int\limits_0^{l_i(e)} (u_n)_e(t) \phi''(t) \:dt \to \int\limits_0^{l_i(e)} u_e(t) \phi''(t) \:dt.\]
Thus we also get that for all $e\in E$ we have $u_e\in W^{1,2}(0,l_i(e))$. In the remaining summands the only part where convergence is not obvious is
\[\sum_{y\in \FV}j(x,y)u_n(y) \to \sum_{y\in \FV}j(x,y)u(y).\]
That this converges follows since the whole term and all other summands converge. Monotone convergence again shows the convergence. Thus
\[\lim_{n\to \infty} \CS(u_n,\phi) = \CS(u,\phi)\]
and the claim follows.
\end{proof}
These three principles will be used in the next chapter to obtain certain information on the generators of the resolvents of the Dirichlet form $\CS$ acting on different $L^p$ spaces.
\section{The $L^p$ generators}
In this section our aim is to derive a characterization of the $L^p$-generators associated with $\CS$. We start with the $L^2$-generators of the form $\CS$ and the regular form $\CS_o$. The representation theorem due to Friedrichs, see for instance \cite{K-95}, says that there exists an operator $\Delta_\CS$ such that
\[\CS(u,v)= (u,\Delta_\CS v)\]
for all $u\in \CD(\CS)$ and $v\in \CD(\Delta_\CS)$. Similarly there exists an operator $\Delta_o$ such that
\[\CS_o(u,v)= (u,\Delta_o v)\]
for all $u\in \CD(\CS)_c$ and $v\in \CD(\Delta_o)$.
We start with the characterization of $\Delta_\CS$.
\begin{theorem}\index{graph Dirichlet form!$L^2$-generator}
Let $\FX_\Gamma$ be a metric graph and $\CS$ a graph Dirichlet form. Then the $L^2$ generator of $\CS$ is given by the restriction of the weak Laplacian to the set of all $u\in \bigoplus\limits_{e\in E} W^{2,2}((0,l_i(e)), \omega(e))$ satisfying the flux conditions
\[\frac{\partial u}{\partial \phi}  + \CQ(u,\phi)=0\]
for all $\phi \in \CH_1$, where
\[\frac{\partial u}{\partial \phi} = \sum_{e\in E} \bigl(u_e'(l_i(e))\phi_e(l_i(e)) - u_e'(0)\phi_e(0)\bigr).\]
\end{theorem}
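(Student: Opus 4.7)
The plan is to exploit the representation theorem: $v\in\dom\Delta_\CS$ with $\Delta_\CS v=f$ precisely when $v\in\CD(\CS)$ and $\CS(\phi,v)=(\phi,f)_{L^2}$ for every $\phi\in\CD(\CS)$. I would test this identity against the two pieces of the decomposition
\[\CD(\CS)=\Bigl(\bigoplus_{e\in E}W^{1,2}_o((0,l_i(e)),\omega)\Bigr)\oplus\CH_1^\FV\]
of Theorem 3.17. The first piece will deliver edgewise regularity and the pointwise edgewise equation, while the second will produce the flux condition.

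First I would plug in $\phi\in C_c^\infty(0,l_i(e))$, extended by zero to $\FX_\Gamma$. Since $\phi$ vanishes on $\FV$ the discrete term $\CQ(\phi,v)$ drops out, so $\CS(\phi,v)=(\phi,f)_{L^2}$ reduces to
\[\omega(e)\int_0^{l_i(e)}\phi'\,v_e'\,dt=\omega(e)\int_0^{l_i(e)}\phi\,f_e\,dt.\]
This is the distributional identity $-v_e''=f_e$ on $(0,l_i(e))$; together with $f_e\in L^2((0,l_i(e)),\omega(e))$ it forces $v_e\in W^{2,2}((0,l_i(e)),\omega(e))$. Hence $v\in\dom\widetilde{\Delta}$ with $\widetilde{\Delta}v=f$ by Theorem 3.6, and the edgewise regularity requirement is satisfied.

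Next I would test against $\phi\in\CH_1^\FV$. The $W^{2,2}$-regularity just obtained legitimates edgewise integration by parts, giving
\[\CE(\phi,v)=\sum_{e\in E}\omega(e)\bigl(\phi_e(l_i(e))\,v_e'(l_i(e))-\phi_e(0)\,v_e'(0)\bigr)-\sum_{e\in E}\omega(e)\int_0^{l_i(e)}\phi_e\,v_e''\,dt.\]
Substituting $-v_e''=f_e$ and adding $\CQ(\phi,v)$ yields $\CS(\phi,v)-(\phi,f)_{L^2}=\tfrac{\partial v}{\partial\phi}+\CQ(v,\phi)$, so the equality $\CS(\phi,v)=(\phi,f)_{L^2}$ is equivalent to the stated flux condition. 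For the converse, the same identities, read backwards and combined with the splitting $\phi=\phi_o+\phi_h$, reconstruct $\CS(\phi,v)=(\phi,f)_{L^2}$ for all $\phi\in\CD(\CS)$.

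The principal obstacle I anticipate is the justification of summing the boundary terms over a possibly infinite edge set, because elements of $\CH_1^\FV$ need not have compact support. To control $\sum_{e}\omega(e)\phi_e(\cdot)\,v_e'(\cdot)$ I would use that both $v|_\FV$ and $\phi|_\FV$ lie in $\CD(\CQ_1+\CQ)$ by Theorem 3.18, and that on each edge the explicit harmonic representation lets one bound $v_e'$ at the endpoints in terms of $v|_\FV$, $f_e$, and the Green kernel $G^1_e$ from Section 3.2. Cauchy–Schwarz against $\phi|_\FV$ then produces an absolutely convergent majorant. The same bounds, together with membership of $v$ and $\phi$ in $\CD(\CQ)$, ensure that $\CQ(v,\phi)$ itself converges absolutely, so the flux identity is well posed.
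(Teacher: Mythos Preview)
Your approach is essentially the paper's: split $\phi=\phi_o+\phi_h$ with $\phi_o$ vanishing on $\FV$ and $\phi_h$ edgewise $1$-harmonic, dispose of $\phi_o$ via the edgewise equation, and read off the flux identity for $\phi_h$ by integration by parts. The paper argues identically (its line ``As $\phi=\phi_h$ on $\FV$'' only makes sense under this decomposition), though it glosses over the infinite-edge summability issue you rightly flag; note also that both your argument and the paper's proof actually produce the flux condition for $\phi\in\CH_1^\FV$, so the ``$\CH_1$'' in the statement appears to be a notational slip rather than a discrepancy in your reasoning.
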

\begin{proof}
We use the decomposition $\phi = \phi_o + \phi_h$ for $\phi \in \CD(\CS)$. Then for all $u\in \CD(\CS)$ we have by definition of the weak Laplacian
\[\CS(u,\phi) - (\widetilde{\Delta} u,\phi)=\CS(u,\phi_o) - (\widetilde{\Delta} u,\phi_o) +\CS(u,\phi_h) - (\widetilde{\Delta} u,\phi_h) = \CS(u,\phi_h) - (\widetilde{\Delta} u,\phi_h)\]
whenever it makes sense. Furthermore by edgewise partial integration we have
\[\CS(u,\phi_h) = \sum_{e\in E} u_e'(l_i(e))\phi_e(l_i(e)) - u_e'(0)\phi_e(0) + (\widetilde{\Delta} u,\phi) + \CQ(u,\phi_h).\]
As $\phi = \phi_h$ on $\FV$ it holds that
\[\CS(u,\phi_h) - (\widetilde{\Delta} u,\phi_h) = \frac{\partial u}{\partial \phi}  + \CQ(u,\phi).\]
If $\widetilde{\Delta} u \in L^2$ and the flux conditions are satisfied, then the above is well-defined and we obtain
\[\CS(u,\phi) - (\widetilde{\Delta} u,\phi) =0= \CS(u,\phi) - (\Delta_\CS u,\phi).\]
Conversely assume that $u\in \dom \Delta_\CS$ then by Friedrich's representation theorem we have
\[\CS(u,\phi) = (\Delta_\CS u,\phi)\]
for all $\phi \in \CD(\CS)$. In particular this holds for all $\phi \in \CD(\CS)_c$ and we have
\[\CS(u,\phi) = (\Delta_\CS u,\phi) = (\widetilde{\Delta} u, \phi),\]
i.e.
\[\widetilde{\Delta} u = \Delta_\CS u.\]
As $\Delta_\CS u \in L^2$ by definition the calculations above are again well-defined and we obtain
\[0= \CS(u,\phi) - (\widetilde{\Delta} u, \phi) = \frac{\partial u}{\partial \phi}  + \CQ(u,\phi),\]
which gives that the flux conditions are satisfied.
\end{proof}
Using edgewise partial integration, we get a characterization of the generator of $\CS_o$.
\begin{theorem}
Let $\FX_\Gamma$ be a metric graph. Then the $L^2$ generator of $\CS_o$ is given by the restriction of the weak Laplacian to
\[\dom \Delta_{\CS_o} = \CD(\CS_o) \cap \bigoplus\limits_{e\in E} W^{2,2}((0,l_i(e)),\omega(e)).\]
\end{theorem}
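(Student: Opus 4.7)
The plan is to mirror the argument just used for $\Delta_\CS$, with the crucial simplification that for $\phi\in \CD(\CS_o)$ the decomposition $\phi=\phi_o+\phi_h$ forces $\phi_h=0$, since $\CD(\CS_o)$ is by construction the $\|\cdot\|_\CS$-closure of $\CD(\CS)\cap\CC_c(\FX_\Gamma)$ and is $\|\cdot\|_\CS$-orthogonal to the space $\CH_1$ appearing in the previous theorem. Consequently no additional flux-type condition on $u$ survives beyond the one already encoded in being a weak solution.

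For the forward inclusion, suppose $u\in \dom\Delta_{\CS_o}$. Friedrichs' representation theorem gives $\CS(u,\phi)=(\Delta_{\CS_o}u,\phi)$ for every $\phi\in \CD(\CS_o)$, and in particular for every $\phi\in \CD(\CS)\cap \CC_c(\FX_\Gamma)\subset \CD(\CS_o)$. Hence $u\in\dom\widetilde\Delta$ with $\widetilde\Delta u=\Delta_{\CS_o}u\in L^2$. Invoking Theorem 3.6, each edge component $u_e$ satisfies $-u_e''\,b(e)=(\Delta_{\CS_o}u)_e\,a(e)$ in the distributional sense with right-hand side in $L^2$, so $u_e\in W^{2,2}((0,l_i(e)),\omega(e))$, and moreover the $\triangle$-Kirchhoff conditions hold at every $x\in\FV$. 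Thus $u$ belongs to $\CD(\CS_o)\cap \bigoplus_{e\in E} W^{2,2}((0,l_i(e)),\omega(e))$ and $\Delta_{\CS_o}u=\widetilde\Delta u$.

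For the reverse inclusion, take $u\in \CD(\CS_o)\cap\bigoplus_{e\in E} W^{2,2}((0,l_i(e)),\omega(e))$ lying in $\dom\widetilde\Delta$. For $\phi\in \CD(\CS)\cap \CC_c(\FX_\Gamma)$, edgewise partial integration gives
\[
\CS(u,\phi) = \sum_{e\in E}\omega(e)\!\int_0^{l_i(e)}\!(-u_e'')\phi_e\,dt \;-\;\sum_{x\in\FV}\phi(x)\,\partial_n u(x)\;+\;\CQ(u,\phi),
\]
where the regrouping of the edge boundary terms at each vertex uses the continuity of $\phi$. The $\triangle$-Kirchhoff conditions yield $\partial_n u(x)=\triangle u(x)$, and the standard discrete identity $\sum_{x\in\FV}\phi(x)\triangle u(x)=\CQ(u,\phi)$ cancels the vertex contribution, leaving $\CS(u,\phi)=(\widetilde\Delta u,\phi)$. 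Since $\CD(\CS)\cap\CC_c(\FX_\Gamma)$ is $\|\cdot\|_\CS$-dense in $\CD(\CS_o)$ and both sides are continuous in $\phi$ with respect to $\|\cdot\|_\CS$ (with $\widetilde\Delta u\in L^2$ in hand), the identity extends by continuity to all $\phi\in\CD(\CS_o)$. This places $u$ in $\dom\Delta_{\CS_o}$ with $\Delta_{\CS_o}u=\widetilde\Delta u$.

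The proof has essentially no obstacle: whatever difficulty existed in the previous theorem came precisely from probing $u$ by elements of the harmonic complement $\CH_1$, which produced the flux condition $\partial u/\partial\phi+\CQ(u,\phi)=0$. Here those test functions are excluded by the very definition of $\CD(\CS_o)$, so the generator description collapses to the cleaner statement above. The only minor checks are the density of $\CD(\CS)\cap \CC_c(\FX_\Gamma)$ in $\CD(\CS_o)$ (which is tautological) and the absolute convergence of the discrete sum $\sum_{x}\phi(x)\triangle u(x)$, which holds because $\phi$ has compact support and hits only finitely many vertices.
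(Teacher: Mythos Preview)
Your proof is correct and follows exactly the approach the paper indicates: the paper itself gives no proof beyond the single sentence ``Using edgewise partial integration, we get a characterization of the generator of $\CS_o$'' preceding the statement, and your argument is precisely the intended fleshing-out of that hint. Your added clause ``lying in $\dom\widetilde\Delta$'' for the reverse inclusion is appropriate, since the phrase ``restriction of the weak Laplacian'' in the statement implicitly intersects the displayed set with $\dom\widetilde\Delta$ (this is where the $\triangle$-Kirchhoff conditions enter), and without it the displayed domain description would be too large.
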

We now turn our attention to the resolvents on $L^p$. As we have mentioned in the previous section, the $L^2$-resolvent of $\CS$ can be characterized as the unique element $(\Delta_\CS + \alpha \mathrm{id})^{-1}f$ in $\CD(\CS)$ such that
\[\CS((\Delta_\CS + \alpha \mathrm{id})^{-1}f,v) + \alpha((\Delta_\CS + \alpha \mathrm{id})^{-1}f,v)= (f,v)\]
holds for all $v\in \CD(\CS)$. As $\CS$ is a Dirichlet form, the resolvent is a contraction on $L^2\cap L^p$, and thus the restriction of $(\Delta_\CS + \alpha \mathrm{id})^{-1}$ to $L^2\cap L^p$ could be extended to a contraction on $L^p$. Unfortunately, in the case that $\CS$ is not regular, the determination of the $L^p$ generators is not tractable. Thus we just treat the case that $\CS=\CS_o$. In the following we denote the associated $L^p$ generator by $\Delta_p$ for short. We first show that $\Delta_p$ is a restriction of the weak Laplacian $\widetilde{\Delta}$.
\begin{prop}
Let $p\in [1,\infty]$, then for any $g\in L^p(\FX_\Gamma,\omega)$ the function $u:=(\Delta_p + \alpha)^{-1} g$ belongs to $\dom \widetilde{\Delta}$ and the equation
\[\widetilde{\Delta} u + \alpha u = g\]
holds true for $\alpha >0$.
\end{prop}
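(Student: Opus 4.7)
The strategy is to reduce to the already-treated $L^2$ case by a monotone approximation argument, leveraging Lemma 3.43 (monotone convergence of weak solutions) together with the consistency of the Markovian resolvents $(\Delta_p+\alpha)^{-1}$ across the $L^p$ scale.

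\textbf{Base case $p=2$.} Here there is nothing to do: Theorem 3.44 identifies $\Delta_2=\Delta_\CS$ as a restriction of $\widetilde{\Delta}$, so for $g\in L^2$ the function $u=(\Delta_2+\alpha)^{-1}g$ lies in $\dom\Delta_\CS\subset\dom\widetilde{\Delta}$ and satisfies $\widetilde{\Delta}u+\alpha u = \Delta_\CS u+\alpha u=g$.

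\textbf{Reduction to the $L^2$ case on a dense subset.} For $g$ in the intersection $L^p\cap L^2$, the Markovian property of $\CS$ implies that $(\Delta_p+\alpha)^{-1}$ and $(\Delta_2+\alpha)^{-1}$ agree. Consequently the conclusion of the proposition holds on this dense subset simply by invoking the base case.

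\textbf{Monotone approximation for general $g\in L^p$.} By linearity it suffices to treat $g\geq 0$. Fix a relatively compact exhaustion $(K_n)_n$ of $\FX_\Gamma$ and set
\[
g_n := (g\wedge n)\mathbf{1}_{K_n}\in L^p\cap L^2,\qquad g_n \nearrow g\ \text{pointwise.}
\]
Let $u_n:=(\Delta_p+\alpha)^{-1}g_n=(\Delta_2+\alpha)^{-1}g_n$. Markovianity of the resolvent gives $u_n\geq 0$ and $u_n\leq u_{n+1}$, and by the previous step $u_n\in\dom\widetilde{\Delta}$ with $\widetilde{\Delta}u_n+\alpha u_n=g_n$. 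Strong continuity of the $L^p$-resolvent (for $p<\infty$; for $p=\infty$ one uses the $\sigma(L^\infty,L^1)$-continuous adjoint semigroup) together with the monotonicity of $u_n$ forces pointwise convergence $u_n\nearrow u:=(\Delta_p+\alpha)^{-1}g$ a.e.\ on $\FX_\Gamma$; in particular $u\in L^1_\loc$ and $u_n\to u$ in $L^1_\loc$.

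\textbf{Passing to the limit via Lemma 3.43.} For any $\phi\in\CD(\CS)_c$ we have
\[
\CS(u_n,\phi)=\langle g_n-\alpha u_n,\phi\rangle.
\]
Since $\phi$ has compact support, $g_n\to g$ and $u_n\to u$ in $L^1(\supp\phi)$ by monotone convergence, so the right-hand side converges to $\langle g-\alpha u,\phi\rangle$. Therefore $\widetilde{\Delta}u_n\to g-\alpha u$ in the distributional sense required by Lemma 3.43, and that lemma yields $u\in\CD(\CS)^*_\loc\cap\dom\widetilde{\Delta}$ together with $\widetilde{\Delta}u=g-\alpha u$, i.e.\ the desired identity.

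\textbf{Anticipated obstacle.} The delicate point is the case $p=\infty$ (and secondarily $p=1$), where the approximation argument above has to be combined with the weak-$\ast$ (resp.\ weak) continuity of the extended resolvent rather than norm continuity. However, since the only convergence we actually need at the very end is convergence of the pairings $\langle g_n-\alpha u_n,\phi\rangle$ for compactly supported test functions $\phi$, the monotonicity of $u_n$ and the a.e.\ pointwise identification of the limit with $u=(\Delta_p+\alpha)^{-1}g$ should be enough to push the argument through uniformly in $p\in[1,\infty]$.
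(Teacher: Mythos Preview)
Your proof is correct and follows essentially the same route as the paper: reduce to the $L^2$ case by approximating $g$ through nonnegative functions $g_n\in L^p\cap L^2$ supported on an exhaustion, use positivity of the resolvent to get monotonicity of $u_n$, and pass to the limit. The only differences are cosmetic: you invoke the monotone-convergence-of-solutions lemma explicitly to pass to the limit, while the paper argues the same passage directly via edgewise $W^{2,1}$ convergence of the $u_n$; and your extra truncation $g\wedge n$ cleanly guarantees $g_n\in L^2$ for all $p$, whereas the paper's plain restriction $g|_{K_n}$ tacitly needs $g\in L^2_\loc$, which is not automatic for $p<2$.
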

\begin{proof}
From the previous theorem we know that in the case $p=2$ the operator $\Delta_2=\Delta_\CS$ is a restriction of $\tilde{\Delta}$.\\
Assume now the general case of $1\leq p < \infty$ and let $\FX_\Gamma = \bigcup_n K_n $ be an exhaustion of $\FX_\Gamma$. Let $g\in L^p$, then $u_n := (\Delta_p + \alpha)^{-1} (g|_{K_n})$ converges to $u$ in $L^p$-sense. Moreover, $u_n\in L^2$, since the restrictions are in $L^2$ and also $u_n = (\Delta_2 + \alpha)^{-1} (g|_{K_n})$. Since the resolvents are positivity preserving, $(u_n)$ is monotonously increasing. Using the result already obtained for $p=2$ we get
\[(\tilde{\Delta} + \alpha)u_n =(\tilde{\Delta} + \alpha) (\Delta_2 + \alpha)^{-1} (g|_{K_n}) = (g|_{K_n}).\]
By construction we have for all $e\in E$ that $(u_n)_e \to u_e$ in $W^{2,1}(0,l(e))$. Thus the derivative of each component of $u$ converges uniformly on the edges, and $u$ converges uniformly on compacta. Thus $u\in \dom \widetilde{\Delta}$ and $ -u_e'' + \alpha u_e = g_e$ in the sense of distributions. Hence we have shown $\Delta_p u = \tilde{\Delta} u$ for all $u\in \dom \Delta_p$.\\
\end{proof}
The previous proposition gives the following information on the generators.
\begin{theorem}\index{graph Dirichlet form!$L^p$ generator}
Let $p\in [1,\infty]$, then $\Delta_p$ is a restriction of the weak Laplacian, i.e. we have
\[\Delta_p u = \widetilde{\Delta} u\]
for all $u\in \dom (\Delta_p)$.
\end{theorem}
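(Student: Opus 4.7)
The plan is to essentially read off the theorem as a reformulation of the preceding proposition. Given $u \in \dom \Delta_p$ and $\alpha > 0$, set $g := (\Delta_p + \alpha) u \in L^p(\FX_\Gamma,\omega)$, so $u = (\Delta_p + \alpha)^{-1} g$. By the preceding proposition, $u \in \dom \widetilde{\Delta}$ and $(\widetilde{\Delta} + \alpha) u = g = (\Delta_p + \alpha) u$; cancelling $\alpha u$ yields $\widetilde{\Delta} u = \Delta_p u$. Thus there is nothing to prove beyond invoking the proposition, and the content is just the observation that the pointwise identity passes from the resolvent to the generator.

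The only gap to close is that the proof of the previous proposition, as written, only covers $1 \leq p < \infty$, because the exhaustion argument produces an $L^p$-monotone sequence $u_n = (\Delta_2 + \alpha)^{-1}(g|_{K_n}) \to u$ in $L^p$, which is not the right convergence when $p = \infty$. I would therefore treat $p = \infty$ by a consistency-plus-approximation argument. First, the resolvents $(\Delta_p + \alpha)^{-1}$ for different $p$ agree on the overlap $L^p \cap L^q$, since the $L^2$ resolvent is Markovian and extends consistently by continuity to each $L^p$. Second, splitting $g = g_+ - g_-$ and applying the construction separately to each nonnegative part, I can reduce to the case $g \geq 0$.

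For nonnegative $g \in L^\infty$, exhaust $\FX_\Gamma$ by relatively compact open sets $(K_n)$ and set $g_n := g\,\mathbf{1}_{K_n} \in L^1 \cap L^\infty$. By consistency, $u_n := (\Delta_p + \alpha)^{-1} g_n$ is the same function for every finite $p$, and the finite-$p$ case of the preceding proposition already gives $u_n \in \dom \widetilde{\Delta}$ with $(\widetilde{\Delta} + \alpha) u_n = g_n$. The sequence $(u_n)$ is pointwise nondecreasing by the domain monotonicity principle (Lemma 3.41), uniformly bounded by $\|g\|_\infty / \alpha$, and converges pointwise to $u := (\Delta_\infty + \alpha)^{-1} g$; the latter identification can be made by testing against $L^1$ functions of compact support and using the $L^1$-$L^\infty$ duality of $\Delta_1$ and $\Delta_\infty$ (symmetry of the semigroup). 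Then the monotone convergence lemma for weak solutions (Lemma 3.43) yields $u \in \dom \widetilde{\Delta}$ and $(\widetilde{\Delta} + \alpha) u = g$, completing the $p = \infty$ case.

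The main obstacle is the identification of the pointwise limit $\lim u_n$ with $(\Delta_\infty + \alpha)^{-1} g$, since the $L^\infty$ topology does not interact with the exhaustion in the natural way. The remedy is the local Sobolev/embedding machinery from Chapter 2, which upgrades the $L^2$ convergence on compact subsets (supplied by the approximation result for part Dirichlet forms) to uniform convergence on compacta, and then the consistency of the resolvents across $p$ pins down the limit unambiguously. Once this is in hand, the theorem itself reduces to the single-line argument of the first paragraph.
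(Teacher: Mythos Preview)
Your first paragraph is exactly the paper's proof: set $g := (\Delta_p + \alpha)u$, invoke the preceding proposition to get $(\widetilde{\Delta} + \alpha)u = g$, and cancel. The paper says nothing more than this.

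Your observation that the proof of the preceding proposition, as written in the paper, only treats $1 \leq p < \infty$ is correct; the paper's statement of that proposition includes $p = \infty$ but its proof explicitly restricts to finite $p$ and never returns to the endpoint case. So you have in fact noticed a gap in the paper and gone beyond it by supplying the $p = \infty$ argument via consistency of resolvents on $L^p \cap L^q$, reduction to nonnegative $g$, monotone exhaustion $g_n = g\,\mathbf{1}_{K_n}$, and the monotone-convergence lemma for weak solutions. This is a reasonable way to close the gap; the identification of the pointwise limit with $(\Delta_\infty + \alpha)^{-1}g$ that you flag as the main obstacle can indeed be handled by duality with $\Delta_1$, since for $\phi \in L^1$ one has $(u_n,\phi) = (g_n, (\Delta_1 + \alpha)^{-1}\phi) \to (g,(\Delta_1 + \alpha)^{-1}\phi) = ((\Delta_\infty + \alpha)^{-1}g,\phi)$ by dominated convergence.
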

\begin{proof}
For $u\in \dom (\Delta_p)$ the function $g:= (\Delta_p + \alpha) u$ is in $L^p(\FX_\Gamma,\omega)$ and the previous proposition shows that
\[(\widetilde{\Delta} + \alpha) u = g = (\Delta_p + \alpha ) u\]
implying the claim.
\end{proof}
Under an additional assumption we get the following characterization of the $L^p$ generators.
\begin{theorem}
Assume that for all rays $\Fp=(p_1,p_,\dots)$ we have
\[\sum_{k=1}^\infty l_i(p_k,p_{k+1}) \omega(p_k,p_{k+1}) = \infty.\]
Then for $1\leq p < \infty$ the domain of $\Delta_p$, the $L^p$ generator of $\CS$, is given as
\[\dom (\Delta_p) = \{ u\in \dom (\widetilde{\Delta}) \mid u\in L^p(\FX_\Gamma,\omega), \widetilde{\Delta} u \in L^p(\FX_\Gamma,\omega)\}.\]
\end{theorem}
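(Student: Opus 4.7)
The plan is to show both inclusions separately, with the nontrivial direction reduced to the $L^p$-Liouville-type result of Theorem 3.22.

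For the easy inclusion $\dom(\Delta_p) \subset \{u \in \dom(\widetilde{\Delta}) \mid u, \widetilde{\Delta} u \in L^p\}$, let $u \in \dom(\Delta_p)$. By definition of the $L^p$ generator we have $u \in L^p(\FX_\Gamma,\omega)$ and $\Delta_p u \in L^p(\FX_\Gamma,\omega)$. Theorem~3.45 yields $u \in \dom(\widetilde{\Delta})$ and $\widetilde{\Delta} u = \Delta_p u$, so $\widetilde{\Delta} u \in L^p$ as required.

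The reverse inclusion is where the ray assumption enters. Take $u \in \dom(\widetilde{\Delta})$ with $u, \widetilde{\Delta} u \in L^p(\FX_\Gamma,\omega)$ and fix any $\alpha > 0$. Set $f := \widetilde{\Delta} u + \alpha u$; by hypothesis $f \in L^p(\FX_\Gamma, \omega)$. Let $v := (\Delta_p + \alpha)^{-1} f \in \dom(\Delta_p)$. By Proposition~3.43, $v \in \dom(\widetilde{\Delta})$ and $\widetilde{\Delta} v + \alpha v = f$. Hence the difference $w := u - v$ belongs to $\dom(\widetilde{\Delta})$, satisfies $w \in L^p(\FX_\Gamma,\omega)$ (as both $u$ and $v$ do, using $v \in L^p$ via the contractivity argument already used in the proof of Proposition~3.43), and is $\alpha$-harmonic in the sense that
\[
\widetilde{\Delta} w + \alpha w = 0.
\]

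Now the ray assumption
\[
\sum_{k=1}^\infty l_i(p_k,p_{k+1})\,\omega(p_k,p_{k+1}) = \infty
\]
is precisely the hypothesis of the $L^p$-maximum principle Theorem~3.22. Applied to the $\alpha$-harmonic function $w \in L^p(\FX_\Gamma,\omega)$ it yields $w \equiv 0$, hence $u = v \in \dom(\Delta_p)$.

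The main point, and the only place where the ray hypothesis is genuinely used, is the invocation of Theorem~3.22; everything else is bookkeeping about weak versus strong solutions. One minor care point is ensuring the difference $w$ really lies in $L^p$ without circularity; this is clean for $1 \leq p < \infty$ because the resolvent $(\Delta_p + \alpha)^{-1}$ is bounded on $L^p$ so $v \in L^p$ by construction, and $u \in L^p$ by assumption.
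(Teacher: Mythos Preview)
Your proof is correct and follows essentially the same route as the paper: one inclusion from the preceding theorem that $\Delta_p$ restricts $\widetilde{\Delta}$, and for the other you form $f=\widetilde{\Delta}u+\alpha u\in L^p$, apply the $L^p$ resolvent, and kill the $\alpha$-harmonic $L^p$ difference via the ray hypothesis. The paper phrases the last step as an appeal to the ``minimum principle'' rather than explicitly citing Theorem~3.22, but it is the same argument.
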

\begin{proof}
We already know from the previous theorem that the domain of the $L^p$ generator is a subset of the given set. It remains to show the converse inclusion. For such a $u$ we have that $g:=\widetilde{\Delta}u + \alpha u$ belongs to $L^p$ and therefore the function $\tilde{u} := (\Delta_p + \alpha)^{-1} g$ belongs to $\dom \Delta_p$. Furthermore $\tilde{u}$ solves $\widetilde{\Delta} \tilde{u} + \alpha \tilde{u} = g$, and therefore $u-\tilde{u}$ is an $\alpha$-harmonic $L^p$-function. By the assumption on the paths, the minimum principle immediately gives $u-\tilde{u} =0$. This finishes the proof.
\end{proof}
The condition on the rays is the same as all rays having infinite volume in general representation. The theorem has an interesting corollary.
\begin{coro}
Under the assumptions of the previous theorem, for all $u \in L^2(\FX_\Gamma,\omega)$ with $\widetilde{\Delta} u \in L^2(\FX_\Gamma,\omega)$ we have that
\[\CS(u) < \infty.\]
\end{coro}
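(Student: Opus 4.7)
The plan is to chain together the previous theorem with $p=2$ and the standard fact that the domain of the $L^2$-generator of a closed form is contained in the form domain, on which the form equals the $L^2$-pairing against the generator.

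First, I would observe that the hypotheses on $u$, namely $u \in L^2(\FX_\Gamma,\omega)$ and $\widetilde{\Delta}u \in L^2(\FX_\Gamma,\omega)$, are exactly those characterizing membership in $\dom(\Delta_2)$ under the ray-volume assumption, by the previous theorem. Thus $u \in \dom(\Delta_2)$.

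Next, since $\Delta_2$ is the $L^2$-generator of $\CS$ (which here means $\CS_o$, the regular form treated in this section), the representation theorem of Friedrichs yields the inclusion $\dom(\Delta_2) \subset \CD(\CS)$ together with the identity
\[
\CS(u,v) = (\Delta_2 u, v) \qquad \text{for all } u \in \dom(\Delta_2),\ v \in \CD(\CS).
\]
Specializing to $v = u$ and using the previous theorem to rewrite $\Delta_2 u = \widetilde{\Delta} u$, I obtain $\CS(u) = (\widetilde{\Delta}u, u)$.

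Finally, applying the Cauchy--Schwarz inequality in $L^2(\FX_\Gamma,\omega)$ gives
\[
\CS(u) \leq \|\widetilde{\Delta}u\|_2 \cdot \|u\|_2 < \infty,
\]
which finishes the argument. There is no real obstacle here; the work has already been done in the previous theorem (which required the ray-volume hypothesis precisely to exclude nontrivial $\alpha$-harmonic $L^2$-functions via the maximum principle, Theorem~3.22), so the corollary follows from combining that characterization of $\dom(\Delta_2)$ with the general relation between a closed form and its generator.
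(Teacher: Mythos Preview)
Your argument is correct and is exactly the intended one: the paper states this as an immediate corollary without proof, and the content is precisely the chain you describe---by the previous theorem $u\in\dom(\Delta_2)$, and since $\Delta_2$ is the generator of the closed form $\CS_o=\CS$ one has $\dom(\Delta_2)\subset\CD(\CS)$, hence $\CS(u)<\infty$. Your final Cauchy--Schwarz step is not needed (membership in $\CD(\CS)$ already yields $\CS(u)<\infty$ by definition of the form domain), but it does no harm and gives the pleasant quantitative bound $\CS(u)\le\|\widetilde{\Delta}u\|_2\|u\|_2$.
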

Using the above results, we now turn to the connection between the continuous Laplacians and the discrete Kirchhoff Laplacians. The following result is essentially an $L^p$ version of the so-called Krein resolvent formula. Before we state it we remark the following. The resolvent of the Dirichlet form $\CS$ restricted to the set $\bigoplus_{e\in E} W^{1,2}_o((0,l_i(e)),\omega)$ is nothing but the part Green operator $G_\alpha$. The resolvent formula gives a description of the difference of the original resolvent associated with $\CS$ and the part Green operator. Before we state and prove it we need to note the following. In section 3 we have introduced $M^{-1}\partial_n G_\alpha$ as adjoint of the operator $H_\alpha:\ell^2(\FV,M) \to L^2(\FX_\Gamma,\omega)$. Below we need the adjoint of $H_\alpha:\ell^2(\FV,M_\alpha) \to L^2(\FX_\Gamma,\omega)$ which is given as $M_\alpha^{-1}\partial_n G_\alpha$. We easily obtain that
\[M^{-1}\partial_n G_\alpha = \tfrac{M_\alpha}{M} M_\alpha^{-1}\partial_n G_\alpha.\]
\begin{theorem}\index{graph Dirichlet form!resolvent formula}
Let $\FX_\Gamma$ be a metric graph and $\CS$ a graph Dirichlet form. Then for all $f\in L^2$ we have
\[(\Delta_\CS + \alpha)^{-1} f = G_\alpha f + H_\alpha (\triangle + \triangle_\alpha + 1)^{-1} M_\alpha^{-1}\partial_n G_\alpha f.\]
\end{theorem}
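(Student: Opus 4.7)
Set $u := (\Delta_\CS + \alpha)^{-1} f$, so $u\in \dom\Delta_\CS \subset \CD(\CS)$ is the unique element satisfying $\CS(u,v) + \alpha(u,v) = (f,v)$ for all $v\in \CD(\CS)$. In particular $u$ is a weak $\alpha$-solution of $\widetilde{\Delta} u + \alpha u = f$ in the sense of Definition 3.7, so Theorem 3.11 applies and gives the representation
\[
u = G_\alpha f + H_\alpha(u|_\FV).
\]
The plan is therefore to identify $u|_\FV$ as the discrete resolvent applied to $M_\alpha^{-1}\partial_n G_\alpha f$.

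Next I would invoke Theorem 3.18 to obtain that $U := u|_\FV$ satisfies the pointwise identity
\[
\triangle U(x) + \triangle_\alpha U(x) + M_\alpha(x) U(x) = \partial_n G_\alpha f(x), \qquad x\in \FV.
\]
Dividing by $M_\alpha(x)$ rewrites this as $\bigl(M_\alpha^{-1}(\triangle + \triangle_\alpha) + 1\bigr) U = M_\alpha^{-1} \partial_n G_\alpha f$, which is the equation that, after inversion on $\ell^2(\FV, M_\alpha)$, yields the formula in the statement once we apply $H_\alpha$ to both sides.

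The main obstacle is to promote this pointwise equation to an operator identity on $\ell^2(\FV, M_\alpha)$. For this I would proceed as follows. By Corollary 3.15 together with the comparison $M \asymp M_\alpha$ coming from Proposition 3.14, the map $M_\alpha^{-1}\partial_n G_\alpha: L^2(\FX_\Gamma,\omega) \to \ell^2(\FV, M_\alpha)$ is bounded, so the right-hand side lies in $\ell^2(\FV, M_\alpha)$. For the left-hand side, I would use the orthogonal decomposition $\CD(\CS) = \bigoplus_{e} W^{1,2}_o((0,l_i(e)),\omega) \oplus \CH_\alpha^\FV$ of Theorem 3.19 to split $u = u_o + u_h$, and then observe that $u|_\FV = u_h|_\FV$. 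By the isometric isomorphism of Theorem 3.20, $u_h|_\FV \in \CD(\CQ_\alpha + \CQ) \subset \ell^2(\FV, M_\alpha)$, so $U$ belongs to the form domain of $\CQ + \CQ_\alpha$; since this form is closed, the operator $M_\alpha^{-1}(\triangle + \triangle_\alpha) + 1$ is the self-adjoint generator shifted by the identity, has spectrum in $[1,\infty)$, and is therefore boundedly invertible on $\ell^2(\FV, M_\alpha)$.

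Finally I would combine the pieces: the discrete identity can now be inverted on $\ell^2(\FV, M_\alpha)$ to give
\[
U = \bigl(\triangle + \triangle_\alpha + 1\bigr)^{-1} M_\alpha^{-1}\partial_n G_\alpha f,
\]
with the shorthand $(\triangle + \triangle_\alpha + 1)^{-1}$ denoting the resolvent at $-1$ of the generator of $\CQ + \CQ_\alpha$ on $\ell^2(\FV, M_\alpha)$. Substituting this into $u = G_\alpha f + H_\alpha(u|_\FV)$ yields the asserted Krein-type identity. A short uniqueness argument (either via the $L^2$-minimum principle already used in Theorem 3.46, or from the isomorphism of Theorem 3.20 which forces $u|_\FV$ to be determined by the discrete equation) ensures that the candidate so produced is indeed $(\Delta_\CS + \alpha)^{-1} f$.
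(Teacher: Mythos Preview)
Your route through the pointwise identity (the pre-Krein formula, your Theorem~3.18) is a genuine detour from the paper and leaves a gap you do not fully close. The issue is in the inversion step: you know that $U:=u|_\FV$ lies in the \emph{form} domain $\CD(\CQ+\CQ_\alpha)$ and satisfies the \emph{formal} equation $M_\alpha^{-1}(\triangle+\triangle_\alpha)U+U=g$ pointwise, with $g=M_\alpha^{-1}\partial_nG_\alpha f\in\ell^2(\FV,M_\alpha)$. But $(\triangle+\triangle_\alpha+1)^{-1}$ in the statement is the resolvent of the self-adjoint \emph{form generator} of $\CQ+\CQ_\alpha$ on its maximal domain, and you have not shown that $U$ lies in its operator domain, nor that the generator acts on $U$ by the formal expression. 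If you set $V:=(\triangle+\triangle_\alpha+1)^{-1}g$, then $W:=U-V$ is a $1$-harmonic function of $\CQ+\CQ_\alpha$ in the form domain; such $W$ are nonzero precisely when the discrete form is not regular --- equivalently, by the regularity-equivalence theorem, when $\CS$ itself is not regular. Since the present theorem carries no regularity assumption, the pointwise equation alone does not force $W=0$. Your proposed uniqueness patches do not help: the $L^p$-minimum-principle route imports the ray hypothesis from the later $L^p$ theorem, and invoking the isometry only to obtain form-domain membership does not single out $U$ among the solutions of the formal equation.

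The paper bypasses all of this by staying at the variational level throughout. It writes the resolvent identity $\CS(u,\phi)+\alpha(u,\phi)=(f,\phi)$, splits $\phi=\phi_o+\phi_h$ according to the orthogonal decomposition $\CD(\CS)=\bigl(\bigoplus_e W^{1,2}_o\bigr)\oplus\CH_\alpha^\FV$, subtracts the part-Green equation tested against $\phi_o$, and then uses the isometry $\CH_\alpha^\FV\cong\CD(\CQ+\CQ_\alpha)$ together with the adjoint relation $(f,H_\alpha\Phi)_{L^2}=(M_\alpha^{-1}\partial_nG_\alpha f,\Phi)_{\ell^2}$ to obtain
\[
(U,\Phi)_{\ell^2}+(\CQ+\CQ_\alpha)(U,\Phi)=(M_\alpha^{-1}\partial_nG_\alpha f,\Phi)_{\ell^2}\qquad\text{for all }\Phi\in\CD(\CQ+\CQ_\alpha).
\]
This weak identity against \emph{all} test functions in the discrete form domain is exactly what places $U$ in the operator domain and yields $U=(\triangle+\triangle_\alpha+1)^{-1}g$ with no side argument. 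If you want to repair your plan, note that the isometry you already invoke carries not just the form domain but the full variational equation across; using it in that stronger way \emph{is} the paper's proof, and makes the detour through the pointwise identity unnecessary.
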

\begin{proof}
For all $\alpha>0$ and $f\in L^2$ we have
\[\CS((\Delta_\CS + \alpha)^{-1} f,\phi) + \alpha((\Delta_\CS + \alpha)^{-1} f, \phi)_{L^2}= (f,\phi)_{L^2}\]
for all $\phi\in \CD(\CS)$ and
\[\CS(G_\alpha f, \phi_o) +\alpha (G_\alpha f,\phi_o)_{L^2} = (f,\phi_o)_{L^2}\]
for all $\phi_o\in \CD(\CS)$ with $\phi|_\FV \equiv 0$. Thus, the operator $(\Delta_\CS + \alpha)^{-1} - G_\alpha$ is the projection of the resolvent onto the space of functions which are $\alpha$-harmonic on each edge, in particular
\[((\Delta_\CS + \alpha)^{-1} - G_\alpha) f = H_\alpha ((((\Delta_\CS + \alpha)^{-1} - G_\alpha)f)|_\FV).\]
We set $R_H = ((\Delta_\CS + \alpha)^{-1} - G_\alpha)$. Subtracting the second equation from the first gives
\[\CS(R_H f,\phi_h) + \alpha (R_H f,\phi_h)_{L^2} = (f,H_\alpha\phi|_\FV)_{L^2},\]
where $\phi = \phi_o + \phi_h$ and $\phi_h = H_\alpha \phi |_\FV$. On the other hand using the Kirchhoff Dirichlet form, the same expression equals
\[\CS(R_H f,\phi_h) + \alpha (R_H f,\phi_h) = (R_H f, \phi_h)_{\ell^2}+ \CQ_\alpha(R_H f, \phi_h) + \CQ(R_H f, \phi_h).\]
Furthermore consider the adjoint of the operator $H_\alpha: \ell^2(\FV,M_\alpha) \to L^2(\FX_\Gamma,\omega)$ and we have
\[(f,H_\alpha\phi|_\FV)_{L^2} = (M_\alpha^{-1}\partial_n G_\alpha f, \phi_h)_{\ell^2}.\]
Combining the three equations above we have
\[(R_H f, \phi_h)_{\ell^2}+ \CQ_\alpha(R_H f, \phi_h) + \CQ(R_H f, \phi_h) = ( M_\alpha^{-1}\partial_n G_\alpha f, \phi_h)_{\ell^2}.\]
As this holds for arbitrary $\phi_h\in \CD(\CQ + \CQ_\alpha)$, we have by the characterization of resolvents the equality
\[ (R_H f)|_\FV = (\triangle + \triangle_\alpha + 1)^{-1} M_\alpha^{-1}\partial_n G_\alpha f.\]
Applying $H_\alpha$ to each side we obtain
\[H_\alpha ((R_H f)|_\FV) = H_\alpha (\triangle + \triangle_\alpha + 1)^{-1} M_\alpha^{-1}\partial_n G_\alpha f.\]
\end{proof}
In the case of general $p$ we obtain the resolvent formula only under an additional assumption.
\begin{theorem}
Assume that for all rays $\Fp=(p_1,p_,\dots)$ we have
\[\sum_{k=1}^\infty l_i(p_k,p_{k+1}) \omega(p_k,p_{k+1}) = \infty.\]
Then for $1\leq p < \infty$ we have for all $f\in L^p$
\[(\Delta_p + \alpha)^{-1} f = G_\alpha f + H_\alpha (\triangle_\alpha + \triangle +1 )^{-1} M_\alpha^{-1} \partial_n G_\alpha f,\]
where $(\triangle_\alpha + M_\alpha)^{-1}$ is the resolvent of the Kirchhoff Dirichlet form on $\ell^p(\FV,M)$.
\end{theorem}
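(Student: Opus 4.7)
The plan is to check that the right-hand side, call it $u$, actually lies in $\dom(\Delta_p)$ and satisfies $(\widetilde{\Delta}+\alpha)u=f$; then uniqueness of $L^p$ weak $\alpha$-solutions (guaranteed by the ray hypothesis) will force $u=(\Delta_p+\alpha)^{-1}f$. So first I would verify the mapping properties needed for the right-hand side to make sense as an element of $L^p$: by Proposition 3.9 the part Green operator $G_\alpha$ is a contraction on $L^p$; by Lemma 3.12 combined with the equivalence $c^{-1}M\le M_\alpha\le cM$ that follows from Proposition 3.14, the operator $M_\alpha^{-1}\partial_nG_\alpha$ maps $L^p(\FX_\Gamma,\omega)$ continuously into $\ell^p(\FV,M_\alpha)$; the Kirchhoff resolvent $(\triangle+\triangle_\alpha+1)^{-1}$ is a contraction on $\ell^p(\FV,M_\alpha)$ because $\CQ+\CQ_\alpha$ is a Dirichlet form; and finally by Proposition 3.10 the harmonic extension $H_\alpha$ maps $\ell^p(\FV,M_\alpha)$ continuously into $L^p(\FX_\Gamma,\omega)$. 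Composing gives $u\in L^p$.

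Next I would verify that $u$ is a weak $\alpha$-solution. Set $U:=(\triangle+\triangle_\alpha+1)^{-1}M_\alpha^{-1}\partial_nG_\alpha f\in\ell^p(\FV,M_\alpha)$. By definition of this resolvent, $U$ satisfies pointwise the discrete equation
\[\triangle U(x)+\triangle_\alpha U(x)+M_\alpha(x)U(x)=\partial_nG_\alpha f(x)\]
for every $x\in\FV$. This is precisely the hypothesis of the pre-Krein formula, Theorem 3.18, applied to the continuous datum $f\in L^1_{\loc}(\FX_\Gamma)$. The converse direction of that theorem therefore guarantees that $u=G_\alpha f+H_\alpha U$ lies in $\dom(\widetilde{\Delta})$ and solves $\widetilde{\Delta}u+\alpha u=f$ in the weak sense. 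In particular $\widetilde{\Delta}u=f-\alpha u\in L^p$, so by the $L^p$-characterization in Theorem 3.45 (which is where the ray assumption enters for the description of the domain) we have $u\in\dom(\Delta_p)$ with $\Delta_pu=\widetilde{\Delta}u$.

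Finally I would invoke uniqueness. If $v\in\dom(\Delta_p)$ satisfies $(\Delta_p+\alpha)v=f$ as well, then $u-v\in L^p$ is a weak $\alpha$-solution of the homogeneous equation, i.e.\ an $L^p$ $\alpha$-harmonic function. The ray assumption $\sum_k l_i(p_k,p_{k+1})\omega(p_k,p_{k+1})=\infty$ along every ray is exactly the hypothesis of the $L^p$-Liouville theorem, Theorem 3.19 (with the intrinsic representation so that $\omega$ coincides with the intrinsic measure weight), which forces $u-v\equiv0$. Hence $u=(\Delta_p+\alpha)^{-1}f$, and the identity in the theorem is proved.

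The main obstacle I expect is the bookkeeping between the different weights. The discrete resolvent in the statement is written as $(\triangle+\triangle_\alpha+1)^{-1}$, but the natural $L^p$-space is $\ell^p(\FV,M_\alpha)$ (so that the $L^2$ version from Theorem 3.47 generalizes cleanly and that the Kirchhoff form yields contractions on $\ell^p$); one needs the equivalence $M\sim M_\alpha$ for $\alpha>0$ to move freely between $\ell^p(\FV,M)$ and $\ell^p(\FV,M_\alpha)$, and one must track the factor $M_\alpha/M$ when identifying the adjoint of $H_\alpha$ used in the $L^2$ proof with the quantity $M_\alpha^{-1}\partial_nG_\alpha$ that appears in the formula. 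Once this is done, steps two and three are essentially automatic applications of earlier results; the real content of the theorem is that the ray hypothesis eliminates the ambiguity which could otherwise arise from nontrivial $L^p$ $\alpha$-harmonic functions.
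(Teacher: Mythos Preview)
Your argument is correct and rests on the same ingredients as the paper's proof (the mapping properties of $G_\alpha$, $H_\alpha$, $M_\alpha^{-1}\partial_nG_\alpha$, the pointwise bridge Theorem~3.18, and the ray hypothesis), but you run the logic in the opposite direction. The paper starts from $u=(\Delta_p+\alpha)^{-1}f$, uses Theorem~3.18 to obtain the discrete equation for $U=u|_\FV$, and then invokes the Keller--Lenz characterisation of the discrete $\ell^p$-generator domain (which needs $\sum_k M(p_k)=\infty$, implied by the ray hypothesis) to identify $U$ with the discrete resolvent. You instead \emph{define} $U$ via the discrete resolvent, use that the discrete $\ell^p$-generator is a restriction of the formal Laplacian to get the pointwise equation, build $u$ from Theorem~3.18, and then appeal to the continuous domain characterisation (Theorem~3.45) to conclude $u\in\dom\Delta_p$. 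Either route is fine; yours has the mild advantage that on the discrete side you only need the easy ``restriction'' direction of \cite{KL-10}, while the hard direction (the ray condition) is spent entirely on the continuous side.

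Two small remarks. First, your claim ``by definition of this resolvent, $U$ satisfies pointwise the discrete equation'' is not literally by definition: it needs the fact that the $\ell^p$-generator of $\CQ+\CQ_\alpha$ is a restriction of the formal discrete Laplacian (the discrete analogue of Theorem~3.44, proved in \cite{KL-10}); you should cite it. Second, your final uniqueness step via the $L^p$-Liouville theorem is redundant: once you have $u\in\dom\Delta_p$ with $(\Delta_p+\alpha)u=f$, injectivity of the resolvent already gives $u=(\Delta_p+\alpha)^{-1}f$. (The Liouville argument is of course what underlies Theorem~3.45, so no harm done.)
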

\begin{proof}
First we have with $u,f\in L^p$
\[(\Delta_p + \alpha)^{-1} f =u\]
if and only if
\[\widetilde{\Delta} u + \alpha u = f\]
by the characterization of the $L^p$ generator. Furthermore
\[\widetilde{\Delta} u + \alpha u = f\]
if and only if
\[(\triangle_\alpha+\triangle+M_\alpha) U = \partial_n G_\alpha f\]
with $U = u |_\FV$. We multiply this equation with $M_\alpha^{-1}$ and we obtain
\[M_\alpha(x)^{-1}(\triangle_\alpha+\triangle)U (x)+ U(x) = M_\alpha(x)^{-1}\partial_n G_\alpha f(x).\]
We know that the right hand side is in $\ell^p(\FV,M_\alpha)$ and the left hand side is the formal discrete Laplacian applied to $U$. We now apply a result from \cite{KL-10} to our situation. To do so, we set $m(x):= M_\alpha (x)$ and $\widetilde{L} U(x) := M_\alpha(x)^{-1}(\triangle_\alpha+\triangle)U (x)$. The result form \cite{KL-10} is the following characterization of operators of discrete Dirichlet forms:\\
Let $\CQ$ be the discrete Dirichlet form on $\ell^2(V,m)$ associated with $\widetilde{L}$ and assume that for all rays $(p_1,p_2,\dots)$ we have
\[\sum_{k=1}^\infty m(p_k)=\infty\]
then the generators $L_p$ of the resolvents on $\ell^p$ are given as
\[\dom \triangle_p = \{U\in \ell^p \mid \widetilde{L} U \in \ell^p\}\]
and
\[L_p U= \widetilde{L} U = (\frac{1}{m(x)} \sum_{y} b(x,y) (U(x)-U(y)) + \frac{c(x)}{m(x)} U(x))_{x\in V}.\]
To apply this result we need to check that $\sum_{k=1}^\infty M(p_k) = \infty$ for all rays $(p_1,p_2,\dots)$. But this is implied by the assumption on the paths. Thus we have in our case
\[U= (\triangle_\alpha + \triangle + 1)^{-1} M_\alpha^{-1} \partial_n G_\alpha f.\]
As by definition $U\in \ell^p(\FV,M_\alpha)$, we obtain that $H_\alpha U \in L^p(\FX_\Gamma,\omega)$. Applying again the characterization on the $L^p$ generators of $\CS$, we obtain the claim.
\end{proof}
We finish this chapter on the Laplacians with a look at essential self-adjointness. We first consider the case of vanishing $\CQ$ and complete $\FX_\Gamma$.
\begin{theorem}\index{graph diffusion Dirichlet form!essential selfadjointness}
Let $\FX_\Gamma$ be intrinsically complete. Then $\Delta_\CE$ is essentially self-adjoint on $(\dom \Delta_\CE)_c$.
\end{theorem}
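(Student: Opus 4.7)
The plan is to show that the restriction $A$ of $\Delta_\CE$ to $(\dom \Delta_\CE)_c$ has trivial deficiency: every $u \in L^2(\FX_\Gamma,\omega)$ with $\langle u,(\Delta_\CE+1)\phi\rangle = 0$ for all $\phi \in (\dom \Delta_\CE)_c$ vanishes. Since $\Delta_\CE$ is self-adjoint and non-negative, density of $(\Delta_\CE+1)\bigl((\dom \Delta_\CE)_c\bigr)$ in $L^2$ is equivalent to essential self-adjointness of $A$.

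First, I would upgrade such a $u$ to a weak $1$-harmonic function in the sense of Definition 3.16. Working in the intrinsic representation and testing with $\phi \in (\dom \Delta_\CE)_c$ supported in the interior of a single edge (where the Kirchhoff condition is vacuous) yields $-u_e''b(e)+u_e a(e)=0$ distributionally, so each $u_e$ is smooth. Integrating by parts twice on each edge turns the identity $\langle u, (\Delta_\CE+1)\phi \rangle = 0$ into a vertex identity of the form
\[
\sum_{x\in \FV}\Bigl[\sum_{e\sim x} \omega(e) u_e(x)\, \partial_n\phi_e(x) - \phi(x)\, \partial_n u(x)\Bigr] = 0.
\]
Choosing $\phi$ concentrated near a single vertex $x$ with $\phi(x)=0$ and arbitrary $(\partial_n\phi_e(x))_{e\sim x}$ subject only to the Kirchhoff condition $\sum_{e\sim x}\omega(e)\partial_n\phi_e(x)=0$ forces the vector $(u_e(x))_{e\sim x}$ to be constant, i.e.\ $u$ is continuous at $x$. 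Subsequently choosing $\phi$ with $\phi(x)\neq 0$ extracts the Kirchhoff condition $\partial_n u(x)=0$ for $u$. By Theorem 3.6 and Proposition 3.2, $u \in \dom\widetilde\Delta \cap \CD(\CE)^*_\loc$ with $\widetilde\Delta u + u=0$.

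Next, by intrinsic completeness and Proposition 1.20 (Hopf-Rinow), each closed ball $B_n(o)$ around a fixed root $o\in \FX_\Gamma$ is compact. The cut-offs
\[
\eta_n(x) := \bigl(2 - d_i(x,o)/n\bigr)_+ \wedge 1
\]
satisfy $\eta_n\equiv 1$ on $B_n(o)$, $\supp\eta_n\subset B_{2n}(o)$, and $|\eta_n'|\leq 1/n$ a.e.\ by the characterization of the intrinsic metric in Proposition 2.20 and the chain rule. Since $u$ is continuous and $\eta_n^2\in \CD(\CE)\cap \CC_c(\FX_\Gamma)$, the product $\eta_n^2 u$ lies in $\CD(\CE)_c$ and is admissible in Theorem 3.5.

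Finally, pairing $\widetilde\Delta u + u = 0$ with $\eta_n^2 u$ yields
\begin{align*}
0 \;=\; \CE(u,\eta_n^2 u) + \|\eta_n u\|_2^2 \;=\; \int \eta_n^2 (u')^2\, d\omega + 2\int \eta_n\eta_n'\, u u'\, d\omega + \|\eta_n u\|_2^2.
\end{align*}
Cauchy-Schwarz and Young's inequality on the middle term absorb the first summand and leave
\[
\|\eta_n u\|_2^2 \;\leq\; \int (\eta_n')^2 u^2\, d\omega \;\leq\; \tfrac{1}{n^2}\|u\|_2^2 \;\xrightarrow[n\to\infty]{}\; 0,
\]
while monotone convergence $\eta_n^2 u^2 \uparrow u^2$ forces $\|u\|_2 = 0$, so $u\equiv 0$. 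The main obstacle is the first step, namely producing sufficiently many test functions in $(\dom \Delta_\CE)_c$ with prescribed vertex behavior to distill both continuity and the pointwise Kirchhoff condition for $u$; the cut-off estimate itself is a standard Sturm-type calculation once $u$ has been identified as a weakly $1$-harmonic $L^2$ function.
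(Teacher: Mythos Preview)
Your proof is correct and follows essentially the same approach as the paper: show that any $u\in L^2$ orthogonal to $(\Delta_\CE+1)((\dom\Delta_\CE)_c)$ is edgewise smooth, then extract continuity and the Kirchhoff condition at each vertex by varying the boundary data of compactly supported test functions, and finally kill $u$ via a Caccioppoli estimate with intrinsic-metric cut-offs $\eta_n^2 u$. The only cosmetic difference is the endgame: the paper drops the $\|\eta_n u\|_2^2$ term, bounds $\|\eta_n u'\|_2\to 0$, and uses Fatou plus the ODE $u_e''=u_e$ to conclude $u=0$, whereas you retain $\|\eta_n u\|_2^2$ and conclude $\|u\|_2=0$ directly by monotone convergence; both are valid.
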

\begin{proof}
We denote by $\Delta_c$ the restriction of $\Delta_\CE$ to $(\dom \Delta_\CE)_c$. Assume that $\Delta_c$ is not essentially selfadjoint, then the range $(\Delta_c + 1) (\dom \Delta_\CE)_c$ is not dense in $L^2$ and thus there exists $u\in L^2$ such that for all $\phi\in (\dom \Delta_\CE)_c$ we have
\[(u,(\Delta_c + 1) \phi) =0.\]
Thus on each edge we have that $u_e$ is $\CC^\infty$ and satisfies $u_e'' = u_e$, in particular $(u''_e)_e \in L^2$. Now let $\phi \in \CD_c(\Delta_\CE)$ be supported in a neighborhood of a vertex $x\in \FV$. Performing partial integration on each edge we obtain
\[\sum_{e\sim x} u_e' (0) \phi_e(0) = \sum_{e\sim x} u_e(0) \phi_e'(0)\]
for all such $\phi$. Note that such a $\phi$ is continuous and satisfies the Kirchhoff conditions. Choosing $\phi$ such that $\phi=1$ in a small neighborhood of $x$, we obtain that $u$ satisfies Kirchhoff conditions in $x$ and thus
\[0 = \sum_{e\sim x} u_e(0) \phi_e'(0)\]
for all such $\phi$ which implies that $u_e(0) = u_{e'}(0)$ for all $e,e'\sim x$. In particular we have that $u\in \CD(\CE)_\loc$ and therefore
\[\CE(u,\phi) + (u,\phi)=0\]
for all $\phi \in \CD(\Delta_\CE)_c$. By continuity we can extend this equation to all $\phi \in \CD(\CE)_c$.\\
Let $\psi:\IR\to [0,1]$ be $\CC^\infty$ such that $\psi=1$ on $[0,1]$ and $\psi =0$ on $[2,\infty)$. Fix $y\in \FX_\Gamma$ and define $\phi_n(x):=\psi(\frac{d_i(x,y)}{n})$. Hence $0\leq \phi_n\leq 1$ and $\phi_n$ converges locally uniformly to $1$ as $n\to \infty$. Since $d$ agrees with the intrinsic metric, we have $|d'|\leq 1$. Furthermore
\[|\phi_n'| \leq  |\psi'(\tfrac{d}{n})| \tfrac{d'}{n} \leq \tfrac{\|\psi'\|_\infty}{n}\]
which also implies that $\phi_n\in \CD(\CE)$ with compact support and therefore $\phi_n^2 u \in \CD(\CE)_c$. Thus we have
\[0\geq  -(\phi_n^2 u,u)= \CE(\phi_n^2u,u) = \int\limits_{\FX_\Gamma} \phi_n^2 u'^2 + 2\phi_n u \phi_n' u'\: d\omega,\]
and using Cauchy-Schwarz
\[\int\limits_{\FX_\Gamma} \phi_n^2 |u'|^2\: d\omega \leq 2 \int\limits_{\FX_\Gamma} \phi_n |u| |\phi_n'| |u'|\:d\omega \leq  2 \|\phi_n\|_\infty \|u\|_2 \|\phi_n u'\|_2.\]
Hence
\[\|\phi_n u'\|_2 \leq 2n^{-1} \|\psi'\|_\infty \|u\|_2\]
which implies by Fatou's lemma that $\CE(u)=0$, in particular $u_e=  u''_e =0$, on each edge since $u_e$ is a strong solution.
\end{proof}
\begin{remark}
In \cite{Ku-04} the previous theorem was shown for graphs with edge lengths uniformly bounded from below and forms $\CQ$ which are uniformly bounded. There, cut-off functions are constructed which are constant in vertex neighborhoods such that they don't change the boundary conditions. Due to the uniform lower bound on the edge lengths their gradient is also uniformly bounded. This method surely doesn't work in our case. Nevertheless in our case we only treat the case $\CQ\equiv 0$.\\
On the other hand, a result from \cite{Ca-08}, says that if there is a boundary, then the operator is not essentially self adjoint. Thus, we see that the previous theorem is a characterization of essential self-adjointness for the case $\CQ\equiv 0$.
\end{remark}
We now treat the more general case of $\CQ\neq 0$.
\begin{theorem}\index{graph Dirichlet form!essential selfadjointness}
Assume that for all rays $\Fp =(p_1,p_2,\dots)$ we have
\[\sum_{k=1}^\infty l_i(p_k,p_{k+1})(\omega(p_k,p_{k+1})) =\infty.\]
Then the generator of the form $\CS$ is essentially self-adjoint on $(\dom \Delta_\CS)_c$.
\end{theorem}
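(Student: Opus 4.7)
The plan is to establish essential self\-adjointness by showing that the range of $\Delta_c+1$ is dense in $L^2(\FX_\Gamma,\omega)$, where $\Delta_c$ denotes the restriction of $\Delta_\CS$ to $(\dom\Delta_\CS)_c$. Equivalently, I will take $u\in L^2(\FX_\Gamma,\omega)$ satisfying
\[
(u,(\Delta_c+1)\phi)_{L^2}=0\qquad\text{for all }\phi\in(\dom\Delta_\CS)_c,
\]
and argue that $u\equiv 0$. Since $\Delta_c\phi=\widetilde{\Delta}\phi$ on its domain, this orthogonality can be rewritten as $(u,\widetilde{\Delta}\phi+\phi)_{L^2}=0$ for all such $\phi$.

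In the first substantive step I work in the intrinsic representation and test against $\phi\in(\dom\Delta_\CS)_c$ supported in the interior of a single edge; such $\phi$ satisfy the flux conditions of Theorem 3.42 vacuously, so $C_c^\infty$ functions on edges are available. This yields $\int u_e(-\phi_e''+\phi_e)\,dt=0$ for all edge\-supported bumps, whence $u_e$ is smooth and satisfies $u_e''=u_e$ strongly on every edge. Next, I test against $\phi\in(\dom\Delta_\CS)_c$ supported inside the closure of a star neighborhood $\bigstar(x)$: edgewise integration by parts gives
\[
\sum_{e\sim x}\omega(e)\bigl[u_e(0)\phi_e'(0)-u_e'(0)\phi_e(0)\bigr]=0,
\]
since all boundary terms at the far endpoints vanish. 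Using first test functions with $\phi(x)=0$ and prescribed derivative profile constrained only by the discrete Kirchhoff relation $\sum_{e\sim x}\omega(e)\phi_e'(0)=\triangle\phi(x)=0$, I extract that $(u_e(0))_{e\sim x}$ is constant, i.e.\ $u$ is continuous at $x$. Then varying $\phi$ with $\phi(x)\neq 0$ and $\phi(y)=0$ for $y\neq x$ (after having established summability of $\sum_y j(x,y)|u(y)|$ by a second family of test functions that probe finite collections of neighbors) forces $\partial_n u(x)=\triangle u(x)$. Together with $u_e''=u_e$ on edges, this places $u$ in $\CD(\CS)^*_\loc$ and shows that $u$ is a weak $1$\-solution, $\widetilde{\Delta}u+u=0$.

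At this point, $u\in L^2(\FX_\Gamma,\omega)$ is $1$\-harmonic, and the hypothesis that
\[
\sum_{k=1}^\infty l_i(p_k,p_{k+1})\omega(p_k,p_{k+1})=\infty
\]
along every ray is precisely the one appearing in the $L^p$\-Liouville Theorem 3.21 with $\alpha=1$ and $p=2$. An application of that theorem yields $u\equiv 0$, which is the desired density statement.

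The main obstacle is the middle paragraph: unlike in Theorem 3.49 where $\CQ\equiv 0$ rendered the vertex analysis a matter of plain continuity and a balanced divergence, the admissible test functions in $(\dom\Delta_\CS)_c$ are constrained by the full flux relation $\partial\phi/\partial\psi+\CQ(\phi,\psi)=0$ against $\CH_1$, which couples derivative behaviour at a vertex to potentially long\-range discrete data through the jump weight. The delicate part is producing enough compactly supported test functions that satisfy these constraints while isolating one vertex (or a finite group of vertices) at a time, and arranging the argument so that the summability $\sum_y j(x,y)|u(y)|<\infty$ emerges \emph{from} the testing rather than being assumed; this is what is needed to make $\triangle u(x)$ a meaningful pointwise quantity so that the $\triangle$\-Kirchhoff identity for $u$ can be read off.
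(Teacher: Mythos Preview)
Your overall strategy coincides with the paper's. The paper's proof is two lines: it asserts that ``as above'' (i.e.\ as in Theorem~3.49) one identifies $\dom\Delta_c^*$ with $\{u\in\dom\widetilde{\Delta}\mid u,\widetilde{\Delta}u\in L^2\}$, and then invokes Theorem~3.45 to conclude this set equals $\dom\Delta_\CS$. Your deficiency--index version (show $\ker(\Delta_c^*+1)=\{0\}$ via Theorem~3.22) is the same argument, just unpacking Theorem~3.45 back to the Liouville step that proves it. So there is no genuine difference in route.

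On the obstacle you flag: you have slightly misidentified its source. The flux conditions of Theorem~3.42 are \emph{not} the constraint. Under the ray hypothesis, Theorem~3.22 already kills every $L^2$ $1$-harmonic function, so $\CH_1=\{0\}$; hence $\CS$ is regular and the flux conditions are vacuous, giving $(\dom\Delta_\CS)_c=(\dom\widetilde{\Delta})_c$. The real restriction on test functions is the $\triangle$-Kirchhoff condition built into $\dom\widetilde{\Delta}$ (Theorem~3.6): a compactly supported $\phi\in\dom\widetilde{\Delta}$ must satisfy $\partial_n\phi(y)=\triangle\phi(y)$ at \emph{every} vertex, and for $y$ outside the metric support of $\phi$ this reads $\sum_{z}j(y,z)\phi(z)=0$. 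With long-range jump weights this blocks the localized choice $\phi(x_0)=1$, $\phi|_{\FV\setminus\{x_0\}}=0$ that would isolate $\partial_n u(x_0)=\triangle u(x_0)$ directly, and it is likewise unclear how to force the summability $\sum_y j(x,y)|u(y)|<\infty$ from this restricted class. Your instinct that something nontrivial is hiding here is correct; the paper's ``as above'' refers to the diffusion case $\CQ\equiv 0$, where this issue does not arise, and the general case is not worked out in the text either.
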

\begin{proof}
Let us again denote the operator $\widetilde{\Delta}$ defined on the set $(\dom \Delta_\CS)_c$ by $\Delta_c$. Then as above one can show that $\Delta_c^*$ is given as the weak Laplacian on the set
\[\{u\in \dom \widetilde{\Delta} \mid u\in L^2, \widetilde{\Delta}u \in L^2\}.\]
As this set coincides with domain of the $L^2$ generator of $\CS$ by assumption on the rays, we obtain essential self-adjointness.
\end{proof}
\section*{Notes and remarks}
Our main motivation of this chapter was coming from the theory of non-local Dirichlet forms, in particular the article \cite{FLW-12}. It is still unclear how to construct an intrinsic metric in the general case of a graph Dirichlet form $\CS$. In the case when the reference measure is given as the sum of the Lebesgue measure on edges and delta measures on the vertices, it is very easy to give an intrinsic metric under certain assumptions on the parameters. This was done in \cite{MUW-12}. However, in the general case this is not possible as the energy measure is not absolutely continuous to the Lebesgue measure. To apply the arguments used in \cite{MUW-12}, the absolute continuity is necessary.\\
There are several papers concerning the interplay of continuous and discrete graphs. Only to mention a few, see \cite{Cattaneo,CW-07,Pa-06,Po-12}. However, these paper mostly  assume equilateral edge lengths or a lower bound on the edge lengths. It seems that we have found the right measure on the set of vertices. It would be interesting if the results which assume strong conditions on the edge lengths may carry over to our situation. One result in this direction is given as the Krein resolvent formula, which was known only in the case of equilateral edge lengths and uniform bounded vertex degree.\\
Just recently, see \cite{Huang,Folz}, one studied discrete graphs by means of metric graph. More precisely, given a discrete graph, the authors construct a metric graph. Then they apply a criterion of stochastic completeness to this setting and translate it back to the discrete graph. It seems that our approach also works here due to Theorem 3.32. In general one may try to study problems on discrete graphs by means of appropriate metric graphs. This thesis is yet another factor making this hope a realistic one. 
\chapter{Functions of finite energy}
In this final chapter we consider the graph Dirichlet form extended to the space of functions of finite energy. This space is introduced in the first section. Surprisingly, we can equip it with a norm which turns this space into a Hilbert space, Proposition 4.2. Connected with this issue are properties like recurrence. This is then studied, see Propositions 4.9 and 4.10, where we also consider potential theoretic related issues. Afterwards, in section 3 we return to the special case of diffusion Dirichlet forms. Using the tools developed in chapter 2, we study in Theorem 4.16 and Proposition 4.18 recurrence in terms of the metric properties. In the last section we turn back to our leitmotif $(C)$, i.e. using similar ideas as in chapter 3, we relate recurrence of the graph Dirichlet form to a certain discrete one in Theorem 4.23 and Corollary 4.24. The latter one is in close connection to what is known in general Dirichlet form theory as trace Dirichlet form.
\section{Functions of finite energy}
In this section we introduce the space of all functions such that the energy functional $\CS$ is finite and certain subspaces of it. Recall that the Dirichlet form $\CS$ on $L^2(\FX_\Gamma,\lambda_a)$ is given as $\CS= \CE + \CQ$, where $\CE$ is the diffusion part given by
\[\CE(u) = \int\limits_{\FX_\Gamma} |u'|^2 d\lambda_b\]
and $\CQ$ is the discrete part given by
\[\CQ(u)=\frac{1}{2}\sum_{x,y\in \FV} j(x,y)(u(x)-u(y))^2+\sum_{x\in \FV} k(x) u(x)^2.\]
By definition the quadratic functional $\CS(u)$ makes sense for all absolutely continuous functions if we allow for $\CS$ the value infinity.
\begin{definition}\index{functions of finite energy}
The space of all functions of finite energy of $\CS$ is the space $\widetilde{\CD}(\CS)$ consisting of all absolutely continuous functions such that $\CS(u)<\infty$.
\end{definition}
It is appropriate in this chapter to work in the canonical representation, as we essentially deal with the functional $\CS$.\medskip\\
As the constant functions are in $\widetilde{\CD}(\CS)$ the energy functional $\CS$ fails to be a norm. However we can equip this space with a norm. To do so choose $o\in \FX_\Gamma$ arbitrary but fixed and set for $u\in \widetilde{\CD}(\CS)$
\[ \|u\|_o^2 = |u(o)|^2 + \CS(u).\]
\begin{prop}\index{theorem!Sobolev embedding!finite energy version}
The space $\widetilde{\CD}(\CS)$ with the norm $\|\cdot\|_o$ is a Hilbert space and is locally, continuously embedded into the space of continuous functions, i.e. for all compact sets $\FY\subset \FX_\Gamma$ there exists a constant $C>0$ such that we have for all $x\in \FY$
\[|u(x)|^2 \leq C \|u\|_o^2,\]
in particular we have
\[\widetilde{\CD}(\CS)=\{u\in \CC(\FX_\Gamma)\mid \CQ(u)<\infty \mbox{ and } u'\in \bigoplus_{e\in E}L^2(0,l_c(e))\}.\]
Furthermore, for all $o'\in \FX_\Gamma$ the norm $\|\cdot\|_{o'}$ defines an equivalent norm on $\widetilde{\CD}(\CS)$.
\end{prop}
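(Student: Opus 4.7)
The plan is to prove the four assertions in sequence: that $\|\cdot\|_o$ is a norm, that the local Sobolev estimate holds, that this local control upgrades to completeness, and finally that the two descriptions of $\widetilde{\CD}(\CS)$ coincide and that different base points yield equivalent norms.

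First I would verify that $\|\cdot\|_o$ is positive definite. Clearly it is a semi-norm by sesquilinearity of $\CS$. If $\|u\|_o = 0$, then $u(o)=0$, the diffusion term forces $u'=0$ a.e., so $u$ is constant on every edge and hence (by continuity, implied by absolute continuity) on every connected component of $\FX_\Gamma$. The vanishing of $\CQ(u)$ then yields $j(x,y)(u(x)-u(y))^2 = 0$ and $k(x)u(x)^2=0$ for all $x,y\in\FV$. Invoking the standing irreducibility assumption, the associated graph on the set of connected components is $j$-connected, which together with the vanishing jump terms forces $u$ to be globally constant (up to being zero wherever $k > 0$). Since $u(o)=0$, this constant is $0$.

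Next, I would prove the local Sobolev estimate, which is the linchpin. Fix a compact subgraph $\FY\subset \FX_\Gamma$. For $x$ in the same canonical-connected component as $o$, the path Poincaré inequality (Lemma 2.5) applied to the minimizing path from $o$ to $x$ gives $|u(x)-u(o)|^2 \leq d_c(x,o)\,\CE(u) \leq d_c(x,o)\,\CS(u)$. If $x$ lies in a different connected component, irreducibility provides a finite alternating chain of canonical paths and $j$-jumps connecting $o$ to $x$; along a jump $(p,q)$ we have $(u(p)-u(q))^2 \leq j(p,q)^{-1}\CQ(u)$, and along canonical segments the path Poincaré estimate applies. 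Combining a finite (chain-dependent, but uniform in $u$) number of such estimates via the triangle inequality yields a constant $C(x)$ such that $|u(x)|^2 \leq C(x)\,\|u\|_o^2$. Since $\FY$ is compact it meets only finitely many connected components and one can pick $C_\FY = \sup_{x\in\FY} C(x) < \infty$ by the local finiteness of the graph structure.

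With the local embedding in hand, completeness follows in standard fashion. A $\|\cdot\|_o$-Cauchy sequence $(u_n)$ is, by the previous step, locally uniformly Cauchy in $\CC(\FX_\Gamma)$, hence converges pointwise to some continuous $u$. Applying Fatou's lemma to both $\CE$ (edgewise, using that $(u_n')$ is Cauchy in $\bigoplus_e L^2(0,l_c(e))$) and to the sum defining $\CQ$ shows $\CS(u - u_n) \leq \liminf_m \CS(u_m - u_n)\to 0$; together with $u_n(o)\to u(o)$ this gives $\|u_n - u\|_o\to 0$ and $u\in \widetilde{\CD}(\CS)$. The equality $\widetilde{\CD}(\CS)=\{u\in\CC(\FX_\Gamma)\mid \CQ(u)<\infty,\ u'\in\bigoplus_e L^2(0,l_c(e))\}$ is then immediate: one inclusion is tautological; for the other, $u\in\CC(\FX_\Gamma)$ with $u'\in\bigoplus_e L^2$ is absolutely continuous by Proposition 1.18, and finite $\CE(u)$ plus finite $\CQ(u)$ give finite $\CS(u)$. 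Finally, equivalence of the norms for $o$ and $o'$ follows by exactly the argument used for the local Sobolev estimate: $|u(o) - u(o')|^2 \leq C_{o,o'}\,\CS(u)$ via the path/jump chain, hence $|u(o)|^2 \leq 2|u(o')|^2 + 2C_{o,o'}\CS(u) \leq C\|u\|_{o'}^2$, and symmetrically. The main obstacle I anticipate is handling the possibly disconnected case cleanly -- one must exploit $j$-connectedness carefully so that the chain of path and jump estimates gives a finite, locally uniform constant, but this is essentially bookkeeping once the irreducibility assumption is invoked.
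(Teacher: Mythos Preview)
Your proposal is correct and follows essentially the same approach as the paper: the key step in both is the local Sobolev estimate, obtained by combining the path Poincar\'e/Cauchy--Schwarz bound within a connected component with jump-weight estimates $(u(p)-u(q))^2 \leq j(p,q)^{-1}\CQ(u)$ across components, the two being chained together via the irreducibility assumption. The paper proves only this inequality and declares the remaining assertions ``easy consequences of it''; you have simply spelled out those consequences (positive definiteness, completeness via Fatou, the explicit description of $\widetilde{\CD}(\CS)$, and norm equivalence) in more detail than the paper does.
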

\begin{proof}
We only show the inequality, the remaining claims are easy consequences of it. Let $\FY$ be a compact subset of $\FX_\Gamma$. Assume first that $o$ and $x$ belong to the same connected component. Thus there exists a path $\Fp_o^x$ such that
\begin{eqnarray*}
|u(x)|^2 &\leq& |u(o)|^2 + |u(x) - u(o)|^2\\
&\leq & |u(o)|^2 + |\int\limits_{\Fp_o^x} u'(t)\ dt|^2 \\
&\leq & |u(o)|^2 + C \CE(u)
\end{eqnarray*}
where we have used the Cauchy-Schwarz inequality. Here, the constant $C>0$ depends on the canonical lengths of the edges in the path. Now, assume that $o$ and $x$ are not on the same connected component, but that there exist a jump path $x_1, \dots, x_n$ with $j(o,x_1)>0$ and $j(x_n,x)>0$. We then obtain
\begin{eqnarray*}
|u(x)|^2 &\leq& C(|u(o)|^2 + \sum_{k=1}^{n-1} \frac{j(x_k,x_{k+1})}{j(x_k,x_{k+1})}|u(x_k)- u(x_{k+1})|^2 + |u(x) - u(o)|^2)\\
&\leq& C (|u(o)|^2 +  \CQ(u))
\end{eqnarray*}
where we have applied Cauchy-Schwarz to get the constant $C>0$ which this time depends on the jump weight $j$. For the general case of $o$ and $x$ there exists by irreducibility a combination of topological and jump paths connecting both.
\end{proof}
We continue with a characterization of convergence in $\widetilde{\CD}(\CS)$. We follow the arguments given in \cite{Sch-11} in the case of discrete Dirichlet forms.
\begin{prop}
Let $u\in \widetilde{\CD}(\CS)$ and $(u_n)_n$ be a sequence in $\widetilde{\CD}(\CS)$. Then $u_n\to u$ with respect to $\|\cdot\|_o$ if and only if $u_n$ converges pointwise to $u$ and
\[\limsup_{n\to\infty} \CS(u_n) \leq \CS(u).\]
\end{prop}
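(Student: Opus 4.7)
The plan is to exploit the Hilbert space structure of $\widetilde{\CD}(\CS)$ established in Proposition 4.2, together with the fact that point evaluations are continuous linear functionals on this space.

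For the \emph{forward} direction, suppose $u_n\to u$ with respect to $\|\cdot\|_o$. Fix $x\in\FX_\Gamma$ and apply the local embedding inequality from Proposition 4.2 (with the point $x$ replacing $o$) to $u_n-u$; one obtains $|u_n(x)-u(x)|^2\leq C\|u_n-u\|_o^2\to 0$, yielding pointwise convergence. Convergence in $\|\cdot\|_o$ also gives $\|u_n\|_o^2\to\|u\|_o^2$, and since $|u_n(o)|^2\to|u(o)|^2$, we conclude $\CS(u_n)\to\CS(u)$, so in particular $\limsup_n\CS(u_n)\leq\CS(u)$.

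For the \emph{converse}, assume $u_n\to u$ pointwise and $\limsup_n\CS(u_n)\leq\CS(u)$. The first step is to observe that $(u_n)$ is bounded in $\widetilde{\CD}(\CS)$: indeed $|u_n(o)|^2\to|u(o)|^2$ and $\CS(u_n)$ is eventually bounded by $\CS(u)+1$, so $\|u_n\|_o$ is bounded. Next I would show that $u_n\to u$ weakly in the Hilbert space $(\widetilde{\CD}(\CS),\|\cdot\|_o)$. By the Banach--Alaoglu theorem, every subsequence of $(u_n)$ has a further subsequence $(u_{n_k})$ converging weakly to some $v\in\widetilde{\CD}(\CS)$. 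Since evaluation at any point $x$ is a continuous linear functional on $\widetilde{\CD}(\CS)$ by Proposition~4.2, the weak convergence forces $u_{n_k}(x)\to v(x)$, while by hypothesis $u_{n_k}(x)\to u(x)$; hence $v=u$. Since every subsequence has a further subsequence converging weakly to the same limit $u$, the whole sequence converges weakly to $u$.

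The third step is to upgrade weak convergence to norm convergence. Weak lower semicontinuity of the Hilbert norm gives $\|u\|_o\leq\liminf_n\|u_n\|_o$, i.e.\ $|u(o)|^2+\CS(u)\leq\liminf_n(|u_n(o)|^2+\CS(u_n))$. Because $|u_n(o)|^2\to|u(o)|^2$, this rearranges to $\CS(u)\leq\liminf_n\CS(u_n)$, which combined with the assumed $\limsup$-bound forces $\CS(u_n)\to\CS(u)$ and therefore $\|u_n\|_o\to\|u\|_o$. The identity
\[
\|u_n-u\|_o^2=\|u_n\|_o^2-2\langle u_n,u\rangle_o+\|u\|_o^2
\]
together with weak convergence $\langle u_n,u\rangle_o\to\|u\|_o^2$ then gives $\|u_n-u\|_o\to 0$. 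The main obstacle is the identification of the weak limit: this requires the continuity of point evaluations, which is exactly the content of Proposition~4.2 and is what justifies treating $\widetilde{\CD}(\CS)$ as a reproducing-kernel-like Hilbert space for the purposes of this argument.
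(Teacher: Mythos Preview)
Your proof is correct and follows essentially the same route as the paper: the forward direction via the local embedding of Proposition~4.2, and the converse via boundedness, weak compactness, identification of the weak limit through continuity of point evaluations, and then the norm expansion $\|u_n-u\|_o^2=\|u_n\|_o^2+\|u\|_o^2-2\langle u_n,u\rangle_o$. The paper compresses the final step slightly (it passes directly to the $\limsup$ of the expanded norm rather than first deducing $\CS(u_n)\to\CS(u)$ via weak lower semicontinuity), but the argument is the same in substance.
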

\begin{proof}
Let $u_n\to u$ in $\widetilde{\CD}(\CS)$. Then by the previous proposition it converges pointwise and clearly $\CS(u_n)\to \CS(u)$. Thus the first direction is obvious. To prove the converse direction, let $(u_n)$ be as stated. Then $(u_n)$ is bounded in $\widetilde{\CD}(\CS)$. By weak compactness of balls in Hilbert spaces every subsequence has a weakly convergent subsequence. By pointwise convergence all those limits must coincide and we get that $u_n\to u$ weakly in $\widetilde{\CD}(\CS)$. Moreover
\begin{eqnarray*}
0 &\leq & \CS(u-u_n) + (u(o)- u_n(o))^2\\
&\leq& \CS(u) + u(o)^2 + \CS(u_n) + u_n(o)^2 - 2\langle u,u_n\rangle_o
\end{eqnarray*}
and thus $u_n\to u$ in $\widetilde{\CD}(\CS)$.
\end{proof}
One can associate such a space of functions of finite energy to general Dirichlet forms. Abstractly, this space is defined to be the space of all measurable functions $u$ on $\FX_\Gamma$ such that for all $n\geq 1$ the truncated function $-n\vee u \wedge n \in \CD_\loc(\CS)$ and $\CS(u) = \lim\limits_{n\to \infty} \CS(-n\vee u \wedge n) < \infty$. This space is then known as the reflected Dirichlet space, see \cite{CF-12}. The next theorem shows that the abstract definition coincides in our situation with the just defined space of functions of finite energy.
\begin{theorem}\index{reflected Dirichlet space}
Let $\FX_\Gamma$ be a metric graph and $\CS$ be a graph Dirichlet form. Then the space of functions of finite energy $\widetilde{\CD}(\CS)$ agrees with the reflected Dirichlet space.
\end{theorem}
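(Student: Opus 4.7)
The plan is to prove the two inclusions separately, using in an essential way that every element of $\CD_\loc(\CS)$ is (locally) absolutely continuous, combined with standard monotone/dominated convergence applied to the explicit expression of $\CS$ as a sum of an edge integral and a double sum over vertices.

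First, I would show $\widetilde{\CD}(\CS) \subseteq$ reflected Dirichlet space. Given $u\in \widetilde{\CD}(\CS)$, set $u_n := (-n)\vee u \wedge n$. Since $u$ is absolutely continuous and $t\mapsto (-n)\vee t\wedge n$ is a normal contraction, $u_n$ is absolutely continuous with $|u_n'| \leq |u'|$ a.e.\ and $|u_n(x)-u_n(y)|\leq |u(x)-u(y)|$ for all $x,y$. To see $u_n \in \CD_\loc(\CS)$, take a compact $K\subset\FX_\Gamma$; by regularity of $\CS_o$ (or by a direct construction) one can produce a cutoff $\eta\in\CD_c(\CS)\cap\CC_c(\FX_\Gamma)$ with $\eta\equiv 1$ on $K$, and then $\eta u_n$ is bounded, absolutely continuous with compact support, and hence lies in $\CD(\CS)$ while agreeing with $u_n$ on $K$. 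The pointwise convergence $|u_n'|^2\uparrow |u'|^2$ and $(u_n(x)-u_n(y))^2\uparrow (u(x)-u(y))^2$ together with $u_n(x)^2\uparrow u(x)^2$ give, by monotone convergence on each of the three pieces of $\CS$,
\[
\lim_{n\to\infty}\CS(u_n)=\CS(u)<\infty,
\]
so $u$ belongs to the reflected Dirichlet space.

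For the converse, let $u$ lie in the reflected Dirichlet space. Each $u_n:=(-n)\vee u \wedge n$ is in $\CD_\loc(\CS)$, hence by the description of $\CD(\CS)^*_\loc$ (Proposition corresponding to $\CD(\CS)_\loc^*$ in Section~3.1) the function $u_n$ is absolutely continuous on $\FX_\Gamma$, with $u_n'\in L^2_\loc$. Since on the open set $\{|u|<n\}$ we have $u=u_n$, and $\bigcup_n \{|u|<n\}=\FX_\Gamma$, this forces $u$ itself to be absolutely continuous on every compact set (its restriction to any edge agrees with some $u_n$ in a neighborhood of every point in the interior, and continuity at a vertex $x$ follows from continuity of $u_n$ for $n>|u(x)|$). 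The hypothesis $\CS(u)=\lim_n\CS(u_n)<\infty$, interpreted through the Beurling--Deny-type decomposition, then reads
\[
\int_{\FX_\Gamma}|u'|^2\,d\lambda_b+\tfrac12\!\!\sum_{x,y\in\FV}\!\!j(x,y)(u(x)-u(y))^2+\sum_{x\in\FV}k(x)u(x)^2<\infty,
\]
which is precisely $u\in\widetilde{\CD}(\CS)$.

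The main obstacle I anticipate is the justification that $u_n \in \CD_\loc(\CS)$ for $u\in\widetilde\CD(\CS)$: this requires producing, on a fixed compact set $K$, an honest element of $\CD(\CS)$ agreeing with $u_n$ on $K$, and the global $L^2(\FX_\Gamma,\lambda_a)$-integrability is not automatic from finite energy alone. The remedy is to multiply by a compactly supported cutoff $\eta$ and verify that $\eta u_n\in\CD(\CS)$; boundedness of $u_n$ ensures $\eta u_n\in L^2(\FX_\Gamma,\lambda_a)$, $(\eta u_n)'=\eta' u_n + \eta u_n'\in L^2(\FX_\Gamma,\lambda_b)$ by the product rule, and the jump part is controlled because only finitely many vertices lie in $\supp\eta$ while the boundedness of $u_n$ tames the contributions from jumps to vertices outside $\supp\eta$. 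A second, minor point is the interchange of limits in the jump sum, which is legitimate because $(u_n(x)-u_n(y))^2$ is monotone increasing in $n$, permitting monotone convergence without any further domination hypothesis.
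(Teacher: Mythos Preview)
Your argument is essentially correct and proceeds along a different, more hands-on route than the paper's. Let me note one genuine gap and then compare.

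\textbf{A gap in your converse direction.} When you write ``on the open set $\{|u|<n\}$ we have $u=u_n$'' and ``its restriction to any edge agrees with some $u_n$ in a neighborhood of every point'', you are presupposing that $u$ is locally bounded (so that $\{|u|<n\}$ is open and exhausts $\FX_\Gamma$). But $u$ is a priori only measurable and finite a.e., so this is exactly what you are trying to establish. The fix is to work with the continuous representatives $\tilde u_n$ of $u_n$: from $\tilde u_n = (-n)\vee \tilde u_m \wedge n$ for $m>n$ (this holds everywhere since both sides are continuous and agree a.e.), the sets $\{|\tilde u_{n+1}|<n\}$ are genuinely open, and they exhaust $\FX_\Gamma$ because the local Sobolev embedding for $\widetilde{\CD}(\CS)$ (Proposition~4.2) gives $\sup_n |\tilde u_n(x)|<\infty$ for every $x$. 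Then your locality argument goes through.

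\textbf{Comparison with the paper.} The paper declares the inclusion $\widetilde{\CD}(\CS)\subseteq$ reflected space ``trivial'' and handles the converse abstractly: it notes that $u_n=(-n)\vee u\wedge n$ lies in $\widetilde{\CD}(\CS)$, converges pointwise to $u$, and has $\limsup_n \CS(u_n)=\CS(u)$, and then invokes Proposition~4.3 (whose proof uses weak compactness in the Hilbert space $(\widetilde{\CD}(\CS),\|\cdot\|_o)$) to conclude $u_n\to u$ in $\|\cdot\|_o$, hence $u\in\widetilde{\CD}(\CS)$. So the paper leverages the Hilbert space machinery already set up in Propositions~4.2 and~4.3, whereas you reprove the relevant piece by hand via monotone convergence and a local-agreement argument. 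Your approach is more explicit and self-contained (once the gap above is patched), while the paper's is shorter because it cashes in the abstract convergence criterion established just before.
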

\begin{proof}
One inclusion is trivial and the second follows from the propositions above. Choose $u$ from the reflected Dirichlet space, then as $-n \vee u \wedge n$ converges pointwise to $u$ and by assumption
\[\limsup_{n\to \infty}\CS(-n \vee u \wedge n) = \lim_{n\to\infty}\CS( -n \vee u \wedge n) = \CS(u),\]
the proposition above gives the other inclusion as  $-n \vee u \wedge n \in \widetilde{\CD}(\CS)$, since otherwise the limit would be infinity. Thus we have that $u\in \widetilde{\CD}(\CS)$.
\end{proof}
By definition, for all edge weights $\nu$ we have the identity
\[\CD(\CS) = \widetilde{\CD}(\CS)\cap L^2(\FX_\Gamma,\nu),\]
where the right hand side is called active reflected Dirichlet space in the abstract theory. We now consider the reflected Dirichlet space to $\CD(\CS)_o$.
\begin{prop}
The reflected Dirichlet space of $\CD(\CS)_o$ coincides with $\widetilde{\CD}(\CS)$.
\end{prop}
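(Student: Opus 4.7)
My plan is to establish equality by proving both inclusions between the reflected Dirichlet space of $\CS_o$ (call it $\widetilde{\CD}(\CS_o)$) and $\widetilde{\CD}(\CS)$. The inclusion $\widetilde{\CD}(\CS_o) \subseteq \widetilde{\CD}(\CS)$ is cheap: since $\CD(\CS)_o \subseteq \CD(\CS)$, one immediately gets $\CD_\loc(\CS_o) \subseteq \CD_\loc(\CS)$, and since the quadratic functional $\CS$ is the same, any $u$ whose truncations $u_n := (-n)\vee u \wedge n$ lie in $\CD_\loc(\CS_o)$ with $\lim_n \CS(u_n) < \infty$ automatically belongs to the reflected Dirichlet space of $\CS$, which by the preceding theorem is $\widetilde{\CD}(\CS)$.

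The substantive direction is $\widetilde{\CD}(\CS) \subseteq \widetilde{\CD}(\CS_o)$. Given $u \in \widetilde{\CD}(\CS)$, the preceding theorem tells us $u_n \in \CD_\loc(\CS)$ and $\CS(u) = \lim_n \CS(u_n)$. I need to upgrade $\CD_\loc(\CS)$ to $\CD_\loc(\CS_o)$, i.e.\ to show that for every compact $K \subset \FX_\Gamma$ there exists $\psi \in \CD_o(\CS)$ with $\psi = u_n$ on $K$. By Proposition 4.2 the function $u$, and hence $u_n$, is continuous; moreover $u_n$ is bounded by $n$. I would choose a continuous cutoff $\eta : \FX_\Gamma \to [0,1]$ with compact support containing $K$, with $\eta \equiv 1$ on $K$, and piecewise affine on edges, so that $\eta \in \CD_c(\CE) \subseteq \CD(\CS)$. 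Then $\psi := \eta u_n$ has compact support, is continuous, and the edgewise product rule gives $(\eta u_n)' = \eta' u_n + \eta u_n'$, each term lying in $L^2$ by boundedness of $\eta$, $\eta'$ and $u_n$ combined with local square integrability of $u'$. For the discrete part, $\CQ(\eta u_n)$ splits as a finite sum over the (finitely many) vertices in $\supp\eta$, each term controlled by $n^2 \sum_{y} j(x,y) < \infty$, so $\psi \in \CD(\CS) \cap \CC_c(\FX_\Gamma) \subseteq \CD_o(\CS)$ with $\psi = u_n$ on $K$.

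Once this is established, $u_n \in \CD_\loc(\CS_o)$ for every $n$ and $\CS(u_n) = \lim_n \CS(u_n) < \infty$ is inherited verbatim, which puts $u$ in the reflected Dirichlet space of $\CS_o$.

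The main obstacle is the verification that the cutoff product $\eta u_n$ genuinely lies in $\CD(\CS)$; this is routine on the diffusion side thanks to the one-dimensional Leibniz rule on each edge, but on the discrete side one must make sure the jump sum $\sum_{x,y} j(x,y)(\eta u_n(x) - \eta u_n(y))^2$ is finite even when $j$ has infinite range, which is where the uniform bound $|u_n| \leq n$ and the fact that only finitely many vertices lie in $\supp\eta$ become essential. After that, the identification of the limit $\CS(u) = \lim_n \CS(u_n)$ is simply imported from the previous theorem.
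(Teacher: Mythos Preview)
Your proof is correct and follows essentially the same route as the paper. The paper's argument is the one-liner ``this follows immediately from the fact that the local spaces of $\CD(\CS)$ and $\CD(\CS)_o$ coincide,'' a fact already asserted in Chapter~3; your cutoff construction $\psi=\eta u_n$ is precisely the standard verification of that fact, carried out explicitly.
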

\begin{proof}
This follows immediately from the fact the local spaces of $\CD(\CS)$ and $\CD(\CS)_o$ coincide.
\end{proof}
The proposition shows that the intersection of the reflected Dirichlet space of $\CD(\CS)_o$ and $L^2$ does not coincide with $\CD(\CS)_o$ in general. This leads to the concept of the extended Dirichlet space.
\begin{reminder}\index{Dirichlet form!extended}
Let $(\CS,\CD)$ be a Dirichlet form on some $L^2(X,m)$ space. The family $\CD(\CS)_e$ consisting of all $m$-measurable functions $u$ on $X$ with $|u|<\infty$ $m$-a.e. and there exists an $\CS$-Cauchy sequence of functions $(u_n)$ in $\CD$ such that $u_n \to u$ $m$-a.e., is called the extended Dirichlet space. Such a sequence is called an approximating sequence.
\end{reminder}
Due to this definition we analyze the extended Dirichlet spaces of $\CS$ and $\CS_o$ in the following theorem.
\begin{theorem}\index{graph Dirichlet form!extended Dirichlet space}
The extended Dirichlet space of $\CS$ is a closed subspace of $(\widetilde{\CD}(\CS),\|\cdot\|_o)$ and it is given by
\[\overline{\CD(\CS)}^{\|\cdot\|_o}.\]
The extended Dirichlet space of $\CS_o$ is a closed subspace of $(\widetilde{\CD}(\CS),\|\cdot\|_o)$ and it is given by
\[\overline{\CD(\CS)_c}^{\|\cdot\|_o} = \widetilde{\CD}(\CS)_o.\]
\end{theorem}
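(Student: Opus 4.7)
The plan is to identify, for each $\CS$-Cauchy sequence representing a function in the extended Dirichlet space, a canonical continuous representative in $\widetilde{\CD}(\CS)$ and to show both inclusions between $\CD(\CS)_e$ and $\overline{\CD(\CS)}^{\|\cdot\|_o}$ simultaneously.

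The key technical lemma I would prove first is a convergence principle: if $(u_n)\subset\widetilde{\CD}(\CS)$ is $\CS$-Cauchy, then for every compact subgraph $\FY$ the sequence $u_n - u_n(o)$ is Cauchy in $\CC(\FY)$. This follows from Proposition 2.3 (the path Poincaré estimate $|v(x)-v(y)|^2\leq d_c(x,y)\CE(v)$, which did not require $L^2$-integrability, extended along jump paths via $|v(x)-v(y)|^2 \leq j(x,y)^{-1}(u(x)-u(y))^2$ and irreducibility of $\CS$). Thus if additionally $u_n\to u$ almost everywhere for some $u$, then $u_n(o)$ is convergent (use a Lebesgue point of $u$ on an edge incident to $o$ plus the Hölder estimate to transfer convergence to $o$), hence $u_n$ converges locally uniformly to a continuous function $\tilde u$ that coincides with $u$ a.e.; by Fatou, $\CS(\tilde u)\leq\liminf_n\CS(u_n)<\infty$, so $\tilde u\in \widetilde{\CD}(\CS)$.

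For the inclusion $\CD(\CS)_e\subset\overline{\CD(\CS)}^{\|\cdot\|_o}$: given an $\CS$-Cauchy approximating sequence $(u_n)\subset\CD(\CS)$ with $u_n\to u$ a.e., apply the lemma to produce the continuous representative $\tilde u\in\widetilde{\CD}(\CS)$. Then $|u_n(o)-\tilde u(o)|\to 0$ by the locally uniform convergence, and $\CS(u_n-\tilde u)\to 0$ follows by another Fatou argument applied to the $\CS$-Cauchy differences $u_n-u_m$. Hence $u_n\to\tilde u$ in $\|\cdot\|_o$, proving $\tilde u\in\overline{\CD(\CS)}^{\|\cdot\|_o}$ and identifying $u$ with $\tilde u$ inside $\widetilde{\CD}(\CS)$. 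For the converse, any $u\in\overline{\CD(\CS)}^{\|\cdot\|_o}$ is the $\|\cdot\|_o$-limit of some $(u_n)\subset\CD(\CS)$; then $\CS(u_n-u_m)\leq \|u_n-u_m\|_o^2\to 0$ gives the $\CS$-Cauchy property, while the local Sobolev embedding $\widetilde{\CD}(\CS)\hookrightarrow\CC(\FX_\Gamma)$ of Proposition 4.2 yields locally uniform (hence a.e.) convergence $u_n\to u$, so $u\in\CD(\CS)_e$. Closedness in $(\widetilde{\CD}(\CS),\|\cdot\|_o)$ is immediate from the definition as a norm closure. The proof for $\CS_o$ is formally identical, substituting $\CD(\CS)_c$ for $\CD(\CS)$ throughout, and the final identification $\overline{\CD(\CS)_c}^{\|\cdot\|_o}=\widetilde{\CD}(\CS)_o$ follows because a compactly supported function of finite energy is automatically in $L^2(\FX_\Gamma,\lambda_a)$, so $\widetilde{\CD}(\CS)_c=\CD(\CS)_c$.

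The main obstacle will be the first step, i.e., bridging the gap between the a.e.~notion of convergence used in the abstract definition of $\CD(\CS)_e$ and the pointwise value $|u(o)|^2$ entering the norm $\|\cdot\|_o$. Without irreducibility or the path Poincaré inequality one cannot upgrade $\CS$-Cauchy plus a.e.~convergence to locally uniform convergence; I would handle the jump-coupled components by invoking the standing irreducibility assumption to connect any two connected components through a finite chain of edges and jump bonds along which the difference $|u_n(x)-u_n(y)|^2$ is controlled by $\CS(u_n-u_m)$, exactly as in the proof of Proposition 4.2.
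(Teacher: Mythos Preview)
Your proposal is correct and follows essentially the same route as the paper. The paper's proof is much terser: it invokes Proposition 4.3 (the characterization of $\|\cdot\|_o$-convergence via pointwise convergence together with $\limsup_n\CS(u_n)\leq\CS(u)$) as a black box, whereas you unpack this by doing the Fatou-on-differences argument $\CS(u-u_n)\leq\liminf_m\CS(u_m-u_n)$ directly and by spelling out how the path/jump Poincar\'e estimate upgrades a.e.\ convergence to locally uniform convergence. Both arguments are the same in substance.

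One small wrinkle in your treatment of $\CS_o$: the approximating sequences in the definition of $(\CD(\CS_o))_e$ lie in $\CD_o(\CS)$, not in $\CD(\CS)_c$, so ``substituting $\CD(\CS)_c$ for $\CD(\CS)$ throughout'' literally yields $\overline{\CD_o(\CS)}^{\|\cdot\|_o}$. The paper then closes the gap with the observation that $\CD_c(\CS)$ is $\|\cdot\|_\CS$-dense in $\CD_o(\CS)$ by definition and that $\|\cdot\|_\CS$-convergence forces $\|\cdot\|_o$-convergence (via the local Sobolev embedding), whence the two closures coincide. Your identification $\widetilde{\CD}(\CS)_c=\CD(\CS)_c$ is correct and gives the final equality $\overline{\CD(\CS)_c}^{\|\cdot\|_o}=\widetilde{\CD}(\CS)_o$.
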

\begin{proof}
Let $(u_n)_n \subset \CD(\CS)$ converging to $u\in \CD(\CS)_e$, i.e. $u_n \to u$ a.e. and $(u_n)_n$ is $\CS$-Cauchy. Thus there is $o'\in \FX_\Gamma$ with $u_n(o')\to u(o')$. By the Cauchy property we have
\[\limsup_{n\to \infty} \CS(u_n) \leq \CS(u).\]
Thus $u_n\to u$ in $\|\cdot\|_o$. \\
Conversely $\|\cdot\|_o$ implies convergence on compacta, and thus pointwise convergence and by definition of this convergence we also get the Cauchy property.\\
Analogously we get that
\[(\CD(\CS)_o)_e = \overline{\CD(\CS)_o}^{\|\cdot\|_o},\]
and the claim follows from the density of $\CD_c(\CS)$ and as $\|\cdot\|_\CS$ convergence forces $\|\cdot\|_o$ convergence.
\end{proof}
Due to the theorem above we have the inclusions
\[\widetilde{\CD}(\CS)_o \subset \CD(\CS)_e \subset \widetilde{\CD}(\CS),\]
and only the space in the middle depends on the canonical measure. Note that in general these inclusions are strict. Thus we are left to study when some of these inclusions are equalities. By general theory of Dirichlet forms one gets that the set of all square integrable functions of the extended Dirichlet space coincide with the original Dirichlet space. This immediately implies the following.
\begin{coro}
The form $\CS$ on $L^2(\FX_\Gamma, \nu)$ is regular if and only if
\[\widetilde{\CD}(\CS)_o = \CD(\CS)_e.\]
\end{coro}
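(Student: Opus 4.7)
The plan is to use the characterizations of the two extended Dirichlet spaces from Theorem 4.6, together with the standard fact (explicitly quoted in the paragraph preceding the corollary) that intersecting an extended Dirichlet space with $L^2$ recovers the original domain. In the notation of the excerpt, Theorem 4.6 gives
\[
\CD(\CS)_e = \overline{\CD(\CS)}^{\|\cdot\|_o}, \qquad \widetilde{\CD}(\CS)_o = \overline{\CD(\CS)_c}^{\|\cdot\|_o},
\]
and we have the identities
\[
\CD(\CS)_e \cap L^2(\FX_\Gamma,\nu) = \CD(\CS), \qquad \widetilde{\CD}(\CS)_o \cap L^2(\FX_\Gamma,\nu) = \CD(\CS)_o,
\]
the second one coming from the fact that $\CS_o$ is itself a (regular) Dirichlet form whose extended space is $\widetilde{\CD}(\CS)_o$.

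For the forward direction, I would assume $\CS$ is regular, which by definition means $\CD(\CS) = \CD(\CS)_o = \overline{\CD(\CS)_c}^{\|\cdot\|_\CS}$. Taking $\|\cdot\|_o$-closures inside the Hilbert space $(\widetilde{\CD}(\CS),\|\cdot\|_o)$ and using that $\|\cdot\|_\CS$-convergence implies $\|\cdot\|_o$-convergence (a consequence of the local Sobolev-type embedding in Proposition 4.2, since $\|u\|_o^2 \leq |u(o)|^2 + \CS(u)$ and pointwise evaluation is controlled by $\|\cdot\|_\CS$), the $\|\cdot\|_\CS$-closure of $\CD(\CS)_c$ is contained in its $\|\cdot\|_o$-closure, hence
\[
\CD(\CS)_e = \overline{\CD(\CS)}^{\|\cdot\|_o} = \overline{\CD(\CS)_c}^{\|\cdot\|_o} = \widetilde{\CD}(\CS)_o.
\]

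For the converse, assume $\widetilde{\CD}(\CS)_o = \CD(\CS)_e$. Intersecting both sides with $L^2(\FX_\Gamma,\nu)$ and invoking the two identities above yields $\CD(\CS) = \CD(\CS)_o$, which is exactly the regularity of $\CS$.

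The only delicate point is the verification that the $\|\cdot\|_\CS$-closure is contained in the $\|\cdot\|_o$-closure; once this is in hand, the argument is essentially one line in each direction. I would treat this as the only nontrivial step: a $\|\cdot\|_\CS$-Cauchy sequence converges in $L^2$ and in energy, so passing to a subsequence gives pointwise a.e.\ convergence, and by Proposition 4.3 (characterization of $\|\cdot\|_o$-convergence via pointwise convergence plus $\limsup \CS(u_n)\leq \CS(u)$) one concludes $\|\cdot\|_o$-convergence to the same limit. No further obstacle is expected, since all the heavy lifting has already been done in Theorem 4.6, Proposition 4.3, and the general theory cited.
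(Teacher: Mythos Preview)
Your proposal is correct and follows essentially the same approach as the paper: the paper simply states that the general fact $\CD(\CS)_e \cap L^2 = \CD(\CS)$ (and its analogue for $\CS_o$) ``immediately implies'' the corollary, and you have spelled out exactly these two intersections together with the forward direction via $\|\cdot\|_o$-closures. The ``delicate point'' you flag, that $\|\cdot\|_\CS$-convergence forces $\|\cdot\|_o$-convergence, is already used verbatim in the proof of Theorem~4.6, so no additional work is needed there.
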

\section{Recurrence and transience}
We continue in this section with the investigations of the just introduced spaces with some potential theory. Basically, given a Dirichlet form $\CS$  one might ask whether for $u\in L^1(\FX_\Gamma,\nu)$, $u\geq 0$, the limit
\[Gu(x):= \lim_{N\to \infty} \int\limits_0^N P_t u(x) \ dt\]
is finite $\nu$-a.e. or not. It turns out, that the finiteness of this limit is independent of the particular choice of $u$. In the case of finiteness, the semigroup is called transient, whence if not it is called recurrent.
It turns out that transience of a Dirichlet space is equivalent to the extended Dirichlet space being a Hilbert space with respect to the energy functional $\CS$. It is immediate that this is in turn equivalent that the constant functions do not belong to the extended Dirichlet space. As we consider the forms $\CS$ and $\CS_o$ separately, this leads to the following definition.
\begin{definition}\index{recurrence}
The Dirichlet space $(\CS,\CD(\CS))$ is called (totally) recurrent if \[\widetilde{\CD}(\CS)_o = \widetilde{\CD}(\CS),\]
and $(\CS,\CD(\CS))$ is called $\nu$-recurrent if
\[\CD(\CS)_e = \widetilde{\CD}(\CS).\]
\end{definition}
As antonym to recurrence and $\nu$-recurrence we will use the expressions transient and $\nu$-transient.\medskip\\
We start with the following characterization of transience in terms of the Poincaré inequality.\index{transience}
\begin{prop}\index{inequality!Poincaré}
The Dirichlet space $(\CS, \CD(\CS))$ is transient if and only if the norms $\|\cdot\|_o$ and $\CS(\cdot)$ on the space $\widetilde{\CD}(\CS)_o$ are equivalent. In particular, the latter inequality means that Poincaré's inequality holds, i.e. for all $x\in \FX_\Gamma$ there is a constant $C_x>0$ such that for all $u\in \widetilde{\CD}(\CS)_o$ we have
\[|u(x)|^2 \leq C_x \CS(u).\]
\end{prop}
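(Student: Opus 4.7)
The plan is to work in the Hilbert space $(\widetilde{\CD}(\CS), \|\cdot\|_o)$ from Proposition 4.2, in which $\widetilde{\CD}(\CS)_o$ is a closed subspace by Theorem 4.6. Since $\CS(u) \leq \|u\|_o^2 = |u(o)|^2 + \CS(u)$ is automatic, the equivalence of the two norms on $\widetilde{\CD}(\CS)_o$ reduces to the existence of some $C>0$ with $|u(o)|^2 \leq C\,\CS(u)$ for all $u \in \widetilde{\CD}(\CS)_o$, which is exactly the Poincar\'e inequality at the base point. The validity of the pointwise inequality at any other $x \in \FX_\Gamma$ then follows by the same path/Cauchy--Schwarz argument used in the proof of Proposition 4.2, yielding the ``in particular'' clause.

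For the direction from transience to equivalent norms, I would use the orthogonal decomposition $\widetilde{\CD}(\CS) = \widetilde{\CD}(\CS)_o \oplus \CH$ with respect to the inner product $\langle u,v\rangle_o = u(o)v(o) + \CS(u,v)$. Transience says $\CH \neq \{0\}$, and I would first check that $\CH$ contains an element $h^*$ with $h^*(o) \neq 0$: if every $h \in \CH$ had $h(o)=0$, orthogonality would give $\CS(h,u)=0$ for all $u \in \widetilde{\CD}(\CS)_o$, and testing against compactly supported cutoffs of $h$ (which lie in $\widetilde{\CD}(\CS)_o$) would force $\CS(h)=0$, hence $h$ constant along the connected components and thus $h\equiv 0$ by $h(o)=0$ combined with the irreducibility assumption. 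Normalizing so that $h^*(o)=1$, the relation $\langle h^*,u\rangle_o = 0$ reads $\CS(h^*,u) = -u(o)$ for all $u \in \widetilde{\CD}(\CS)_o$, and the Cauchy--Schwarz inequality for the positive semidefinite form $\CS$ then yields $|u(o)|^2 = |\CS(h^*,u)|^2 \leq \CS(h^*)\,\CS(u)$, which is the Poincar\'e inequality with $C = \CS(h^*)$.

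Conversely, assume the Poincar\'e inequality, so that $(\widetilde{\CD}(\CS)_o, \sqrt{\CS})$ is complete and hence a Hilbert space under $\CS$. By the Riesz representation theorem applied to the bounded functional $u \mapsto u(o)$, there is a unique $g \in \widetilde{\CD}(\CS)_o$ with $\CS(g,u) = u(o)$ for all $u \in \widetilde{\CD}(\CS)_o$; this $g$ is the Green-type function at $o$, variationally characterized as the minimizer of $\CS$ among functions in $\widetilde{\CD}(\CS)_o$ with value one at $o$. To exhibit a witness $w \in \widetilde{\CD}(\CS) \setminus \widetilde{\CD}(\CS)_o$, I would consider an exhaustion $(K_n)$ of $\FX_\Gamma$ and capacitary potentials $e_n$ that equal $1$ on $K_n$ while minimizing $\CS$; these are uniformly bounded in $\|\cdot\|_o$, so they admit a weak subsequential limit $w$ in $\widetilde{\CD}(\CS)$ which converges pointwise by continuity of evaluation, and the monotonicity of $\CS(e_n)$ together with Proposition 4.3 allows one to conclude $w \notin \widetilde{\CD}(\CS)_o$. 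The main obstacle lies in this last step: producing the witness requires delicate control since the killing term may place $\mathbf{1}$ inside $\widetilde{\CD}(\CS)_o$, so the construction must separate cases depending on whether $\sum_{x\in\FV}k(x)$ is finite, and the weak-limit extraction must be coupled with the $\|\cdot\|_o$-closedness of $\widetilde{\CD}(\CS)_o$ to guarantee the strict inclusion.
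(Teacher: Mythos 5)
Your reduction of norm-equivalence to the single inequality $|u(o)|^2\leq C\,\CS(u)$ and the path argument for the ``in particular'' clause are fine, and the heart of your forward direction --- normalising $h^*\in\CH$ with $h^*(o)=1$, reading $\CS(h^*,u)=-u(o)$ off the orthogonality relation, and applying Cauchy--Schwarz --- is correct; it is essentially the monopole computation the paper uses for Proposition 4.9. The genuine gap is your justification of the subclaim that $\CH\neq\{0\}$ forces some $h^*\in\CH$ with $h^*(o)\neq 0$. The cutoff argument you propose for it is wrong: from $h(o)=0$ and $\CS(h,\phi)=0$ for all compactly supported $\phi$ one cannot conclude $\CS(h)=0$, because the cross term in $\CS(h,\eta_n h)$ need not vanish on an incomplete graph. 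Concretely, on $\FX_\Gamma=(-L,L)$ with both ends at finite canonical distance and $o=0$, the function $h(t)=t$ lies in $\CH$, vanishes at $o$, is harmonic, and has $\CS(h)=2L>0$; your argument would ``prove'' its energy is zero. The subclaim is nevertheless true, but the proof must go through the constant function: since $k\equiv 0$ (a standing assumption the paper only introduces \emph{after} this proposition, and without which the statement is delicate), one has $\CS(\mathbf{1})=0$ and $\CS(\mathbf{1},v)=0$ for all $v$, so writing $\mathbf{1}=u_0+h_0$ gives $1=\|\mathbf{1}\|_o^2=(1-h_0(o))^2+\CS(u_0)+h_0(o)^2+\CS(h_0)$; if $h_0(o)=0$ this forces $h_0=0$ and $\mathbf{1}\in\widetilde{\CD}(\CS)_o$, hence recurrence by the standard cutoff argument, contradicting transience. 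So the witness you need is simply the $\CH$-component of $\mathbf{1}$.

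For the converse you have built far too much machinery and, by your own admission, not finished it: no exhaustion, equilibrium potentials or weak-limit extraction is needed, because $\mathbf{1}$ itself is the witness. If the Poincar\'e inequality holds and $\mathbf{1}$ were in $\widetilde{\CD}(\CS)_o$, then $1=|\mathbf{1}(o)|^2\leq C\,\CS(\mathbf{1})=0$, a contradiction; hence $\mathbf{1}\in\widetilde{\CD}(\CS)\setminus\widetilde{\CD}(\CS)_o$ and the form is transient. In fact both directions collapse to the observation that the optimal Poincar\'e constant is $\dist_{\|\cdot\|_o}(\mathbf{1},\widetilde{\CD}(\CS)_o)^{-2}$: for $u\in\widetilde{\CD}(\CS)_o$ with $u(o)\neq 0$ one computes $\|\mathbf{1}-u/u(o)\|_o^2=\CS(u)/u(o)^2$, so the inequality holds with a finite constant precisely when $\mathbf{1}\notin\widetilde{\CD}(\CS)_o$, i.e.\ precisely in the transient case --- this is presumably what the paper means by ``the proof is immediate.'' Your worry that ``the killing term may place $\mathbf{1}$ inside $\widetilde{\CD}(\CS)_o$'' is resolved not by a more elaborate witness construction but by the paper's convention, stated immediately after the proposition, that a nontrivial killing term makes the form transient and is thereafter excluded.
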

The proof is immediate. It is clear that an analogous version could be formed for $\nu$-transience. One further observation is that whenever a non-trivial killing term appears, the form in question is automatically transient. Thus we assume from now on that the killing weights $k$ of $\CS$ are identically zero.
\begin{assumption}
The killing term vanishes, i.e. $k\equiv 0$.
\end{assumption}
As in the case of non-regularity, we can decompose the space into a transient part and a space of harmonic, respectively almost harmonic functions. Given a graph Dirichlet form we have the orthogonal decomposition
\[\widetilde{\CD}(\CS) = \widetilde{\CD}(\CS)_o \oplus \CM \]
where $\CM$ is the space of all monopoles of finite energy in the vertex $o$. A monopole in a point $x\in \FX_\Gamma$ is a function $u\in \CD(\CS)_\loc$ such that
\[\CS(u,\phi)= \phi(x)\]
for all $\phi\in \CD(\CS)_c$. For discrete Dirichlet forms it was shown in \cite{JP,Sch-11} that transience is equivalent to the existence of monopoles of finite energy. The same holds in our setting.
\begin{prop}\index{monopole of finite energy}
The form $\CS_o$ is transient if and only if there exists a monopole of finite energy.
\end{prop}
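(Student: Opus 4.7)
The plan is to deduce this from the Poincaré-type characterization of transience given in Proposition 4.9, applied to $\CS_o$: transience is equivalent to the norms $\|\cdot\|_o$ and $\CS(\cdot)^{1/2}$ being equivalent on $\widetilde{\CD}(\CS)_o$, i.e.\ to the existence of a constant $C_x>0$ with
\[|\phi(x)|^2 \leq C_x\,\CS(\phi) \quad \text{for all } \phi \in \widetilde{\CD}(\CS)_o.\]

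For the direction \emph{transience implies existence of a monopole}, I would argue as follows. Assume $\CS_o$ is transient. Then $(\widetilde{\CD}(\CS)_o,\CS)$ is a Hilbert space, since $\CS$ is equivalent to $\|\cdot\|_o$ on this space by Proposition 4.9, and $\widetilde{\CD}(\CS)_o$ is closed in $(\widetilde{\CD}(\CS),\|\cdot\|_o)$ by Theorem 4.6. The Poincaré inequality above says that evaluation at the reference point $o$ (which may be chosen equal to any prescribed $x\in\FX_\Gamma$) is a bounded linear functional on $(\widetilde{\CD}(\CS)_o,\CS)$. The Riesz representation theorem yields some $v\in \widetilde{\CD}(\CS)_o$ with $\CS(v,\phi)=\phi(x)$ for all $\phi\in \widetilde{\CD}(\CS)_o$, in particular for all $\phi\in \CD(\CS)_c$. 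Since $\widetilde{\CD}(\CS)\subset \CD(\CS)^*_\loc\subset \CD(\CS)_\loc$, the function $v$ is a monopole at $x$ of finite energy.

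For the converse, suppose a monopole $v$ of finite energy exists at some point $x\in\FX_\Gamma$, so $v\in \CD(\CS)_\loc$ with $\CS(v)<\infty$ and $\CS(v,\phi)=\phi(x)$ for all $\phi\in \CD(\CS)_c$. Applying the Cauchy-Schwarz inequality in $\CS$ gives
\[|\phi(x)|^2 = |\CS(v,\phi)|^2 \leq \CS(v)\,\CS(\phi)\]
for every $\phi\in \CD(\CS)_c$. I would then extend this inequality to all $\phi\in \widetilde{\CD}(\CS)_o$ by approximation: any such $\phi$ is a $\|\cdot\|_o$-limit of a sequence $(\phi_n)\subset \CD(\CS)_c$, the right-hand side is controlled along the sequence by the uniform bound on $\CS(\phi_n)$, and the left-hand side passes to the limit because the local Sobolev embedding from Proposition 4.2 gives pointwise convergence $\phi_n(x)\to \phi(x)$. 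This is the required Poincaré inequality, so Proposition 4.9 yields transience of $\CS_o$.

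The main obstacle I expect is the second direction: one must be a little careful that the pairing $\CS(v,\phi)$ is actually well-defined and extends continuously beyond $\CD(\CS)_c$, so that the Cauchy--Schwarz step is legitimate. Since $v$ has finite energy but need not lie in $L^2$, the extension is in the framework of the pairing for functions locally in the domain (as in Theorem 3.3), where $\CS(u,\phi)$ is well defined provided $\phi$ has compact support; the passage to limits in $\|\cdot\|_o$ should then be routine once one checks that the sequence $(\CS(v,\phi_n))$ is Cauchy, which follows from $|\CS(v,\phi_n-\phi_m)|\le \CS(v)^{1/2}\CS(\phi_n-\phi_m)^{1/2}$.
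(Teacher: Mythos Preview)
Your proof is correct and somewhat more direct than the paper's. The paper argues via the orthogonal decomposition $\widetilde{\CD}(\CS) = \widetilde{\CD}(\CS)_o \oplus \CM$ stated just before the proposition: transience of $\CS_o$ means $\widetilde{\CD}(\CS)_o \subsetneq \widetilde{\CD}(\CS)$, so the complement $\CM$ is nontrivial, and any nonzero element of $\CM$ (suitably normalized) is a monopole at $o$; conversely, a monopole at some $x$ gives a nontrivial $\CM$ for the equivalent norm $\|\cdot\|_x$, hence transience. You bypass the decomposition and work instead through Proposition~4.9: Riesz representation on the Hilbert space $(\widetilde{\CD}(\CS)_o,\CS)$ produces the monopole in the forward direction, and Cauchy--Schwarz converts the monopole relation into a Poincar\'e inequality in the backward direction. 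The two arguments are dual to each other (the orthogonal projection onto $\CM$ \emph{is} the Riesz representative of point evaluation), but yours is more self-contained, while the paper's makes the geometric role of $\CM$ visible.

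One simplification: you need not invoke Theorem~3.3 or worry about extending the pairing. Since $v$ has finite energy it lies in $\widetilde{\CD}(\CS)$, and $\CS$ is a genuine semi-inner product on all of $\widetilde{\CD}(\CS)$; the Cauchy--Schwarz inequality $|\CS(v,\phi)|^2 \le \CS(v)\,\CS(\phi)$ therefore holds for every $\phi\in\widetilde{\CD}(\CS)$ with no further justification. The approximation step to pass from $\CD(\CS)_c$ to $\widetilde{\CD}(\CS)_o$ is then immediate, exactly as you describe.
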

\begin{proof}
Assume that $\CS_o$ is transient, using the decomposition
\[\widetilde{\CD}(\CS) = \widetilde{\CD}(\CS)_o \oplus \CM, \]
then a function $u\in \CM$ satisfies
\[ u(o) \phi(o) + \CS(u,\phi)=0\]
for all $\phi\in \CD_c(\CS)$. Thus we have a monopole of finite energy as $\CM$ is nontrivial. Assume conversely that a monopole of finite energy exists for some $x\in \FV$, then we can normalize it such that $u(x)=1$ and we have
\[\phi(x) + \CS(u,\phi)=0.\]
Since $\|\cdot\|_o$ and $\|\cdot\|_x$ are equivalent also the space of monopoles in the point $o$ is non-trivial, which yields the claim.
\end{proof}
We turn now to the study of harmonic and superharmonic functions in the extended Dirichlet space. We closely follow \cite{S-94}.\index{harmonic functions}
\begin{lemma}
The space of harmonic functions $\CH_0$ is closed in $\widetilde{\CD}(\CS)$ and for $u\in \CH_0$ and $v\in \widetilde{\CD}(\CS)_o$ we have
\[\CS(u,v)=0.\]
\end{lemma}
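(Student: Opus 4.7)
The plan is to exploit two facts already proved in the excerpt: (a) convergence in $\|\cdot\|_o$ implies both pointwise convergence (by the local Sobolev-type embedding of Proposition 4.2) and convergence of $\CS(\cdot)^{1/2}$ (since $\CS(u)\le \|u\|_o^2$); and (b) the characterization $\widetilde{\CD}(\CS)_o=\overline{\CD(\CS)_c}^{\|\cdot\|_o}$ from Theorem 4.6. Throughout I will treat $\CS$ as a positive-semidefinite symmetric bilinear form on $\widetilde{\CD}(\CS)$, so that Cauchy--Schwarz gives
\[
|\CS(u,v)|\le \CS(u)^{1/2}\CS(v)^{1/2}
\]
for all $u,v\in\widetilde{\CD}(\CS)$. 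This is the single tool that drives both claims.

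For the closedness of $\CH_0$, I first fix the working definition: $u\in\CH_0$ iff $u\in\widetilde{\CD}(\CS)$ and $\CS(u,\phi)=0$ for every $\phi\in\CD(\CS)_c$. Take $(u_n)\subset\CH_0$ with $u_n\to u$ in $\|\cdot\|_o$. Then $u\in\widetilde{\CD}(\CS)$ (the space is complete by Proposition 4.2), and $\CS(u-u_n)\to 0$. For any fixed $\phi\in\CD(\CS)_c\subset\widetilde{\CD}(\CS)$, Cauchy--Schwarz yields
\[
|\CS(u,\phi)-\CS(u_n,\phi)|=|\CS(u-u_n,\phi)|\le \CS(u-u_n)^{1/2}\CS(\phi)^{1/2}\longrightarrow 0.
\]
Since each $\CS(u_n,\phi)=0$, we conclude $\CS(u,\phi)=0$, so $u\in\CH_0$.

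For the orthogonality, let $u\in\CH_0$ and $v\in\widetilde{\CD}(\CS)_o$. By Theorem 4.6 there exists a sequence $(v_n)\subset \CD(\CS)_c$ with $v_n\to v$ in $\|\cdot\|_o$, hence $\CS(v-v_n)\to 0$. The harmonicity of $u$ and bilinearity give $\CS(u,v_n)=0$ for every $n$, and another application of Cauchy--Schwarz,
\[
|\CS(u,v)-\CS(u,v_n)|\le \CS(u)^{1/2}\CS(v-v_n)^{1/2}\longrightarrow 0,
\]
forces $\CS(u,v)=0$.

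The only delicate point, and the one to be verified carefully, is that the bilinear extension of $\CS$ via polarization really is well defined and continuous with respect to $\|\cdot\|_o$ on all of $\widetilde{\CD}(\CS)$; this is not a Hilbert-space inner product because $\CS$ vanishes on constants, but it is a positive semidefinite symmetric form, which is exactly what Cauchy--Schwarz requires. Once this is in place, both assertions reduce to passing to the limit, and no further structure of the graph or the individual weights enters the argument.
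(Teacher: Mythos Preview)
Your proof is correct and follows essentially the same approach as the paper: both arguments use bilinearity together with the Cauchy--Schwarz inequality for the semidefinite form $\CS$ to pass to the limit, first in the $u$-slot (closedness) and then in the $v$-slot via approximation by $\CD(\CS)_c$ (orthogonality). Your write-up is in fact slightly more careful than the paper's, which contains a minor typo in the estimate (it writes $\CS(v)$ where $\CS(v)^{1/2}$ is meant) and leaves the second claim to the phrase ``by continuity of the functional $\CS$''.
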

\begin{proof}
To show the closedness let $(u_n)_n\subset \CH_0$ and $u \in \widetilde{\CD}(\CS)$ such that $u_n\to u$ in $\|\cdot\|_o$. Then we have by definition of harmonicity
\[\CS(u_n, v)=0\]
for all $v\in \widetilde{\CD}(\CS)$ with compact support. In particular we have
\[\CS(u,v)= \CS(u-u_n,v) + \CS(u_n,v)= \CS(u-u_n,v)\leq \|u-u_n\|_o\CS(v)\]
and by assumption on $(u_n)_n$ the left hand side is arbitrarily small and hence $u$ is harmonic. The second claim follows by extending this equality to all $v\in \widetilde{\CD}(\CS)_o$ by continuity of the functional $\CS$.
\end{proof}
We show now an $\CS$-orthogonal decomposition. This is also known as the Royden decomposition.\index{theorem!Royden decomposition}
\begin{theorem}
For all $u\in \widetilde{\CD}(\CS)$ there exists unique $u_o\in \widetilde{\CD}(\CS)_o$ and $u_h \in \CH_0$ such that
\[u= u_o + u_h\]
and
\[\CS(u)= \CS(u_o) + \CS(u_h).\]
\end{theorem}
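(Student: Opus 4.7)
The plan is to obtain the decomposition by orthogonal projection onto $\widetilde{\CD}(\CS)_o$ using $\CS$ itself as the bilinear form. The key obstacle is that $\CS$ is only a semi-inner product on the ambient space $\widetilde{\CD}(\CS)$, since it annihilates the constants; a direct Hilbert-space projection on the ambient space is thus not available. I will split on the recurrence/transience dichotomy. If $(\CS,\CD(\CS))$ is recurrent, the definition gives $\widetilde{\CD}(\CS)_o = \widetilde{\CD}(\CS)$ and the decomposition is trivial with $u_o := u$, $u_h := 0$. Assume henceforth transience.

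In that case Proposition 4.8 guarantees that $\sqrt{\CS}$ and $\|\cdot\|_o$ are equivalent on $\widetilde{\CD}(\CS)_o$, which together with Proposition 4.2 turns $(\widetilde{\CD}(\CS)_o, \CS(\cdot,\cdot))$ into a genuine Hilbert space. For a fixed $u \in \widetilde{\CD}(\CS)$ the functional $v \mapsto \CS(u,v)$ is continuous on this space by the Cauchy--Schwarz inequality $|\CS(u,v)| \le \CS(u)^{1/2}\CS(v)^{1/2}$, so the Riesz representation theorem yields a unique $u_o \in \widetilde{\CD}(\CS)_o$ with $\CS(u,v) = \CS(u_o,v)$ for every $v \in \widetilde{\CD}(\CS)_o$. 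Setting $u_h := u - u_o$, one immediately has $\CS(u_h,v) = 0$ for all $v \in \widetilde{\CD}(\CS)_o$; specialising to test functions $\phi \in \CD(\CS)_c$ shows that $u_h \in \CH_0$. The identity $\CS(u) = \CS(u_o) + \CS(u_h)$ is then Pythagoras in the $\CS$-Hilbert space, or, equivalently, follows from the preceding lemma applied to the pair $u_o \in \widetilde{\CD}(\CS)_o$ and $u_h \in \CH_0$.

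Uniqueness will follow by a short closing argument: given two decompositions $u = u_o + u_h = u_o' + u_h'$, the difference $f := u_o - u_o' = u_h' - u_h$ lies in $\widetilde{\CD}(\CS)_o \cap \CH_0$, so the preceding lemma forces $\CS(f) = 0$, and Proposition 4.8 upgrades this (in the transient case) to $f = 0$ via the pointwise estimate $|f(x)|^2 \le C_x \CS(f)$. The entire difficulty of the proof is thus concentrated in restoring coercivity of $\CS$; Proposition 4.8 does exactly this on $\widetilde{\CD}(\CS)_o$ precisely when transience holds, and this is the one substantive step, everything else reducing to routine Hilbert-space linear algebra.
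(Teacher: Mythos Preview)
Your proof is correct and rests on the same idea as the paper's: once transience upgrades $\CS$ to a genuine inner product on $\widetilde{\CD}(\CS)_o$, one projects $u$ onto this subspace $\CS$-orthogonally and verifies that the remainder is harmonic; the recurrent case and the uniqueness argument are handled identically in both. The only difference is the packaging of the projection step. The paper constructs $u_o$ as the nearest point in $\widetilde{\CD}(\CS)_o$ by taking a minimizing sequence for $\CS(u-v)$, showing it is $\CS$-Cauchy, and then reading off harmonicity of $u_h=u-u_o$ from the first-variation condition $\frac{d}{dt}\CS(u_h+tv)\big|_{t=0}=0$. You instead invoke the Riesz representation theorem for the bounded linear functional $v\mapsto \CS(u,v)$ on the Hilbert space $(\widetilde{\CD}(\CS)_o,\CS)$, which is somewhat cleaner and yields the orthogonality relation $\CS(u_h,v)=0$ for all $v\in\widetilde{\CD}(\CS)_o$ directly, without the variational detour. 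Both routes depend on exactly the same substantive input, namely the Poincar\'e inequality on $\widetilde{\CD}(\CS)_o$ furnished by transience. (A minor remark: the Poincar\'e characterization you cite is the proposition immediately \emph{after} the definition of recurrence, not the definition itself.)
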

\begin{proof}
We assume that $\CS$ is transient as in the recurrent case $\CH_0$ is trivial. Let $d^2 = \inf_{v\in \widetilde{\CD}(\CS)_o} \CS(u-v)$ and $(v_n)_n \subset \widetilde{\CD}(\CS)_o$ such that
\[\CS(u-v_n) \to d^2.\]
As $\CS$ is a positive quadratic functional the sequence $(v_n)_n$ is an $\CS$-Cauchy sequence and by transience and the Poincaré inequality which therefore holds for functions in $\widetilde{\CD}(\CS)_o$ this sequence is also pointwise a Cauchy sequence. As $\widetilde{\CD}(\CS)_o$ is closed there exists a minimizer $u_o\in \widetilde{\CD}(\CS)_o$ with $v_n\to u_o$. We set $u_h = u-u_o$ and need to show that $u_h$ is harmonic. To this extent we know that for arbitrary $v\in\widetilde{\CD}(\CS)$ with compact support the function
\[t\mapsto \CS(u_h + t v)\]
has a minimum in $t=0$. As
\[\CS(u_h + tv) = \CS(u) + 2t \CS(u_h, v) + t^2 \CS(v)\]
the derivative of the above function in $t=0$ equals $2\CS(u,v)$, which has to be zero by the minimum property. As $v\in \widetilde{\CD}(\CS)_c$ was arbitrary we have shown that $u_h$ is harmonic.\\
To show the uniqueness of the decomposition assume there are $u_o'\in \widetilde{\CD}(\CS)_o$ and $u_h' \in \CH_0$ such that
\[u=u_o + u_h = u_o' + u_h'.\]
This gives that $u_o-u_o' = u_h - u_h'$ is harmonic and in $\widetilde{\CD}(\CS)_o$ which implies by transience that it is zero.
\end{proof}
Next we want to prove a Riesz type decomposition for superharmonic functions. To do so we recall some important facts from the potential theory of regular Dirichlet forms, see chapter 2 from \cite{FOT-11}. An analogue of Theorem 2.2.1 for transient Dirichlet forms from \cite{FOT-11} says that a function $u\in \widetilde{\CD}(\CS)_o$ is superharmonic, i.e.
\[\CS(u,\phi)\geq 0, \quad \forall \phi\in \CD(\CS)_c, \phi \geq 0\]
if and only if there exists a positive Radon measure $\mu$ on $\FX_\Gamma$ such that
\[\CS(u,\phi)= \mu(\phi).\]
In particular such a $u$ is nonnegative. One then sets $U_0\mu = u$, where $U_0$ is the potential operator of zeroth order. In this equivalence one direction is obvious, whereas the other one follows from the Riesz representation theorem for positive functionals on $\CC(\FX_\Gamma)$. Together with the previous lemma we get the following theorem.
\begin{theorem}\index{theorem!Riesz decomposition}
Let $\CS$ be transient. Then for all non-negative superharmonic functions $u\in \widetilde{\CD}(\CS)$ there exists $u_o\in \widetilde{\CD}(\CS)_o$ and $u_h\in \CH_0$ such that
\begin{itemize}
\item[(i)] $u_o$ is a nonnegative superharmonic function,
\item[(ii)] there exists a nonnegative Radon measure $\mu$ such that $u_o= U_0 \mu$
\item[(iii)] $u_h$ is a nonnegative harmonic function.
\end{itemize}
\end{theorem}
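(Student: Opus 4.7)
The strategy is to apply the Royden decomposition of Theorem 4.10 to the given nonnegative superharmonic $u\in\widetilde{\CD}(\CS)$ and then refine the two pieces. Writing $u = u_o + u_h$ with $u_o\in\widetilde{\CD}(\CS)_o$ and $u_h\in\CH_0$ already secures the harmonicity assertion in (iii); what remains is to verify that $u_o$ is superharmonic (first half of (i)), that $u_o$ is a zeroth order potential (assertion (ii)), and the nonnegativity of both $u_o$ and $u_h$.

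For $u_o$, the plan is to observe that for every nonnegative $\phi\in\CD(\CS)_c$ harmonicity of $u_h$ gives $\CS(u_h,\phi)=0$, so
\[
\CS(u_o,\phi) \;=\; \CS(u,\phi)-\CS(u_h,\phi) \;=\; \CS(u,\phi) \;\geq\; 0
\]
by superharmonicity of $u$. Hence $u_o$ is a superharmonic element of $\widetilde{\CD}(\CS)_o$, and the analogue of Theorem 2.2.1 in \cite{FOT-11} that is recalled immediately before the statement produces a positive Radon measure $\mu$ with $\CS(u_o,\phi)=\mu(\phi)$ on $\CD(\CS)_c$ and $u_o = U_0\mu$; the same reference delivers $u_o\geq 0$.

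The remaining point, and in my view the main obstacle, is showing $u_h\geq 0$. The cleanest route is the domination principle: among all nonnegative $\widetilde{\CD}(\CS)$-superharmonic functions whose charge equals $\mu$, the potential $U_0\mu = u_o$ is pointwise smallest. Since $u$ is nonnegative superharmonic and, by the computation above, realizes the same functional $\phi\mapsto \mu(\phi)$ on $\phi\in\CD(\CS)_c$, $\phi\geq 0$, this yields $u\geq u_o$, whence $u_h = u-u_o\geq 0$. A Hilbert-space alternative proceeds via $w := u_h^-$: one has $0\leq w\leq u_o$ because $u\geq 0$ and $u_o\geq 0$, and the ideal property of the extended Dirichlet space (approximate $u_o$ in $\widetilde{\CD}(\CS)_o$ by elements of $\CD(\CS)_c$ via Theorem 4.6 and apply normal contractions) should give $w\in\widetilde{\CD}(\CS)_o$; the $\CS$-orthogonality of the Royden splitting combined with transience, i.e. $\widetilde{\CD}(\CS)_o\cap\CH_0=\{0\}$, then forces $w=0$. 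The substantive work therefore lies in verifying the domination principle, or equivalently the ideal property, inside the present graph framework.
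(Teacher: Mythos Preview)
Your treatment of (i) and (ii) matches the paper's: decompose $u=u_o+u_h$ via the Royden decomposition, note that $\CS(u_o,\phi)=\CS(u,\phi)\geq 0$ for nonnegative $\phi\in\CD(\CS)_c$ since $u_h$ is harmonic, and then invoke the analogue of \cite[Thm.~2.2.1]{FOT-11} to get $u_o=U_0\mu\geq 0$.

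For (iii) your second route is essentially the paper's, but your final sentence is not correct as written. Knowing $w=u_h^-\in\widetilde{\CD}(\CS)_o$ and then appealing to $\widetilde{\CD}(\CS)_o\cap\CH_0=\{0\}$ does \emph{not} give $w=0$, because $w$ is not harmonic. What the paper does (and what your argument needs) is to feed the ideal property back into the \emph{minimizer} characterization of $u_o$ from the proof of the Royden decomposition: since $u\wedge u_o = u_o - w \in \widetilde{\CD}(\CS)_o$ and
\[
\CS\bigl(u-(u\wedge u_o)\bigr)=\CS\bigl((u-u_o)_+\bigr)=\CS(u_h^+)\leq \CS(u_h)=\CS(u-u_o),
\]
the function $u\wedge u_o$ attains the minimum of $v\mapsto\CS(u-v)$ over $\widetilde{\CD}(\CS)_o$. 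Transience makes $\CS$ an inner product on $\widetilde{\CD}(\CS)_o$, so the minimizer is unique and $u\wedge u_o=u_o$, i.e.\ $u_o\leq u$ and $u_h\geq 0$. Your domination-principle alternative would also work, but it amounts to re-proving this very inequality, so the minimizer argument is the shorter path.
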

\begin{proof}
By the Royden decomposition we get $u_o\in \widetilde{\CD}(\CS)_o$ and $u_h\in \CH_0$ with $u=u_o+u_h$. In particular $u_o$ is a superharmonic function of the regular Dirichlet form $\CS_o$ and therefore nonnegative. In particular this yields the existence of the Radon measure $\mu$. To show the last part, note that $u \wedge u_o$ belongs to $\widetilde{\CD}(\CS)_o$ and thus we have
\[\CS(u- u\wedge u_o) =\CS( (u-u_o)_+) \leq \CS(u-u_o).\]
As $u_o\in \widetilde{\CD}(\CS)_o$ is the minimizer of $\CS(u-v)$ on $\widetilde{\CD}(\CS)$ we obtain that $u\wedge u_o = u_o$ and therefore
\[0\leq u-u_o = u_h\]
yielding the nonnegativity of $u_h$.
\end{proof}
The theorem has two interesting corollaries.
\begin{coro}
For all $u\in \CH_0$ there exist nonnegative $u_+, u_- \in \CH_0$ such that
\[u = u_+ - u_-.\]
\end{coro}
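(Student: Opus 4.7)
The plan is to apply the Royden decomposition from Theorem~4.13 to the positive and negative parts of $u$ and combine the results using uniqueness of the decomposition. We may assume the form is transient, since in the recurrent case $\widetilde{\CD}(\CS)_o=\widetilde{\CD}(\CS)$, forcing $\CH_0=\{0\}$ and trivializing the statement. To start, observe that both $u^+=u\vee 0$ and $u^-=(-u)\vee 0$ belong to $\widetilde{\CD}(\CS)$: by Theorem~4.4 this space coincides with the reflected Dirichlet space and is therefore stable under normal contractions vanishing at $0$, and the maps $x\mapsto x^{\pm}$ are such contractions. Applying Theorem~4.13 to $u^+$ and $u^-$ separately yields $\CS$-orthogonal decompositions
\[u^+=q_++h_+,\qquad u^-=q_-+h_-,\]
with $q_\pm\in\widetilde{\CD}(\CS)_o$ and $h_\pm\in\CH_0$.

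The main step is to show $h_\pm\ge 0$; this is in direct analogy with the final paragraph of the proof of Theorem~4.14. Since $u^\pm\ge 0$, the function $q_\pm\wedge u^\pm$ lies in $\widetilde{\CD}(\CS)_o$ (see below), and the pointwise identity $u^\pm-q_\pm\wedge u^\pm=(u^\pm-q_\pm)^+$ gives
\[\CS\bigl(u^\pm-q_\pm\wedge u^\pm\bigr)=\CS\bigl((u^\pm-q_\pm)^+\bigr)\le \CS(u^\pm-q_\pm).\]
Since $q_\pm$ is the unique minimizer of $w\mapsto\CS(u^\pm-w)$ over $w\in\widetilde{\CD}(\CS)_o$ produced by Theorem~4.13, this forces $q_\pm\wedge u^\pm=q_\pm$; equivalently $q_\pm\le u^\pm$, whence $h_\pm=u^\pm-q_\pm\ge 0$. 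To finish, I write
\[u=u^+-u^-=(q_+-q_-)+(h_+-h_-),\]
which is of the form $\widetilde{\CD}(\CS)_o\oplus\CH_0$. By uniqueness of the Royden decomposition applied to $u\in\CH_0$ (itself trivially represented as $0+u$), we conclude $q_+=q_-$ and $u=h_+-h_-$; setting $u_+:=h_+$ and $u_-:=h_-$ finishes the proof.

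The principal obstacle is the verification $q_\pm\wedge u^\pm\in\widetilde{\CD}(\CS)_o$. Writing $q_\pm\wedge u^\pm=q_\pm-(q_\pm-u^\pm)^+$, this reduces to showing $(q_\pm-u^\pm)^+\in\widetilde{\CD}(\CS)_o$. A pointwise estimate yields $0\le(q_\pm-u^\pm)^+\le q_\pm^+$, and $q_\pm^+$ is a normal contraction of $q_\pm\in\widetilde{\CD}(\CS)_o$, hence belongs to $\widetilde{\CD}(\CS)_o$. The needed conclusion therefore follows from a lattice ideal property of $\widetilde{\CD}(\CS)_o$ inside $\widetilde{\CD}(\CS)$ -- namely, that $0\le g\le h$ with $h\in\widetilde{\CD}(\CS)_o$ and $g\in\widetilde{\CD}(\CS)$ implies $g\in\widetilde{\CD}(\CS)_o$ -- which is a standard fact in Dirichlet form theory but is not developed explicitly in the paper, and whose verification by approximation in the $\|\cdot\|_o$-norm is the main technical effort.
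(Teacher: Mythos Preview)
Your approach is essentially the same as the paper's: decompose the positive and negative parts of $u$ via the Royden decomposition, argue that the resulting harmonic components are nonnegative, and then recombine using uniqueness to get $u=u_+-u_-$. The paper phrases this slightly differently (working with $-(u\vee 0)$ and $u\wedge 0$, which are superharmonic, and invoking the Riesz-type decomposition of Theorem~4.14), but the logical content is identical.

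Your concern about $q_\pm\wedge u^\pm\in\widetilde{\CD}(\CS)_o$ is legitimate, but note that the paper's own proof of Theorem~4.14 makes exactly the same assertion (``note that $u\wedge u_o$ belongs to $\widetilde{\CD}(\CS)_o$'') without justification. So you are not introducing a new gap; you are simply being more careful about a point the paper takes for granted. The order-ideal property you invoke is indeed standard for extended Dirichlet spaces (it follows, for instance, from the fact that $\widetilde{\CD}(\CS)_o$ coincides with the extended space $(\CD(\CS_o))_e$ of the regular form $\CS_o$, together with the known ideal property of extended Dirichlet spaces in \cite{FOT-11}), and it is precisely what is being used implicitly in the paper.
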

\begin{proof}
In the case of recurrence nothing is to proof, as all harmonic functions are constant.\\
As $u$ is harmonic, both functions $-u \vee 0$ and $ u \wedge 0$ are superharmonic and the Royden decomposition gives the existence of nonnegative $v_+,v_- \in \widetilde{\CD}(\CS)_o$ and $u_+,u_- \in \CH_0$ such that
\[ -u \vee 0 = v_- - u_-,\quad  u\wedge 0 = v_+ - u_+.\]
As we have
\[u= - (u\wedge 0 - u \vee 0) = -v_-- v_+ +u_+- u_-\]
we get by the uniqueness of the Royden decomposition that $v_+ = v_-$ and $u = u_+ - u_-$.
\end{proof}
The next one is a version of Virtanen's Theorem.
\begin{coro}
For all $h\in \CH_0$ there exists a sequence of bounded functions $(h_n)_n\in \CH_0$ such that $\|h_n-h\|_o \to 0$ as $n\to \infty$.
\end{coro}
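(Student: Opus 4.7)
The plan is to approximate $h$ by the harmonic parts of the truncations $h \wedge n$, obtained via the Riesz decomposition from Theorem 4.12. The key point is that after reducing to the non-negative case, the harmonic remainder in the Riesz decomposition of $h \wedge n$ is automatically bounded by $n$.

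First, I would reduce to the case $h \geq 0$. By the preceding Corollary, write $h = h_+ - h_-$ with $h_\pm \in \CH_0$ non-negative; if bounded harmonic approximants $g_n^\pm \in \CH_0$ with $\|h_\pm - g_n^\pm\|_o \to 0$ are produced for each summand, then $g_n := g_n^+ - g_n^-$ does the job for $h$. Moreover, in the recurrent case Lemma 4.7 together with $\widetilde{\CD}(\CS)_o = \widetilde{\CD}(\CS)$ forces $\CS(h) = 0$ for every $h \in \CH_0$, whence by irreducibility (and $k\equiv 0$, so constants are harmonic) every element of $\CH_0$ is already constant and in particular bounded. I may therefore assume $\CS$ is transient and $h \in \CH_0$ with $h \geq 0$.

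Next, for each $n \in \IN$ set $u_n := h \wedge n$. By Markovianness of $\CS$ on the reflected Dirichlet space, $u_n \in \widetilde{\CD}(\CS)$ with $\CS(u_n) \leq \CS(h)$, and $u_n$ is non-negative with $u_n \leq n$. Since both $h$ and the constant $n$ are harmonic, $u_n$ is superharmonic (minimum of superharmonic functions in our Dirichlet form; equivalently, $(h-n)_+$ is subharmonic, which can also be checked directly on the diffusion part using $(h\wedge n)' = h'\mathbf{1}_{\{h<n\}}$ together with $-h''=0$ distributionally on edges, and on the jump part via the inequality $((h\wedge n)(x)-(h\wedge n)(y))(\phi(x)-\phi(y))\leq (h(x)-h(y))(\phi(x)-\phi(y))$ for $\phi\geq 0$). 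Applying Theorem 4.12 I obtain the Riesz decomposition $u_n = v_n + g_n$ with $v_n \in \widetilde{\CD}(\CS)_o$ non-negative, $g_n \in \CH_0$ non-negative. Since $v_n + g_n = u_n \leq n$, one has $0 \leq g_n \leq n$, so $g_n$ is a bounded harmonic function of finite energy.

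It remains to show $\|h - g_n\|_o \to 0$. The identity
\[ h - u_n = -v_n + (h - g_n) \]
is the Royden decomposition of $h-u_n$, since $-v_n \in \widetilde{\CD}(\CS)_o$ and $h - g_n \in \CH_0$; by the orthogonality in Theorem 4.11 this gives
\[ \CS(h - u_n) = \CS(v_n) + \CS(h - g_n). \]
By Proposition 4.3, $u_n \to h$ in $\|\cdot\|_o$ (pointwise convergence is immediate, and $\CS(u_n) \leq \CS(h)$ furnishes the $\limsup$ bound), so $\CS(h - u_n) \to 0$, forcing both $\CS(v_n) \to 0$ and $\CS(h - g_n) \to 0$. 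In the transient case Proposition 4.9 provides a constant $C_o$ with $|v_n(o)|^2 \leq C_o \CS(v_n) \to 0$, and $(h - n)_+(o) = 0$ once $n \geq h(o)$, so from $h - g_n = (h - n)_+ + v_n$ one reads $(h-g_n)(o) \to 0$. Together $\|h - g_n\|_o^2 = (h-g_n)(o)^2 + \CS(h-g_n) \to 0$.

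\textbf{Main obstacle.} The technical heart is the superharmonicity of $u_n = h \wedge n$, which is what makes Theorem 4.12 applicable and, crucially, forces the harmonic remainder $g_n$ to be squeezed between $0$ and $u_n \leq n$. Once this is in hand the two orthogonal decompositions (Royden and Riesz) interlock exactly to convert the energy convergence $u_n \to h$ into the $\|\cdot\|_o$-convergence of the harmonic parts, using transience and Proposition 4.9 to upgrade $\CS(v_n) \to 0$ to $v_n(o) \to 0$.
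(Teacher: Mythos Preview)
Your proof is correct and follows essentially the same route as the paper: reduce to $h\geq 0$ via the preceding corollary, truncate by $h\wedge n$, apply the Riesz/Royden decomposition to the (superharmonic) truncation, and use orthogonality together with transience to pass from $\CS$-convergence to $\|\cdot\|_o$-convergence. You are in fact more explicit than the paper on two points it leaves implicit, namely the boundedness of the harmonic part (via $0\leq g_n\leq u_n\leq n$) and the pointwise convergence at $o$.
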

\begin{proof}
Again we only consider the transient case. Due to the previous corollary we may assume that $h\geq 0$. For $n\in \IN$ we set $g_n:= h \wedge n$. Then $g_n$ is superharmonic and by the characterization of convergence in $\widetilde{\CD}(\CS)$ we have $g_n \to h$ in $\|\cdot\|_o$ norm. The Royden decomposition gives again the existence of nonnegative functions $s_n \in \widetilde{\CD}(\CS)_o$ and $h_n\in H_0$ such that $g_n = s_n + h_n$. As $s_n$ and $h_n$ are $\CS$ orthogonal we have
\[\CS(h- g_n) = \CS(s_n) + \CS(h-h_n)\]
which gives that $\CS(s_n)\to 0$ and $\CS(h-h_n)\to 0$ as the left hand side goes to zero as well. Since $\CS$ is a scalar product on $\widetilde{\CD}(\CS)$ we infer that $s_n$ converges also pointwise to zero. Thus $h_n\to h$ in $\|\cdot\|_o$ norm.
\end{proof}
\section{The diffusion part}
In this section we focus on the diffusion part $\CE$ again. In \cite{St-94} it was shown that for strongly local Dirichlet forms recurrence is equivalent to the strong notion of parabolicity. More precisely, it says that each positive superharmonic function is constant. Recall that we have only proven that recurrence is equivalent to that each superharmonic function of finite energy is constant. We start with the result taken from \cite{St-94} in our framework. Note that we use in this theorem the intrinsic metric and measure, in particular we assume that the graph is intrinsically complete.
\begin{theorem}
Let $\FX_\Gamma$ be a connected, complete metric graph and $\CE$ a diffusion Dirichlet form. If for some point $x_o\in \FX_\Gamma$ we have
\[\int\limits_1^\infty \frac{r}{\omega(B_r(x_o))} dr = \infty\]
then $\FX_\Gamma$ is parabolic.
\end{theorem}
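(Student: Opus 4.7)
The plan is to exhibit a sequence of test functions $\phi_R \in \CD(\CE)_c$ with $\phi_R \to \mathbf{1}$ pointwise and $\CE(\phi_R) \to 0$ as $R\to\infty$; by the characterization of the extended Dirichlet space in Theorem 4.6 together with the convergence criterion of Proposition 4.3, this forces $\mathbf{1}\in \widetilde{\CD}(\CE)_o$ and hence recurrence, i.e. parabolicity.

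Working in the intrinsic representation and writing $\rho(x):=d_i(x,x_o)$, $V(r):=\omega(B_r(x_o))$, and $I(r):=\int_1^r s/V(s)\,ds$, I would use the Grigor'yan–Sturm logarithmic cutoff
\[ \phi_R(x) := \max\!\Bigl(0,\,\min\bigl(1,\,1-\tfrac{I(\rho(x))}{I(R)}\bigr)\Bigr). \]
By Proposition 2.21 the distance function satisfies $\rho\in\CD(\CE)_\loc$ with $|\rho'|=1$ almost everywhere, so the chain rule gives $|\phi_R'(x)|=\mathbf{1}_{[1,R]}(\rho(x))\cdot\rho(x)/(V(\rho(x))\,I(R))$. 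Completeness of $(\FX_\Gamma,d_i)$ together with the Hopf–Rinow theorem (Proposition 1.17) makes $B_R(x_o)$ compact, so $\phi_R\in\CD(\CE)_c$; and since the assumption forces $I(R)\to\infty$, we have $\phi_R\to\mathbf{1}$ pointwise.

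For the energy I would use the coarea-type identity $\int_{\FX_\Gamma} g(\rho)\,d\omega = \int_0^\infty g(r)V'(r)\,dr$, which on a metric graph reduces to the edge-wise change of variables for the piecewise affine function $\rho$, giving $V'(r)=\sum_{x\in S_r(x_o)}\omega(e(x))$ piecewise constant in $r$. This yields
\[ \CE(\phi_R)=\frac{1}{I(R)^2}\int_1^R \frac{r^2}{V(r)^2}\,V'(r)\,dr. \]
An integration by parts, using $V'(r)/V(r)^2=-(1/V(r))'$ on the open intervals of constancy of $V'$ and the continuity of $V$ at the finitely many exceptional radii in $[1,R]$, produces
\[ \int_1^R \frac{r^2}{V(r)^2}\,V'(r)\,dr \;=\; -\frac{R^2}{V(R)}+\frac{1}{V(1)}+2I(R) \;\leq\; \frac{1}{V(1)}+2I(R), \]
so $\CE(\phi_R)\leq(1/V(1)+2I(R))/I(R)^2 \to 0$ as $R\to\infty$, completing the argument.

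The main obstacle is the bookkeeping for the coarea identity and the integration by parts in the graph setting: $V$ is continuous but only piecewise $C^1$, with jumps of $V'$ at radii where the sphere $S_r(x_o)$ meets a branching vertex. I would dispatch this by decomposing $[1,R]$ into the finitely many open intervals on which $V'$ is constant (finitely many by compactness of $B_R(x_o)$), integrating by parts on each such interval, and observing that the intermediate boundary terms telescope thanks to the continuity of $V$, leaving only the endpoint contributions $-R^2/V(R)+1/V(1)$. The remaining steps are entirely routine.
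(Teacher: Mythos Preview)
Your argument is correct and is precisely the Grigor'yan--Sturm logarithmic cutoff construction. Note, however, that the paper does not actually supply its own proof of this theorem: it is stated as a result ``taken from \cite{St-94} in our framework'' and no proof environment follows. So there is nothing to compare against beyond the cited source, and what you have written is essentially Sturm's proof specialized to the metric-graph setting.

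A couple of minor remarks on the write-up. First, the step from $\mathbf{1}\in\widetilde{\CD}(\CE)_o$ to recurrence (hence parabolicity via the equivalence recalled just before the theorem) is standard, but in the paper's language it is cleanest to invoke Proposition~4.9: if $\mathbf{1}\in\widetilde{\CD}(\CE)_o$ with $\CE(\mathbf{1})=0$ but $\mathbf{1}(x)=1$, the Poincar\'e inequality there fails, so $\CE$ is not transient. Second, your handling of the coarea/integration-by-parts bookkeeping is fine; on a metric graph $V$ is continuous and piecewise affine on $[1,R]$ with finitely many breakpoints (by compactness of $B_R(x_o)$ via Proposition~1.17 and local finiteness), so the telescoping of intermediate boundary terms goes through exactly as you describe.
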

Using the graph structure and if we assume that the length function is radial with respect to some $x_o\in \FX_\Gamma$ we deduce the following corollary.
\begin{coro}\index{graph diffusion Dirichlet form!parabolicity}
Under the assumptions of the theorem, assuming additionally that the length function is radial and define
\[A(S_k,l_k) = \frac{  \sum_{j=1}^k \omega(S_j) l_j}{ \sum_{j=1}^k l_j}\]
i.e. the weighted arithmetic mean of the spheres with weights $l_j$. Then if
\[\sum_{k=1}^\infty \frac{l_{k}}{ A(S_k,l_k)} = \infty\]
then $\CE$ is parabolic. In particular this is the case if the  sequence $A(S_k,l_k)$ of arithmetic means is bounded from above.
\end{coro}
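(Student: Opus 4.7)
The plan is to reduce the statement to the integral criterion of the preceding theorem: $\CE$ is parabolic whenever
\[
\int_1^\infty \frac{r}{\omega(B_r(x_o))}\, dr \;=\; \infty.
\]
I would use the radial structure to rewrite the integral as a sum over generations and then check that the hypothesis on $A(S_k,l_k)$ forces its divergence.

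The first step is to identify $\omega(B_{r_n}(x_o))$ in radial data. By radiality of $l$ together with the definition of a radial vertex set, every point of the $n$-th generation lies at distance $r_n=\sum_{j=1}^n l_j$ from $x_o$, and the closed ball $B_{r_n}(x_o)$ is precisely the union of all edges up to the $n$-th generation. Interpreting $\omega(S_j)$ as the total $\omega$-weight of the edges of the $j$-th generation, the measure of each annulus equals $\omega(S_j)\,l_j$, so
\[
\omega(B_{r_n}(x_o)) \;=\; \sum_{j=1}^n \omega(S_j)\,l_j \;=\; A(S_n,l_n)\cdot r_n,
\]
by the very definition of $A(S_k,l_k)$ given in the statement.

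The second step is a Darboux-type lower bound. For $r\in (r_{n-1},r_n)$ we have $\omega(B_r(x_o))\leq A(S_n,l_n)\,r_n$, hence
\[
\int_{r_{n-1}}^{r_n} \frac{r}{\omega(B_r(x_o))}\, dr \;\geq\; \frac{r_n^2-r_{n-1}^2}{2\,A(S_n,l_n)\,r_n} \;=\; \frac{(r_n+r_{n-1})\,l_n}{2\,A(S_n,l_n)\,r_n} \;\geq\; \frac{l_n}{2\,A(S_n,l_n)}.
\]
Summing over $n$ beyond the first index with $r_{n-1}\geq 1$, the assumed divergence of $\sum_k l_k/A(S_k,l_k)$ propagates to divergence of $\int_1^\infty r/\omega(B_r(x_o))\,dr$, and the preceding theorem yields parabolicity.

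For the \emph{in particular} clause, the assumption that $\FX_\Gamma$ is complete together with the Hopf-Rinow type Proposition 1.18 forces $\sum_n l_n = \infty$ along the radial ray, so a uniform upper bound $A(S_n,l_n)\leq C$ immediately gives $\sum_n l_n/A(S_n,l_n)\geq C^{-1}\sum_n l_n=\infty$ and reduces to the first case. The only subtlety I anticipate is purely bookkeeping, namely confirming the identification $\omega(B_{r_n}(x_o))=\sum_{j\leq n}\omega(S_j)l_j$ and handling the finite number of initial terms cleanly so that the lower estimate starts from $r=1$; everything else is a routine sum-versus-integral comparison.
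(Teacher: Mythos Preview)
Your proposal is correct and follows essentially the same route as the paper: split the integral $\int r/\omega(B_r(x_o))\,dr$ over the annuli $(r_{k},r_{k+1})$, bound $\omega(B_r(x_o))$ from above by $\sum_{j\le k+1}\omega(S_j)l_j$, integrate $r$ to obtain $(r_{k+1}+r_k)l_{k+1}$, and rewrite using $A(S_{k+1},l_{k+1})$. Your treatment is in fact a bit more careful than the paper's, which silently drops the factor $1/2$ and starts the integral at $0$; your handling of the lower limit and the explicit appeal to Hopf--Rinow for the ``in particular'' clause are exactly the right bookkeeping.
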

\begin{proof}
An easy calculation shows
\begin{eqnarray*}
\int\limits_0^\infty \frac{r}{V(r)} \:dr &=& \sum_{k=0}^\infty \int\limits_{r_k}^{r_{k+1}} \frac{r}{V(r)} \:dr\\
&\geq& \sum_{k=0}^\infty \frac{(r_{k+1} + r_k)(r_{k+1}-r_k)}{\sum_{j=1}^{k+1} \omega(S_j) l_j}\\
&\geq& \sum_{k=0}^\infty \frac{l_{k+1}}{ A(S_{k+1},l_{k+1})}
\end{eqnarray*}
The in particular part follows since the graph is assumed to be complete.
\end{proof}
The intrinsically incomplete case is again the hard one and again we consider the canonical metric. We already know from section 2.2 that we can extend each function of finite energy continuously to the canonical boundary. Furthermore functions $u\in \widetilde{\CD}(\CS)_o$ have canonical boundary value zero. This leads to the following proposition as in this case the constant functions cannot be approximated.
\begin{prop}
Let $\FX_\Gamma$ be a metric graph with graph Dirichlet form $\CS$. \begin{itemize}
\item[(i)] If $\FX_\Gamma$ is canonically incomplete, then it is transient.
\item[(ii)] Let $\nu$ be a graph measure. If there exists a canonical boundary point with infinite measure, then it is $\nu$-transient.
\end{itemize}
\end{prop}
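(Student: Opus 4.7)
The proof will use two key ingredients that are already in hand from earlier in the text: the path Poincaré inequality $|u(x)-u(y)|^2\le d_c(x,y)\,\CE(u)$ for absolutely continuous functions, and the fact that functions of finite energy extend continuously to the canonical completion $\widehat{\FX_\Gamma}$. The strategy is the same for both parts: exhibit an explicit element of $\widetilde{\CD}(\CS)$ that lies outside the subspace in question by showing the subspace is contained in the kernel of a nonzero evaluation functional at a suitable boundary point.

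For part (i), I would first argue that evaluation at any canonical boundary point $y$ defines a continuous linear functional on $(\widetilde{\CD}(\CS),\|\cdot\|_o)$. Indeed, fixing any reference point $x_0$ and combining the path Poincaré inequality with the extension to $\widehat{\FX_\Gamma}$ gives $|u(y)|^2 \le 2|u(x_0)|^2 + 2d_c(x_0,y)\CE(u) \le C\|u\|_o^2$. Next I would show that every $u \in \widetilde{\CD}(\CS)_o$ satisfies $u(y)=0$: for $u \in \CD(\CS)\cap \CC_c(\FX_\Gamma)$ this is trivial since the support is compact in $\FX_\Gamma$ and therefore away from $y$, and by the continuity of evaluation just established the property persists under $\|\cdot\|_o$-closure (which, per Theorem 4.6, is exactly $\widetilde{\CD}(\CS)_o$). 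Finally the constant function $1$ lies in $\widetilde{\CD}(\CS)$ (since $\CS(1)=0$ using $k\equiv 0$) but has $1(y)=1\neq 0$, so $1\notin \widetilde{\CD}(\CS)_o$, and $\CS$ is transient.

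For part (ii), the additional step is to verify that functions in $\CD(\CS)$ themselves vanish at a canonical boundary point $y$ all of whose neighborhoods have infinite $\nu$-measure. This is a short contradiction argument: a function $u\in\CD(\CS)\subset L^2(\FX_\Gamma,\nu)$ is continuous on $\widehat{\FX_\Gamma}$ by the local Sobolev embedding, so if $u(y)=c\neq 0$ then $|u|\ge |c|/2$ on some neighborhood $U$ of $y$, forcing $\int_U|u|^2\,d\nu = \infty$, contradiction. Hence $u(y)=0$ for every $u\in\CD(\CS)$. Since the extended Dirichlet space $\CD(\CS)_e$ is the $\|\cdot\|_o$-closure of $\CD(\CS)$ (Theorem 4.6) and evaluation at $y$ is $\|\cdot\|_o$-continuous, the identity $u(y)=0$ passes to $\CD(\CS)_e$. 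Again $1\in\widetilde{\CD}(\CS)$ with $1(y)=1$ furnishes the separating element, so $\CD(\CS)_e\neq \widetilde{\CD}(\CS)$ and $\CS$ is $\nu$-transient.

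The only mildly technical point is the first one, namely that evaluation at a canonical boundary point is a $\|\cdot\|_o$-continuous functional on $\widetilde{\CD}(\CS)$; once this is established the two arguments are immediate. The path Poincaré inequality must be applied along a sequence of interior points converging to $y$ and then passed to the continuous extension, which requires that the canonical distance $d_c(\cdot,y)$ be finite for some point in the same connected component of $\widehat{\FX_\Gamma}$; this is automatic since $y$ is a Cauchy boundary point, so any Cauchy sequence approaching $y$ has bounded $d_c$-distance to a reference point. With this in place, both statements follow by the separating-functional argument outlined above.
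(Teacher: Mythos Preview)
Your proposal is correct and follows precisely the approach the paper sketches in the paragraph immediately preceding the proposition: functions of finite energy extend continuously to the canonical boundary, elements of $\widetilde{\CD}(\CS)_o$ (respectively $\CD(\CS)_e$) have boundary value zero there, and hence the constant function $1$ cannot lie in that subspace. Your write-up supplies the details the paper omits---in particular the continuity of the evaluation functional via the path Poincar\'e inequality together with Proposition~4.2, and the $L^2$-contradiction argument for part~(ii)---but the underlying idea is identical.
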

We finish this section with a discussion, when the non-local part of a graph diffusion could be considered as a small perturbation concerning recurrence. We say that $\CQ$ is $\CE$-form-bounded if the inequality\index{form-boundedness}
\[\CQ(u) \leq C \CE(u),\]
holds for all absolutely continuous functions $u$ with $u' \in L^2(\FX_\Gamma)$. Note that from this definition we see that $\FX_\Gamma$ has to be connected, since otherwise the right hand side may vanish for a particular choice of $u$, whereas the left hand side does not. Thus we assume that the graph $\FX_\Gamma$ is connected. We start with a necessary condition in terms of the $0$-th order capacity, defined below.
\begin{lemma}
Let $\FX_\Gamma$ be a metric graph and $\CS$ a graph Dirichlet form. If $\CQ$ is $\CE$-form-bounded, then there is a constant $C>0$, such that
\[\sum_{y\in \FV}  j(x,y) + k(x)  \leq C\ {\mathrm{cap}_0}(\{x\},\supp \CQ) .\]
holds for all $x\in \supp \CQ$. Here the relative $0$-th order capacity of $\{x\}$ with respect to $\supp \CQ$ is given by
\[{\mathrm{cap}_0}(\{x\},\supp \CQ) = \inf\{ \CE(\phi) \mid \phi\in \widetilde{\CD}(\CE),\ \phi(x)=1, \phi|_{\supp \CQ \setminus \{x\}}=0\}.\]
Furthermore we have
\[{\mathrm{cap}_0}(\{x\},\supp \CQ)\leq \sum_{e\sim x} \frac{1}{l_c(e)},\]
where equality holds if and only if all neighbors $y$ of $x$ belong to $\supp \CQ$.
\end{lemma}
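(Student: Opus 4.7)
The plan is to mirror the argument of Lemma 2.36 almost verbatim, with the essential change being that the hypothesis of form-boundedness (rather than relative boundedness) removes the $L^2$-summand and lets the $0$-th order capacity appear in place of its $1$st order analogue.

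First I would fix $x\in\supp\CQ$ and take any admissible test function $\phi_x\in \widetilde{\CD}(\CE)$ with $\phi_x(x)=1$ and $\phi_x|_{\supp\CQ\setminus\{x\}}=0$. By the characterization of $\supp \CQ$ established in the previous section, $j(z,y)>0$ forces both $z,y\in\supp\CQ$, while $k$ is supported on $\supp\CQ$. Therefore every contribution to $\CQ(\phi_x)$ involves $x$ as one of the summation variables, and a short computation yields
\[
\CQ(\phi_x)=\sum_{y\in\FV} j(x,y)\bigl(1-\phi_x(y)\bigr)^2+k(x)=\sum_{y\in\FV} j(x,y)+k(x),
\]
since every $y$ with $j(x,y)>0$ lies in $\supp\CQ\setminus\{x\}$ and thus satisfies $\phi_x(y)=0$. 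By $\CE$-form-boundedness this gives $\sum_{y} j(x,y)+k(x)\le C\,\CE(\phi_x)$, and taking the infimum over all such $\phi_x$ yields the first claimed inequality.

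For the explicit upper bound on the capacity I would work in the canonical representation, so that $\CE(\phi)=\sum_{e\in E}\int_0^{l_c(e)}|\phi_e'|^2\,dt$. Define $\phi_x$ to equal $0$ outside the star neighborhood $\bigstar(x)$ and to be affine linear on each edge $e\sim x$ with $\phi_x(x)=1$ and $\phi_x(\partial^-(e))=0$. Then $\phi_x\in\widetilde{\CD}(\CE)$ and a direct calculation gives
\[
\CE(\phi_x)=\sum_{e\sim x}\int_0^{l_c(e)}\frac{1}{l_c(e)^2}\,dt=\sum_{e\sim x}\frac{1}{l_c(e)}.
\]
Whenever all neighbors of $x$ belong to $\supp\CQ$, this $\phi_x$ is admissible for the variational problem defining $\kap_0(\{x\},\supp\CQ)$, which gives the stated upper estimate.

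For the equality characterization I would argue that any minimizing sequence is $0$-harmonic (hence affine) on every edge disjoint from $\supp\CQ$, and that its support is confined to the connected component of $x$ in $\{x\}\cup(\FX_\Gamma\setminus\supp\CQ)$. When every neighbor of $x$ lies in $\supp\CQ$, this component is precisely $\bigstar(x)$ with boundary values prescribed (one at $x$, zero at the endpoints), so the affine function above is the unique minimizer. Conversely, if some neighbor $y$ of $x$ is not in $\supp\CQ$, one can prolong $\phi_x$ along an edge past $y$ and then cut it off further downstream, producing an admissible competitor with strictly smaller energy, hence strict inequality. The main technical point requiring care is this last competitor construction, since one must ensure the continuation remains in $\widetilde{\CD}(\CE)$ and does not meet $\supp\CQ$ except at points where it already vanishes; this is a local perturbation argument, essentially identical to the one underlying the capacitary calculation in Lemma 2.36.
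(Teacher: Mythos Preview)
Your proof is correct and follows the same approach as the paper's, which simply refers back to the argument of Lemma 2.38 with $1$-harmonic functions replaced by $0$-harmonic (affine) ones. One small slip: the tent function $\phi_x$ you construct vanishes on \emph{all} of $\FV\setminus\{x\}$, hence it is admissible for the variational problem regardless of whether the neighbors of $x$ lie in $\supp\CQ$; the upper bound $\kap_0(\{x\},\supp\CQ)\le\sum_{e\sim x}1/l_c(e)$ therefore holds unconditionally, and the neighbor condition enters only in the equality discussion. Your treatment of the equality characterization is in fact more detailed than the paper's, which simply notes that the minimizer is harmonic (hence linear) off $\supp\CQ$.
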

\begin{proof}
We proceed as in section 2.5. For $x\in \FV$ we choose a function $\phi_x$ of finite energy such that $\phi(x)=1$ and $\phi(y)=0$ for all $y\in \supp \CQ$ with $y\neq x$. Then
\[\CQ(\phi_x)= \sum_{y\in \FV} j(x,y) + c(x) \leq C \CE(\phi_x).\]
Taking the infimum over all such $\phi_x$ we obtain the first claim. The second one follows as the minimizer is harmonic outside of $\FV$, and thus linear.
\end{proof}
To get a sufficient condition, take $x,y\in \FV$ with $j(x,y)>0$. Then
\[j(x,y) |u(x)-u(y)|^2 \leq j(x,y)l_c(x,y)\int\limits_{p_x^y} |u'(t)|^2 \: dt\]
where $p_x^y$ denotes a path from $x$ to $y$. Summing over all $x,y\in \FV$ gives
\[\sum_{x,y\in \FV} j(x,y) |u(x)-u(y)|^2 \leq \sum_{x,y\in \FV} j(x,y)l_c(x,y)\sum_{e\in \Fp_x^y}  \int\limits_{\FX_e} |u'(t)|^2 \: dt.\]
We change the order of summation to get that the right hand side equals
\[\sum_{e\in E} p(e) \int\limits_{\FX_e} |u'(t)|^2 dt\]
where the quantity
\[p(e):=\sum_{x,y \in \FV}j(x,y)l_c(x,y)\chi(\Fp_x^y,e),\]
with $\chi(\Fp_x^y,e)=1$ if $e\in \Fp_x^y$ and $0$ else. Note that $p$ depends on the special choices of paths connecting $x,y$ with $j(x,y)>0$. We will call such a special choice of paths a path covering.
\begin{prop}
If there is a path covering such that there is $C>0$ with
\[p(e) \leq C\]
for all $e\in E$, then $\CQ$ is $\CE$-form bounded.
\end{prop}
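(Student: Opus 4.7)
The proof is essentially a direct consequence of the computation carried out in the paragraph immediately preceding the proposition, so the plan is to make that estimate explicit and then insert the hypothesis $p(e)\leq C$.

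First, recall that under the standing assumption of this section the killing weight vanishes, so
\[
\CQ(u)=\tfrac{1}{2}\sum_{x,y\in\FV} j(x,y)\,|u(x)-u(y)|^{2},
\]
and it suffices to control this jump term by $\CE(u)$. For any pair $x,y\in\FV$ with $j(x,y)>0$, fix the path $\Fp_{x}^{y}$ coming from the chosen path covering and apply the Cauchy--Schwarz inequality along this path (as in the proof of the Path Poincaré inequality, Lemma 2.5), giving
\[
|u(x)-u(y)|^{2}\;\leq\; L(\Fp_{x}^{y})\int_{\Fp_{x}^{y}}|u'(t)|^{2}\,dt
\;\leq\; l_{c}(x,y)\int_{\Fp_{x}^{y}}|u'(t)|^{2}\,dt,
\]
since the canonical length of any path connecting $x$ and $y$ is at least $l_{c}(x,y)$ (indeed we may even take equality if the path-covering path realises the distance, but the upper bound via $l_{c}(x,y)$ is all that is required for the subsequent rearrangement).

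The next step is the interchange-of-summation trick already indicated in the text. Multiplying by $j(x,y)$ and summing yields
\[
\sum_{x,y\in\FV} j(x,y)\,|u(x)-u(y)|^{2}
\;\leq\; \sum_{x,y\in\FV} j(x,y)\, l_{c}(x,y)\sum_{e\in E}\chi(\Fp_{x}^{y},e)\int_{\FX_{e}}|u'(t)|^{2}\,dt.
\]
Since every summand is nonnegative, Tonelli's theorem permits us to swap the outer double sum with the sum over $e\in E$, and the definition of $p(e)$ collapses the inner double sum into $p(e)$. Thus
\[
\sum_{x,y\in\FV} j(x,y)\,|u(x)-u(y)|^{2}\;\leq\;\sum_{e\in E} p(e)\int_{\FX_{e}}|u'(t)|^{2}\,dt.
\]

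Finally, inserting the hypothesis $p(e)\leq C$ and recalling that in the canonical representation $\CE(u)=\sum_{e\in E}\int_{\FX_{e}}|u'(t)|^{2}\,dt$ (since the ellipticity weights are normalised to one) we conclude
\[
\CQ(u)\;\leq\;\tfrac{C}{2}\,\CE(u),
\]
which is the claimed $\CE$-form boundedness. There is no genuine obstacle: the whole content is the Cauchy--Schwarz/path estimate together with Tonelli, both of which are legitimate because every term involved is nonnegative; the hypothesis on $p$ is used exactly once, at the very last step.
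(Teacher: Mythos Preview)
Your approach is exactly the one the paper intends: the paragraph preceding the proposition \emph{is} the proof, namely Cauchy--Schwarz along each chosen path, followed by interchanging the sums and then inserting the uniform bound on $p(e)$.

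There is, however, a small slip in your displayed chain
\[
|u(x)-u(y)|^{2}\;\leq\; L(\Fp_{x}^{y})\int_{\Fp_{x}^{y}}|u'(t)|^{2}\,dt
\;\leq\; l_{c}(x,y)\int_{\Fp_{x}^{y}}|u'(t)|^{2}\,dt.
\]
You justify the second inequality by saying that ``the canonical length of any path connecting $x$ and $y$ is at least $l_{c}(x,y)$'', but that statement gives $L(\Fp_x^y)\geq l_c(x,y)$, which is the \emph{wrong} direction for the inequality you wrote. In the paper's formula for $p(e)$ the symbol $l_c(x,y)$ is being used (somewhat abusively) for the canonical length $L(\Fp_x^y)$ of the chosen path in the path covering, not for the canonical distance $d_c(x,y)$. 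With that reading the Cauchy--Schwarz step is a single inequality, no second step is needed, and the rest of your argument goes through verbatim. So simply delete the second ``$\leq$'' and identify $l_c(x,y)$ with $L(\Fp_x^y)$; then your proof coincides with the paper's.
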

We discuss the above criteria by the following example, which also gives that this criterion is optimal in a certain sense.
\begin{example}
Let $\FX_\Gamma$ be a connected metric graph and $\CS$ a Dirichlet form such that $j(x,y)>0$ if and only if there exists an edge connecting $x$ and $y$. Thus we have a canonical path covering given by the edges and the condition of the discussion above reduces to
\[j(x,y)l_c(x,y)\leq C.\]
On the other hand, let $\FX_\Gamma$ be a tree and let $x,y\in \FV$, such that $x\sim y$. Then $\FX_\Gamma\setminus\{\{x\},\{y\}\}$ consists of three connected components, including the corresponding open edge $(x,y)=(\partial^+(e),\partial^-(e))$. We denote the connected component with boundary $x$ by $\FY_x$, and $\FY_y$ analogously. Then we define the function $u$ as
\[u(z):= \begin{cases} 1&, z\in \FY_x \\  0&, z\in \FY_y\end{cases}\]
and on the edge $(x,y)$ we define its linear interpolation. Then, if $\CQ$ is $\CE$-form-bounded we get
\[\CQ(u) = j(x,y)\]
and
\[\CE(u)= l_c(x,y)^{-1}.\]
In particular we obtain
\[j(x,y) l_c(x,y)\leq C\]
uniformly in $x,y$.
\end{example}
Within the following example we compare $\CE$-boundedness and $\CE$-form-boundedness.
\begin{example}
Let $\FX_\Gamma = \IR$, $\FV = \IZ\setminus\{0\}$ and $j(x,y)>0$ if and only if $|x|=|y|$. Then if $\sum_{n\in \IZ} j(-n,n) n < \infty$ then $\CQ$ is $\CE$-form bounded, whence it suffices that $j(-n,n)\leq C$ uniformly to obtain $\CE$-boundedness.
\end{example}
\section{The trace Dirichlet form}
Our aim in this section is to relate the extended Dirichlet space of $\CS$ with a discrete Dirichlet form. We will proceed as in the chapter of the Kirchhoff Dirichlet form. In general Dirichlet form theory this procedure is known as trace Dirichlet form. Recall that the Kirchhoff Dirichlet form with $\alpha =0$ is acting as
\[\CQ_0(u) = \sum_{x,y\in \FV} \frac{\omega(x,y)}{l_i(x,y)} (u(x)-u(y))^2.\]
Together with the discrete part $\CQ$ from $\CS$ the space of functions defined on the vertices where $\CQ$ and $\CQ_0$ is finite will appear below and is denoted in analogy of the graph Dirichlet forms by
\[\widetilde{\CD}(\CQ + \CQ_0).\]
We start with the decomposition theorem.
\begin{theorem}\index{trace Dirichlet form}
Let $\FX_\Gamma$ be a metric graph and $\CS$ a graph Dirichlet form. Then we have the orthogonal decomposition
\[\widetilde{\CD}(\CS) = \widetilde{\CD}(\CS)_\FV \oplus \CH_0^\FV\]
where
\[\widetilde{\CD}(\CS)_\FV = \bigoplus_{e\in E} \{u_e \mid u_e' \in L^2(0,l(e)), u_e(0)=u_e(l(e))=0\}\]
and $\CH_0^\FV$ denotes the space of all continuous functions of finite energy which are linear on each edge. Furthermore the mapping
\[\cdot|_\FV: \CH_0^\FV \to \widetilde{\CD}(\CQ + \CQ_0)\]
is an isometric isomorphism with inverse $H_0$.
\end{theorem}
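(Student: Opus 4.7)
The plan is to mirror the proof of Theorem 3.20 (the analogous statement for $\CD(\CS)$), replacing the $\alpha$-harmonic extension by the simpler $H_0$, which on each edge is just affine linear interpolation of the vertex values. First I would check that both $\widetilde{\CD}(\CS)_\FV$ and $\CH_0^\FV$ actually sit inside $\widetilde{\CD}(\CS)$. For $\widetilde{\CD}(\CS)_\FV$ this is immediate: such a function vanishes on $\FV$, so $\CQ(u)=0$, and finiteness of $\CE(u)$ is built in. For $\CH_0^\FV$ I would use a Cauchy--Schwarz estimate on each edge: if $v \in \CH_0^\FV$ corresponds to vertex values $V$, then $(v_e)'$ is the constant $(V(\partial^-(e))-V(\partial^+(e)))/l_i(e)$, giving $\omega(e)(V(\partial^-(e))-V(\partial^+(e)))^2/l_i(e) \leq \omega(e)\int_0^{l_i(e)} |u_e'|^2\,dt$ whenever $v$ is obtained as $H_0(u|_\FV)$ for some $u \in \widetilde{\CD}(\CS)$; summing yields $\CE(v) \leq \CE(u)$ and in particular finite energy.

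Next I would establish $\CS$-orthogonality of the two subspaces. For $u \in \widetilde{\CD}(\CS)_\FV$ and $v \in \CH_0^\FV$, the discrete part $\CQ(u,v)$ vanishes because $u \equiv 0$ on $\FV$. On each edge $v_e$ is affine, so $v_e''=0$; integration by parts together with $u_e(0)=u_e(l(e))=0$ gives $\int_0^{l(e)} u_e' v_e' \,dt = 0$, and summing shows $\CE(u,v)=0$. Hence $\CS(u,v)=0$.

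The decomposition itself is then constructed explicitly: for $u \in \widetilde{\CD}(\CS)$, set $u_h := H_0(u|_\FV)$ and $u_o := u - u_h$. By continuity of $u$ the vertex-value data $u|_\FV$ is well-defined and yields a continuous edgewise linear function $u_h \in \CH_0^\FV$ (with finite energy by the estimate above), while $u_o$ is absolutely continuous, vanishes on $\FV$, and lies edgewise in the appropriate Sobolev space. The $\CS$-Pythagorean identity $\CS(u)=\CS(u_o)+\CS(u_h)$ then follows from the orthogonality just proved. Uniqueness is clear because any two such decompositions differ by an element simultaneously in $\widetilde{\CD}(\CS)_\FV$ and $\CH_0^\FV$, which must be zero.

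For the isometric isomorphism $\cdot|_\FV : \CH_0^\FV \to \widetilde{\CD}(\CQ+\CQ_0)$, surjectivity and the inverse property follow from the explicit formula for $H_0$: given $U \in \widetilde{\CD}(\CQ+\CQ_0)$, the function $H_0 U$ is edgewise linear and continuous (so lies in $\CH_0^\FV$), with $(H_0 U)|_\FV = U$. To verify the isometry, a direct edgewise computation gives
\[
\CE(H_0 U) \;=\; \sum_{e\in E} \omega(e)\,\frac{(U(\partial^-(e))-U(\partial^+(e)))^2}{l_i(e)} \;=\; \CQ_0(U),
\]
using that in the intrinsic/canonical representation the coefficient $-{h_e^0}'(l_i(e))\omega(e)$ appearing in $\CQ_0$ equals $\omega(e)/l_i(e)$. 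Since $H_0 U$ agrees with $U$ on $\FV$, also $\CQ(H_0 U)=\CQ(U)$, giving $\CS(H_0 U)=\CQ_0(U)+\CQ(U)$ as required. The main technical point, and the only place where care is needed, is the a~priori bound $\CE(H_0(u|_\FV)) \leq \CE(u)$ that makes the decomposition well-defined on all of $\widetilde{\CD}(\CS)$; everything else reduces to the $\alpha=0$ case of calculations already carried out in Chapter 3.
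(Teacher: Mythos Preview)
Your proposal is correct and follows essentially the same route as the paper. The only difference is framing: the paper argues that $\widetilde{\CD}(\CS)_\FV$ is a closed subspace (via the Sobolev embedding) and then characterizes its orthogonal complement by testing against $\phi\in\bigoplus_e \CC_c^\infty(0,l(e))$, whereas you construct the projection $u\mapsto H_0(u|_\FV)$ explicitly and verify $\CS$-orthogonality by hand; the edgewise computation $\CE(H_0 U)=\CQ_0(U)$ and the resulting isometry are identical in both proofs. Your explicit a~priori bound $\CE(H_0(u|_\FV))\le\CE(u)$ is a nice touch that the paper leaves implicit.
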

\begin{proof}
Clearly the space of all functions in $\widetilde{\CD}(\CS)$ which are zero on the set of vertices is a closed subspace by the Sobolev embedding. In particular the edgewise decomposition is valid. Considering its orthogonal complement we have $u\in \widetilde{\CD}(\CS)_\FV^\bot$ if and only if
\[\CE(u,\phi)= 0\]
for all $\phi \in \bigoplus_{e\in E} \CC_c^\infty (0,l(e))$, and hence $u$ is linear on each edge. For $u\in \CH_0^\FV$ we have for $e=(x,y)$ the representation
\[u_e(t)= u(x) \frac{l(e)-t}{l(e)} + u(y) \frac{t}{l(e)}\]
and therefore
\[\CE(u)= \sum_{x,y\in \FV} \tfrac{\omega(x,y)}{l(x,y)} (u(x)-u(y))^2.\]
\end{proof}
Before we study the relation to the Kirchhoff Dirichlet form, we immediately deduce the following corollary.
\begin{coro}
The graph Dirichlet form $\CS$ is recurrent if and only if the discrete Dirichlet form $\CQ+\CQ_0$ is recurrent. In particular, a diffusion Dirichlet form is recurrent if and only if the associated Kirchhoff Dirichlet form is.
\end{coro}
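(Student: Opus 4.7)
The plan is to reduce the recurrence question to a statement about the constant function~$\mathbf{1}$ and then transport it across the isometric isomorphism $\cdot|_\FV : \CH_0^\FV \to \widetilde{\CD}(\CQ+\CQ_0)$ from Theorem~4.23. Since we are assuming $k\equiv 0$, recurrence of $\CS$ is equivalent (by Definition~4.7 and $j$-connectedness) to the statement that $\mathbf{1}$ belongs to $\widetilde{\CD}(\CS)_o$, i.e.~there is a sequence $(u_n) \subset \CD(\CS)_c$ converging pointwise to $\mathbf{1}$ with $\CS(u_n)\to 0$ as $n\to\infty$; and analogously on the discrete side. So the whole task is to convert such an approximating sequence on one side into one on the other side while preserving compact support.

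First I would prove the forward direction. Assume $\CS$ is recurrent and take an approximating sequence $(u_n)\subset \CD(\CS)_c$ with $u_n\to\mathbf{1}$ pointwise and $\CS(u_n)\to 0$. By Theorem~4.23 write $u_n = u_n^o + u_n^h$ with $u_n^o \in \widetilde{\CD}(\CS)_\FV$ and $u_n^h = H_0(u_n|_\FV) \in \CH_0^\FV$; orthogonality gives $\CE(u_n^h)\le \CE(u_n)\le \CS(u_n)\to 0$ and $\CQ(u_n^h)=\CQ(u_n)\to 0$ (the discrete part sees only the vertex values). Restricting to $\FV$ and using that $\cdot|_\FV$ is an isometry onto $\widetilde{\CD}(\CQ+\CQ_0)$, we obtain a sequence $U_n := u_n|_\FV$ with $(\CQ+\CQ_0)(U_n)\to 0$ and $U_n(x)\to 1$ for every $x\in \FV$ (pointwise convergence on $\FV$ follows from the local Sobolev embedding, or directly from continuity). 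Since $u_n$ has compact support in $\FX_\Gamma$, only finitely many vertices lie in its support, hence $U_n$ has finite support in $\FV$. This exhibits $\mathbf{1}\in \widetilde{\CD}(\CQ+\CQ_0)_o$, i.e.~$\CQ+\CQ_0$ is recurrent.

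For the converse, take a sequence $(U_n)$ of finitely supported functions on $\FV$ with $U_n\to\mathbf{1}$ pointwise and $(\CQ+\CQ_0)(U_n)\to 0$. Apply the harmonic extension operator and set $v_n := H_0 U_n \in \CH_0^\FV$. Because $U_n$ has finite support, $v_n$ is supported on the (finite) union of star neighborhoods of $\operatorname{supp} U_n$, hence has compact support in $\FX_\Gamma$; by the isometry $\CS(v_n)=\CE(v_n)+\CQ(v_n)=\CQ_0(U_n)+\CQ(U_n)\to 0$, and by the explicit linear interpolation formula for $H_0$ (together with $U_n\to 1$ on $\FV$) one gets $v_n\to \mathbf{1}$ pointwise on $\FX_\Gamma$. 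Thus $\mathbf{1}\in \widetilde{\CD}(\CS)_o$ and $\CS$ is recurrent. The ``in particular'' assertion for diffusion Dirichlet forms is simply the case $\CQ\equiv 0$.

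The only delicate point I anticipate is the pointwise/\emph{cutoff} control needed to ensure that the approximating sequences really may be taken with compact (and in the discrete case, finite) support: on the continuous side, regularity of $\CS_o$ and the representation $\widetilde{\CD}(\CS)_o = \overline{\CD(\CS)_c}^{\|\cdot\|_o}$ from Theorem~4.6 let one assume $(u_n)\subset \CD(\CS)_c$ to begin with; on the discrete side the analogous characterisation of the extended space of $\CQ+\CQ_0$ gives approximation by finitely supported functions. Once these two identifications are in hand, the argument is purely the application of the orthogonal decomposition and the isometric isomorphism, and nothing further is required.
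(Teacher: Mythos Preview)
Your proposal is correct and essentially coincides with what the paper intends. The paper gives no explicit proof of this corollary, merely stating that it ``immediately'' follows from Theorem~4.23; your argument is a faithful and complete spelling out of that implication. The key observations you use --- that the decomposition $u_n=u_n^o+u_n^h$ is $\CE$-orthogonal (since $(u_n^h)_e'$ is constant and $u_n^o$ vanishes at edge endpoints), that $\CQ$ depends only on vertex values so $\CQ(u_n^h)=\CQ(u_n)$, and that $H_0$ and $\cdot|_\FV$ preserve compact/finite support --- are exactly the content of Theorem~4.23 plus local finiteness. Your use of the equivalence ``recurrence $\Leftrightarrow \mathbf{1}\in\widetilde{\CD}(\CS)_o$'' is legitimate: the paper records this just before Definition~4.8 (transience is equivalent to the constants not lying in the extended space, and under irreducibility one has the recurrence/transience dichotomy).
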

We now want to study the difference between the trace and the Kirchhoff Dirichlet form. To this extent we introduce the so-called Feller measure, see \cite{CF-12}. For $\alpha >0$ and $V_1,V_2\in \CC(\FV)$ we set
\[ U_\alpha (V_1,V_2)= \alpha (H_\alpha V_1, H_0 V_2).\]
By definition $U_\alpha$ is bilinear and on its diagonal for $x,y\in \FV$  given by
\[U_\alpha(\delta_x,\delta_y) = \begin{cases} 0 &: x\neq y, x\not\sim y\\ - \sqrt{\alpha}\frac{\omega(x,y)}{\sinh \sqrt{\alpha} l_i(x,y)} + \frac{\omega(x,y)}{l_i(x,y)} &: x\sim y\\ \sum\limits_{e\sim x} \sqrt{\alpha} \omega(e) \frac{\cosh\sqrt{\alpha} l_i(e)}{\sinh \sqrt{\alpha}l_i(e)} - \frac{\omega(e)}{l_i(e)}&: x=y \end{cases}.\]
As the diagonals of $\CQ_0$ and $\CQ_\alpha$ are given by
\[\CQ_0(\delta_x,\delta_y) = \begin{cases} 0&: x\neq y, x\not\sim y\\ -\frac{\omega (x,y)}{l_i(x,y)} &: x\sim y\\ \sum\limits_{e\sim x} \frac{\omega(e)}{l_i(e)} &: x=y \end{cases}\]
and
\[\CQ_\alpha(\delta_x,\delta_y) = \begin{cases} 0&: x\neq y, x\not\sim y\\ -\sqrt{\alpha}\frac{\omega (x,y)}{\sinh \sqrt{\alpha} l_i(x,y)} &: x\sim y\\ \sum\limits_{e\sim x} \sqrt{\alpha} \frac{\omega(e)}{\sinh \sqrt{\alpha} l_i(e)} + \sqrt{\alpha} \omega(e) \frac{\cosh\sqrt{\alpha}l_i(e) - 1}{\sinh \sqrt{\alpha}l_i(e)} &: x=y \end{cases}\]
we obtain the equality of $\CQ_0 + U_\alpha = \CQ_\alpha$ on $\CC_c(\FV)$.
\section*{Notes and remarks}
The main purpose of this section was to determine the reflected and extended Dirichlet spaces of graph Dirichlet forms. We have seen that both appear as Hilbert spaces. This was already known for discrete Dirichlet forms. For general Dirichlet forms, it seems not to be true to equip this space with an inner product. Thus here we have an example which is not discrete. The explicit description of these spaces allowed us to perform certain potential theoretic procedures. In particular, it seems that a lot of results which hold for discrete graphs can also be shown for our model. In the last section we were able to calculate the trace Dirichlet and related Feller measures form explicitly. This gives a new characterization of recurrence in terms of discrete Dirichlet forms.

\backmatter
\bibliographystyle{amsalpha}
	\bibliography{Diss_bib}
\printindex
\end{document}